\providecommand{\keywords}[1]
{
 \textbf{\textit{Keywords---}} #1
}
\renewcommand{\P}{\mathcal{P}}
\newcommand{\G}{\mathcal{G}}
\newcommand{\M}{\mathcal{M}}
\newcommand{\C}{\mathcal{C}}
\newcommand{\I}{\mathcal{I}}
\newcommand{\K}{\mathcal{K}}
\newcommand{\gmw}{\G{\text -}mc}
\newcommand{\gmparam}{modular cardinality}
\newcommand{\he}{\hat{e}}
\newcommand{\hu}{\hat{u}}
\newcommand{\ha}{\hat{a}}
\newcommand{\charge}{\gamma}
\newcommand{\mcc}{\textsc{Symmetric Multicolored Clique}}
\newcommand{\bdd}{\textsc{Bounded Degree Deletion}}
\newcommand{\ilp}{\textsc{Integer Linear Programming}}
\newcommand{\milp}{\textsc{Mixed Integer Linear Programming}}
\newcommand{\ilpwf}{\textsc{MIP with Functions}}
\newcommand{\ld}{\textsc{$(\alpha,\beta)$-Linear Degree Domination}}
\newcommand{\ad}{\textsc{$\alpha$-Domination}}
\newcommand{\kd}{$k$-\textsc{Domination}}
\newcommand{\mipwffull}{\textsc{Mixed Integer Programming with Functions}}
\newcommand{\cographs}{\textnormal{\texttt{cograph}}}
\newcommand{\clusters}{\textnormal{\texttt{cluster}}}
\newcommand{\stars}{\textnormal{\texttt{stars}}}
\newcommand{\linearforest}{\textnormal{\texttt{linear forest}}}
\newcommand{\binaryforest}{\textnormal{\texttt{binary forest}}}
\newtheorem{theorem}{Theorem}
\newtheorem{lemma}{Lemma}
\newtheorem{observation}{Observation}
\newtheorem{proposition}{Proposition}
\newtheorem{corollary}{Corollary}
\newtheorem{definition}{Definition}
\newtheorem{claim}{Claim}
\newenvironment{claimproof}{\noindent \emph{Proof.}}{\qed}
\title{Parameterized Complexity of Domination Problems Using Restricted Modular Partitions} 
\author{{Manuel Lafond
\orcidlink{0000-0002-5305-7372}\footnote{Department of Computer Science, Université de Sherbrooke, \textit{Email: manuel.lafond@usherbrooke.ca}}} 
\and 
{Weidong Luo
\orcidlink{0009-0003-5300-606X}\footnote{Department of Computer Science, Université de Sherbrooke, \textit{Email: weidong.luo@yahoo.com}} }}
\begin{document}

\maketitle

\begin{abstract}
For a graph class $\G$, we define the $\G$-\gmparam~of a graph $G$ as the minimum size of a vertex partition of $G$ into modules that each induces a graph in $\G$.  This generalizes other module-based graph parameters such as neighborhood diversity and iterated type partition.  Moreover, if $\G$ has bounded modular-width, the W[1]-hardness of a problem in $\G$-\gmparam~implies hardness on modular-width, clique-width, and other related parameters.  On the other hand, fixed-parameter tractable (FPT) algorithms in $\G$-\gmparam~may provide new ideas for algorithms using such parameters.

Several FPT algorithms based on modular partitions compute a solution table in each module, then combine each table into a global solution.  This works well when each table has a succinct representation, but as we argue, when no such representation exists, the problem is typically W[1]-hard. We illustrate these ideas on the generic $(\alpha, \beta)$-domination problem, which asks for a set of vertices that contains at least a fraction $\alpha$ of the adjacent vertices of each unchosen vertex, plus some (possibly negative) amount $\beta$. This generalizes known domination problems such as \bdd, \kd, and \ad. We show that for graph classes $\G$ that require arbitrarily large solution tables, these problems are W[1]-hard in the $\G$-\gmparam, whereas they are fixed-parameter tractable when they admit succinct solution tables.  This leads to several new positive and negative results for many domination problems parameterized by known and novel structural graph parameters such as clique-width, modular-width, and \clusters-\gmparam.
\end{abstract}

\keywords{modular-width, parameterized algorithms, graph classes, W-hardness, $\G$-\gmparam}

\section{Introduction}

Modular decompositions of graphs have played an important role in algorithms since their inception \cite{Gallai_1967}. 
In the world of parameterized complexity \cite{DBLP:books/sp/CyganFKLMPPS15, DBLP:series/txcs/DowneyF13}, Gajarsk{\'{y}} et. al. \cite{DBLP:conf/iwpec/GajarskyLO13}  proposed the notion of \emph{modular-width}, or $mw$ for short, which can be defined as the maximum degree of a prime node in the modular decomposition tree of $G$.
Unlike other structural parameters such as treewidth \cite{DBLP:conf/sirocco/Bodlaender07}, $mw$ can be bounded on certain classes of dense graphs, making it comparable to the clique-width ($cw$) parameter~\cite{DBLP:journals/mst/CourcelleMR00}.  In fact, $cw$ is at most $mw$ plus two, and $mw$ can sometimes be arbitrarily larger than $cw$.
It is known that several problems that are hard in $cw$ are fixed-parameter tractable (FPT) in $mw$, with popular examples  including \textsc{Hamiltonian Cycle}, \textsc{Graph Coloring} \cite{DBLP:journals/siamcomp/FominGLS10, DBLP:journals/talg/FominGLSZ19, DBLP:conf/iwpec/GajarskyLO13}, and \textsc{Metric Dimension} \cite{DBLP:journals/siamdm/BelmonteFGR17, DBLP:journals/algorithmica/BonnetP21}.
In particular in \cite{DBLP:conf/iwpec/GajarskyLO13}, the main technique used to design such algorithms is dynamic programming over the modular decomposition. In essence, the values of an optimal solution are found recursively in each module of the graph $G$, which are then combined into a solution for the graph itself, often using small integer linear programs based on the prime graph of $G$.
Another technique was recently introduced in~\cite{fomin2018algorithms}, 
where the authors show that the number of potential maximal cliques of a graph is at most $O^*(1.73^{mw})$, a fact that can be combined with results of \cite{DBLP:journals/siamcomp/BouchitteT01, DBLP:journals/siamcomp/FominTV15} to obtain FPT algorithms for several problems such as \textsc{Treewidth} and \textsc{Minimum Fill-in}.

Despite these efforts, there are still several problems that are known to be hard on $cw$, for instance \textsc{Max-Cut} \cite{DBLP:journals/siamcomp/FominGLS14} and \textsc{Bounded Degree Deletion} \cite{DBLP:journals/dam/BetzlerBNU12, DBLP:journals/algorithmica/GanianKO21}, but unknown to be hard or FPT in $mw$. Recently, an XP algorithm is given for $k$-\textsc{Domination} in parameter treewidth ($tw$)  \cite{liu2021w}, but the W-hardness for $k$-\textsc{Domination} in parameter $tw$, $cw$ or $mw$ is unknown.  Also, the authors of \cite{DBLP:conf/iwpec/GajarskyLO13} conclude with the question of whether \textsc{Edge Dominating Set} and \textsc{Partition into Triangles} are FPT in $mw$, which are 10 years-old questions that are still unanswered. 
One promising direction to gain knowledge and tools for $mw$ algorithms is to study some of its related parameters.  We consider such variants in which the graph must be decomposed into modules that each induces a subgraph belonging to a specific graph class.  Notable examples include neighborhood diversity ($nd$) \cite{DBLP:journals/algorithmica/Lampis12}, which can be seen as the minimum size of a partition into modules that induce edgeless graphs or cliques, and iterated type partition ($itp$) \cite{DBLP:conf/iwoca/CordascoGR20}, in which each module must induce a cograph.  This idea was also used in \cite{DBLP:journals/tcs/Guo09} from which we borrow our terminology, where a partition into modules inducing cliques is used to obtain linear kernels for the cluster editing problem. Meanwhile, as Knop wrote in~\cite{DBLP:journals/dam/Knop20}, `another important task in this area is to understand the boundary (viewed from the parameterized complexity perspective) between modular-width on one side, and neighborhood diversity, twin-cover number, and clique-width on the other side'. Our work resides in this boundary.



In this work, we propose to generalize the above ideas by 
restricting modules to a given graph class $\G$.  That is, we define the $\G$-\gmparam~of a graph $G$, denoted $\gmw(G)$, as the size of the smallest partition of its vertices into modules that each induces a subgraph in $\G$.  For a graph class $\G$ of bounded modular-width (such as cographs), the hardness of a problem in $\G$-\gmparam~implies its hardness in $mw$ (and thus $cw$).  On the other hand, FPT techniques for $\gmw$ may shed light towards developing better algorithms for $mw$.  Moreover, by considering graph classes of unbounded modular-width (e.g. paths, grids), $\gmw$ may be incomparable with $mw$ or even $cw$, leading to FPT algorithms for novel types of graphs.  
To the best of our knowledge, such a generalization had not been studied, although it is worth mentioning that in~\cite{DBLP:conf/stoc/JansenK021}, the authors propose a similar concept for treewidth, where some bags of a tree decomposition are allowed to induce subgraphs of a specified graph class.

\begin{figure}
\centering
\begin{tikzpicture}

\draw[color=black] (-3,0) node[scale=0.9] {$cw$};

\draw[stealth-] (-2.7,0) -- (-1.6,0);
\draw[color=black] (-1.2,0) node[scale=0.9] {$mw$};

\draw[stealth-] (-0.8,0) -- (0.4,0);
\draw[color=black] (1.5,0) node[scale=0.9] {\cographs-$mc$};

\draw[stealth-] (2.6,0) -- (3.8,0);
\draw[color=black] (4.7,0) node[scale=0.9] {\stars-$mc$};

\draw[stealth-] (2.6,0.1) -- (4.6,0.8);
\draw[color=black] (4.7,1) node[scale=0.9] {$itp$};

\draw[stealth-] (2.6,-0.1) -- (4.6,-0.8);
\draw[color=black] (4.7,-1) node[scale=0.9] {\clusters-$mc$};

\draw[stealth-] (4.8,-0.8) -- (6.7,-0.1);
\draw[stealth-] (4.8,0.8) -- (6.7,0.1);
\draw[stealth-] (5.5,0) -- (6.7,0);
\draw[color=black] (7,0) node[scale=0.9] {$nd$};

\draw[stealth-] (-2.7,0.1) -- (0.3,0.8);
\draw[color=black] (0.3,1) node[scale=0.9] {\linearforest-$mc$};

\end{tikzpicture}
\caption{Relation among the main parameters discussed in the paper, which are clique-width ($cw$), modular-width ($mw$), \cographs-\gmparam~(\cographs-$mc$), iterated type partitions ($itp$), \stars-\gmparam~(\stars-$mc$), \clusters-\gmparam~(\clusters-$mc$), neighborhood diversity ($nd$), and \linearforest-modular cardinality (\linearforest-$mc$). Arrows indicate generalizations, e.g. modular-width($mw$) generalizes \cographs-\gmparam~and thus is bounded by (a function of) \cographs-\gmparam.}
\label{fig:parameter-relationship}
\end{figure}

\begin{table}
\begin{center} 
\def\arraystretch{1.15}
\begin{tabular}{ | c | c | c | c | c |} 
\hline
$(\alpha,\beta)$-LDD problem &  $\alpha =0$  & $\alpha \in (0,1)$  & $\alpha =1$   & Parameters  \\
\quad & ($k$-dom) & ($\alpha$-dom $+ \beta$) &  (BDD) &  \\
\hline

\multirow{7}{*}{$\beta$ is in the input} &   W[1]-h  & W[1]-h & W[1]-h$^*$ & $cw$\\
& W[1]-h  & W[1]-h & W[1]-h & $mw$\\
& W[1]-h  & W[1]-h & W[1]-h & \cographs-$mc$\\
& W[1]-h  & W[1]-h & W[1]-h & $itp$\\
& \textbf{W[1]-h}  & W[1]-h & \textbf{W[1]-h} & \stars-$mc$\\
& open  & open & \textbf{FPT} & \clusters-$mc$\\
& \textbf{FPT}  & \textbf{FPT} & \textbf{FPT} & $nd$\\
\hline
			
\multirow{7}{*}{$\beta$ is any constant}&  \textbf{FPT}  & W[1]-h & \textbf{FPT} & $cw$\\
& FPT  & W[1]-h & FPT & $mw$\\
& FPT  & W[1]-h & FPT & \cographs-$mc$\\
& FPT  & W[1]-h & FPT & $itp$\\
& FPT  & \textbf{W[1]-h} & FPT & \stars-$mc$\\
& FPT  & open & FPT & \clusters-$mc$\\
& FPT  & FPT & FPT & $nd$\\
\hline		
\end{tabular}
\caption{The results proved in this paper for \ld~problem ($(\alpha,\beta)$-LDD) on different structural parameters (the item with mark $^*$ is implied by \cite{DBLP:journals/dam/BetzlerBNU12, DBLP:journals/algorithmica/GanianKO21}). 
The parameters include clique-width ($cw$), modular-width ($mw$), \cographs-\gmparam~(\cographs-$mc$), iterated type partitions ($itp$), \stars-\gmparam~(\stars-$mc$), \clusters-\gmparam~(\clusters-$mc$), and neighborhood diversity ($nd$), each of which is bounded by its succeeding one except $nd$. 
Results in boldface are those proved directly in this paper (other entries are implied by these results).
W[1]-hard (W[1]-h) or fixed-parameter tractable (FPT) shows the parameterized complexity of \ld~problem with specific $\alpha$, $\beta$ values.  Three cases on \clusters-$mc$ are left open.
Recall that $\alpha = 0$ is equivalent to the \kd~problem, $\alpha = 1$ to the \bdd~problem, and $\alpha \in (0, 1)$ to the \ad~problem --- we generalize the latter by adding the additive $\beta$ term (the original problem puts $\beta = 0$, and the results of the second column include this case). Not shown in the table: BDD is FPT in parameters \linearforest-$mc$ and \binaryforest-$mc$.}
\label{summary-table}
\end{center}
\end{table}

\noindent
\textbf{Our contributions.}
We first establish that if $\G$ is hereditary and has bounded $mw$, then $mw(G)$ is at most $\gmw(G)$ for a graph $G \notin \G$, allowing the transfer of hardness results.  We then show that for many graph classes, namely those that are \emph{easily mergeable}, there is a polynomial-time algorithm to compute $\gmw$ and obtain a corresponding modular partition.  

We then introduce techniques to obtain W[1]-hardness results and FPT algorithms for the $\gmw$ parameter.  
In essence, we argue that the dynamic programming technique on $mw$ algorithms works well when a small amount of information from each module is sufficient to obtain a solution for the whole graph (for instance, the algorithms of~\cite{DBLP:conf/iwpec/GajarskyLO13} require only a single integer from each module).  Such \emph{succinct solution tables} from each module can often be combined using integer programs with few integer variables \cite{DBLP:journals/combinatorica/FrankT87,DBLP:journals/mor/Kannan87,DBLP:journals/mor/Lenstra83}.
Conversely, when too much information is required from each module (e.g. linear in the size of the modules) to obtain a final solution, we are unable to use integer programming and this typically leads to W[1]-hardness.  This occurs when \emph{arbitrary solution tables} are possible in each module.

We use a large class of domination problems to illustrate these techniques.  Specifically, for a real number $\alpha \in [0, 1]$ and integer $\beta$ (possible negative), we introduce the \ld~problem.  
Given a graph $G$ and an integer $q$, this problem asks for a subset of vertices $X \subseteq V(G)$ of size at most $q$ such that, for each $v \in V(G) \setminus X$, we have $|N(v) \cap X| \geq \alpha |N(v)| + \beta$.  In other words, each unchosen vertex has at least a fraction $\alpha$ of its neighbors dominating it, plus some number $\beta$.
The problem is equivalent to the \bdd~problem \cite{DBLP:journals/dam/BetzlerBNU12, DBLP:journals/algorithmica/GanianKO21} if $\alpha = 1$ and $\beta \leq 0$; equivalent to  the \kd~problem \cite{DBLP:journals/gc/ChellaliFHV12, DBLP:journals/dam/LanC13} if $\alpha = 0$ and $\beta \geq 1$; and equivalent to the \ad~problem \cite{bakhshesha2015generalization, DBLP:journals/dmgt/DahmeRV04, DBLP:journals/dm/DunbarHLM00, DBLP:journals/ijcmcst/SaadiBAM22} if $\alpha \in (0,1]$ and $\beta=0$. Additionally, if $\alpha = 0$ and $\beta \leq 0$, then all well-coded instances of the problem are yes instances, and if $\alpha = 1$ and $\beta \geq 1$, then all instances of the problem are no instances.  We exclude those cases.


Table \ref{summary-table} illustrates some of the main results of this paper.  Most of the hardness results follow from a more general result on arbitrary solution tables, detailed in Theorem \ref{ld-main-theorem} below.  The statement is slightly technical, so we describe its high-level implication on the case $\alpha = 1$ (\bdd).  In this problem, a possible solution table is a function $f : [n] \rightarrow \mathbb{N}$ such that, for each $i \in [n]$, $f(i)$ is the minimum maximum degree achievable in $G$ after deleting $i$ vertices.   The theorem states that for graph class $\G$, if for any such $f$ we can construct a graph in $\G$ whose solution table is $f$, then \bdd~is W[1]-hard in $\G$-\gmparam.
We show that the class of disjoint stars satisfies this property, which implies several other hardness results.  
On the other hand, several positive results for the \bdd~problem make use of succinct solution tables.  In essence, when $f$ can be represented by a constant number of linear functions, or by a convex function, then we can use integer programming to merge these tables and obtain positive results.
Finally, additional results not shown in the table can be deduced easily from this technique.  We show that \bdd~is FPT in  \linearforest-\gmparam~and \binaryforest-\gmparam~as parameters, which are of interest since they are incomparable with modular-width.

\section{Preliminary notions}

For an integer $n$, denote $[n] = \{1,\ldots,n\}$.  The maximum degree of a graph $G$ is denoted $\Delta(G)$.  
The \emph{complement} of $G$, denoted $\overline{G}$, is the graph with $V(\overline{G}) = V(G)$ and, for distinct $u, v \in V(G)$, $uv \in E(\overline{G})$
if and only if $uv \notin E(G)$.  The neighborhood of $v \in V(G)$ is $N(v)$.
The set of connected components of a graph $G$ is denoted $CC(G)$.  For $X \subseteq V(G)$, $G[X]$ denotes the subgraph of $G$ induced by $X$ and $G - X = G[V(G) \setminus X]$.
If $X = \{v\}$, we may write $G - v$.
Slightly abusing notation, we may also write $v \in G$ instead of $v \in V(G)$, $|G|$ instead of $|V(G)|$, and $X \cap G$ instead of $X \cap V(G)$.

A \emph{graph class} $\G$ is a (possibly infinite) set of graphs containing at least one non-empty graph.
We say that $\G$ is \emph{hereditary} if, for any $G \in \G$, any induced subgraph of $G$ is also in $\G$.
Note that if $\G$ is hereditary, 
the graph consisting of an isolated vertex is in $\G$.
We say that $\G$ is a \emph{polynomial-time recognition graph class} if there is a polynomial-time algorithm that decides whether a given graph $G$ is in $\G$.
Some popular graph classes that we will use throughout this paper: $\I$ is the set of all edgeless graphs; $\K$ is the set of all complete graphs; \clusters~is the set of graphs in which every connected component induces a complete graph; \stars~is the set of graphs in which every connected component is a star graph; \cographs~is the set of cographs, where a cograph is either a single vertex, or a graph obtained by applying either a join or a disjoint union of two cographs \cite{brandstadt1999graph}.
Observe that $\I \subseteq \stars \subseteq \cographs$.

\medskip

\noindent
\textbf{Modular parameters.}
First, let us introduce the concepts of modular decomposition and modular-width.
For a graph $G=(V,E)$, a \emph{module} of $G$ is a set of vertices $M \subseteq V$ 
such that for every $v\in V \setminus M$, either all vertices of $M$ are adjacent to $v$ or all vertices of $M$ are not adjacent to $v$. The empty set, the vertex set $V$, and all one element sets $\{v\}$ for $v\in V$ are called the \emph{trivial modules}. In a \textit{prime} graph, all modules are trivial modules.  A \emph{factor} is a subgraph induced by a module.
A module $M$ is \emph{strong} if for any module $M'$ of $G$, either $M'$ is a subset of $M$, $M$ is a subset of $M'$, or the intersection of $M$ and $M'$ is empty. $M$ is  \emph{maximal} if $M \subsetneq V$ and there is no module $M'$ such that $M \subsetneq M' \subsetneq V$. 
 A partition $\M$ of $V(G)$ is called a \emph{modular partition} if every element of $\M$ is a module of $G$.  If $\M$ only contains maximal strong modules, then it is a \emph{maximal modular partition}.  It is known that for every graph, this partition is unique.  Two modules $M$ and $M'$ are \emph{adjacent} in $G$ if every vertex of $M$ is adjacent to every vertex of $M'$, and \emph{non-adjacent} if every vertex of $M$ is not adjacent to any vertex of $M'$. For a modular partition $\M$ of $V$, the \emph{quotient graph} $G_{/\M}$ is defined by $V(G_{/\M}) = \{v_M : M \in \M\}$ and $v_{M_1} v_{M_2} \in E(G_{/P})$ if and only if modules $M_1, M_2$ are adjacent.

We can represent all strong modules $M$ of $G$ by an inclusion tree $MD(G)$. Moreover, we call $MD(G)$ the \textit{modular decomposition tree} of $G$, in which each vertex $v_M$ is corresponding to a strong module $M$. More specifically, each leaf $v_{\{v\}}$ of the inclusion tree corresponds to a vertex $v$ of $G$ and the root vertex $v_V$ corresponds to $V$. Moreover, for any two strong modules $M$ and $M'$,  $M'$ is a proper subset of $M$ if and only if $v_{M'}$ is a descendant of $v_M$ in $MD(G)$. 
An internal vertex $v_M$ of $MD(G)$ is \textit{parallel} if $G[M]$ is disconnected, is \textit{series} if $\overline{G[M]}$ is disconnected, and is \textit{prime} otherwise.
The \textit{modular-width} of a graph $G$ is the maximum number of children of a prime vertex in $MD(G)$.
We recommend~\cite{DBLP:journals/csr/HabibP10} for more information on the topic.

Next, let us provide the definition of the parameter  \emph{neighborhood diversity} \cite{DBLP:journals/algorithmica/Lampis12}. A modular partition of $V(G)$ is called a \textit{neighborhood partition} if every module of the modular partition is either a clique or an independent set. The width of the partition is its cardinality. The \textit{minimum neighborhood partition} of $V(G)$ is the \textit{neighborhood partition} of $V(G)$ with the minimum width. The \textit{neighborhood diversity}, denoted by $nd(G)$ or $nd$, of $G$ is the width of the minimum neighborhood partition of $V(G)$. Theorem 7 of \cite{DBLP:journals/algorithmica/Lampis12} proves that minimum neighborhood partition, thus also the neighborhood diversity, can be obtained in polynomial time. Clearly, modular-width generalizes neighborhood diversity.

Our variant of modular-width restricted to graph classes follows.

\begin{definition}
Let $\G$ be a graph class.
For a given graph $G$ (not necessarily in $\G$), a module $M$ of $G$ is a \emph{$\G$-module} if $G[M]$ belongs to $\G$.  
A modular partition $\M = \{M_1, \ldots, M_k\}$ of a graph $G$ is called a \emph{$\G$-modular partition} if each $M_i$ is a $\G$-module.  

The \emph{$\G$-\gmparam}~of $G$, denoted by $\gmw(G)$, is the minimum cardinality of a $\G$-modular partition of $G$.
\end{definition}

The neighborhood diversity ($nd$) is equivalent to the $(\K \cup \I)$-\gmparam.
The iterated type partition ($itp$) parameter~\cite{DBLP:conf/iwoca/CordascoGR20} is the number of vertices of the graph obtained through the following process: 
start with the smallest modular partition into cliques and edgeless graphs;
contract each module into a single vertex; repeat until no more contraction is possible.  It can be shown that the remaining vertices represent modules that are cographs.  
Thus, $itp(G)$ is not smaller than the \cographs-\gmparam~of $\G$.

\section{Properties and tractability of \boldmath\texorpdfstring{$\mathcal{G}$}{\mathcal{G}}-\gmparam}

In this section, we show that $mw$ is not larger than $\gmw$ for graph classes of bounded modular-width. This allows hardness results on $\gmw$ to also apply to $mw$.  We then show that $\gmw$ is polynomial-time computable for ``easily mergeable'' graph classes.

Observe that for a graph $G$, if one $G$ or $\overline{G}$ is disconnected, then $CC(G)$ or $CC(\overline{G})$, respectively, form the maximal modular partition of $G$ (since the modules in this partition are defined to be strong).

Before we can compare $\gmw$ and $mw$, we will need the following. 

\begin{lemma}\label{lem:inducedmw}
Let $G$ be a graph with at least two vertices.  Then the following holds:
\begin{itemize}
\item if $G$ is disconnected, then $mw(G) = \max_{C \in CC(G)} mw(G[C])$;

\item if $\overline{G}$ is disconnected, then $mw(G) = \max_{C \in CC(\overline{G})} mw(G[C])$;

\item 
if both $G$ and $\overline{G}$ are connected, let $\M^*$ be the maximal modular partition of $G$.  Then $mw(G) = \max( |\M^*|, \max_{M \in \M^*} mw(G[M]) )$. 
\end{itemize}
\end{lemma}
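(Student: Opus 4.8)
The plan is to prove each of the three cases by relating the modular decomposition tree $MD(G)$ to the decomposition trees of the relevant induced subgraphs, using the fact that $mw(G)$ is the maximum number of children over all prime nodes of $MD(G)$.

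\textbf{Cases 1 and 2 (the easy cases).} Suppose $G$ is disconnected. Then, as noted just before the lemma, $CC(G)$ is the maximal modular partition of $G$, and the root of $MD(G)$ is a parallel node whose children are the roots of the trees $MD(G[C])$ for $C \in CC(G)$. Since the root is parallel and hence not prime, the prime nodes of $MD(G)$ are exactly the (disjoint) union of the prime nodes of the various $MD(G[C])$, each appearing with the same number of children. Taking the maximum over all prime nodes therefore gives $mw(G) = \max_{C \in CC(G)} mw(G[C])$ (with the usual convention that a tree with no prime node contributes $0$, or more precisely that the maximum over an empty set is treated as $0$, so that a cograph has modular-width $0$). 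The case where $\overline{G}$ is disconnected is symmetric: $MD(G) = MD(\overline{G})$ as unlabeled trees, parallel and series nodes swap roles, and the same argument applies to $CC(\overline{G})$. I would state this symmetry explicitly rather than repeat the argument.

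\textbf{Case 3 (the main case).} Now suppose both $G$ and $\overline{G}$ are connected. Then the root $v_V$ of $MD(G)$ is prime, and by the standard theory of modular decomposition its children are exactly the vertices $v_M$ for $M$ ranging over the maximal strong modules of $G$, i.e. over the maximal modular partition $\M^*$. Hence the root is a prime node with exactly $|\M^*|$ children, contributing $|\M^*|$ to the maximum. Moreover, the subtree of $MD(G)$ rooted at $v_M$ is precisely $MD(G[M])$ for each $M \in \M^*$, and every non-root node of $MD(G)$ lies in exactly one such subtree; consequently every prime node of $MD(G)$ is either the root or a prime node of some $MD(G[M])$, with the same child count in both trees. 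Taking the maximum over all prime nodes of $MD(G)$ then yields $mw(G) = \max\big(|\M^*|,\ \max_{M \in \M^*} mw(G[M])\big)$, as claimed.

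\textbf{Main obstacle.} The only real content is the structural claim that when both $G$ and $\overline G$ are connected, the maximal strong modules of $G$ are exactly the modules corresponding to the children of the root of $MD(G)$, and that the subtree rooted at $v_M$ coincides with $MD(G[M])$; this is classical (see~\cite{DBLP:journals/csr/HabibP10}) but I would either cite it or give a short self-contained argument: a strong module $M' \subsetneq V$ with $M' \neq \{v\}$ for all $v$ is contained in some maximal strong module, and the strong modules of $G$ that are contained in a fixed $M \in \M^*$ are precisely the strong modules of $G[M]$ (since adjacency between $M$ and the rest of $G$ is uniform, restricting to $M$ does not create or destroy modules inside $M$). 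A minor bookkeeping point to handle carefully is the empty-maximum / cograph convention so that the three cases compose consistently (e.g. applying Case 3 to a graph all of whose factors are single vertices should give $mw(G) = |\M^*|$, which it does since the inner maximum is $0$). I would fix this convention once at the start of the proof.
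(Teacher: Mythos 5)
Your proof is correct and takes essentially the same approach as the paper's, namely identifying the root of $MD(G)$ as a prime node with $|\M^*|$ children and observing that the subtree rooted at each $v_M$ is exactly $MD(G[M])$. You spell out cases 1 and 2 (which the paper dismisses as trivial) and flag the empty-maximum convention, but the core argument is the same.
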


\begin{proof}
According to the definitions of modular-width, the first and second cases are trivial. Let us consider the third case. Suppose $G = (V,E)$ and $MD(G)$ is the modular decomposition tree of $G$. Since both $G$ and $\overline{G}$ are connected, the root $v_V$ of $MD(G)$ is a prime vertex with $|\M^*|$ children, moreover, each child $v_M$ of $v_V$ is corresponding to a module $M \in \M^*$. Modular decomposition tree $MD(G[M])$ of $G[M]$ is a subtree of $MD(G)$ such that the root $v_M$ of $MD(G[M])$ is a child of $v_V$, and $mw(G[M])$ is the largest prime vertex in the $MD(G[M])$. Thus, $\max_{M \in \M^*} mw(G[M])$ is the largest prime node among all subtree of $MD(G)$, where the roots of the subtree are children of $v_V$. This means that $\max_{M \in \M^*} mw(G[M])$ is the largest prime vertex of $MD(G)$ except for the root vertex of $MD(G)$. Overall, $\max( |\M^*|, \max_{M \in \M^*} mw(G[M]) )$ equals the modular-width of $G$ according the definition of modular-width.
\end{proof}

We heavily rely on the relationships between $\gmw$, modules, and induced subgraphs.

\begin{lemma}\label{lem:mwcontainment}
Suppose graph $G$ is not in graph class $\G$. Let $\M$ and $\M^*$ be a $\G$-modular partition and the maximal modular partition of $G$, respectively.
Then the following holds:
\begin{itemize}

\item 
if $G$ or $\overline{G}$ is disconnected, then, for any $M \in \M$ and $M'\in \M^*$, $M \cap M' \in \{M, M', \emptyset\}$.

\item 
if $G$ and $\overline{G}$ are connected, then for any $M \in \M$, there is a $M' \in \M^*$ such that $M \subseteq M'$.
\end{itemize}
\end{lemma}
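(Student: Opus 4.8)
The plan is to treat the two bullets separately; the first is essentially immediate from the definition of a strong module, while the second needs a short argument invoking primality of the prime quotient graph.

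\textbf{First bullet.} When $G$ or $\overline{G}$ is disconnected, the maximal modular partition $\M^*$ equals $CC(G)$ or $CC(\overline{G})$ respectively, and in either case every element of $\M^*$ is a \emph{strong} module of $G$. Hence, for $M \in \M$ and $M' \in \M^*$, the claim $M \cap M' \in \{M, M', \emptyset\}$ is just the defining property of the strong module $M'$ instantiated at the module $M$: either $M \subseteq M'$, or $M' \subseteq M$, or $M \cap M' = \emptyset$, which are exactly the three stated possibilities. (If one prefers to avoid quoting strongness, the same conclusion follows directly: if $M \cap M'$, $M \setminus M'$ and $M' \setminus M$ were all nonempty, a shortest path inside the connected component $M'$ between a vertex of $M$ and a vertex of $M' \setminus M$ would contain an edge leaving $M$, and the module property of $M$ would then force the endpoint of that edge lying in $M' \setminus M$ to be adjacent to some vertex of $M \setminus M' \subseteq V(G) \setminus M'$ --- impossible inside one component; the $\overline{G}$-disconnected case is identical with non-adjacency replacing adjacency.)

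\textbf{Second bullet.} Assume $G$ and $\overline{G}$ are both connected and fix $M \in \M$; since $G \notin \G$ but $G[M] \in \G$, we have $\emptyset \neq M \subsetneq V(G)$. Suppose for contradiction that no element of $\M^*$ contains $M$. Each $M' \in \M^*$ is strong, so $M \subseteq M'$, $M' \subseteq M$, or $M \cap M' = \emptyset$; the first case is excluded by assumption, so every $M' \in \M^*$ satisfies $M' \subseteq M$ or $M' \cap M = \emptyset$. Since $\M^*$ partitions $V(G)$, this forces $M = \bigcup_{M' \in \mathcal{A}} M'$, where $\mathcal{A} := \{M' \in \M^* : M' \subseteq M\}$, with $\emptyset \neq \mathcal{A} \subsetneq \M^*$ (nonempty because $M \neq \emptyset$, proper because $M \neq V(G)$). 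If $|\mathcal{A}| = 1$ then $M \in \M^*$, contradicting the assumption, so $|\mathcal{A}| \geq 2$. Next I would verify that $U := \{v_{M'} : M' \in \mathcal{A}\}$ is a module of the quotient graph $G_{/\M^*}$: indeed, any $M'' \in \M^* \setminus \mathcal{A}$ is disjoint from $M$, so by the module property of $M$ (and of $M''$) the vertex $v_{M''}$ is adjacent in $G_{/\M^*}$ either to all of $U$ or to none of it. Thus $U$ is a module of $G_{/\M^*}$ with $2 \leq |U| < |V(G_{/\M^*})|$, that is, a nontrivial module. But when $G$ and $\overline{G}$ are both connected the quotient graph $G_{/\M^*}$ is prime (equivalently, the root of $MD(G)$ is a prime vertex), which is standard; see \cite{DBLP:journals/csr/HabibP10}. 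This contradiction shows that $M \subseteq M'$ for some $M' \in \M^*$.

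The only real obstacle is the second bullet, and within it the one nonroutine ingredient is the classical fact that the quotient by the maximal strong modules is prime whenever $G$ and $\overline{G}$ are both connected; I would cite \cite{DBLP:journals/csr/HabibP10} for this, or reuse the observation already used in the proof of Lemma~\ref{lem:inducedmw}. I would also take care to flag exactly where the hypothesis $G \notin \G$ is needed --- solely to ensure $M \neq V(G)$, hence $\mathcal{A} \neq \M^*$ --- since without it the statement fails (for instance $M = V(G)$ is contained in no element of $\M^*$).
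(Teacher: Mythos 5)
Your proof of the first bullet matches the paper's (both just unwind strongness of the elements of $\M^*$), but your second bullet takes a genuinely different route. The paper argues directly at the vertex level: if some $M\in\M$ met two distinct $M^*_a, M^*_b\in\M^*$ and $M\subsetneq V(G)$, then $M$ could be extended to a maximal module containing $a$ other than $M^*_a$, contradicting the uniqueness of the maximal (strong) module containing a given vertex when $G,\overline{G}$ are connected; this forces $M=V(G)$, hence $\gmw(G)=1$, contradicting $G\notin\G$. You instead observe that strongness forces $M$ to be a \emph{union} of a proper nonempty subfamily $\mathcal{A}\subseteq\M^*$ of size $\geq 2$, lift $\mathcal{A}$ to a vertex set $U$ of the quotient $G_{/\M^*}$, check that $U$ is a module there, and contradict primality of the quotient. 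Both proofs rest on the same structural facts about the prime case, just packaged differently: the paper exploits uniqueness of the maximal strong module containing a vertex, you exploit primality of the quotient graph. Your version has the small advantage of making explicit that the hypothesis $G\notin\G$ is used only to guarantee $M\neq V(G)$ (hence $\mathcal{A}\subsetneq\M^*$), which the paper leaves slightly implicit. One minor nit: you should say explicitly that the hypothesis ``$M$ is not contained in any element of $\M^*$'' already rules out $M\in\M^*$, since you later invoke $|\mathcal{A}|\geq 2$ — you do flag the $|\mathcal{A}|=1$ case, so this is handled, just worth being sure the phrasing ``contained in'' is read as $\subseteq$ rather than $\subsetneq$. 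The argument is correct as written.
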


\begin{proof}
The first case is straightforward since $M'$ is a strong module. 

Consider the second case.
Note that since we assume $G \notin \G$, we have $\gmw(G) > 1$. Suppose that there are $a, b \in M$
and distinct $M^*_a, M^*_b \in \M^*$ such that 
$a \in M^*_a$, $b \in M^*_b$.  Because $M^*_a$ is the unique maximal module of $G$ that contains $a$, $M$ must be equal to $V(G)$, as otherwise, this would imply the existence of another maximal module containing $a$ (this other maximal module is either $M$, or the largest module not equal to $V(G)$ that contains $M$).  Thus $\M = \{M\}$, contradicts $\gmw(G) > 1$.  Therefore, $M$ must be a subset of some module of $\M^*$.
\end{proof}

\begin{observation}\label{obs:gmw-induced}
Let $\G$ be a hereditary graph class.  Then for any graph $G$ and any $X \subseteq V(G)$, $\gmw(G[X]) \leq \gmw(G)$.
\end{observation}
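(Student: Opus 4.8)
The plan is to take a minimum $\G$-modular partition $\M = \{M_1, \ldots, M_k\}$ of $G$ with $k = \gmw(G)$, and to show that its ``trace'' on $X$, namely $\M_X := \{M_i \cap X : i \in [k],\ M_i \cap X \neq \emptyset\}$, is a $\G$-modular partition of $G[X]$ of size at most $k$. Since $|\M_X| \leq |\M| = \gmw(G)$, this immediately gives $\gmw(G[X]) \leq \gmw(G)$.

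There are three things to verify about $\M_X$. First, it is a partition of $X$: the sets $M_i \cap X$ are pairwise disjoint because the $M_i$ are, they cover $X$ because the $M_i$ cover $V(G)$, and we simply discard the empty ones. Second, each nonempty $M_i \cap X$ is a module of $G[X]$: if $v \in X \setminus (M_i \cap X)$, then $v \notin M_i$, so in $G$ the vertex $v$ is either adjacent to all of $M_i$ or to none of $M_i$; restricting to $X$, $v$ is adjacent (in $G[X]$) to all of $M_i \cap X$ or to none of it. Hence the module property is inherited. Third, each $G[X][M_i \cap X] = G[M_i \cap X]$ is an induced subgraph of $G[M_i] \in \G$, so since $\G$ is hereditary it lies in $\G$; thus $M_i \cap X$ is a $\G$-module of $G[X]$.

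I do not anticipate a serious obstacle here; the statement is essentially a routine check that taking induced subgraphs commutes with the relevant definitions, and the only hypothesis actually used is that $\G$ is hereditary (to keep the surviving pieces inside $\G$). One minor point to state carefully is that $\M_X$ is nonempty whenever $X \neq \emptyset$, which is fine since $\G$ contains an isolated vertex and the trivial partition handles the edge case $X = \emptyset$ (where $\gmw$ of the empty graph can be taken to be $0$, or the statement is vacuous); either way the inequality holds.
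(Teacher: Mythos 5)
Your proof is correct and is essentially the same argument as the paper's: both rest on the fact that the trace $\{M_i \cap X\}$ of a $\G$-modular partition on $X$ is again a $\G$-modular partition of $G[X]$, with heredity keeping each piece inside $\G$. The only difference is that the paper removes a single vertex at a time (implicitly iterating), whereas you take the restriction to $X$ in one step, which is slightly cleaner.
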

\begin{proof}
Let $\M$ be a $\G$-modular partition of $G$ of size $\gmw(G)$.  If we delete a vertex $v$ of a module $M_i \in \M$, 
the resulting subgraph $G[M_i] - v$ belongs to $\G$ by heredity.  Thus $\M' = (\M \cup \{M_i \setminus \{v\}\}) \setminus \{M_i, \emptyset\}$ 
is a $\G$-modular partition of $G - v$ (even if $M_i = \{v\}$) and $|\M'| \leq |\M|$.  
\end{proof}

Notice that without the heredity condition, Observation~\ref{obs:gmw-induced} may be false, for instance if $\G$ is the graph class 
``graphs whose number of vertices is either $1$, or a multiple of $100$'' and $G$ is an edgeless graph with $100$ vertices.  
It is now possible to show inductively that is our setting, for graph classes of fixed modular-width, having bounded $\gmw$ implies bounded $mw$.  

\begin{theorem} \label{modularwidth-smaller-than-gmw}
Let $\G$ be an hereditary graph class and define $\omega_\G := \max_{H \in \G} mw(H)$.  Then for any graph $G$, $mw(G) \leq \max(\gmw(G), \omega_\G )$. 
\end{theorem}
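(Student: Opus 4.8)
The plan is to prove the statement by strong induction on $|V(G)|$. The base case is when $G$ has one vertex, where $mw(G)$ is $0$ (or undefined/trivial) and the inequality holds trivially; more importantly, if $G \in \G$ then $mw(G) \leq \omega_\G$ by definition, so we may assume $G \notin \G$, which in particular gives $\gmw(G) \geq 2$ and $|V(G)| \geq 2$. I would then split into the three cases of Lemma~\ref{lem:inducedmw}, mirroring its structure.

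First, suppose $G$ is disconnected (the case $\overline{G}$ disconnected is symmetric, using that $\gmw$ and $mw$ are... well, $mw$ is complementation-invariant, and for $\gmw$ we just apply the argument to $\overline G$ via Lemma~\ref{lem:inducedmw}). By Lemma~\ref{lem:inducedmw}, $mw(G) = \max_{C \in CC(G)} mw(G[C])$. Each $G[C]$ is an induced subgraph of $G$, so by Observation~\ref{obs:gmw-induced}, $\gmw(G[C]) \leq \gmw(G)$, and since $|V(G[C])| < |V(G)|$ (as $G$ is disconnected with at least two components, each proper), the induction hypothesis gives $mw(G[C]) \leq \max(\gmw(G[C]), \omega_\G) \leq \max(\gmw(G), \omega_\G)$. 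Taking the maximum over all components finishes this case.

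The third case — both $G$ and $\overline{G}$ connected — is the main one and the place I expect the real work. Here Lemma~\ref{lem:inducedmw} gives $mw(G) = \max(|\M^*|, \max_{M \in \M^*} mw(G[M]))$ where $\M^*$ is the maximal modular partition. For the terms $mw(G[M])$ with $M \in \M^*$: each $G[M]$ is a proper induced subgraph (since $\M^*$ has at least two parts as $G$ is prime-rooted here), so by Observation~\ref{obs:gmw-induced} and induction, $mw(G[M]) \leq \max(\gmw(G[M]), \omega_\G) \leq \max(\gmw(G), \omega_\G)$. The crux is bounding $|\M^*|$ by $\max(\gmw(G), \omega_\G)$. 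Let $\M$ be a $\G$-modular partition of $G$ with $|\M| = \gmw(G)$. By the second bullet of Lemma~\ref{lem:mwcontainment}, every $M \in \M$ is contained in some $M' \in \M^*$; this gives a surjection... no, rather it gives a well-defined map $\M \to \M^*$. If this map is surjective, then $|\M^*| \leq |\M| = \gmw(G)$ and we are done. If it is not surjective, then some maximal strong module $M'^* \in \M^*$ is a union of several parts of $\M$ — but here I need the key observation: since $G[M'^*]$ is a factor and the parts of $\M$ inside $M'^*$ form a $\G$-modular partition of $G[M'^*]$, if $M'^*$ contains at least two parts of $\M$ then $G[M'^*] \notin \G$... actually the cleaner route: consider the restriction of $\M$ to each $M' \in \M^*$; it is a $\G$-modular partition of $G[M']$, so $\gmw(G[M']) \leq |\{M \in \M : M \subseteq M'\}|$. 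Summing, $\sum_{M' \in \M^*} \gmw(G[M']) \leq |\M| = \gmw(G)$. Since each $\gmw(G[M']) \geq 1$, this already gives $|\M^*| \leq \gmw(G)$ unless... it always gives $|\M^*| \leq \sum_{M'} \gmw(G[M']) \leq \gmw(G)$, so in fact $|\M^*| \leq \gmw(G)$ unconditionally in this case. Wait — I should double-check the edge case where some $M' \in \M^*$ lies entirely in $\G$; that is fine, $\gmw$ of it is $1$. So $|\M^*| \leq \gmw(G) \leq \max(\gmw(G),\omega_\G)$, and combining with the bound on the $mw(G[M])$ terms completes the proof.

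The main obstacle, then, is getting the bound $|\M^*| \leq \gmw(G)$ right: one must be careful that restricting a $\G$-modular partition $\M$ to a strong module $M' \in \M^*$ genuinely yields a $\G$-modular partition of $G[M']$ (each part $M \cap M'$ is a module of $G[M']$ — true since $M$ is a module of $G$ and intersecting with $M'$ preserves this — and induces a graph in $\G$ by heredity, since it is an induced subgraph of $G[M]$), and that the parts are nonempty after discarding empties so that the count $\sum_{M'} \gmw(G[M'])$ is a lower bound on $|\M|$ only if each such restriction is nonempty, which holds because $\{M'\}_{M' \in \M^*}$ partitions $V(G)$ and $\M$ partitions $V(G)$. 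This bookkeeping, together with correctly invoking Lemma~\ref{lem:mwcontainment} to ensure no part of $\M$ straddles two maximal modules, is the heart of the argument; everything else is routine induction.
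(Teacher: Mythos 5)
Your proof is correct and follows essentially the same structure as the paper's: induction on $|V(G)|$, reduction to the case $G \notin \G$, the three-way case split from Lemma~\ref{lem:inducedmw}, and Lemma~\ref{lem:mwcontainment} to contain each part of the optimal $\G$-modular partition $\M$ inside a maximal strong module. The one point where you take a mildly different route is the bound $|\M^*| \leq \gmw(G)$: the paper deduces this directly from the containment (each $M \in \M$ lies in a unique $M' \in \M^*$, and since both are partitions of $V(G)$ the induced map $\M \to \M^*$ is surjective), whereas you instead observe that the restriction of $\M$ to each $M' \in \M^*$ is a nonempty $\G$-modular partition of $G[M']$ and sum $\gmw(G[M']) \geq 1$ over all $M'$ --- a slightly more verbose way of packaging the same surjectivity fact, but equally valid.
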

\begin{proof}
If $G \in \G$, then $\gmw(G) = 1$ and $mw(G) \leq \max(1, \omega_\G)$ holds trivially because $mw(G) \leq \omega_\G$ by definition.  Hence we will assume that $G \notin \G$.
We prove the statement by induction on $|V(G)|$.  As a base case, suppose that $|V(G)| = 1$.  Then $mw(G) = 0$ and $\gmw(G) = 1$ and the statement holds.
Assume that $|V(G)| > 1$ and that the statement holds for smaller graphs.  
If $\gmw(G) = 1$, then $G \in \G$ and $mw(G) \leq \omega_\G$, in which case the statement holds. Suppose for the remainder that $\gmw(G) > 1$.
Let us note that for any subset $X \subsetneq V(G)$, we may assume by induction that
\begin{align}
mw(G[X]) \leq \max(\gmw(G[X]), \omega_\G) \leq \max(\gmw(G), \omega_\G)	\label{eq:mwgx}
\end{align}
where Observation~\ref{obs:gmw-induced} was used for the last inequality.

Suppose first that $G$ (resp. $\overline{G}$) is disconnected.  By Lemma~\ref{lem:inducedmw}, 
there is $C \in CC(G)$ (resp. $C \in CC(\overline{G})$ such that $mw(G) = mw(G[C])$.
Since $C \subsetneq V(G)$, we can use (\ref{eq:mwgx}) to deduce that $mw(G) = mw(G[C]) \leq \max(\gmw(G), \omega_\G)$.

Let us then assume that $G$ and $\overline{G}$ are both connected.  Let $\M$ be a $\G$-modular partition of $G$ of size $\gmw(G)$.  
Let $\M^*$ be the maximal modular partition of $G$.
Since $\gmw(G) > 1$, we may apply  Lemma~\ref{lem:mwcontainment} and deduce that each module $M_i \in \M$ is contained in some module of $\M^*$, which implies 
$|\M^*| \leq |\M| = \gmw(G)$.  
Now consider some $M^*_j \in \M^*$.  
By (\ref{eq:mwgx}), we have $mw(G[M^*_j]) \leq \max(\gmw(G), \omega_\G)$.
Finally, by Lemma~\ref{lem:inducedmw} we know that $mw(G)$ is the maximum of $|\M^*|$ or one of $mw(G[M^*_i])$, 
and in either case the statement holds.
\end{proof} 

Let us note that the above bound is tight, in the sense that $mw(G)$ can be at least as large as either $\gmw(G)$ or $\omega(G)$.  
For instance, if $G \notin \G$ is a prime graph, then $mw(G) = \gmw(G)$, and if $G = \arg \max_{H \in \G} mw(H)$, then $mw(G) = \omega_\G$.

\medskip 

\noindent
\textbf{Computing $\gmw$ for easily mergeable graph classes.}
A graph $G$ is a \emph{$\G$-join} if $\overline{G}$ is disconnected and $G[C] \in \G$ for each $C \in CC(\overline{G})$.
Likewise, $G$ is a \emph{$\G$-union} if $G$ is disconnected and $G[C] \in \G$ for each $C \in CC(G)$.
If $G$ is a $\G$-join (resp. $\G$-union), a \emph{$\G$-merge} is a $\G$-modular partition $\M$ of $G$ such that 
for each $C \in CC(\overline{G})$ (resp. $C \in CC(G)$), there is some $M \in \M$ that contains $C$.
We say that $\M$ is a \emph{minimum} $\G$-merge if no other $\G$-merge has a size strictly smaller than $\M$.

We say that a graph class $\G$ is \emph{easily mergeable} if there exists a polynomial time algorithm that,
given a graph $G$ such that $G$ is either a $\G$-join or a $\G$-union, 
outputs a minimum $\G$-merge of $G$. 
We say that $\G$ is \emph{trivially mergeable} if, for any $\G$-join or $\G$-union $G$, one of $\{V(G)\}, CC(G)$, or $CC(\overline{G})$ is a minimum $\G$-merge of $G$.

For easily mergeable classes, the $\G$-\gmparam~can be computed easily.  If $G \in \G$, we can simply return $\{V(G)\}$.  If $G$ and $\overline{G}$ are connected, we know by Lemma~\ref{lem:mwcontainment} that each module of a minimum $\G$-modular partition is contained in a maximal module.  It thus suffices to find $\G$-modules recursively into each module of the maximal modular partition.
The complex cases occur when $G$ or $\overline{G}$ is disconnected. 
It does not suffice to find the $\G$-modules recursively in each connected component. 
Indeed, the minimum $\G$-modular partition might merge several of these connected components into one module.
As we show, we may safely recurse into the connected components that do not induce a graph in $\G$.  For the connected components that do induce a graph in $\G$, we must decide how to merge them optimally, and this is where the easily mergeable property is needed.  Algorithm~\ref{alg:gmw} describes this idea.

\vspace{3mm}
\begin{algorithm}[h]
\SetAlgoNoLine
\SetKwProg{Fn}{function}{}{}
\Fn{compute-$\G$-mc($G$)}{
    \lIf{$G \in \G$}
    {
        return $\{V(G)\}$
    }
    $\M = \emptyset$  \quad //modules to return\;
    \uIf{both $G$ and $\overline{G}$ are connected}
    {
        Let $\M^* = \{M^*_1, \ldots, M^*_l\}$ be the maximal modular partition of $G$\;
        $\M = \bigcup_{i=1}^l$ compute-$\G$-mc($G[M^*_i]$)\;
    }
    \uElseIf{$G$ or $\overline{G}$ is disconnected}
    {
        Let $\C = CC(G)$ if $G$ is disconnected, and let $\C = CC(\overline{G})$ otherwise\;
        Let $\C_1 = \{C \in \C : G[C] \in \G\}$ and $\C_2 = \C \setminus \C_1$\;
        Let $\M$ be a minimum $\G$-merge of $G[ \bigcup_{C \in \C_1} C]$\;
        \For{$C \in C_2$}
        {
            $\M_C = $ compute-$\G$-mc($G[C]$)\;
            Add every module of $\M_C$ to $\M$\;
        }
        
    }
    return $\M$\;
   
 }
\caption{Computing a $\G$-modular partition for a graph class $\G$.}
\label{alg:gmw}
\end{algorithm}
\vspace{3mm}

By combining Lemma~\ref{lem:mwcontainment} and induction, we can show the following.

\begin{theorem}
Suppose that $\G$ is a hereditary graph class.  Suppose further that $\G$ is polynomial-time recognizable and easily mergeable.
Then Algorithm~\ref{alg:gmw} returns a $\G$-modular partition of $G$ of minimum size in polynomial time.
\end{theorem}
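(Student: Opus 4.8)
The plan is to prove the statement by induction on $|V(G)|$, mirroring the case analysis in the algorithm and in Lemma~\ref{lem:mwcontainment}. The polynomial running time is the easy part: recognizing membership in $\G$ is polynomial by assumption, computing the modular decomposition (hence the maximal modular partition) is polynomial, and the minimum $\G$-merge step is polynomial since $\G$ is easily mergeable; since each recursive call is on a strictly smaller induced subgraph (a maximal strong module, or a connected component not inducing a graph in $\G$) and the recursion branches into disjoint vertex sets, the total work is polynomial by a standard argument on the recursion tree.

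The correctness proof proceeds by strong induction. \textbf{Base case / trivial case:} if $G \in \G$, then $\gmw(G) = 1$ and $\{V(G)\}$ is optimal. \textbf{Case $G$ and $\overline G$ connected:} here $G \notin \G$ implies $\gmw(G) > 1$, so by Lemma~\ref{lem:mwcontainment} every module of any $\G$-modular partition is contained in a module of the maximal modular partition $\M^*$. Hence any $\G$-modular partition refines $\M^*$, so its size is the sum over $M^*_i \in \M^*$ of the sizes of its restrictions to the $M^*_i$, each of which is a $\G$-modular partition of $G[M^*_i]$; conversely, the union of $\G$-modular partitions of the $G[M^*_i]$ is a $\G$-modular partition of $G$ (this uses that a module of a module of $G$ need not be a module of $G$ in general, but here each $M^*_i$ is itself a module, and a module of $G[M^*_i]$ together with the fact that $M^*_i$ is a module of $G$ yields a module of $G$ — this is the one spot requiring a careful one-line verification). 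Therefore $\gmw(G) = \sum_i \gmw(G[M^*_i])$, and by the induction hypothesis each recursive call returns an optimal $\G$-modular partition of $G[M^*_i]$, so the union the algorithm returns has size $\sum_i \gmw(G[M^*_i]) = \gmw(G)$.

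\textbf{Case $G$ or $\overline G$ disconnected:} let $\C$ be $CC(G)$ or $CC(\overline G)$ accordingly, partitioned into $\C_1$ (components inducing a graph in $\G$) and $\C_2$ (the rest). By the first bullet of Lemma~\ref{lem:mwcontainment}, each module $M$ of an optimal $\G$-modular partition $\M_{opt}$ is either contained in a single component of $\C$ or is a union of several components of $\C$. I claim that no optimal partition merges a component $C \in \C_2$ with anything else: if $M \in \M_{opt}$ strictly contains some $C \in \C_2$, then $G[M]$ has $C$ as a proper connected component (resp. co-component), and since $G[C] \notin \G$ and $\G$ is hereditary, splitting $M$ into $C$ together with the remaining components it contains yields a $\G$-modular partition of no larger size — formally one argues $G[M] \notin \G$ by heredity, so this merge was never forced, and unmerging only helps. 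Consequently there is an optimal $\G$-modular partition in which every $C \in \C_2$ appears split into $\G$-modules internal to $C$, contributing $\sum_{C \in \C_2} \gmw(G[C])$, and the modules touching $\bigcup_{C \in \C_1} C$ form a $\G$-merge of $G[\bigcup_{C \in \C_1} C]$, contributing at least the size of a minimum $\G$-merge. Conversely, a minimum $\G$-merge of $G[\bigcup_{C \in \C_1} C]$ together with optimal $\G$-modular partitions of the $G[C]$, $C \in \C_2$, forms a valid $\G$-modular partition of $G$ (each piece is a module of $G$ since components/co-components and unions thereof are modules). Matching these two bounds and invoking the induction hypothesis on the $G[C]$ with $C \in \C_2$ and the easily-mergeable property for the $\C_1$ part gives optimality of the returned partition. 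The main obstacle is the ``no optimal partition merges a $\C_2$ component'' exchange argument: one must be careful that unmerging a $\C_2$ component from a larger module leaves the rest of that module as a legitimate (possibly smaller) collection of $\G$-modules, which follows because after removing $C$ the remainder $M \setminus C$ is still a module of $G$ (a union of components, resp. co-components) and still induces a graph in $\G$ by heredity — or, if it becomes empty or its sub-pieces must themselves be decomposed, this is handled by a further appeal to induction, so no size is lost.
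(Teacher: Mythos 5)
Your proposal follows essentially the same structure as the paper's proof: induction on $|V(G)|$, a trivial case when $G \in \G$, the connected case handled via Lemma~\ref{lem:mwcontainment} and the fact that a $\G$-modular partition refines the maximal modular partition, and the disconnected case handled by separating components into $\C_1$ and $\C_2$. Your explicit remark that a module of $G[M^*_i]$ is a module of $G$ (because $M^*_i$ is a module) is a useful one-line verification that the paper glosses over when asserting that the unioned recursive outputs form a $\G$-modular partition.

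There is, however, one genuine gap in the disconnected case. You correctly argue that no module of an optimal partition can contain a component $C \in \C_2$ (your ``unmerging'' narrative is superfluous --- the point is simply that such a module could not be a $\G$-module, by heredity, so the situation never arises). But you then assert that ``the modules touching $\bigcup_{C \in \C_1} C$ form a $\G$-merge,'' and this does \emph{not} follow from what you established: Lemma~\ref{lem:mwcontainment} allows a module of the optimal partition to be properly contained in a $\C_1$ component, i.e., some $C \in \C_1$ could be split into several $\G$-modules, in which case the restriction of the optimal partition to $\bigcup_{C \in \C_1} C$ is a $\G$-modular partition but not a $\G$-merge. What is missing is the dual exchange argument --- the one the paper does carry out: if $\M'$ splits some $C \in \C_1$ into two or more modules, replace those modules by the single module $C$ (valid since $G[C] \in \G$ and $C$ is a module of $G$), obtaining a partition that is no larger; iterating this yields an optimal partition in which every $C \in \C_1$ is contained in a single module, at which point the $\G$-merge claim holds. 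You spent effort on the $\C_2$-side exchange where none was needed, and skipped it on the $\C_1$ side where it is needed. Once that step is inserted, the proposal is complete and matches the paper's argument.
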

\begin{proof}
We focus on the correctness of the algorithm (the polynomial time is not too difficult to verify).  The proof of correctness is by induction on $|V(G)|$.  If $|V(G)| = 1$, 
then by heredity, $G \in\G$ and the algorithm correctly returns $\{V(G)\}$.

Suppose that $|V(G)| > 2$ and that the algorithm is correct on smaller graphs.  If $G \in \G$, then it is clear that returning $\{V(G)\}$ is correct.  So assume that $G \notin \G$. 
Let $\M$ be the partition of $V(G)$ returned by the algorithm, and let $\M'$ be a $\G$-modular partition of $G$ of minimum size.  It is not hard to check that in all cases, $\M$ is a $\G$-modular partition: the modules added to $\M$ either come from recursive calls (and are thus $\G$-modules by induction), or are explicitly checked to be in $\G$ (the $\C_1$ set combined with the definition of a $\G$-merge). Hence, we focus on showing that $|\M| = |\M'|$.

Consider the case in which both $G$ and $\overline{G}$ are connected.  
Let $\M^*$ be the maximal modular partition of $G$.
By Lemma~\ref{lem:mwcontainment}, each module of $\M'$ is a subset of some module of $\M^*$.
For $M^*_i \in \M^*$, let $\M'_i = \{M' \in \M' : M' \subseteq M^*_i\}$.  
Then $\M'_i$ must be a $\G$-modular partition of $G[M^*_i]$ of minimum size, because if there is 
a smaller such partition $\M''_i$, then $(\M' \setminus \M'_i) \cup \M''_i$ is a better $\G$-partition of $G$.
By induction, Algorithm~\ref{alg:gmw} returns a $\G$-modular partition of $G[M^*_i]$ of size $|\M'_i|$ and, 
since this is true for each $M^*_i \in \M^*$, $|\M| = |\M'|$.

So, consider the case in which $G$ or $\overline{G}$ is disconnected.   
Let $\C = CC(G)$ if $G$ is disconnected, and let $\C = CC(\overline{G})$ otherwise.  
As a consequence of Lemma~\ref{lem:mwcontainment}, for any $C \in \C$, either some $M'_i \in \M'$ contains $C$, or there is $\M'_i \subseteq \M'$ of size at least two that partitions $C$.  
As in the algorithm, let $\C_1 = \{C \in \C : G[C] \in \G\}$ and let 
$\C_2 = \C \setminus \C_1$.

Let $C \in \C_1$.  Since $G[C] \in \G$, we may assume that there is some $M'_i \in \M'$  that contains $C$, since otherwise by Lemma~\ref{lem:mwcontainment}, there is some $\M'_i \subseteq \M$ of size at least two that partitions $\C$.  If the latter occurs, $(\M \setminus \M'_i) \cup \{C\}$ is a better $\G$-modular partition, a contradiction.
On the other hand, let $C \in \C_2$.  Since $G[C] \notin \G$, there cannot be $M'_i \in \M'$ that contains $C$.  Indeed, if such a $M'_i$ existed, we would have $G[M'_i] \in \G$ and, by heredity, $G[C] \in \G$, a contradiction to the definition of $\C_2$.
Therefore, for each $C \in \C_2$, there is $\M'_C \subseteq \M'$ of size at least two that partitions $C$.

We can thus deduce that if we define $\M'_1$ as the set of modules of $\M$ that contain some element of $\C_1$, then 
$\M'_2 := \M' \setminus \M'_1$ consists of the modules that are properly contained in some element of $\C_2$. 
Moreover, since $\M'_1$ is a $\G$-modular partition of $G[\bigcup_{C \in \C_1} C]$ and $M' \in \G$ for each $M' \in \M'_1$, it follows that $\M'_1$ is a $\G$-merge of 
$G[\bigcup_{C \in \C_1} C]$.
Since the $\G$-merge $\M_1$ computed by Algorithm~\ref{alg:gmw} is minimum, we have $|\M_1| \leq |\M'_1|$.
In a similar fashion, consider $C \in \C_2$ and let $\M'_C$ be the set of modules of $\M'$ contained in $C$. 
By induction, compute-$\G$-mc($G[C]$) returns a minimum $\G$-modular partition $\M_C$ of $G[C]$, and thus $|\M_C| \leq |\M'_C|$.  
Therefore, every set of module added by Algorithm~\ref{alg:gmw} to $\M$ corresponds to a distinct set of modules in $\M'$ that is at least as large, and it follows that $\M$ is of minimum size as well.
\end{proof}

\begin{corollary}
For $\G \in \{\I, \K, \cographs, \clusters\}$, a $\G$-modular partition of minimum size of a graph $G$ can be computed in polynomial time.
\end{corollary}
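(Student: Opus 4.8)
The plan is to apply the theorem just proved: a hereditary, polynomial‑time‑recognizable, easily mergeable graph class $\G$ admits a polynomial‑time algorithm (Algorithm~\ref{alg:gmw}) producing a minimum $\G$-modular partition. So it suffices to check these three properties for each of $\I$, $\K$, $\cographs$, $\clusters$. Heredity is immediate in every case, since each class is characterized by a condition preserved under vertex deletion (having no edges, being complete, being $P_4$-free, being $P_3$-free), and polynomial‑time recognition is equally routine (inspect the edge set for $\I$ and $\K$, and use standard cograph/cluster‑graph recognition for the other two). The only substantive work is establishing easy mergeability, and I would use the (obvious) fact that trivial mergeability implies easy mergeability: given a $\G$-join or $\G$-union $G$, test which of $\{V(G)\}$, $CC(G)$, $CC(\overline{G})$ are valid $\G$-merges and output the smallest.

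For $\I$, $\K$, and $\cographs$ I would prove the stronger property of trivial mergeability. If $G$ is a $\cographs$-union (resp. $\cographs$-join), then $G$ is a disjoint union (resp. join) of cographs, hence itself a cograph, so $\{V(G)\}$ is a minimum merge; the same argument gives $\{V(G)\}$ for a $\K$-join (a join of complete graphs is complete) and for an $\I$-union (a disjoint union of edgeless graphs is edgeless). In the complementary cases, an $\I$-join or a $\K$-union, no module of a merge can contain two distinct components $C, C'$ of $\overline{G}$ (resp. of $G$): the bipartite graph between $C$ and $C'$ in $G$ is complete (resp. empty), so $G[C \cup C']$ contains an edge (resp. a non‑edge) and fails to be edgeless (resp. complete); since a merge also forbids splitting a component across modules, the merge is forced to equal $CC(\overline{G})$ (resp. $CC(G)$), which is therefore minimum.

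The interesting class is $\clusters$, which is \emph{not} trivially mergeable, so here I would give an explicit merging algorithm. A $\clusters$-union $G$ is a disjoint union of cluster graphs, hence a cluster graph, so we return $\{V(G)\}$. For a $\clusters$-join $G$, let $C_1,\dots,C_m$ be the components of $\overline{G}$; the key observation is that for $S \subseteq [m]$ with $|S| \geq 2$ the module $\bigcup_{i\in S} C_i$ induces in $G$ the join of $\{G[C_i]\}_{i\in S}$, and this join lies in $\clusters$ if and only if every $G[C_i]$, $i\in S$, is a clique — otherwise a non‑edge inside some $C_i$ together with a vertex of another component forms an induced $P_3$. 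Hence the algorithm merges all components $C_i$ that happen to induce a clique into one module (their union is a clique, so in $\clusters$) and keeps every non‑clique component as its own module; this is plainly polynomial time.

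Finally I would verify minimality of this merge: in any $\clusters$-merge $\M$ every module is a union of some of the $C_i$ (the $C_i$ partition $V(G)$ and each must lie inside a single module), and by the observation above any module containing two or more components consists entirely of clique‑components. Thus each non‑clique component occupies a module by itself, so $|\M|$ is at least the number of non‑clique components plus one if any clique‑component exists — precisely the size the algorithm attains. This minimality argument, together with the structural distinction between clique‑ and non‑clique components, is the step I expect to require the most care; the rest reduces to the definitions and to the theorem already established.
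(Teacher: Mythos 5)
Your proof is correct and follows essentially the same approach as the paper: reduce to the easy-mergeability theorem, handle $\I$, $\K$, $\cographs$ by trivial mergeability, and handle the $\clusters$-join case by merging the clique components of $\overline{G}$ while keeping every non-clique component as its own module. Your explicit remark that $\clusters$ is \emph{not} trivially mergeable is in fact slightly more precise than the paper, which opens its proof by asserting trivial mergeability for all four classes but then constructs, for a $\clusters$-join, a merge that in general is none of $\{V(G)\}$, $CC(G)$, or $CC(\overline{G})$.
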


\begin{proof}
We argue that the graph classes mentioned above are trivially mergeable. 
For cographs, it follows from the definition of a cograph that if $G$ is a \cographs-join or a \cographs-union, then $G$ is itself a cograph.   Thus $\{V(G)\}$ is always a $\G$-merge.
For cliques, if $G$ is a $\K$-join, then $\{V(G)\}$ is a $\G$-merge, and if $G$ is a $\K$-union, then $CC(G)$ is the unique minimum $\G$-merge.
For edgeless graphs, if $G$ is an $\I$-join then $CC(\overline{G})$ is a $\G$-merge, and if $G$ is an $\I$-union, then $\{V(G)\}$ is a $\G$-merge.

Let $\G = \clusters$.  If $G$ is a \clusters-union, then since each $C \in CC(G)$ is a set of disjoint cliques, it follows that the same holds for $G$ and thus $\{V(G)\}$ is a $\G$-merge of $G$.
Suppose that $G$ is a \clusters-join and let $\M$ be a minimum $\G$-merge of $G$.  
Let $C \in CC(\overline{G})$.  We claim that if $C$ contains at least two connected components $K_a$ and $K_b$ (which are cliques), then $C \in \M$.  Recall that by Lemma~\ref{lem:mwcontainment}, the modules of $\M$ that intersect with $C$ either contain $C$, or are contained in $C$.  Notice that there is no $M \in \M$ that contains $C$ and another vertex from another $C' \in CC(\overline{G})$, because if that was the case, we could form a $P_3$ by taking a vertex in each of $K_a, K_b$, and $C'$, contradicting that $G[M]$ is a cluster graph.
Thus the modules of $\M$ that intersect with $C$ are either proper subsets of $C$ or equal to $C$. Because $C \in \G$ and $\M$ is minimum, we may assume that $C \in \M$, as claimed.
It only remains to deal with $\C' = \{C \in CC(\overline{G}) : C$ is a clique$\}$. 
Clearly, $\bigcup_{C' \in \C'} C'$ is a cluster graph, and it follows that a minimum $\G$-merge can be obtained by taking each $C \in CC(\overline{G}) \setminus \C'$, and the union of the vertices in $\C'$. 
\end{proof}

Note that not all graph classes are equally easy to merge.  Consider the class ``graphs with at most $100$ vertices''.  Merging such graphs amounts to solving a bin packing problem with a fixed capacity of $100$, and current polynomial-time algorithms require time in the order of $n^{100}$.  This is easily mergeable nonetheless, and it would be interesting to find graph classes that are hereditary and polynomial-time recognizable, but not easily mergeable.

\section{Hardness of domination problems with arbitrary solution tables}

Let us recall our generic domination problem of interest.  For $\alpha \in [0, 1]$ and $\beta \in \mathbb{Z}$ we define:

\medskip 

\noindent 
The~\ld~problem\\
\textbf{Input: } a graph $G = (V, E)$ and a non-negative integer $q$;\\
\textbf{Question:} does there exist a subset $X\subseteq V$ of size at most $q$ such that for every $v \in V \setminus X$,  $|N(v)\cap X| \geq \alpha |N(v)| + \beta$?

\medskip

In the above, the vertex set $X$ is called a \emph{$(\alpha, \beta)$-linear degree dominating set} of $G$.  In addition, for convenience, we also call $X$ the deletion part of $G$. 
For any vertex $v\in V(G -  X)$, the degree of $v$ in $X$, denoted by $\deg (v,G,X)$, equals the number of vertices in $X\cap V(G)$ adjacent to $v$. The minimum degree of $G - X$ in $X$ equals $\min \{\deg(v,G,X) : v\in G - X\}$, which is denoted by $\delta (G - X, X)$. 

Our goal is to formalize the intuition that graph classes with arbitrary solution tables lead to W[1]-hardness in $\gmw$.
For \ld, this takes the form of \emph{arbitrary deletion tables} for $\alpha \in (0, 1]$ and \emph{arbitrary retention tables} for $\alpha = 0$.

\begin{definition}\label{def:kd-degdeletion-bdd-121}
Let $(a_0, I, c)$ be a triple with $\{a_0\} \cup I \subseteq \mathbb{N}$ and $c : \mathbb{N} \rightarrow \mathbb{N}$.  We call $c$ a cost function.  We say that $(a_0, I, c)$ is \emph{decreasing valid} if there is an ordering $a_1, \ldots, a_{|I|}$ of $I$ such that $a_0 > a_1 > \ldots > a_{|I|} >1$, and $0= c(a_0) < c(a_1) < \ldots < c(a_{|I|})$.

For a decreasing valid triple $(a_0, I, c)$, we say that 
a graph $G =(V,E)$ is a \emph{$(a_0, I, c)$-degree deletion graph} if all of the following conditions hold:
\begin{enumerate}
    
    \item 
    $G$ has maximum degree $a_0$ and at most $(a_0c(a_{|I|}))^{10}$ vertices;
    
    \item 
    for any $a_i \in I$, there exists $X \subseteq V$ of size $c(a_i)$ such that $G - X$ has maximum degree $a_i$;
    
    \item \label{def:degdeletion-3}
    for any $a_i \in I$ and any $X \subseteq V$ of size strictly less than $c(a_i)$, $G - X$ has maximum degree at least $a_{i - 1}$.
    
\end{enumerate}   
In addition, if $G \in \stars$ and satisfies the above three conditions, then we say that $G$ is a \emph{$(a_0, I, c)$-degree deletion star graph}.
\end{definition}

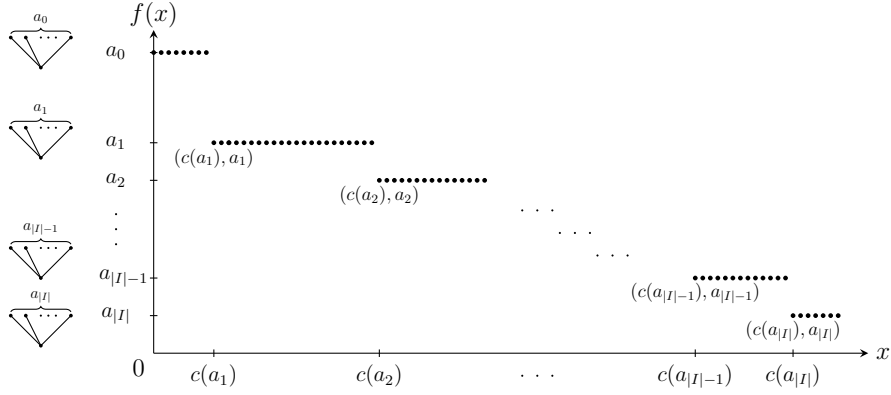
\begin{figure}
\centering
\begin{tikzpicture}
\fill [color=black] (-3.5,2.8) circle (0.7pt);

\fill [color=black] (-3.9,3.2) circle (0.7pt);
\fill [color=black] (-3.7,3.2) circle (0.7pt);
\fill [color=black] (-3.5,3.2) circle (0.4pt);
\fill [color=black] (-3.4,3.2) circle (0.4pt);
\fill [color=black] (-3.3,3.2) circle (0.4pt);
\fill [color=black] (-3.1,3.2) circle (0.7pt);
\draw (-3.5,2.8) -- (-3.9,3.2);
\draw (-3.5,2.8) -- (-3.7,3.2);
\draw (-3.5,2.8) -- (-3.1,3.2);
\draw [decorate,decoration={brace,amplitude=2pt},xshift=0pt,yshift=0pt]
(-3.9,3.25) -- (-3.1,3.25) node [black,midway,xshift=-0.6cm]{};
\draw[color=black] (-3.5,3.45) node[scale=0.5]{$a_0$};


\fill [color=black] (-3.5,2.8-1.2) circle (0.7pt);

\fill [color=black] (-3.9,3.2-1.2) circle (0.7pt);
\fill [color=black] (-3.7,3.2-1.2) circle (0.7pt);
\fill [color=black] (-3.5,3.2-1.2) circle (0.4pt);
\fill [color=black] (-3.4,3.2-1.2) circle (0.4pt);
\fill [color=black] (-3.3,3.2-1.2) circle (0.4pt);
\fill [color=black] (-3.1,3.2-1.2) circle (0.7pt);
\draw (-3.5,2.8-1.2) -- (-3.9,3.2-1.2);
\draw (-3.5,2.8-1.2) -- (-3.7,3.2-1.2);
\draw (-3.5,2.8-1.2) -- (-3.1,3.2-1.2);
\draw [decorate,decoration={brace,amplitude=2pt},xshift=0pt,yshift=0pt]
(-3.9,3.25-1.2) -- (-3.1,3.25-1.2) node [black,midway,xshift=-0.6cm]{};
\draw[color=black] (-3.5,3.45-1.2) node[scale=0.5]{$a_1$};

\fill [color=black] (-3.5,2.8-2.8) circle (0.7pt);

\fill [color=black] (-3.9,3.2-2.8) circle (0.7pt);
\fill [color=black] (-3.7,3.2-2.8) circle (0.7pt);
\fill [color=black] (-3.5,3.2-2.8) circle (0.4pt);
\fill [color=black] (-3.4,3.2-2.8) circle (0.4pt);
\fill [color=black] (-3.3,3.2-2.8) circle (0.4pt);
\fill [color=black] (-3.1,3.2-2.8) circle (0.7pt);
\draw (-3.5,2.8-2.8) -- (-3.9,3.2-2.8);
\draw (-3.5,2.8-2.8) -- (-3.7,3.2-2.8);
\draw (-3.5,2.8-2.8) -- (-3.1,3.2-2.8);
\draw [decorate,decoration={brace,amplitude=2pt},xshift=0pt,yshift=0pt]
(-3.9,3.25-2.8) -- (-3.1,3.25-2.8) node [black,midway,xshift=-0.6cm]{};
\draw[color=black] (-3.5,3.45-2.8) node[scale=0.5]{$a_{|I|-1}$};

\fill [color=black] (-3.5,2.8-3.7) circle (0.7pt);

\fill [color=black] (-3.9,3.2-3.7) circle (0.7pt);
\fill [color=black] (-3.7,3.2-3.7) circle (0.7pt);
\fill [color=black] (-3.5,3.2-3.7) circle (0.4pt);
\fill [color=black] (-3.4,3.2-3.7) circle (0.4pt);
\fill [color=black] (-3.3,3.2-3.7) circle (0.4pt);
\fill [color=black] (-3.1,3.2-3.7) circle (0.7pt);
\draw (-3.5,2.8-3.7) -- (-3.9,3.2-3.7);
\draw (-3.5,2.8-3.7) -- (-3.7,3.2-3.7);
\draw (-3.5,2.8-3.7) -- (-3.1,3.2-3.7);
\draw [decorate,decoration={brace,amplitude=2pt},xshift=0pt,yshift=0pt]
(-3.9,3.25-3.7) -- (-3.1,3.25-3.7) node [black,midway,xshift=-0.6cm]{};
\draw[color=black] (-3.5,3.45-3.7) node[scale=0.5]{$a_{|I|}$};

\draw[-stealth] (-2,-1) -- (7.5,-1);
\draw[color=black] (7.7,-1) node[scale=0.8] {$x$};

\draw[color=black] (-2.2,-1.2) node[scale=0.8] {$0$};


\draw[color=black] (-1.2,-1.3) node[scale=0.7] {$c(a_1)$};
\draw (-1.2,-1.05) -- (-1.2,-0.95);

\draw[color=black] (1,-1.3) node[scale=0.7] {$c(a_2)$};
\draw (1,-1.05) -- (1,-0.95);

\fill [color=black] (2.9,-1.3) circle (0.4pt);
\fill [color=black] (3.1,-1.3) circle (0.4pt);
\fill [color=black] (3.3,-1.3) circle (0.4pt);

\draw[color=black] (5.2,-1.3) node[scale=0.7] {$c(a_{|I|-1})$};
\draw (5.2,-1.05) -- (5.2,-0.95);

\draw[color=black] (6.5,-1.3) node[scale=0.7] {$c(a_{|I|})$};
\draw (6.5,-1.05) -- (6.5,-0.95);

\draw[-stealth] (-2,-1) -- (-2,3.3);
\draw[color=black] (-2,3.5) node[scale=0.8] {$f(x)$};

\draw[color=black] (-2.5,3) node[scale=0.7] {$a_0$};
\draw (-2.05,3) -- (-1.95,3);

\draw[color=black] (-2.5,1.8) node[scale=0.7] {$a_1$}; 
\draw (-2.05,1.8) -- (-1.95,1.8);

\draw[color=black] (-2.5,1.3) node[scale=0.7] {$a_2$}; 
\draw (-2.05,1.3) -- (-1.95,1.3);

\fill [color=black] (-2.5,0.85) circle (0.4pt);
\fill [color=black] (-2.5,0.65) circle (0.4pt);
\fill [color=black] (-2.5,0.45) circle (0.4pt);

\draw[color=black] (-2.4,0) node[scale=0.7] {$a_{|I|-1}$}; 
\draw (-2.05,0) -- (-1.95,0);

\draw[color=black] (-2.5,-0.5) node[scale=0.7] {$a_{|I|}$}; 
\draw (-2.05,-0.5) -- (-1.95,-0.5);


\fill [color=black] (-2,3) circle (0.9pt);
\fill [color=black] (-1.9,3) circle (0.9pt);
\fill [color=black] (-1.8,3) circle (0.9pt);
\fill [color=black] (-1.7,3) circle (0.9pt);
\fill [color=black] (-1.6,3) circle (0.9pt);
\fill [color=black] (-1.5,3) circle (0.9pt);
\fill [color=black] (-1.4,3) circle (0.9pt);
\fill [color=black] (-1.3,3) circle (0.9pt);

\fill [color=black] (-1.2,1.8) circle (0.9pt);
\fill [color=black] (-1.1,1.8) circle (0.9pt);
\fill [color=black] (-1,1.8) circle (0.9pt);
\fill [color=black] (-1,1.8) circle (0.9pt);
\fill [color=black] (-0.9,1.8) circle (0.9pt);
\fill [color=black] (-0.8,1.8) circle (0.9pt);
\fill [color=black] (-0.7,1.8) circle (0.9pt);
\fill [color=black] (-0.6,1.8) circle (0.9pt);
\fill [color=black] (-0.5,1.8) circle (0.9pt);
\fill [color=black] (-0.4,1.8) circle (0.9pt);
\fill [color=black] (-0.3,1.8) circle (0.9pt);
\fill [color=black] (-0.2,1.8) circle (0.9pt);
\fill [color=black] (-0.1,1.8) circle (0.9pt);
\fill [color=black] (0,1.8) circle (0.9pt);
\fill [color=black] (0.1,1.8) circle (0.9pt);
\fill [color=black] (0.2,1.8) circle (0.9pt);
\fill [color=black] (0.3,1.8) circle (0.9pt);
\fill [color=black] (0.4,1.8) circle (0.9pt);
\fill [color=black] (0.5,1.8) circle (0.9pt);
\fill [color=black] (0.6,1.8) circle (0.9pt);
\fill [color=black] (0.7,1.8) circle (0.9pt);
\fill [color=black] (0.8,1.8) circle (0.9pt);
\fill [color=black] (0.9,1.8) circle (0.9pt);

\fill [color=black] (1,1.3) circle (0.9pt);
\fill [color=black] (1.1,1.3) circle (0.9pt);
\fill [color=black] (1.2,1.3) circle (0.9pt);
\fill [color=black] (1.3,1.3) circle (0.9pt);
\fill [color=black] (1.4,1.3) circle (0.9pt);
\fill [color=black] (1.5,1.3) circle (0.9pt);
\fill [color=black] (1.6,1.3) circle (0.9pt);
\fill [color=black] (1.7,1.3) circle (0.9pt);
\fill [color=black] (1.8,1.3) circle (0.9pt);
\fill [color=black] (1.9,1.3) circle (0.9pt);
\fill [color=black] (2,1.3) circle (0.9pt);
\fill [color=black] (2.1,1.3) circle (0.9pt);
\fill [color=black] (2.2,1.3) circle (0.9pt);
\fill [color=black] (2.3,1.3) circle (0.9pt);
\fill [color=black] (2.4,1.3) circle (0.9pt);

\fill [color=black] (2.9,0.9) circle (0.4pt);
\fill [color=black] (3.1,0.9) circle (0.4pt);
\fill [color=black] (3.3,0.9) circle (0.4pt);

\fill [color=black] (3.4,0.6) circle (0.4pt);
\fill [color=black] (3.6,0.6) circle (0.4pt);
\fill [color=black] (3.8,0.6) circle (0.4pt);

\fill [color=black] (3.9,0.3) circle (0.4pt);
\fill [color=black] (4.1,0.3) circle (0.4pt);
\fill [color=black] (4.3,0.3) circle (0.4pt);

\fill [color=black] (5.2,0) circle (0.9pt);
\fill [color=black] (5.3,0) circle (0.9pt);
\fill [color=black] (5.4,0) circle (0.9pt);
\fill [color=black] (5.5,0) circle (0.9pt);
\fill [color=black] (5.6,0) circle (0.9pt);
\fill [color=black] (5.7,0) circle (0.9pt);
\fill [color=black] (5.8,0) circle (0.9pt);
\fill [color=black] (5.9,0) circle (0.9pt);
\fill [color=black] (6,0) circle (0.9pt);
\fill [color=black] (6.1,0) circle (0.9pt);
\fill [color=black] (6.2,0) circle (0.9pt);
\fill [color=black] (6.3,0) circle (0.9pt);
\fill [color=black] (6.4,0) circle (0.9pt);

\fill [color=black] (6.5,-0.5) circle (0.9pt);
\fill [color=black] (6.6,-0.5) circle (0.9pt);
\fill [color=black] (6.7,-0.5) circle (0.9pt);
\fill [color=black] (6.8,-0.5) circle (0.9pt);
\fill [color=black] (6.9,-0.5) circle (0.9pt);
\fill [color=black] (7,-0.5) circle (0.9pt);
\fill [color=black] (7.1,-0.5) circle (0.9pt);

\draw[color=black] (-1.2,1.6) node[scale=0.6] {$(c(a_1),a_1)$};

\draw[color=black] (1,1.1) node[scale=0.6] {$(c(a_2),a_2)$};

\draw[color=black] (5.2,-0.2) node[scale=0.6] {$(c(a_{|I|-1}),a_{|I|-1})$};

\draw[color=black] (6.5,-0.7) node[scale=0.6] {$(c(a_{|I|}),a_{|I|})$};

\end{tikzpicture}
\caption{The degree deletion function of a $(a_0, I, c)$-degree deletion graph.  The $x$-axis represents the number of vertices to delete, while the $y$-axis represents the minimum maximum degree achievable by deleting $x$ vertices.  The first point of the $(i+1)$-th horizontal vertex set is $(c(a_i),a_i)$ for $0 \leq i \leq |I|$, and the last point of the $(i+1)$-th horizontal vertex set is $(c(a_{i+1}) - 1,a_i)$ for $0 \leq i \leq |I| -1$. Disjoint stars are used as an example here since the graph class \stars~admits arbitrary deletion tables. The number of stars with $a_i$ leaves is $c(a_{i+1}) - c(a_i)$ for $0 \leq i \leq |I| -1$. The degree deletion process removes the internal vertices of stars from large to small.}
\label{fig:def-del-table}
\end{figure}

We say that a graph class $\G$ \emph{admits arbitrary degree deletion tables} if, for any decreasing valid triple $(a_0, I, c)$, one can construct in time polynomial in $a_0c(a_{|I|})$ a graph $G \in \G$ such that $G$ is a $(a_0, I, c)$-degree deletion graph.
Note that the size of $G$ is only required to be a polynomial function of $a_0c(a_{|I|})$, but we fix it to $(a_0c(a_{|I|}))^{10}$ for convenience.
For an integer $x \in [0, |V|]$, we call $f(x) = \min \{ \Delta (G - X) : |X| = x \}$ the \textit{degree deletion function} of $G$, where $X$ is a subset of $V$.
Figure \ref{fig:def-del-table} demonstrates the degree deletion function of a $(a_0, I, c)$-degree deletion graph.  The intuition behind degree deletion graphs is that their deletion function has a stepwise behavior with many steps.

The above notion of an arbitrary solution table works well for $\alpha \in (0, 1]$.  For $\alpha = 0$, we need to replace ``deletion'' with ``retention''.  This is useful for the $\alpha = 0$ case, where the set $X$ must contain at least $\beta$ neighbors of each unchosen vertex.  Hence, the steps of the table describe, for each number $x$ of vertices to include in $X$, 
the maximum possible $\delta(G - X, X)$ that can be achieved with a subset of size $x$.

\begin{figure}
\centering
\begin{tikzpicture}
\draw[-stealth] (-2,-1) -- (7.5,-1);
\draw[color=black] (7.7,-1) node[scale=0.8] {$x$};

\draw[color=black] (-2.2,-1.2) node[scale=0.8] {$0$};

\draw[color=black] (1,-1.3) node[scale=0.7] {$p$};
\draw (1,-1.05) -- (1,-0.95);

\draw[color=black] (2.5,-1.3) node[scale=0.7] {$p + c(a_{|I|-1})$};
\draw (2.5,-1.05) -- (2.5,-0.95);

\fill [color=black] (3.8,-1.3) circle (0.4pt);
\fill [color=black] (4.0,-1.3) circle (0.4pt);
\fill [color=black] (4.2,-1.3) circle (0.4pt);

 \draw[color=black] (5.2,-1.3) node[scale=0.7] {$p + c(a_{1})$};
 \draw (5.2,-1.05) -- (5.2,-0.95);

 \draw[color=black] (6.5,-1.3) node[scale=0.7] {$p+c(a_{0})$};
 \draw (6.5,-1.05) -- (6.5,-0.95);

 \draw[-stealth] (-2,-1) -- (-2,3.3);
 \draw[color=black] (-2,3.5) node[scale=0.8] {$f(x)$};

 \draw[color=black] (-2.5,3) node[scale=0.7] {$a_0$};
 \draw (-2.05,3) -- (-1.95,3);

 \draw[color=black] (-2.5,2.8) node[scale=0.7] {$a_1$};
 \draw (-2.05,2.8) -- (-1.95,2.8);

 \draw[color=black] (-2.5,0.9) node[scale=0.7] {$a_{|I|-1}$}; 
 \draw (-2.05,0.9) -- (-1.95,0.9);

 \fill [color=black] (-2.5,1.65) circle (0.4pt);
 \fill [color=black] (-2.5,1.85) circle (0.4pt);
 \fill [color=black] (-2.5,2.05) circle (0.4pt);

 \draw[color=black] (-2.5,0) node[scale=0.7] {$a_{|I|}$};
 \draw (-2.05,0) -- (-1.95,0);

\draw[color=black] (-2.5,-0.7) node[scale=0.7] {$l$};
\draw (-2.05,-0.7) -- (-1.95,-0.7);


 \fill [color=black] (-2,-1) circle (0.9pt);
 \fill [color=black] (-1.9,-1) circle (0.9pt);
 \fill [color=black] (-1.8,-1) circle (0.9pt);
 \fill [color=black] (-1.7,-1) circle (0.9pt);
 \fill [color=black] (-1.6,-1) circle (0.9pt);
 \fill [color=black] (-1.5,-1) circle (0.9pt);
 \fill [color=black] (-1.4,-1) circle (0.9pt);
 \fill [color=black] (-1.3,-1) circle (0.9pt);
 \fill [color=black] (-1.2,-1) circle (0.9pt);
 \fill [color=black] (-1.1,-1) circle (0.9pt);
 \fill [color=black] (-1,-1) circle (0.9pt);
 \fill [color=black] (-1,-1) circle (0.9pt);
 \fill [color=black] (-0.9,-1) circle (0.9pt);
 \fill [color=black] (-0.8,-1) circle (0.9pt);
 \fill [color=black] (-0.7,-1) circle (0.9pt);
 \fill [color=black] (-0.6,-1) circle (0.9pt);
 \fill [color=black] (-0.5,-1) circle (0.9pt);
 \fill [color=black] (-0.4,-1) circle (0.9pt);
 \fill [color=black] (-0.3,-1) circle (0.9pt);
 \fill [color=black] (-0.2,-1) circle (0.9pt);
 \fill [color=black] (-0.1,-1) circle (0.9pt);
 \fill [color=black] (0,-0.98) circle (0.9pt);
 \fill [color=black] (0.1,-0.98) circle (0.9pt);
 \fill [color=black] (0.2,-0.96) circle (0.9pt);
 \fill [color=black] (0.3,-0.93) circle (0.9pt);
 \fill [color=black] (0.4,-0.9) circle (0.9pt);
 \fill [color=black] (0.5,-0.88) circle (0.9pt);
 \fill [color=black] (0.6,-0.82) circle (0.9pt);
 \fill [color=black] (0.7,-0.78) circle (0.9pt);
 \fill [color=black] (0.8,-0.75) circle (0.9pt);
 \fill [color=black] (0.9,-0.72) circle (0.9pt);

 \fill [color=black] (1,0) circle (0.9pt);
 \fill [color=black] (1.1,0) circle (0.9pt);
 \fill [color=black] (1.2,0) circle (0.9pt);
 \fill [color=black] (1.3,0) circle (0.9pt);
 \fill [color=black] (1.4,0) circle (0.9pt);
 \fill [color=black] (1.5,0) circle (0.9pt);
 \fill [color=black] (1.6,0) circle (0.9pt);
 \fill [color=black] (1.7,0) circle (0.9pt);
 \fill [color=black] (1.8,0) circle (0.9pt);
 \fill [color=black] (1.9,0) circle (0.9pt);
 \fill [color=black] (2,0) circle (0.9pt);
 \fill [color=black] (2.1,0) circle (0.9pt);
 \fill [color=black] (2.2,0) circle (0.9pt);
 \fill [color=black] (2.3,0) circle (0.9pt);
 \fill [color=black] (2.4,0) circle (0.9pt);

 \fill [color=black] (2.5,0.9) circle (0.9pt);
 \fill [color=black] (2.6,0.9) circle (0.9pt);
 \fill [color=black] (2.7,0.9) circle (0.9pt);
 \fill [color=black] (2.8,0.9) circle (0.9pt);
 \fill [color=black] (2.9,0.9) circle (0.9pt);
 \fill [color=black] (3,0.9) circle (0.9pt);

 \fill [color=black] (3.3,1.55) circle (0.4pt);
 \fill [color=black] (3.5,1.55) circle (0.4pt);
 \fill [color=black] (3.7,1.55) circle (0.4pt);

 \fill [color=black] (3.9,1.85) circle (0.4pt);
 \fill [color=black] (4.1,1.85) circle (0.4pt);
 \fill [color=black] (4.3,1.85) circle (0.4pt);

 \fill [color=black] (4.5,2.15) circle (0.4pt);
 \fill [color=black] (4.7,2.15) circle (0.4pt);
 \fill [color=black] (4.9,2.15) circle (0.4pt);

 \fill [color=black] (5.2,2.8) circle (0.9pt);
 \fill [color=black] (5.3,2.8) circle (0.9pt);
 \fill [color=black] (5.4,2.8) circle (0.9pt);
 \fill [color=black] (5.5,2.8) circle (0.9pt);
 \fill [color=black] (5.6,2.8) circle (0.9pt);
 \fill [color=black] (5.7,2.8) circle (0.9pt);
 \fill [color=black] (5.8,2.8) circle (0.9pt);
 \fill [color=black] (5.9,2.8) circle (0.9pt);
 \fill [color=black] (6,2.8) circle (0.9pt);
 \fill [color=black] (6.1,2.8) circle (0.9pt);
 \fill [color=black] (6.2,2.8) circle (0.9pt);
 \fill [color=black] (6.3,2.8) circle (0.9pt);
 \fill [color=black] (6.4,2.8) circle (0.9pt);

 \fill [color=black] (6.5,3) circle (0.9pt);
 \fill [color=black] (6.6,3) circle (0.9pt);
 \fill [color=black] (6.7,3) circle (0.9pt);
 \fill [color=black] (6.8,3) circle (0.9pt);
 \fill [color=black] (6.9,3) circle (0.9pt);
 \fill [color=black] (7,3) circle (0.9pt);
 \fill [color=black] (7.1,3) circle (0.9pt);

 \draw[color=black] (1,-0.5) node[scale=0.5] {$(p-1,l')$};
 \draw[color=black] (1,0.2) node[scale=0.5] {$(p,a_{|I|})$};
 \draw[color=black] (2.5,1.1) node[scale=0.5] {$(p+ c(a_{|I|-1}),a_{|I|-1})$};
 \draw[color=black] (5.2,3.0) node[scale=0.5] {$(p+c(a_1),a_1)$};
\draw[color=black] (6.5,3.2) node[scale=0.5] {$(p+c(a_0),a_0)$};
\end{tikzpicture}
\caption{The degree retention function of a $(a_0, I, c)$-degree retention graph. The first point and the last point of the first curved vertex set are $(0,0)$ and $(p-1,l')$, respectively, where $l' < l$. The first point of the $(i+1)$-th horizontal vertex set is $(p + c(a_{|I|-i}),a_{|I|-i})$ for $0 \leq i \leq |I|$, and the last point of the $(i+1)$-th horizontal vertex set is $(p + c(a_{|I|-i-1}) - 1,a_{|I|-i})$ for $0 \leq i \leq |I| -1$.}
\label{fig:def-ren-table}
\end{figure}
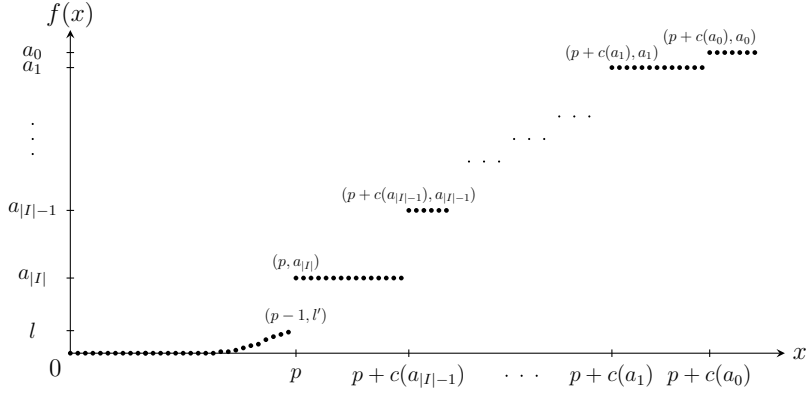

\begin{definition}\label{def:kd-degdeletion}
Let $(a_0, I, c)$ be a triple with $\{a_0\} \cup I \subseteq \mathbb{N}$ and $c : \mathbb{N} \rightarrow \mathbb{N}$.  We call $c$ a cost function.  We say that $(a_0, I, c)$ is \emph{increasing valid} if there is an ordering $a_1, \ldots, a_{|I|}$ of $I$ such that $a_0 > a_1 > \ldots > a_{|I|} > 100$,\footnote{In fact, any large constant here is enough for our proof in this paper, we fix it to be 100 for convenience.} and $c(a_0) > c(a_1) > \ldots > c(a_{|I|}) = 0$.

For a increasing valid triple $(a_0, I, c)$, we say that a graph $G=(V,E)$ is a \emph{$(a_0, I, c)$-degree retention graph above} $(p,l)$, where $p$ and $l$ are positive integers and $l< 100$, if all of the following conditions hold:
\begin{enumerate}
    
    \item 
    $G$ has maximum degree $a_0$, at most $(a_0c(a_{0}))^{10}$ vertices, and $p$ vertices of degree less than $l$;
    
    \item 
    for any $a_i \in \{a_0\} \cup I$, there exists $X \subseteq V$ of size $p + c(a_i)$ such that the minimum degree of $G - X$ in $X$ is $a_i$;

    \item \label{def:kd-degdeletion-3}
   for any $a_i \in \{a_0\} \cup I$ and any $X \subseteq V$ of size strictly less than $p + c(a_i)$, the minimum degree of $G - X$ in $X$ is at most $a_{i + 1}$, where here we define $a_{|I|+1} = l$.
\end{enumerate}
\end{definition}

We say that a graph class $\G$ \emph{admits arbitrary degree retention tables} if, for any increasing valid triple $(a_0, I, c)$, there exist integers $p$ and $l$ such that one can construct in time polynomial in $c(a_0)a_0$ a graph $G \in \G$ such that $G$ is a $(a_0, I, c)$-degree retention graph above $(p,l)$.
Note that condition 2 of Definition \ref{def:kd-degdeletion} imply that $p + c(a_0) < |V| \leq  (c(a_0)a_0)^{10}$.
For any integer $x\in [0, |V|-1]$, we call $f(x) = \max \{ \delta (G - X,X) : |X|= x\}$ the \textit{degree retention function} of $G$, where $X$ is a subset of $V$. 
Figure \ref{fig:def-ren-table} demonstrates the degree retention function of a $(a_0, I, c)$-degree retention graph.
Importantly, \stars~admit both types of tables.

\begin{proposition}\label{prop:stars-tables}
The graph class \stars~admits arbitrary deletion tables and arbitrary retention tables.
\end{proposition}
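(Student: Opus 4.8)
The plan is to give explicit constructions using disjoint unions of stars, one for deletion tables and one for retention tables, and to verify Definitions~\ref{def:kd-degdeletion-bdd-121} and~\ref{def:kd-degdeletion} directly. In both cases the number of stars of each size is read off from the cost function $c$ by a telescoping identity, and the whole argument rests on one elementary fact: to drop the degree of a star below its leaf count one must delete at least one of its vertices, and deleting its center alone already drops it to $0$. Throughout I will assume $|I| \ge 1$, since when $I = \emptyset$ the required bound on $|V|$ degenerates and the case is trivial (or excluded).

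For \emph{deletion tables}, given a decreasing valid triple $(a_0,I,c)$ with ordering $a_0>a_1>\cdots>a_{|I|}>1$ and $0=c(a_0)<c(a_1)<\cdots<c(a_{|I|})$, I would let $G$ be the disjoint union containing $n_i := c(a_{i+1})-c(a_i)\ge 1$ stars on exactly $a_i$ leaves for each $i\in\{0,\dots,|I|-1\}$, plus one extra star on $a_{|I|}$ leaves. This lies in \stars, has maximum degree $a_0$ and $O(a_0 c(a_{|I|}))$ vertices, which is below $(a_0 c(a_{|I|}))^{10}$. Deleting the centers of all stars with more than $a_i$ leaves costs exactly $\sum_{j<i} n_j = c(a_i)$ by telescoping (using $c(a_0)=0$) and leaves maximum degree exactly $a_i$, since a star on $a_i$ leaves survives; this gives condition~2. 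For condition~3, the $c(a_i)$ stars with at least $a_{i-1}$ leaves are vertex-disjoint and each needs at least one deletion to drop below degree $a_{i-1}$, so if $|X|<c(a_i)$ at least one of them is untouched and $G-X$ still has a vertex of degree $\ge a_{i-1}$.

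For \emph{retention tables} the same idea is applied in mirror image, but more care is needed. Given an increasing valid triple with $a_0>\cdots>a_{|I|}>100$ and $c(a_0)>\cdots>c(a_{|I|})=0$, I would take the disjoint union with one star on $a_0$ leaves and $m_i := c(a_{i-1})-c(a_i)\ge 1$ stars on $a_i$ leaves for $i\in\{1,\dots,|I|\}$, and set $l:=2$ and $p := \sum_i m_i a_i$, the total number of leaves, so the $p$ vertices of degree less than $l$ are exactly the leaves. The structural fact is: if $\delta(G-X,X)\ge 2$ then no leaf lies outside $X$ (a leaf has a single neighbour), so $G-X$ consists only of centers, each of which must have all its leaves inside $X$; hence $\delta(G-X,X)\ge a_i$ forces $V\setminus X$ to be contained in the set of $M_i := \sum_{j\le i} m_j$ centers of stars with at least $a_i$ leaves, so $|X|\ge |V|-M_i$. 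The choice of the $m_j$'s is exactly what makes $|V|-M_i = p + c(a_i)$; taking $X$ to be $V$ minus precisely those centers realizes $\delta(G-X,X)=a_i$ (the smallest surviving center has $a_i$ leaves), which is condition~2, while if $|X|<p+c(a_i)$ then $V\setminus X$ must contain either a leaf or a center of a star on at most $a_{i+1}$ leaves (setting $a_{|I|+1}=l=2$), so $\delta(G-X,X)\le a_{i+1}$, which is condition~3. Condition~1 and the bound $|V|\le (a_0 c(a_0))^{10}$ follow from $\sum_{i\ge 1} m_i = c(a_0)$.

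I expect the retention direction to be the main obstacle, for two reasons. First, one must recognize and use that guaranteeing a domination value of at least $2$ in a disjoint union of stars is ``all or nothing'' per star --- either a star's entire leaf set is bought into $X$, or that star is worthless --- which is what collapses the optimization to counting centers and produces the clean formula $|X|_{\min} = |V|-M_i$. Second, one must verify that $\delta(G-X,X)$ can only take values in $\{0,1\}\cup\{a_0,\dots,a_{|I|}\}$, so that failing to reach $a_i$ genuinely pushes the value down to $a_{i+1}$; this, together with the bookkeeping that converts an arbitrary cost function $c$ into star multiplicities via $|V|-M_i = p+c(a_i)$, is where the attention is needed. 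The deletion direction is then a routine analogue of the same reasoning.
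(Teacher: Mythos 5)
Your proof is correct and takes essentially the same approach as the paper: the same telescoping construction of star multiplicities from the cost function, the same choice of $p$ as the leaf count, and the same counting argument for conditions 2 and 3. The only deviation is that you set $l=2$ rather than the paper's $l=1$; since the definition asks for ``$p$ vertices of degree \emph{less than} $l$'' and those vertices are leaves of degree $1$, your choice is in fact the cleaner one, and it is compatible with the way the reduction later uses $l$ (the appendix only requires $l < r$).
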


\begin{proof}
The proof is split into two parts.

\textit{The graph class of disjoint stars admits arbitrary degree deletion tables.}

Let $(a_0, I, c)$ be a decreasing valid triple.  Then we may order $I = \{a_1, \ldots, a_n\}$ so that $a_0 > a_1 > \ldots > a_n >1$ and $0= c(a_0) < c(a_1) < \ldots < c(a_n)$.
We construct a graph of disjoint stars $G$ that has this solution table.
For each $i \in [n]$, add to $G$ a set of $c(a_i) - c(a_{i-1})$ disjoint stars of degree $a_{i - 1}$.  Then, add one star of degree $a_n$.

It is clear that $G$ can be constructed in time polynomial in $a_0c(a_n)$.  
Also, $G$ has maximum degree $a_0$ and has at most $(c(a_n)+1)(a_0+1)$ vertices.  
We argue the second and third parts of Definition~\ref{def:kd-degdeletion-bdd-121} simultaneously.  
Now consider $i \in [n]$.  The number of stars of degree strictly more than $a_i$ is $\sum_{j = 1}^{i} (c(a_j) - c(a_{j-1})) = c(a_i)$.
If we delete the center of each of those stars, we achieve maximum degree $a_i$.  On the other hand, if we make less than $c(a_i)$ deletions, at least one of those stars will remain intact and the maximum degree will be strictly greater than $a_i$.

\textit{
The graph class of disjoint stars admits arbitrary degree retention tables.
}

Let $(a_0, I, c)$ be an increasing valid triple.  Then we may order $I = \{a_1, \ldots, a_n\}$ so that $a_0 > a_1 > \ldots > a_n > 100$ and $c(a_0) > c(a_1) > \ldots > c(a_n) = 0$. Let $p= a_0 + \sum_{i\in [n]} (c(a_{i-1}) - c(a_i))a_i$ and $l= 1$. 
We construct a graph of disjoint stars $G$ that is a $(a_0, I, c)$-degree retention graph above $(p,l)$.
For each $i \in [n]$, add to $G$ a set of $c(a_{i-1}) - c(a_i)$ disjoint stars of degree $a_{i}$.  Then, add one star of degree $a_0$.

It is clear that $G$ can be constructed in time polynomial in $a_0c(a_0)$.  
Also, $G$ has maximum degree $a_0$ and has at most $(c(a_0)+1)(a_0+1)$ vertices. Moreover, the number of vertices with degree at most $l$, which are the leaves of the stars, is $a_0 + \sum_{i\in [n]} (c(a_{i-1}) - c(a_i))a_i$. 
We argue the second and third parts of Definition~\ref{def:kd-degdeletion} simultaneously. We know the total number of leaves of the stars is $p$. Suppose $a_{n+1} = l$ and $c(a_{n+1}) = 0$. Now consider integer $i \in [0,n]$.  The number of stars of degree strictly less than $a_{i}$ is $p+ \sum_{j = i}^{n} (c(a_{j}) - c(a_{j+1})) = p + c(a_{i})$.
If we add the center of each of those stars and all leaves to $X$, the minimum degree of $G - X$ in $X$ is $a_i$, since the left vertices have degrees at least $a_i$ and all their neighbors are in $X$. On the other hand, assume we add less than $p + c(a_i)$ vertices to $X$. Since we have exactly $p + c(a_i)$ vertices with degrees at most $a_{i+1}$, there is a vertex $v$ with degrees at most $a_{i+1}$ that is in $V(G) \setminus X$. Thus, the minimum degree of $G-X$ in $X$ is at most $a_{i+1}$.
\end{proof}

We can now state one of our fundamental results.

\begin{theorem}
\label{ld-main-theorem}
The \ld~problem is W[1]-hard in the following cases:
\begin{enumerate}
    \item 
    $\alpha = 0$, $\beta$ is in the input,
    and the parameter is the $(\G \cup \I)$-\gmparam, where $\G$ is any graph class that admits arbitrary degree retention tables; 
    
    \item 
     $\alpha$ is any fixed constant in the interval $(0,1)$, $\beta$ is any fixed constant in $\mathbb{Z}$, and the parameter is the \stars-\gmparam;
    
    \item 
    $\alpha =1$, $\beta$ is in the input,
    and the parameter is the $(\G \cup \I)$-\gmparam,  where $\G$ is any graph class that admits arbitrary degree deletion tables.
\end{enumerate}
\end{theorem}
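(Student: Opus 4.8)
The plan is to establish all three items by parameterized reductions from \mcc, which is W[1]-hard parameterized by the number $k$ of color classes. The common idea is that, in the target instance, the set $X$ serves two purposes at once: its global size bound $q$ is chosen so tightly that a solution has essentially no slack, while inside each ``gadget'' module $M$ the value $|X \cap M|$ is forced to take one of finitely many values, each identified with a choice of a vertex or an edge of the \mcc instance. The three items differ only in how thresholds are realized. For $\alpha = 1$ (item~3), the requirement $|N(v) \cap X| \ge |N(v)| + \beta$ is exactly ``the degree of $v$ in $H-X$ is at most $-\beta$'', so we use degree deletion graphs and set thresholds through the input value $\beta$. For $\alpha = 0$ (item~1), the requirement ``$|N(v) \cap X| \ge \beta$'' is a covering condition, so we use degree retention graphs and again set thresholds through $\beta$. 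For $\alpha \in (0,1)$ (item~2), the requirement rewrites as ``the degree of $v$ in $H-X$ is at most $(1-\alpha)|N(v)| - \beta$'', a deletion-type condition whose threshold depends on $|N(v)|$; since $\beta$ is now a fixed constant, we instead tune thresholds by padding the relevant neighborhoods with the correct number of leaves, using degree deletion graphs, which \stars~admits by Proposition~\ref{prop:stars-tables}.

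For the construction, take an \mcc instance with color classes $V_1,\dots,V_k$, put $n = \sum_j |V_j|$, and fix an injective integer encoding of the vertices and edges of the instance, chosen so that two ids never ``interfere'' and so that incidence of an edge with a vertex is testable by a single linear inequality over the ids. For each color $j$ we build a \emph{vertex-selection module} $D_j$: applying the arbitrary-table hypothesis to a valid triple $(a_0, I_j, c_j)$ with one step per vertex of $V_j$ and with the cost function $c_j$ encoding the ids, we obtain a graph $D_j \in \G$ (for item~2, $D_j \in \stars$) whose deletion table (resp.\ retention table) has exactly $|V_j|$ nontrivial steps, the step of cost $c_j(\cdot)$ corresponding to selecting the matching vertex of $V_j$. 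Similarly, for each pair $j < j'$ we build an \emph{edge-selection module} $D_{j,j'}$ with one step per edge of the instance between $V_j$ and $V_{j'}$. Finally we add \emph{verifier} vertices, organized into edgeless modules (which are $\I$-modules, hence $(\G\cup\I)$-modules, and are also \stars-modules): a verifier for the pair $(j,j')$ is made adjacent to $D_j$ and to $D_{j,j'}$, together with a controlled number of private pendant/padding vertices, so that its constraint in the target instance holds exactly when the edge selected in $D_{j,j'}$ is incident to the vertex selected in $D_j$; a symmetric verifier ties $D_{j'}$ to $D_{j,j'}$. Because the cost functions are monotone along valid triples, one direction of each incidence test is automatically satisfied, so each verifier only needs to enforce the ``tight'' direction.

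This is a genuine parameterized reduction. The number of modules is $k + \binom{k}{2}$ selection modules plus $O(k^2)$ verifier modules, so the $(\G\cup\I)$-\gmparam~(resp.\ \stars-\gmparam) of $H$ is $O(k^2)$, a function of $k$ alone; each gadget is constructible in time polynomial in $a_0 c(a_{|I|})$, which is polynomial in $n$ by our choice of ids; and $|V(H)|$ is polynomial. We set $q$ to a target amount so tight that a solution of size $\le q$ must operate every selection module at exactly one of its steps --- in particular, using the step structure of degree deletion and retention graphs (Definitions~\ref{def:kd-degdeletion-bdd-121} and~\ref{def:kd-degdeletion}), a solution cannot save budget by over-using one module to under-use another, nor by leaving a module strictly between two steps. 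In the forward direction, a $k$-clique of the \mcc instance yields a valid $X$: operate each $D_j$ and $D_{j,j'}$ at the step of the chosen vertex, resp.\ edge, and add verifier vertices to $X$ as needed; the clique property makes every verifier constraint hold. Conversely, from a solution of size $\le q$ the forced step at each selection module reads off a vertex per color and an edge per pair, and the verifier constraints force these to be mutually incident, hence to form a $k$-clique; so the target instance is a yes-instance if and only if the \mcc instance is.

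The main obstacle is the quantitative bookkeeping needed for the converse direction: one must choose the degree sequences $a_i$, the cost functions, the padding sizes, and the budget $q$ simultaneously tightly enough that no solution of size $\le q$ can cheat, and this has to be done under three different threshold regimes, with the further complication that a single pair $(\alpha,\beta)$ governs \emph{every} constraint in $H$ --- the internal domination needs of the gadgets, the verifier constraints, and the padding vertices --- so all of these thresholds must be made consistent at once. The genuinely delicate case is item~2: since $\alpha$ is a fixed, possibly irrational constant and $\beta$ is not part of the input, the effective threshold $(1-\alpha)|N(v)| - \beta$ can only be adjusted by changing $|N(v)|$, yet every module must be a disjoint union of stars, which limits the adjacency patterns available for this adjustment; one must therefore realize each required threshold by choosing the number of padding leaves up to an additive error controlled by $\alpha$, and then absorb that error into the id encoding so that the incidence tests still separate ``incident'' from ``non-incident''. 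Showing that all thresholds can be met with star modules only, while keeping the budget tight, is the crux of item~2.
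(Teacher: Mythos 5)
Your high-level plan matches the paper's: reduce from \mcc, realize vertex and edge ``selectors'' as degree deletion/retention gadgets with one step per candidate, pad with edgeless ($\I$-)modules, and set the budget $q$ so tight that every gadget is forced to sit exactly on one of its steps. The case split by $\alpha$ and the observation that for $\alpha\in(0,1)$ the thresholds must be tuned by neighborhood sizes rather than by $\beta$ are both in line with what the paper does. However, the proposal leaves out precisely the three mechanisms that make the argument actually go through, and as written one of its central claims is not correct.

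First, the consistency check you describe does not work with the verifier topology you propose. A per-pair verifier $w$ adjacent to the vertex selector $D_j$ and the edge selector $D_{j,j'}$ (plus private padding) produces a single inequality of the form $\chi(D_j)+\chi(D_{j,j'})+\chi(P_w)\ge T$, which, through monotone affine cost functions, constrains one fixed linear combination of $\hat a_j$ and the edge ``value'' $s$. But whether the vertex selected in $D_j$ is incident to the edge selected in $D_{j,j'}$ is a set-membership condition on $\hat a_j$ (it must be one of the two endpoint ids of the chosen edge), and for a fixed $s$ the set of valid $\hat a_j$ is not a half-line; no single linear inequality over ids can separate ``incident'' from ``non-incident''. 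The paper sidesteps this completely: it never tests incidence between a vertex module and an edge module directly. Instead it encodes vertices by a $2$-sumfree set, makes the edge gadget $U_{ij}$'s table have steps only at sums $a+b\in I_{ij}$ corresponding to edges, and then forces the single linear identity $s_{ij}=\hat a_i+\hat a_j$. Crucially, the constraint involving $U_{ij}$ couples \emph{both} $T_i$ and $T_j$ (the $U_{ij}$ factor is adjacent to both), so all three of $S_i$, $S_j$, and $U_{ij}$ are tied together through shared modules $T_i,T_j$; and the $2$-sumfree property is what turns the forced sum-equality into the intended pair of vertex choices. Your proposal mentions neither the $2$-sumfree encoding nor any three-way coupling; the claim that ``incidence of an edge with a vertex is testable by a single linear inequality over the ids'' is the crux of the missing argument and I do not see how to make it true as stated.

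Second, the sentence ``we set $q$ so tight that a solution cannot save budget by over-using one module to under-use another'' is exactly the statement that has to be proved, and it is the hardest part of the paper's proof. Because each $T_i$ (or in your version, the budget) is shared among the $k-1$ pairs involving color $i$, a cheating $U_{ij}$ saves $\Theta(\Delta)$ in its own module while only being ``punished'' $\Theta(\Delta/(k-1))$ per involved $T$-vertex; the paper closes this gap with a charging argument (its Claim~\ref{cl:ld-module-10}), where each cheating pair charges $1/(k-1)$ to each of $\Delta_{ij}$ extra $T$-vertices and one shows that the sum of charges exceeds the total savings precisely because each $T$-vertex can receive charge from at most $k-1$ pairs, and because the cost slopes were chosen (factors $\tfrac12$ in $c$ and $\tfrac1{2(k-1)}$ in $c_{ij}$) to make the honest total independent of the clique choice. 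None of this appears in your proposal, and without it the reverse direction of the reduction is unsubstantiated.

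Third, even granting the structure, the quantitative side needs to be pinned down: the $2$-sumfree ids must be multiplied by a parameter $r$ depending on $k$ (and on $\alpha,\beta$ in case~2), the padding module sizes are large polynomials in $n,k,a_1$ and involve explicit floors and ceilings to absorb irrational $\alpha$, and one must verify that the gadget sizes stay within the $(a_0\,c(a_{|I|}))^{10}$ bound of the definitions. You gesture at this (``up to an additive error controlled by $\alpha$, absorb that error into the id encoding'') but do not show it can be done. In summary: your high-level outline is in the right direction, but the $2$-sumfree encoding, the three-way coupling via shared $T_i$'s, and the charging argument are the load-bearing parts of the proof, and all three are absent or replaced by an incorrect claim.
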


The proof of the Theorem \ref{ld-main-theorem} is quite long, and we put it in Appendix. The proof of the three cases all use the same construction and the same set of claims, but most claims require an argument for each case.  The next section illustrates the main techniques on the case $\alpha = 1$.

Together, Proposition \ref{prop:stars-tables} and Theorem \ref{ld-main-theorem}, along with the relationship demonstrated in Figure \ref{fig:parameter-relationship} implies the following corollary. 
Moreover, the \stars-\gmparam~of the output graph of the W[1]-hardness reduction in Theorem \ref{ld-main-theorem} equals the  iterated type partition of the graph.

\begin{corollary} \label{coro-w-hard-itp}
The \ld~problem parameterized by either clique-width, modular-width, \cographs-\gmparam,  iterated type partition, and \stars-\gmparam, is W[1]-hard in the following cases:
\begin{enumerate}

    \item 
    $\alpha = 0$, $\beta$ is in the input;
    
    \item 
     $\alpha$ is any fixed constant in the interval $(0,1)$, $\beta$ is any fixed constant in $\mathbb{Z}$;
    
    \item 
    $\alpha =1$, $\beta$ is in the input.
\end{enumerate}
In particular, \bdd~problem, \kd~problem, and \ad~problem are W[1]-hard in all these parameters.
\end{corollary}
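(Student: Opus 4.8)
The plan is to obtain Corollary~\ref{coro-w-hard-itp} purely by assembling the machinery already developed: Proposition~\ref{prop:stars-tables}, Theorem~\ref{ld-main-theorem}, Theorem~\ref{modularwidth-smaller-than-gmw}, and the chain of parameter bounds pictured in Figure~\ref{fig:parameter-relationship}. First I would establish W[1]-hardness for the \stars-\gmparam~in each of the three cases. For the case $\alpha=0$ with $\beta$ in the input, I would invoke Proposition~\ref{prop:stars-tables}, which says \stars~admits arbitrary degree retention tables, and then apply Theorem~\ref{ld-main-theorem}(1) with $\G=\stars$; since $\I\subseteq\stars$ (recall $\I\subseteq\stars\subseteq\cographs$), the parameter $(\stars\cup\I)$-\gmparam~is just the \stars-\gmparam. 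The case $\alpha=1$ with $\beta$ in the input is handled identically, using that \stars~admits arbitrary degree deletion tables together with Theorem~\ref{ld-main-theorem}(3). The case $\alpha\in(0,1)$ with $\beta$ a fixed constant is immediate from Theorem~\ref{ld-main-theorem}(2), which is already stated for the \stars-\gmparam.

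Next I would propagate this hardness down to the remaining four parameters. The key observation is that, for every graph $G$, a \stars-modular partition is in particular a \cographs-modular partition (a disjoint union of stars is a cograph), so the \cographs-\gmparam~of $G$ is at most its \stars-\gmparam; also $mw(G)$ is at most the \stars-\gmparam~of $G$ by Theorem~\ref{modularwidth-smaller-than-gmw}, since $\omega_{\stars}=0$ (disjoint unions of stars are cographs and hence have no prime node in their modular decomposition); and $cw(G)\le mw(G)+2$. Finally, as noted right after the statement of Theorem~\ref{ld-main-theorem}, the graph $G$ produced by that reduction has $itp(G)$ equal to its \stars-\gmparam. Hence, on the instances output by the reduction of Theorem~\ref{ld-main-theorem}, each of $cw$, $mw$, \cographs-\gmparam, and $itp$ is bounded by a function of the source parameter, so that very reduction is simultaneously a parameterized reduction for all five parameters, and W[1]-hardness in each of them follows in the three stated cases.

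For the concluding ``in particular'' sentence, I would just match each named problem to one of the three cases. With $\alpha=1$ and $\beta=-d$, where $d$ is the degree bound carried in the input, the requirement $|N(v)\cap X|\ge |N(v)|+\beta$ for $v\notin X$ is exactly ``the degree of $v$ in $G-X$ is at most $d$'', so \bdd~is the special case of case~3. With $\alpha=0$ and $\beta=k\ge 1$ in the input, the requirement $|N(v)\cap X|\ge k$ for $v\notin X$ is exactly $k$-domination, giving \kd~as the special case of case~1. With any fixed $\alpha\in(0,1)$ and the fixed constant $\beta=0$, the requirement $|N(v)\cap X|\ge \alpha|N(v)|$ is $\alpha$-domination, giving \ad~as the special case of case~2.

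I do not expect a genuine obstacle here: all of the difficulty is already absorbed into Theorem~\ref{ld-main-theorem} and Proposition~\ref{prop:stars-tables}, and the corollary is essentially bookkeeping. The only points that need a little care are (i) tracking the direction of the parameter inequalities, so that hardness is pushed from the most restrictive parameter (\stars-\gmparam) to the more general ones rather than the reverse, and (ii) justifying the transfer to $itp$ through the reduction-specific identity $itp=\stars$-\gmparam~for the output graphs, rather than through a generic bound between these two parameters, which need not hold for arbitrary graphs.
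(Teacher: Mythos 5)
Your proposal is correct and takes essentially the same approach as the paper: hardness in \stars-\gmparam~from Proposition~\ref{prop:stars-tables} and Theorem~\ref{ld-main-theorem} (with $\I \subseteq \stars$), propagated to \cographs-\gmparam, to $mw$ via Theorem~\ref{modularwidth-smaller-than-gmw} and $\omega_{\stars}=0$, to $cw$ via $cw \le mw+2$, and to $itp$ via the reduction-specific identity with \stars-\gmparam. The one thing you cite rather than verify is exactly that identity, $itp(H)=\stars$-\gmparam$(H)$ for the reduction graph $H$; the paper's proof of the corollary supplies that substance by exhibiting the type-graph sequence $H^0, H^1, H^2, H^3$ of $H$ (first contracting each star inside $S_i$ and $U_{ij}$ to a $K_2$ and every other edgeless factor to a single vertex, then each $K_2$ to a vertex, then each $S_i$, $U_{ij}$ to a vertex) to conclude $itp(H)=\stars$-\gmparam$(H)=O(k^2)$ --- which is precisely the point~(ii) you flagged as needing a reduction-specific argument rather than a generic inequality between the two parameters.
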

\begin{proof}
For \stars-\gmparam, this follows from Proposition \ref{prop:stars-tables} and Theorem \ref{ld-main-theorem}.
The hardness on \cographs-\gmparam~then follows from the fact that stars are cographs, and the hardness on modular-width follows from the fact that \cographs-\gmparam~is at least as large as the modular-width because cographs have bounded modular-width (Theorem \ref{modularwidth-smaller-than-gmw}). The hardness on clique-width follows from the fact that $cw \leq mw +2$. 

Next, let us consider the parameter iterated type partition, whose definition refers to Definition 1 of \cite{DBLP:conf/iwoca/CordascoGR20}. 
Roughly speaking, the iterated type partition of a graph is defined by iteratively constructing type graphs until the type graph remains the same as the previous one, where the type graph of a graph is the quotient graph of a $\G$-modular partition such that $\G$ includes the complete graphs or the edgeless graphs. The base graph is the last one of the type graph sequence. Obviously, the main reduction works if we replace all $S_i$ and $U_{ij}$ in $H$ with their corresponding star graphs. Then, the type graph sequence of $H$ is $H^{0}$, $H^{1}$, $H^{2}$, $H^{3}$,  where $H^{0} = H$, $H^{1}$ is the graph that replaces every star in $S_i$ and $U_{ij}$ with a $K_2$ graph, and replaces every other edgeless factor with a vertex; $H^{2}$ is the graph that replaces every $K_2$ in $S_i$ and $U_{ij}$ of $H^{1}$ with a vertex; the base graph $H^{3}$ is the graph that replaces every $S_i$ and $U_{ij}$ of $H^{2}$ with a vertex. Clearly, the vertex number of $H^{3}$, which is iterated type partition of $H$, equals \stars-\gmparam~of $H$ and is $O(k^2)$.
\end{proof}

The next proposition is not related to the above hardness results but allows us to fill in some of the gaps that the above leaves in our results table.  In section \ref{FPT-algo-for-suc-section}, we will also show that 
for $\alpha = 1$ and $\beta$ in the input  (the $\bdd$ case), the problem is FPT in \clusters-\gmparam. Note that the design of FPT algorithm for $\bdd$ parameterized by \clusters-\gmparam~in section \ref{FPT-algo-for-suc-section} is of particular interest, the solution table for that case is non-convex and contains a larger number of blocks, so it seems that the solution table can not be succinct at first glance. But, we can reduce this special non-convex function to a convex function using ceilings to get the FPT algorithm.

\begin{proposition}
The \ld~problem is FPT in the following cases:
\begin{enumerate}
\item 
$\alpha \in [0,1]$, $\beta$ is in the input, and the parameter is the neighborhood diversity;
    
\item 
$\alpha \in \{0,1\}$, $\beta$ is a constant, and the parameter is the clique-width.
\end{enumerate}
\end{proposition}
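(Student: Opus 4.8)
\emph{Part 1 (neighborhood diversity).} The plan is to reduce to integer linear programming in fixed dimension. First I would compute, in polynomial time, a minimum neighborhood partition $\M=\{M_1,\dots,M_k\}$ of $G$ (Theorem~7 of \cite{DBLP:journals/algorithmica/Lampis12}), so that $k=nd(G)$, each $M_i$ is a clique or an independent set, and any two distinct parts are mutually adjacent or mutually non-adjacent. The key observation is that whether a set $X \subseteq V(G)$ is an $(\alpha,\beta)$-linear degree dominating set depends only on the integers $x_i := |X\cap M_i|$: for every $v\in M_i\setminus X$ one has $|N(v)| = D_i$ and $|N(v)\cap X| = \big(\sum_{j:\,M_j\sim M_i} x_j\big) + \epsilon_i x_i$, where $D_i = \big(\sum_{j:\,M_j\sim M_i}|M_j|\big) + \epsilon_i(|M_i|-1)$, $\epsilon_i\in\{0,1\}$ indicates whether $M_i$ is a clique, and (crucially) these values are the same for all $v\in M_i\setminus X$. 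Writing $\alpha=p/r$ with $p,r\in\mathbb{N}$, I would guess a subset $T\subseteq[k]$ recording the modules that are \emph{not} entirely placed in $X$, and then test feasibility of the integer linear program with variables $x_1,\dots,x_k$ given by $\sum_i x_i\le q$, $x_i=|M_i|$ for $i\notin T$, $0\le x_i\le |M_i|-1$ for $i\in T$, and the single inequality $r\big(\big(\sum_{j:\,M_j\sim M_i} x_j\big)+\epsilon_i x_i\big)\ge pD_i + r\beta$ for each $i\in T$ (a module entirely in $X$ imposes no constraint, which is why the case split is needed). The instance is a yes-instance iff some guess yields a feasible program. Each such program has only $k=nd$ variables, hence is solvable in FPT time by the algorithms of \cite{DBLP:journals/mor/Lenstra83,DBLP:journals/mor/Kannan87,DBLP:journals/combinatorica/FrankT87} (the magnitude of $\beta$ is irrelevant, the running time being polynomial in the bit-length), and there are only $2^{nd}$ guesses; this gives an algorithm that is FPT in $nd$ for all $\alpha\in[0,1]$ and $\beta$ in the input.

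\emph{Part 2 (clique-width, $\alpha\in\{0,1\}$, $\beta$ a constant).} The plan is to invoke the algorithmic meta-theorem for clique-width. I would first discard the degenerate regimes excluded elsewhere in the paper: $\alpha=0,\ \beta\le 0$ is a trivial yes-instance and $\alpha=1,\ \beta\ge 1$ is a trivial no-instance. What remains is $\alpha=0$ with a fixed constant $\beta\ge1$ (this is $\beta$-domination) and $\alpha=1$ with a fixed constant $\beta\le 0$, where setting $d:=-\beta$ turns the problem into bounded degree deletion with a fixed degree bound $d$. In both cases I would write down an $\mathrm{MSO}_1$ formula $\varphi(X)$ with a single free vertex-set variable whose size depends only on the constant, expressing ``for every $v\notin X$, at least $\beta$ neighbors of $v$ lie in $X$'' (respectively ``for every $v\notin X$, at most $d$ neighbors of $v$ lie outside $X$''); since $\beta$ (respectively $d+1$) is a constant, this is a first-order (hence $\mathrm{MSO}_1$) condition obtained by quantifying over that many vertices, so it uses no edge-set quantification. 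Deciding whether $\min\{|X| : \varphi(X)\text{ holds}\}\le q$ is then a $\mathrm{LinEMSO}_1$ optimization problem, which by \cite{DBLP:journals/mst/CourcelleMR00} is solvable in time $f(|\varphi|,k)\cdot n^{O(1)}$ on $n$-vertex graphs given a clique-width expression of width $k$; since a clique-width expression of width bounded by a function of $cw(G)$ can be computed in FPT time, we obtain an FPT algorithm parameterized by clique-width.

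\emph{Main obstacle.} Neither part conceals a genuine difficulty: part~1 is integer programming in fixed dimension and part~2 is an application of the clique-width meta-theorem. The two places that require care are, in part~1, handling the disjunction ``module $M_i$ is entirely in $X$ \emph{or} its degree inequality holds'' by branching over the $2^{nd}$ choices of fully-selected modules rather than writing a single non-convex program, and in part~2, verifying that the degree conditions are genuinely $\mathrm{MSO}_1$ (not $\mathrm{MSO}_2$) when $\beta$ is constant and peeling off the trivial regimes before applying the meta-theorem.
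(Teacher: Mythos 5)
Part~2 is the same argument as the paper's: with $\beta$ a fixed constant and $\alpha\in\{0,1\}$, the domination condition is a constant-length MSO$_1$ formula over the free set variable $X$, and the LinEMSO$_1$ optimization meta-theorem of Courcelle, Makowsky and Rotics gives FPT in clique-width.

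Part~1 also follows the paper's strategy of reducing to an ILP in $nd$ variables and applying the Lenstra/Kannan/Frank--Tardos algorithm, but your branching over which modules lie entirely inside $X$ is more than cosmetic: it repairs an oversight in the ILP displayed in the paper. The paper imposes the domination inequality for every module $M_i$ unconditionally, including when $x_i=|M_i|$, but in that case no vertex of $M_i$ remains in $V\setminus X$ and the inequality should not be required. Concretely, with $\alpha=0$, $\beta>0$, and $M_i$ an isolated vertex, $X=V(G)$ is a valid solution (the universal quantifier is vacuous), yet the paper's ILP contains the unsatisfiable constraint $0\geq\beta$. Your device of guessing the set $T$ of modules not entirely inside $X$, fixing $x_i=|M_i|$ for $i\notin T$, and imposing the degree inequality together with $x_i\leq|M_i|-1$ only for $i\in T$, correctly encodes the ``for every $v\in V\setminus X$'' quantifier at a $2^{nd}$ overhead, which remains FPT. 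The rest of your setup --- the expressions for $|N(v)|$ and $|N(v)\cap X|$ in terms of the $x_i$ and clearing the denominator of a rational $\alpha$ to obtain integer constraints --- is correct.
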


The first case requires some work, whereas the second is a simple application of  Courcelle's theorem \cite{DBLP:journals/mst/CourcelleMR00}, as the problem admits a constant-length MSO$_1$ formula.

We first show that \ld~is FPT in parameter $nd$ when $\alpha \in [0 , 1]$ and $\beta$ is in the input.
Recall that the minimum neighborhood partition (neighborhood diversity) can be obtained in polynomial time.
To obtain our result, we need the following FPT algorithm for the \ilp~problem parameterized by the dimensions.

\begin{theorem}[\cite{DBLP:journals/combinatorica/FrankT87,DBLP:journals/mor/Kannan87,DBLP:journals/mor/Lenstra83}]
\label{ilp is FPT}
The \ilp~problem can be decided in $O(n^{2.5n+o(n)}\cdot I)$ time, where $I$ is the size of the input and $n$ is the number of variables.
\end{theorem}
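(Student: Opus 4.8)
The plan is to prove that \ld~is FPT in parameter $nd$ when $\alpha \in [0,1]$ and $\beta$ is in the input, and FPT in parameter $cw$ when $\alpha \in \{0,1\}$ and $\beta$ is constant. For the second case, I would observe that for a fixed constant $\beta$, the condition ``$|N(v)\cap X| \geq \alpha|N(v)| + \beta$'' with $\alpha \in \{0,1\}$ can be expressed in MSO$_1$: when $\alpha = 0$ the requirement is simply that $v$ has at least $\beta$ neighbors in $X$ (a constant-size existential statement quantifying over $\beta$ distinct vertices), and when $\alpha = 1$ it says every neighbor of $v$ lies in $X$ except at most $|\beta|$ of them (again a constant-size statement since $\beta$ is fixed). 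The whole problem ``there exists $X$ with these local constraints'' then becomes a single MSO$_1$ sentence of length depending only on $\beta$; since the problem also asks $|X| \le q$, I would invoke the optimization/counting version of Courcelle's theorem \cite{DBLP:journals/mst/CourcelleMR00} to minimize $|X|$ subject to the formula, which is FPT in $cw$. I should be careful to note this argument fails for $\alpha \in (0,1)$ since the fraction $\alpha|N(v)|$ depends unboundedly on $\deg(v)$, consistent with the W[1]-hardness rows in the table.

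For the first case (parameter $nd$), the plan is a standard ILP-over-module-types argument, but with care because the constraint involves $|N(v)|$, which varies depending on how many vertices of each module are selected. First I would compute the minimum neighborhood partition $\{M_1, \ldots, M_t\}$ in polynomial time (Theorem 7 of \cite{DBLP:journals/algorithmica/Lampis12}), with $t = nd(G)$, where each $M_i$ is a clique or an independent set. For a candidate solution $X$, let $x_i = |X \cap M_i|$ be integer variables with $0 \le x_i \le |M_i|$. The key observation is that the constraint for an unchosen vertex $v \in M_i$ depends only on which module $v$ is in (since all vertices of $M_i$ have the same neighborhood outside $M_i$, and inside $M_i$ the number of neighbors is $|M_i|-1$ if $M_i$ is a clique, $0$ otherwise). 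Concretely, for $v \in M_i \setminus X$: $|N(v) \cap X| = \sum_{j : M_j \sim M_i} x_j + [M_i \text{ is a clique}](x_i - [v \in M_i]\text{-correction})$, and $|N(v)| = \sum_{j : M_j \sim M_i}|M_j| + [M_i\text{ is a clique}](|M_i|-1)$. Since $|N(v)|$ is a \emph{constant} (it does not depend on $X$ at all — only on the fixed module sizes and adjacencies), the right-hand side $\alpha|N(v)| + \beta$ is a fixed rational number $r_i$ computable in polynomial time; the constraint is linear in the $x_j$'s: $\sum_{j : M_j \sim M_i} x_j + [M_i\text{ clique}](x_i - 1) \ge \lceil r_i \rceil$, active only when $x_i < |M_i|$ (i.e. when $M_i$ has an unchosen vertex). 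To handle the conditional activation of constraints cleanly, I would guess, for each module $M_i$, whether $x_i = |M_i|$ (module fully selected, no constraint) or $x_i < |M_i|$ (constraint active); this is $2^t$ guesses, each yielding an ILP with $t$ variables and $t$ linear constraints plus the objective $\min \sum_i x_i \le q$. By Theorem \ref{ilp is FPT}, each ILP is solvable in FPT time in $t = nd$, and taking the best over all guesses gives the answer.

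The main obstacle I anticipate is handling the boundary between a vertex being ``chosen'' versus ``unchosen'' correctly — in particular, whether $|N(v)|$ and $|N(v)\cap X|$ should subtract one for the clique-internal edge to $v$ itself. I would resolve this by fixing, within a module $M_i$ that is a clique and is not fully selected, an arbitrary unchosen representative vertex $v$; then $|N(v) \cap X| = \sum_{j:M_j \sim M_i} x_j + x_i$ (all $x_i$ selected vertices of $M_i$ are neighbors of $v$, none need subtracting since $v \notin X$), and $|N(v)| = (\sum_{j:M_j \sim M_i}|M_j|) + |M_i| - 1$. All unchosen vertices of $M_i$ have identical such counts, so a single linear constraint per active module suffices, and correctness follows because $X$ satisfies \ld~iff every active module's representative constraint holds. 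A minor additional point is ensuring the ILP solution, which only specifies the counts $x_i$, can be realized as an actual vertex set $X$: this is immediate by picking any $x_i$ vertices from each $M_i$, since the constraint values depend only on the counts. Putting these together yields the FPT algorithm, completing the proof.
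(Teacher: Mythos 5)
Your proposal does not prove the statement it was assigned. The statement is Theorem~\ref{ilp is FPT} itself --- the assertion that \ilp\ feasibility can be decided in $O(n^{2.5n+o(n)}\cdot I)$ time where $n$ is the number of variables. This is the classical Lenstra--Kannan--Frank--Tardos result; the paper does not prove it but imports it as a citation, and any genuine proof would require Lenstra's lattice-point argument (ellipsoidal rounding, basis reduction, the flatness theorem, plus Kannan's improvement of the exponent and the Frank--Tardos preprocessing to control coefficient sizes). None of that appears in your write-up. Instead, you prove the downstream Proposition that \ld\ is FPT in neighborhood diversity (for $\alpha\in[0,1]$, $\beta$ in the input) and in clique-width (for $\alpha\in\{0,1\}$, $\beta$ constant) --- and you do so by \emph{invoking} Theorem~\ref{ilp is FPT} as a black box. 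As an argument for the target statement this is circular: you cannot establish the ILP theorem by exhibiting an application that presupposes it.

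For what it is worth, as a proof of the Proposition your argument follows essentially the same route as the paper's (an ILP with one integer variable per module of the minimum neighborhood partition, combined with Courcelle's theorem for the clique-width case). One substantive difference: you branch over the $2^{t}$ choices of which modules are fully contained in $X$ so that the domination constraint is only imposed on modules with an unchosen vertex, whereas the paper's ILP in Lemma~\ref{fpt-algorithm-para-nd} imposes the constraint on every module unconditionally; your version is the more careful of the two on that boundary case. But this does not repair the fundamental mismatch: the assigned statement is the ILP tractability theorem, and that theorem remains unproved by your proposal.
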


\begin{lemma} \label{fpt-algorithm-para-nd}
\ld~problem is FPT parameterized by neighborhood diversity even if $\beta$ is a part of the input.
\end{lemma}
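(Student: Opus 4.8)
The plan is to reduce the problem, in FPT time, to a bounded number of integer linear programs each having $nd(G)$ variables, and then invoke Theorem~\ref{ilp is FPT}. First I would compute a minimum neighborhood partition $V_1,\ldots,V_k$ of $G$ in polynomial time (as recalled before the lemma), so $k=nd(G)$, each $V_i$ induces a clique or an independent set, and any two parts are either completely adjacent or completely non-adjacent; let $N_Q(i)$ denote the set of indices $j$ whose part is adjacent to $V_i$, and set $\epsilon_i=1$ if $V_i$ is a clique and $\epsilon_i=0$ otherwise. The key observation is that for $v\in V_i$ the value $|N(v)|$ depends only on $i$, namely $|N(v)|=d_i:=\epsilon_i(|V_i|-1)+\sum_{j\in N_Q(i)}|V_j|$, and that for a candidate solution $X$ and $v\in V_i\setminus X$ the value $|N(v)\cap X|$ depends only on the counts $x_j:=|X\cap V_j|$, namely $|N(v)\cap X|=\epsilon_i x_i+\sum_{j\in N_Q(i)} x_j$.

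Next I would convert the domination requirement into linear constraints on $x_1,\ldots,x_k$. Since $|N(v)\cap X|$ is an integer and $\beta\in\mathbb{Z}$, the inequality $|N(v)\cap X|\geq \alpha|N(v)|+\beta$ is equivalent to $|N(v)\cap X|\geq \lceil\alpha d_i\rceil+\beta$, and $\lceil\alpha d_i\rceil$ is a concrete integer we can precompute for each $i$ even though $\alpha$ is real. The only remaining difficulty is that this constraint is active for $V_i$ only when $V_i$ contains an unchosen vertex, i.e.\ only when $x_i<|V_i|$, so it is a conditional rather than a plain linear constraint. I would resolve this by branching over the $2^k$ subsets $S\subseteq[k]$ of ``saturated'' parts: for a fixed $S$, build the ILP over variables $x_1,\ldots,x_k$ that imposes $x_i=|V_i|$ for $i\in S$; imposes $0\leq x_i\leq |V_i|-1$ together with $\epsilon_i x_i+\sum_{j\in N_Q(i)}x_j\geq \lceil\alpha d_i\rceil+\beta$ for $i\notin S$; and imposes $\sum_{i=1}^k x_i\leq q$.

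Finally I would check equivalence and running time. If some ILP in this family is feasible, then picking arbitrarily $x_i$ vertices of each $V_i$ produces a set $X$ of size at most $q$ that is an $(\alpha,\beta)$-linear degree dominating set, since all relevant quantities depend only on the counts $x_j$; conversely any valid $X$ yields a feasible solution of the ILP associated with $S=\{i:|X\cap V_i|=|V_i|\}$. Each ILP has $k=nd(G)$ variables and input size polynomial in $|G|$ and in the encodings of $\beta$ and $q$, so by Theorem~\ref{ilp is FPT} it is solvable in $f(k)\cdot\mathrm{poly}(|G|,\log\beta,\log q)$ time, and multiplying by the $2^k$ branches keeps the overall algorithm fixed-parameter tractable in $nd(G)$. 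I expect the only real obstacle to be the careful bookkeeping for the conditional constraints and for the real-valued $\alpha$, both of which are dispatched — respectively — by the $2^k$-branching over saturated parts and by the ceiling reformulation; the rest is a direct translation of the structure of neighborhood diversity into an ILP.
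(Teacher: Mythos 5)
Your overall strategy — reduce to integer linear programs with $nd(G)$ variables by tracking only the counts $x_i=|X\cap V_i|$, handle the real $\alpha$ by taking ceilings, and invoke the Lenstra/Kannan/Frank--Tardos FPT result — matches the paper's. The one place you diverge is that you branch over the $2^k$ choices of which modules are fully deleted and only impose the domination constraint for modules that retain at least one unchosen vertex, whereas the paper writes a single ILP that imposes the constraint $\sum_{M_j\in N[M_i]}x_j\geq\lceil\alpha(\sum_{M_j\in N[M_i]}|M_j|-1)\rceil+\beta$ (resp.\ the open-neighborhood version) for \emph{every} clique (resp.\ independent-set) module $M_i$, unconditionally. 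Your extra care is not cosmetic: the paper's single-ILP formulation over-constrains, because when $M_i\subseteq X$ the original problem places no requirement on $M_i$ at all, yet the ILP still demands its constraint. A concrete instance where this matters: let $M_1$ be a triangle and $M_2$ an independent set of size $100$, with $M_1$ and $M_2$ fully adjacent, and take $\alpha=1$, $\beta=0$, $q=3$. Deleting all of $M_1$ is a valid solution (so the instance is yes), but the ILP forces $x_1+x_2\geq 102$, which is infeasible under $\sum x_i\leq 3$, so the ILP reports no. Your branching over saturated subsets $S$ (with $x_i=|V_i|$ for $i\in S$ and $x_i\leq|V_i|-1$ plus the domination constraint for $i\notin S$) closes exactly this gap, at the tolerable cost of a $2^{nd(G)}$ factor, which keeps the algorithm FPT. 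In short, your proof is correct and in fact repairs a defect in the paper's argument.
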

\begin{proof}
Suppose  $(G,q,\beta)$ is an input of 
the \ld~problem, where $G=(V,E)$. Compute the minimum neighborhood partition $\P = \{ M_1,\ldots, M_{r},$ $ M_{r+1}, \ldots, M_{k} \}$ of $V$ in polynomial time, where each $M_i$ for $i\in [r]$ is a clique, and each $M_i$ for $i\in [r+1,k]$ is an independent set. For a module $M_i \in \P$, assume that $N(M_i)$ consists of all modules adjacent to $M_i$, and that $N[M_i] = N(M_i) \cup \{M_i\}$. Suppose the vertex number of the intersection of $M_i$ and $X$ is $x_i$. In addition, for any integer $a$ and real number $b$, we have that $a \geq b$ if and only if $a \geq \left \lceil b \right \rceil$. Then, it is not hard to verify that $(G,q,\beta)$ is a yes instance of \ld~problem if and only if the following instance is yes for \ilp.
\begin{align*}
&(1) \: \sum_{1\leq i \leq k} x_{i} \leq q &\\
&(2) \: \sum_{M_j \in N[M_i]} x_j \geq \left \lceil \alpha (  \sum_{M_j \in N[M_i]} |M_j| - 1)\right \rceil +\beta & \forall i \in [0, r-1]\\
&(3) \: \sum_{M_j \in N(M_i)} x_j \geq  \left \lceil \alpha   \sum_{M_j \in N(M_i)} |M_j| \right \rceil +\beta & \forall i \in [r, r+s] \\
&(4) \: x_i \in \mathbb{N} & \forall i \in [k]\\
&(5) \: 0 \leq x_i \leq |M_i| & \forall  i \in [k]\\
\end{align*}
Clearly, the \ilp~has $k = nd(G)$ variables. Thus, \ld~problem is in FPT based on Theorem \ref{ilp is FPT}.
\end{proof}

Let us now focus on showing that for $\alpha \in \{0, 1\}$ and fixed $\beta$, the problem is FPT in clique-width.
This is easy using Courcelle's theorem \cite{DBLP:journals/mst/CourcelleMR00}.
For each $\alpha \in \{0,1\}$, we can write a constant-length MSO$_1$ formula with free variable $X$, denoted by $(\alpha, \beta)$-\textbf{LDD}$(X)$, to verify that $X$ is a $(\alpha, \beta)$-linear degree dominating set of $G$ as follows: (1) $(0, \beta)$-\textbf{LDD}(X) $=
\forall_{v\in V\setminus X} \exists_{v_1 \in X} \ldots $ $\exists_{v_{\beta} \in X} (\textbf{adj}(v,v_1) \land \ldots \land \textbf{adj}(v,v_{\beta}))$, and (2) $(1, \beta)$-\textbf{LDD}(X) $= \neg(\exists_{v\in V\setminus X} \exists_{v_0 \in V\setminus X} \ldots $ $\exists_{v_{|\beta|} \in V\setminus X} (\textbf{adj}(v,v_0) \land \ldots \land \textbf{adj}(v,v_{|\beta|})))$. According to the optimization version of the Courcelle's theorem \cite{DBLP:journals/mst/CourcelleMR00}, there exists an FPT algorithm parameterized by clique-width to minimize $|X|$ subject to $(\alpha, \beta)$-\textbf{LDD}$(X)$ for each $\alpha$.


\section{\boldmath\texorpdfstring{$(1, \beta)$}{}-\textsc{Linear Degree Domination} (BDD)}

We sketch the proof of the W[1]-hardness of $(1, \beta)$-\textsc{Linear Degree Domination} problem, which is enough to demonstrate the main idea of the reduction technique. Recall that we assume that $\alpha = 1$ and $\beta \leq 0$, which is the~\bdd~problem. That is, we must delete at most $q$ vertices such that the resulting subgraph has maximum degree at most $|\beta|$. 
 Furthermore, in the $(1, \beta)$-\textsc{Linear Degree Domination} problem, we also call $X$, the $(1, \beta)$-linear degree dominating set of $G$, the deletion part of $G$.




We provide a reduction from the \mcc~problem, which we define as follows.  A symmetric multicolored graph $G = (V^1\cup V^2 \ldots \cup V^k , E)$ is a connected 
graph such that, for all distinct $i,j \in [k]$,
\begin{enumerate}
    \item
    $V^i = \{v_1^i, \ldots, v_n^i\}$, where $n\geq k$;
    
    \item 
    all the vertices of $V^i$ are colored by color $i$;

    \item
    if $v_r^i v_s^j \in E(G)$, then $v_s^j v_r^i \in E(G)$ as well.
\end{enumerate}
Then, for the \mcc~problem, the input is a symmetry multicolored graph $G$ and an integer $k$, and the objective is to decide whether $G$ contains a $k$-clique with vertices of all $k$ colors. We also call $v_r^i v_s^j$ and $v_s^i v_r^j$ symmetry edges. 

The reduction in~\cite[Lemma~1]{DBLP:journals/tcs/FellowsHRV09}, which proves the W[1]-hardness of the multicolored clique problem, actually produces a symmetric multicolored graph.  Hence, \mcc~is W[1]-hard.  We now sketch the following.

\begin{lemma}
Case 3 for the W[1]-hardness results in Theorem \ref{ld-main-theorem} is correct. 
\end{lemma}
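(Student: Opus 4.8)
**Proof plan for Case 3 of Theorem~\ref{ld-main-theorem} ($\alpha = 1$, $\beta$ in the input, parameter $(\G \cup \I)$-\gmparam).**

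The plan is to reduce from \mcc. Given a symmetric multicolored graph $G = (V^1 \cup \dots \cup V^k, E)$ and target $k$, I would build a \bdd~instance $(H, q)$ together with a bound $\beta$ so that $H$ has $(\G\cup\I)$-\gmparam~bounded by a function of $k$ (in fact $O(k^2)$), and $H$ admits a deletion set of size $q$ reducing maximum degree to $|\beta|$ iff $G$ has a multicolored $k$-clique. The vertex selection gadget works as follows: for each color class $i$ create a vertex-selection module $S_i$, and for each ordered pair $(i,j)$ an edge-selection module $U_{ij}$. Using the fact that $\G$ (hence $\G\cup\I$, which also contains all edgeless graphs) \emph{admits arbitrary degree deletion tables}, I can realize inside each $S_i$ a $(a_0, I_i, c_i)$-degree deletion graph whose degree deletion function is a carefully engineered step function: the ``cheap'' ways to cut down its degree cost exactly $c_i(a_r)$ vertices, where $r$ ranges over indices encoding ``which vertex of $V^i$ was picked.'' The key trick, as in Fellows et al., is an \emph{incidence encoding}: the deletion cost for picking vertex $v_r^i$ in $S_i$ should be paired with matching costs in the modules $U_{ij}$ so that the only way to stay within the global budget $q$ is to make \emph{consistent} choices — the vertex chosen in $S_i$ must be an endpoint of the edge chosen in $U_{ij}$, and the edge chosen in $U_{ij}$ must be the symmetric partner of the one chosen in $U_{ji}$.

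Concretely, I would: (1) Fix the ``degree levels'' $a_0 > a_1 > \dots$ and a cost function so that the step function of $S_i$ has one plateau per vertex of $V^i$ and the transition costs are $c_i(a_r) = $ (some base) $+ \,r\cdot(\text{stride})$; the precise arithmetic should make the total budget $q = \sum_i (\text{cost in }S_i) + \sum_{i\ne j}(\text{cost in }U_{ij})$ achievable \emph{only} when all local choices agree. (2) Wire up adjacency between $S_i$, $U_{ij}$, and a small number of ``connector'' vertices/modules so that leaving a high-degree vertex undeleted anywhere forces a violation of the maximum-degree bound $|\beta|$; the additive term $|\beta|$ is chosen to equal the ``residual'' degree that survives a correct set of choices. (3) Verify the modular structure: each $S_i$ and each $U_{ij}$ is a module inducing a graph in $\G$ (a disjoint-stars graph in the concrete instantiation), plus $O(1)$ extra edgeless modules, so the total number of modules is $O(k^2)$, independent of $n$. (4) Prove the two directions: if $G$ has a multicolored $k$-clique on $\{v_{r_1}^1, \dots, v_{r_k}^k\}$, delete in each $S_i$ the $c_i(a_{r_i})$ vertices witnessing plateau $a_{r_i}$ and the matching vertices in the $U_{ij}$'s, and check via Definition~\ref{def:kd-degdeletion-bdd-121}(2) that the resulting max degree is exactly $|\beta|$; conversely, from a deletion set of size $q$, use Definition~\ref{def:kd-degdeletion-bdd-121}(3) (the "less than $c(a_i)$ deletions leaves degree $\geq a_{i-1}$" property) to argue that the budget in each module is spent on exactly one plateau, extract indices $r_i$, and use the forced consistency between $S_i$, $U_{ij}$, $U_{ji}$ to conclude that the $v_{r_i}^i$'s form a clique with all colors.

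The main obstacle I expect is the bookkeeping in step (1)–(2): setting up the cost functions and the interaction between $S_i$, $U_{ij}$, $U_{ji}$ so that (a) each decreasing valid triple is genuinely decreasing valid (so the arbitrary-deletion-table hypothesis applies), (b) the global budget $q$ forces \emph{simultaneously} a single plateau choice in \emph{every} module — a classic ``the sum of minima equals the minimum of the sum only if each term is minimized'' argument, but here complicated by the incidence constraints linking $S_i$ to $U_{ij}$ — and (c) the surviving degree after correct choices is exactly $|\beta|$, while any inconsistent choice leaves some vertex of degree $> |\beta|$. Getting the symmetry-edge coordination right (edge chosen in $U_{ij}$ must equal, under the symmetry map, the one chosen in $U_{ji}$) is the delicate part, and it is exactly where the hypothesis that $G$ is a \emph{symmetric} multicolored graph (so edges come in symmetric pairs) is used. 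Since the same construction serves all three cases of the theorem, I would present the gadgets abstractly in terms of arbitrary deletion/retention tables and defer the $\alpha$-specific degree computations to the per-case claims, as the authors indicate. Here I would carry out only the $\alpha = 1$ degree arithmetic: a vertex $v$ left undeleted in $H - X$ has degree equal to its degree inside its own module (controlled by the deletion-table plateau) plus a fixed contribution from adjacent modules, and the construction ensures this total is $\le |\beta|$ for all $v$ precisely when $X$ encodes a multicolored clique.
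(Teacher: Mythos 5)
Your high-level plan is the same reduction as the paper's (symmetric multicolored clique, one vertex-selection module $S_i$ per color, one edge-selection module $U_{ij}$ per pair, budget forcing consistent choices), but two concrete ingredients that make the reduction actually go through are missing.  First, the paper does not use a cost function that is linear in the \emph{index} $r$ of the chosen vertex, as you propose; it first builds a $2$-sumfree set $I=\{a_1,\dots,a_n\}$ (any two unordered pairs have distinct sums) and uses cost functions linear in the \emph{element values} $a_j$, namely $c(a_j)=s-\tfrac12 a_j$ in $S_i$ and $c_{ij}(a+b)=s-\tfrac{1}{2(k-1)}(a+b)$ in $U_{ij}$, together with extra edgeless ``balance'' modules $T_i$ that absorb exactly $\ha_i$ deletions.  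This is what makes the total deletion count add to exactly $q=(k+\binom{k}{2})s$ for any consistent choice: the $\ha_i$ terms cancel telescopically across $S_i$, $T_i$, and the $U_{ij}$'s.  The $2$-sumfree property is also what lets the degree level of $U_{ij}$ uniquely identify the pair $\{\ha_i,\ha_j\}$, and it is what makes the coordination between $U_{ij}$ and $U_{ji}$ a non-issue: the paper has a single unordered module $U_{ij}=U_{ji}$ whose degree-deletion table is indexed by sums $a+b$ that appear only for symmetric edges, so there is nothing to synchronize.  Your proposed linear-in-index cost does not obviously give either the cancellation or the pair-identification, and this is not a bookkeeping detail — it is the encoding device.

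Second, the converse direction is not a ``sum of minima'' argument, even a complicated one, and you underestimate what goes wrong.  The subtle case is when some $U_{ij}$ chooses a degree level \emph{higher} than $\ha_i+\ha_j$: this \emph{saves} deletions inside $U_{ij}$, so a naive optimality argument cannot rule it out.  The paper handles this by observing that such a choice forces extra deletions in $T_i$ and $T_j$ (because the surviving high-degree vertices of $U_{ij}$ still have their $T_i\cup T_j$ neighbors counted) and then runs a charging argument: each $U_{ij}\in U^>$ charges $1/(k-1)$ to each extra $T$-vertex it forced, so a single $T$-vertex receives total charge at most $1$, and one shows the net savings $\sum(\Psi_{ij}-\chi(U_{ij}))$ is strictly less than the total extra $T$-deletions.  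The ``$U^<$'' case (overcosting) is easy, but the ``$U^>$'' case is where the real work is, and your sketch gives no mechanism for it.  So while your outline points in the right direction, the two load-bearing pieces — the $2$-sumfree encoding with $T_i$-balanced costs, and the charging argument for the converse — are exactly the parts you leave to ``bookkeeping,'' and filling them in is most of the proof.
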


Let $(G,k)$ be an instance of \mcc, where  $G = (V^1\cup V^2 \cup \ldots \cup V^k , E)$. Without loss of generality, suppose $k\geq 100$, otherwise, the problem can be solved in polynomial time.
We will construct a corresponding instance $(H, \beta, q)$ of \textsc{$(1,\beta)$-Linear Degree Domination}, where $H$ is a graph whose $\G$-\gmparam~will be bounded by $O(k^2)$, $\beta= -(nk)^{10000}$, and $q$ is the maximum allowed size of $X$, the desired deletion part of $H$ (to be specified later). 
Before proceeding, we will make use of a $2$-sumfree-set, which is a set of positive integers in which every couple of elements has a distinct sum. That is, $I$ is a $2$-sumfree set if, for any $(a, b), (a', b') \in I \times I$, $a + b = a' + b'$ if and only if $\{a, b\} = \{a', b'\}$ (note that $a = b$ is possible).   
It is known that one can construct in time $O(n^3)$ a $2$-sumfree-set $I$ of cardinality $n$ in which the maximum value is $n^4$, which can be achieved with a greedy procedure (this is because $(n+1)^4-n^4 > n^3$ and $a_i+a_j-a_r$ has at most $n^3$ different values).  
We thus assume that we have built a $2$-sumfree set $I = \{a_1, \ldots, a_n\}$ for $H$, where $n^4 \geq a_1 > \ldots > a_n \geq 1$. Without loss of generality, we may multiply each $a_i \in I$ by an integer $r$, where here we choose $r = 2(k-1)^2k^3$. 
Then, we have that $n^4r \geq a_1 > \ldots > a_n = r$ for the updated $I$. 
Moreover, for any distinct pair $(a,b),(a',b')\in I \times I$, we have that the absolute value of $a+b-a'-b'$ is at least $r$.

For  each color class $V^i = \{v_1^i, \ldots, v_n^i\}$ of $G$, we provide a bijection $f_{i}$ from $V^i$ to $I$, such that $f_{i}(v^i_s) = a_s$ for every $s\in [n]$.
Clearly, we can use $f_{i}^{-1}(a_s)$ to denote the unique vertex $v^i_s$ of $V^i$ associated with $a_s \in I$. We have that, for all $s, t\in [n]$,  each pair of $a_s, a_t \in I$ has a unique sum. For distinct $i, j$ and any $u \in V^i, w \in V^j$ such that $uw \in E(G)$, if  $f_{i}(u) + f_{j}(w) = a_s + a_t$, then edge $uw$ is either $v^i_s v^j_t$ or $v^i_t v^j_s$. Moreover, for any distinct color classes $V^i, V^j$, edges $v^i_s v^j_t$ and $v^i_t v^j_s$, together, are both in $E(G)$ or both not in $E(G)$. Hence, by looking at a sum in $I$, we will be able to tell whether it is corresponding to a pair of symmetry edges between $V^i$ and $V^j$, or not.

Next, we define $s = n^{10}$, $q = ks + \binom{k}{2} s$, $a_0= q +1$, and $a_{n+1}= a_{n} - 1$. 
We construct $H$ in Figure \ref{fig:bdd} as follows. First, for each color class $V^i$ of $G$, add three factors  $ R_i,$ $ S_i,$ $ T_i$ to   $H $, where:
\begin{itemize}
  
    \item 
    $R_i$ is an edgeless graph of size $|\beta| -s$;

    \item 
      $S_i$ is a $(a_0, I \cup \{a_{n + 1}\}, c)$-degree deletion graph, where we put the costs $c(a_j) = s -  \frac{1}{2}a_j $ for $a_j \in I$ and $c(a_{n + 1}) = a_0$.   
 
    \item 
    $T_i$ is an edgeless graph of size $s$.
    
\end{itemize}
We then make $S_i$ adjacent with $R_i$ and $T_i$.
Secondly, for each pair of color classes $V^i, V^j$ with $i < j$, we add another two factors $U_{ij},  R_{ij}$, where:
\begin{itemize}
    \item 
    $R_{ij}$ is an edgeless graph of size $|\beta| - 2s$;

    \item 
    $U_{ij}$ is built as follows.
    Suppose integer set $I_{ij}$ consists of all $a + b$ such that $a$, $b \in$ $I$ and symmetry edges  $f_{i}^{-1}(b) f_{j}^{-1}(a) , f_{i}^{-1}(a) f_{j}^{-1}(b) \in E(G)$. Let $\ell_{ij} = \min(I_{ij})$.
    
   Then $U_{ij}$ is a ${(a_0, I_{ij} \cup \{\ell_{ij} - 1\}, c_{ij})}$-degree deletion graph, where we put the cost $c_{ij}(a + b) = s -  \frac{1}{2(k-1)} ( a+b )$, and we put $c_{ij}(\ell_{ij} - 1) = a_0$.

\end{itemize}
We then make $U_{ij}$ adjacent with $R_{ij}$, and adjacent with $T_i$ and $T_j$.  To avoid cumbersome notation, we define $U_{ij} = U_{ji}$, $R_{ij} = R_{ji}$, $c_{ij} = c_{ji}$, and $I_{ij} = I_{ji}$.
This completes the construction of $H$.
It is easy to see that  $H $ can be constructed in polynomial time.

\begin{figure}
\centering
\begin{tikzpicture}
\filldraw[color=red, fill=red!3] (0,0) ellipse (1 and 0.618); 
\filldraw[color=red, fill=red!3] (6,0) ellipse (1 and 0.618); 
\filldraw[color=blue, fill=blue!5] (-0.8,1.5) rectangle (0.8,2.5); 
\filldraw[color=blue, fill=blue!5]  (5.2,1.5) rectangle (6.8,2.5); 
\filldraw[color=blue, fill=blue!5] (0,3.7) ellipse (0.618 and 0.382); 
\filldraw[color=blue, fill=blue!5] (6,3.7) ellipse (0.618 and 0.382); 
\filldraw[color=blue, fill=blue!5]  (2.1,4.05) rectangle (3.9,5.15); 
\filldraw[color=red, fill=red!3]   (3,6.5) ellipse (1 and 0.618); 

\draw[color=black] (-1.5,0) node {$R_i$};
\draw[color=black] (7.5,0) node {$R_j$};
\draw[color=black] (-1.5,2) node {$S_i$};
\draw[color=black] (7.5,2) node {$S_j$};
\draw[color=black] (-1.2,3.7) node {$T_i$};
\draw[color=black] (7.2,3.7) node {$T_j$};
\draw[color=black] (3,3.7) node {$U_{ij}$};
\draw[color=black] (3,7.4) node {$R_{ij}$};

\fill [color=black] (-0.4,0) circle (1pt);
\fill [color=black] (-0.2,0) circle (1pt);
\fill [color=black] (0,0) circle (0.4pt);
\fill [color=black] (0.1,0) circle (0.4pt);
\fill [color=black] (0.2,0) circle (0.4pt);
\fill [color=black] (0.4,0) circle (1pt);

\fill [color=black] (6-0.4,0) circle (1pt);
\fill [color=black] (6-0.2,0) circle (1pt);
\fill [color=black] (6,0) circle (0.4pt);
\fill [color=black] (6.1,0) circle (0.4pt);
\fill [color=black] (6.2,0) circle (0.4pt);
\fill [color=black] (6.4,0) circle (1pt);

\fill [color=black] (-0.4,2.4) circle (1pt);
\fill [color=black] (-0.7,2.1) circle (1pt);
\draw (-0.4,2.4) -- (-0.7,2.1);
\fill [color=black] (-0.56,2.1) circle (1pt);
\draw (-0.4,2.4) -- (-0.56,2.1);
\fill [color=black] (-0.43,2.1) circle (0.4pt);
\fill [color=black] (-0.32,2.1) circle (0.4pt);
\fill [color=black] (-0.21,2.1) circle (0.4pt);
\fill [color=black] (-0.1,2.1) circle (1pt);
\draw (-0.4,2.4) -- (-0.1,2.1);

\fill [color=black] (0.4,2.4) circle (1pt);
\fill [color=black] (0.1,2.1) circle (1pt);
\draw (0.4,2.4) -- (0.1,2.1);
\fill [color=black] (0.26,2.1) circle (1pt);
\draw (0.4,2.4) -- (0.26,2.1);
\fill [color=black] (0.37,2.1) circle (0.4pt);
\fill [color=black] (0.48,2.1) circle (0.4pt);
\fill [color=black] (0.59,2.1) circle (0.4pt);
\fill [color=black] (0.7,2.1) circle (1pt);
\draw (0.4,2.4) -- (0.7,2.1);

\fill [color=black] (-0.29,1.7) circle (0.4pt);
\fill [color=black] (-0.4,1.7) circle (0.4pt);
\fill [color=black] (-0.51,1.7) circle (0.4pt);

\fill [color=black] (0.4,1.9) circle (1pt);
\fill [color=black] (0.1,1.6) circle (1pt);
\draw (0.4,1.9) -- (0.1,1.6);
\fill [color=black] (0.26,1.6) circle (1pt);
\draw (0.4,1.9) -- (0.26,1.6);
\fill [color=black] (0.37,1.6) circle (0.4pt);
\fill [color=black] (0.48,1.6) circle (0.4pt);
\fill [color=black] (0.59,1.6) circle (0.4pt);
\fill [color=black] (0.7,1.6) circle (1pt);
\draw (0.4,1.9) -- (0.7,1.6);

\fill [color=black] (5.6,2.4) circle (1pt);
\fill [color=black] (5.3,2.1) circle (1pt);
\draw (5.6,2.4) -- (5.3,2.1);
\fill [color=black] (6-0.56,2.1) circle (1pt);
\draw (5.6,2.4) -- (6-0.56,2.1);
\fill [color=black] (6-0.43,2.1) circle (0.4pt);
\fill [color=black] (6-0.32,2.1) circle (0.4pt);
\fill [color=black] (6-0.21,2.1) circle (0.4pt);
\fill [color=black] (6-0.1,2.1) circle (1pt);
\draw (5.6,2.4) -- (5.9,2.1);

\fill [color=black] (6.4,2.4) circle (1pt);
\fill [color=black] (6.1,2.1) circle (1pt);
\draw (6.4,2.4) -- (6.1,2.1);
\fill [color=black] (6.26,2.1) circle (1pt);
\draw (6.4,2.4) -- (6.26,2.1);
\fill [color=black] (6.37,2.1) circle (0.4pt);
\fill [color=black] (6.48,2.1) circle (0.4pt);
\fill [color=black] (6.59,2.1) circle (0.4pt);
\fill [color=black] (6.7,2.1) circle (1pt);
\draw (6.4,2.4) -- (6.7,2.1);

\fill [color=black] (6-0.29,1.7) circle (0.4pt);
\fill [color=black] (6-0.4,1.7) circle (0.4pt);
\fill [color=black] (6-0.51,1.7) circle (0.4pt);

\fill [color=black] (6.4,1.9) circle (1pt);
\fill [color=black] (6.1,1.6) circle (1pt);
\draw (6.4,1.9) -- (6.1,1.6);
\fill [color=black] (6.26,1.6) circle (1pt);
\draw (6.4,1.9) -- (6.26,1.6);
\fill [color=black] (6.37,1.6) circle (0.4pt);
\fill [color=black] (6.48,1.6) circle (0.4pt);
\fill [color=black] (6.59,1.6) circle (0.4pt);
\fill [color=black] (6.7,1.6) circle (1pt);
\draw (6.4,1.9) -- (6.7,1.6);

\fill [color=black] (-0.4,3.7) circle (1pt);
\fill [color=black] (-0.2,3.7) circle (1pt);
\fill [color=black] (0,3.7) circle (0.4pt);
\fill [color=black] (0.1,3.7) circle (0.4pt);
\fill [color=black] (0.2,3.7) circle (0.4pt);
\fill [color=black] (0.4,3.7) circle (1pt);

\fill [color=black] (6-0.4,3.7) circle (1pt);
\fill [color=black] (6-0.2,3.7) circle (1pt);
\fill [color=black] (6,3.7) circle (0.4pt);
\fill [color=black] (6.1,3.7) circle (0.4pt);
\fill [color=black] (6.2,3.7) circle (0.4pt);
\fill [color=black] (6.4,3.7) circle (1pt);

\fill [color=black] (3-0.4,2.6+2.4) circle (1pt);
\fill [color=black] (3-0.7,2.6+2.1) circle (1pt);
\draw (3-0.4,2.6+2.4) -- (3-0.7,2.6+2.1);
\fill [color=black] (3-0.56,2.6+2.1) circle (1pt);
\draw (3-0.4,2.6+2.4) -- (3-0.56,2.6+2.1);
\fill [color=black] (3-0.43,2.6+2.1) circle (0.4pt);
\fill [color=black] (3-0.32,2.6+2.1) circle (0.4pt);
\fill [color=black] (3-0.21,2.6+2.1) circle (0.4pt);
\fill [color=black] (3-0.1,2.6+2.1) circle (1pt);
\draw (3-0.4,2.6+2.4) -- (3-0.1,2.6+2.1);

\fill [color=black] (3.4,2.6+2.4) circle (1pt);
\fill [color=black] (3.1,2.6+2.1) circle (1pt);
\draw (3.4,2.6+2.4) -- (3.1,2.6+2.1);
\fill [color=black] (3.26,2.6+2.1) circle (1pt);
\draw (3.4,2.6+2.4) -- (3.26,2.6+2.1);
\fill [color=black] (3.37,2.6+2.1) circle (0.4pt);
\fill [color=black] (3.48,2.6+2.1) circle (0.4pt);
\fill [color=black] (3.59,2.6+2.1) circle (0.4pt);
\fill [color=black] (3.7,2.6+2.1) circle (1pt);
\draw (3.4,2.6+2.4) -- (3.7,2.6+2.1);

\fill [color=black] (3-0.29,2.6+1.7) circle (0.4pt);
\fill [color=black] (3-0.4,2.6+1.7) circle (0.4pt);
\fill [color=black] (3-0.51,2.6+1.7) circle (0.4pt);

\fill [color=black] (3.4,2.6+1.9) circle (1pt);
\fill [color=black] (3.1,2.6+1.6) circle (1pt);
\draw (3.4,2.6+1.9) -- (3.1,2.6+1.6);
\fill [color=black] (3.26,2.6+1.6) circle (1pt);
\draw (3.4,2.6+1.9) -- (3.26,2.6+1.6);
\fill [color=black] (3.37,2.6+1.6) circle (0.4pt);
\fill [color=black] (3.48,2.6+1.6) circle (0.4pt);
\fill [color=black] (3.59,2.6+1.6) circle (0.4pt);
\fill [color=black] (3.7,2.6+1.6) circle (1pt);
\draw (3.4,2.6+1.9) -- (3.7,2.6+1.6);

\fill [color=black] (3-0.4,6.5) circle (1pt);
\fill [color=black] (3-0.2,6.5) circle (1pt);
\fill [color=black] (3,6.5) circle (0.4pt);
\fill [color=black] (3.1,6.5) circle (0.4pt);
\fill [color=black] (3.2,6.5) circle (0.4pt);
\fill [color=black] (3.4,6.5) circle (1pt);

\draw[dashed]  (-0.3,0.6) -- (-0.3,1.5);
\draw  (-0.3,0.6) -- (-0.2,0.75);
\draw  (-0.3,0.6) -- (-0.1,0.75);
\draw  (-0.3,1.5) -- (-0.2,1.35);
\draw  (-0.3,1.5) -- (-0.1,1.35);
\fill [color=black] (-0.1,1.2) circle (0.4pt);
\fill [color=black] (0,1.2) circle (0.4pt);
\fill [color=black] (0.1,1.2) circle (0.4pt);
\fill [color=black] (-0.1,0.9) circle (0.4pt);
\fill [color=black] (0,0.9) circle (0.4pt);
\fill [color=black] (0.1,0.9) circle (0.4pt);
\draw[dashed]  (0.3,0.6) -- (0.3,1.5);
\draw  (0.3,0.6) -- (0.2,0.75);
\draw  (0.3,0.6) -- (0.1,0.75);
\draw  (0.3,1.5) -- (0.2,1.35);
\draw  (0.3,1.5) -- (0.1,1.35);

\draw[dashed]  (6-0.3,0.6) -- (6-0.3,1.5);
\draw  (6-0.3,0.6) -- (6-0.2,0.75);
\draw  (6-0.3,0.6) -- (6-0.1,0.75);
\draw  (6-0.3,1.5) -- (6-0.2,1.35);
\draw  (6-0.3,1.5) -- (6-0.1,1.35);
\fill [color=black] (6-0.1,1.2) circle (0.4pt);
\fill [color=black] (6,1.2) circle (0.4pt);
\fill [color=black] (6.1,1.2) circle (0.4pt);
\fill [color=black] (6-0.1,0.9) circle (0.4pt);
\fill [color=black] (6,0.9) circle (0.4pt);
\fill [color=black] (6.1,0.9) circle (0.4pt);
\draw[dashed]  (6.3,0.6) -- (6.3,1.5);
\draw  (6.3,0.6) -- (6.2,0.75);
\draw  (6.3,0.6) -- (6.1,0.75);
\draw  (6.3,1.5) -- (6.2,1.35);
\draw  (6.3,1.5) -- (6.1,1.35);

\draw[dashed]  (-0.3,2.5) -- (-0.3,3.37);
\draw  (-0.3,0.6+1.9) -- (-0.2,0.75+1.9);
\draw  (-0.3,0.6+1.9) -- (-0.1,0.75+1.9);
\draw  (-0.3,3.37) -- (-0.2,3.2);
\draw  (-0.3,3.37) -- (-0.1,3.2);
\fill [color=black] (-0.1,1.2+1.87) circle (0.4pt);
\fill [color=black] (0,1.2+1.87) circle (0.4pt);
\fill [color=black] (0.1,1.2+1.87) circle (0.4pt);
\fill [color=black] (-0.1,0.9+1.9) circle (0.4pt);
\fill [color=black] (0,0.9+1.9) circle (0.4pt);
\fill [color=black] (0.1,0.9+1.9) circle (0.4pt);
\draw[dashed]  (0.3,0.6+1.9) -- (0.3,1.5+1.87);
\draw  (0.3,0.6+1.9) -- (0.2,0.75+1.9);
\draw  (0.3,0.6+1.9) -- (0.1,0.75+1.9);
\draw  (0.3,1.5+1.87) -- (0.2,1.35+1.87);
\draw  (0.3,1.5+1.87) -- (0.1,1.35+1.87);

\draw[dashed]  (6-0.3,2.5) -- (6-0.3,3.37);
\draw  (6-0.3,0.6+1.9) -- (6-0.2,0.75+1.9);
\draw  (6-0.3,0.6+1.9) -- (6-0.1,0.75+1.9);
\draw  (6-0.3,3.37) -- (6-0.2,3.2);
\draw  (6-0.3,3.37) -- (6-0.1,3.2);
\fill [color=black] (6-0.1,1.2+1.87) circle (0.4pt);
\fill [color=black] (6,1.2+1.87) circle (0.4pt);
\fill [color=black] (6.1,1.2+1.87) circle (0.4pt);
\fill [color=black] (6-0.1,0.9+1.9) circle (0.4pt);
\fill [color=black] (6,0.9+1.9) circle (0.4pt);
\fill [color=black] (6.1,0.9+1.9) circle (0.4pt);
\draw[dashed]  (6.3,0.6+1.9) -- (6.3,1.5+1.87);
\draw  (6.3,0.6+1.9) -- (6.2,0.75+1.9);
\draw  (6.3,0.6+1.9) -- (6.1,0.75+1.9);
\draw  (6.3,1.5+1.87) -- (6.2,1.35+1.87);
\draw  (6.3,1.5+1.87) -- (6.1,1.35+1.87);

\draw[dashed]  (0.6,3.6) -- (2.1,4.4);
\draw  (0.6,3.6) -- (0.85,3.81);
\draw  (0.6,3.6) -- (0.8,3.85);
\draw  (1.8,4.3) -- (2.1,4.4);
\draw  (1.8,4.37) -- (2.1,4.4);
\fill[color=black] (1.02,3.9) circle (0.4pt);
\fill[color=black] (0.97,4) circle (0.4pt);
\fill[color=black] (0.92,4.1) circle (0.4pt);
\draw[dashed]  (0.48,3.935) -- (2.1,4.8);
\draw  (0.48,3.935) -- (0.8,4.02);
\draw  (0.48,3.935) -- (0.8,3.95);
\draw  (1.82,4.49) -- (2.1,4.8);
\draw  (1.8,4.57) -- (2.1,4.8);
\fill[color=black] (1.6,4.448) circle (0.4pt);
\fill[color=black] (1.65,4.348) circle (0.4pt);
\fill[color=black] (1.7,4.248) circle (0.4pt);

\draw[dashed]  (6-0.6,3.6) -- (6-2.1,4.4);
\draw  (6-0.6,3.6) -- (6-0.85,3.81);
\draw  (6-0.6,3.6) -- (6-0.8,3.85);
\draw  (6-1.8,4.3) -- (6-2.1,4.4);
\draw  (6-1.8,4.37) -- (6-2.1,4.4);
\fill[color=black] (6-1.02,3.9) circle (0.4pt);
\fill[color=black] (6-0.97,4) circle (0.4pt);
\fill[color=black] (6-0.92,4.1) circle (0.4pt);
\draw[dashed]  (6-0.48,3.935) -- (6-2.1,4.8);
\draw  (6-0.48,3.935) -- (6-0.8,4.02);
\draw  (6-0.48,3.935) -- (6-0.8,3.95);
\draw  (6-1.82,4.49) -- (6-2.1,4.8);
\draw  (6-1.8,4.57) -- (6-2.1,4.8);
\fill[color=black] (6-1.6,4.448) circle (0.4pt);
\fill[color=black] (6-1.65,4.348) circle (0.4pt);
\fill[color=black] (6-1.7,4.248) circle (0.4pt);

\draw[dashed]  (3-0.3,5.15) -- (3-0.3,5.91);
\draw  (3-0.3,0.6+4.55) -- (3-0.2,0.75+4.55);
\draw  (3-0.3,0.6+4.55) -- (3-0.1,0.75+4.55);
\draw  (3-0.3,1.5+4.41) -- (3-0.2,1.35+4.41);
\draw  (3-0.3,1.5+4.41) -- (3-0.1,1.35+4.41);
\fill [color=black] (3-0.1,1.2+4.41) circle (0.4pt);
\fill [color=black] (3,1.2+4.41) circle (0.4pt);
\fill [color=black] (3.1,1.2+4.41) circle (0.4pt);
\fill [color=black] (3-0.1,0.9+4.55) circle (0.4pt);
\fill [color=black] (3,0.9+4.55) circle (0.4pt);
\fill [color=black] (3.1,0.9+4.55) circle (0.4pt);
\draw[dashed]  (3.3,0.6+4.55) -- (3.3,1.5+4.41);
\draw  (3.3,0.6+4.55) -- (3.2,0.75+4.55);
\draw  (3.3,0.6+4.55) -- (3.1,0.75+4.55);
\draw  (3.3,1.5+4.41) -- (3.2,1.35+4.41);
\draw  (3.3,1.5+4.41) -- (3.1,1.35+4.41);

\draw  (-0.48,3.935) -- (-0.7,4.2);
\draw  (-0.48,3.935) -- (-0.7,4.1);
\draw  (-0.48,3.935) -- (-0.6,4.2);
\draw  (-0.28,4.042) -- (-0.3,4.3);
\draw  (-0.28,4.042) -- (-0.4,4.3);
\draw  (-0.28,4.042) -- (-0.45,4.22);
\draw  (0.2,4.055) -- (0.05,4.35);
\draw  (0.2,4.055) -- (0.15,4.35);
\draw  (0.2,4.055) -- (0.25,4.35);
\fill [color=black] (0.02,4.188) circle (0.4pt);
\fill [color=black] (-0.09,4.183) circle (0.4pt);
\fill [color=black] (-0.2,4.165) circle (0.4pt);

\draw  (6.48,3.935) -- (6.7,4.2);
\draw  (6.48,3.935) -- (6.7,4.1);
\draw  (6.48,3.935) -- (6.6,4.2);
\draw  (6.28,4.042) -- (6.3,4.3);
\draw  (6.28,4.042) -- (6.4,4.3);
\draw  (6.28,4.042) -- (6.45,4.22);
\draw  (6-0.2,4.055) -- (6-0.05,4.35);
\draw  (6-0.2,4.055) -- (6-0.15,4.35);
\draw  (6-0.2,4.055) -- (6-0.25,4.35);
\fill [color=black] (6-0.02,4.188) circle (0.4pt);
\fill [color=black] (6.09,4.183) circle (0.4pt);
\fill [color=black] (6.2,4.165) circle (0.4pt);
\end{tikzpicture}
\caption{Illustration of the construction of $H$.}
\label{fig:bdd}
\end{figure}
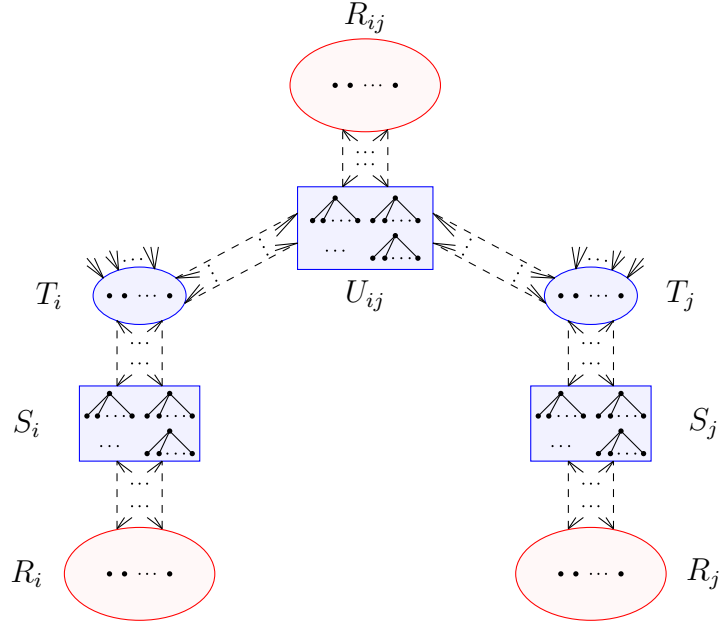

All that remains is to argue that the objective instance is equivalent to the original one.
The following lemma gives the forward direction, whose detailed proof can be found in case 2 of Lemma \ref{ldlem:first-direction} in the Appendix. We only provide a sketch here.  

\begin{lemma}\label{lem:first-direction}
Suppose that $G$ contains a multicolored clique $C$ of size $k$.  Then there is $X \subseteq V(H)$ of size at most $q = ks + \binom{k}{2} s$ such that $\Delta(H - X) \leq |\beta|$.
\end{lemma}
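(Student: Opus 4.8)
The plan is to build $X$ explicitly from the clique: in each of the ``gadget'' factors $S_i$ and $U_{ij}$ we delete exactly the number of vertices prescribed by its cost function, and in each $T_i$ we delete just enough to cancel the residual degree that these gadgets leave behind. We then check that the total is \emph{exactly} $q$ and that every surviving vertex has degree at most $|\beta|$.

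\textbf{Setting up $X$.} Write $C = \{u_1,\dots,u_k\}$ with $u_i \in V^i$, and let $s_i \in [n]$ be the index with $u_i = v^i_{s_i} = f_i^{-1}(a_{s_i})$. Since $C$ is a clique, $u_iu_j \in E(G)$ for all $i \neq j$, and by the symmetry property of $G$ also $v^i_{s_j} v^j_{s_i} \in E(G)$; hence $a_{s_i}+a_{s_j} = f_i(u_i)+f_j(u_j)$ lies in $I_{ij}$ for every $i<j$. I would set $X = \bigcup_i (X_{S_i} \cup X_{T_i}) \cup \bigcup_{i<j} X_{U_{ij}}$, where: by property~2 of Definition~\ref{def:kd-degdeletion-bdd-121} applied to $S_i$ (using $a_{s_i} \in I$), $X_{S_i} \subseteq V(S_i)$ satisfies $|X_{S_i}| = c(a_{s_i}) = s - \tfrac12 a_{s_i}$ and $\Delta(S_i - X_{S_i}) = a_{s_i}$; $X_{T_i} \subseteq V(T_i)$ is an arbitrary set of $a_{s_i}$ vertices, which exists since $a_{s_i} \le n^4 r < s = |V(T_i)|$; and by property~2 applied to $U_{ij}$ (using $a_{s_i}+a_{s_j} \in I_{ij}$), $X_{U_{ij}} \subseteq V(U_{ij})$ satisfies $|X_{U_{ij}}| = c_{ij}(a_{s_i}+a_{s_j}) = s - \tfrac{1}{2(k-1)}(a_{s_i}+a_{s_j})$ and $\Delta(U_{ij} - X_{U_{ij}}) = a_{s_i}+a_{s_j}$. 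One also checks these cost values are non-negative integers: $r$ is a multiple of $2(k-1)$ (in particular even), so the halved quantities are integral, and all the $a_\cdot$'s are well below $s$, so the costs are non-negative.

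\textbf{Budget and degrees.} Summing block sizes and using $\sum_{i<j}(a_{s_i}+a_{s_j}) = (k-1)\sum_i a_{s_i}$,
\[
|X| = \sum_{i=1}^{k}\Bigl(s-\tfrac12 a_{s_i}\Bigr) + \sum_{i=1}^{k} a_{s_i} + \sum_{i<j}\Bigl(s - \tfrac{1}{2(k-1)}(a_{s_i}+a_{s_j})\Bigr) = ks + \binom{k}{2}s - \tfrac12\sum_i a_{s_i} + \tfrac12\sum_i a_{s_i} = q,
\]
so the $a_{s_i}$ terms cancel and $|X| = q$. For the degree bound I would argue factor by factor. A surviving vertex of $S_i$ has degree at most $\Delta(S_i - X_{S_i}) + |V(R_i)| + (|V(T_i)| - |X_{T_i}|) = a_{s_i} + (|\beta|-s) + (s-a_{s_i}) = |\beta|$; a surviving vertex of $U_{ij}$ has degree at most $\Delta(U_{ij}-X_{U_{ij}}) + |V(R_{ij})| + (|V(T_i)|-|X_{T_i}|) + (|V(T_j)|-|X_{T_j}|) = (a_{s_i}+a_{s_j}) + (|\beta|-2s) + (s-a_{s_i}) + (s-a_{s_j}) = |\beta|$; and a surviving vertex of $R_i$, $T_i$, or $R_{ij}$ sees only $S$- and $U$-gadgets, so its degree is at most $|V(S_i)| + \sum_{j\neq i}|V(U_{ij})| \le k\,a_0^{20}$, which is polynomial in $a_0 = q+1$ and hence far below $|\beta| = (nk)^{10000}$.

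The step I expect to be the real work is this accounting: the $a_{s_i}$ deletions inside $T_i$ must simultaneously absorb the residual degree $a_{s_i}$ left in $S_i$ and, together with the $a_{s_j}$ deletions inside $T_j$, the residual degree $a_{s_i}+a_{s_j}$ left in every $U_{ij}$, while the global count stays exactly $q$. This is precisely what the costs $c(a_j) = s-\tfrac12 a_j$ and $c_{ij}(a+b) = s-\tfrac{1}{2(k-1)}(a+b)$ are engineered for; in particular the factor $\tfrac{1}{2(k-1)}$ is what makes the $\binom{k}{2}$ contributions from the $U_{ij}$'s telescope against the $k$ contributions $|X_{T_i}| = a_{s_i}$. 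The converse direction, that a deletion set of size at most $q$ forces a multicolored $k$-clique in $G$, is the harder half and is handled separately.
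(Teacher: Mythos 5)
Your proposal is correct and follows essentially the same construction as the paper's own proof (Case 2 of the detailed appendix lemma, specialized to $\alpha=1$): delete $c(\ha_i)$ vertices from $S_i$, $\ha_i$ from $T_i$, $c_{ij}(\ha_i+\ha_j)$ from $U_{ij}$, nothing from $R_i$ or $R_{ij}$, then telescope the $\tfrac{1}{2(k-1)}(\ha_i+\ha_j)$ terms against the $\ha_i$'s to land exactly on $q$ and verify each factor's residual degree reaches exactly $|\beta|$. The checks you flagged (integrality of the costs via $r$ being a multiple of $2(k-1)$, $a_{s_i}+a_{s_j}\in I_{ij}$ via the clique edges plus the symmetry property, and the crude size bound on $S_i,U_{ij}$ vs.\ $|\beta|$ for the $R$- and $T$-factors) are all the right things to check and are argued correctly.
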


\begin{proofsketch}
For $i \in [k]$, let $f^{-1}_i(\ha_i)$ be the vertex of $V^i$ that belongs to the multicolored clique $C$, where $\ha_i \in I$ is the number associated with the vertex. For any $i\neq j$, we know that vertices $f^{-1}_i(\ha_i)$ and $f^{-1}_j(\ha_j)$ are in $C$, which means that $f^{-1}_i(\ha_i) f^{-1}_j(\ha_j) \in E(G)$.  
This implies that $\ha_i + \ha_j$ is in the $I_{ij}$ list that was used to construct $U_{ij}$.  
We then show how to construct the vertex set $X$ for $H$.
 The intersection of each $R_i$ and $X$ is empty. 
For each $T_i$, add  $\ha_i$ vertices to $X$.
For each $S_i$, add $c(\ha_i) = s - \frac{1}{2}\ha_i$ vertices to $X$.  This can be done so that $S_i - X$ has maximum degree $\ha_i$ according to Definition \ref{def:kd-degdeletion-bdd-121}.
The intersection of each $R_{ij}$ and $X$ is empty. 
For each $U_{ij}$, we can add
    $c_{ij}(\ha_i + \ha_j) =  s -  \frac{\ha_i   +    \ha_j }{2(k-1)}$
vertices to $X$ so that  $U_{ij}- X$ has maximum degree $\ha_i + \ha_j$ according to Definition \ref{def:kd-degdeletion-bdd-121}. It is not hard to verify that $X$ with exactly $q$ vertices satisfies that $\Delta(H - X) \leq |\beta|$.
\end{proofsketch}

The converse direction is much more difficult.

\begin{lemma}\label{lem:second-dir}
Suppose that there is $X \subseteq V(H)$ with $|X| \leq q$ such that $H - X$ has maximum degree at most $|\beta|$. 
Then $G$ contains a multicolored clique of size $k$.
\end{lemma}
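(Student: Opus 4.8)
\noindent
The plan is to track, for the hypothetical solution $X$, the number $x_F:=|X\cap V(F)|$ of vertices it removes from each factor $F$ of $H$, and to show that the tightness of the budget $q=ks+\binom{k}{2}s$ forces $X$ to ``select'' in each gadget $S_i$ a single level $a_{t_i}\in I$ (which names the vertex $v^i_{t_i}$ chosen for color $i$), and in each $U_{ij}$ a level equal to $a_{t_i}+a_{t_j}$; the $2$-sumfree property then turns the latter into the required edges of $G$. (If $I_{ij}=\emptyset$ for some $i<j$, then $U_{ij}$ is undefined; in that case $G$ has no multicolored clique and the reduction simply outputs a trivial no-instance, so we may assume every $I_{ij}\neq\emptyset$.)

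First I would set up the local picture. Because every $R_i,R_{ij},T_i$ is adjacent only to $S_i$/$U_{ij}$-type factors, all of which have size polynomial in $nk$ while $|\beta|=(nk)^{10000}$, no vertex of an $R$- or $T$-factor can ever exceed degree $|\beta|$, so the only constraints come from $S_i$ and $U_{ij}$. Since $S_i$ is adjacent exactly to $R_i$ and $T_i$, with $|R_i|=|\beta|-s$ and $|T_i|=s$, a vertex $v\in S_i\setminus X$ has $\deg_{H-X}(v)=\deg_{H[S_i]-X}(v)+(|\beta|-s-x_{R_i})+(s-x_{T_i})$, so $\deg_{H-X}(v)\le|\beta|$ is equivalent to $\deg_{H[S_i]-X}(v)\le x_{R_i}+x_{T_i}$; writing $d_i:=\Delta(H[S_i]-X)$ and $e_{ij}:=\Delta(H[U_{ij}]-X)$, and doing the analogous computation for $U_{ij}$ (adjacent to $R_{ij},T_i,T_j$ with $|R_{ij}|=|\beta|-2s$), we obtain $d_i\le x_{R_i}+x_{T_i}$ and $e_{ij}\le x_{R_{ij}}+x_{T_i}+x_{T_j}$. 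Next I would turn these maximum degrees into deletion costs via condition~\ref{def:degdeletion-3} of Definition~\ref{def:kd-degdeletion-bdd-121}: since $x_{S_i}\le q<a_0=c(a_{n+1})$, letting $a_j$ be the largest element of $I$ with $c(a_j)\le x_{S_i}$ (a genuine element of $I$, since $d_i=a_0$ would need $x_{S_i}=s-\tfrac12 a_0<0$) forces $d_i\ge a_j$, while by choice of $a_j$ we have $x_{S_i}\ge c(a_j)=s-\tfrac12 a_j\ge s-\tfrac12 d_i$. Set $\sigma_{S_i}:=x_{S_i}-(s-\tfrac12 d_i)\ge 0$, and likewise, from $c_{ij}$, $\sigma_{U_{ij}}:=x_{U_{ij}}-\big(s-\tfrac{e_{ij}}{2(k-1)}\big)\ge 0$, with a matching bound $e_{ij}\ge b$ for the largest $b\in I_{ij}$ with $c_{ij}(b)\le x_{U_{ij}}$. (The scaling of $I$ by $r=2(k-1)^2k^3$ keeps all the costs $c(a_j)$, $c_{ij}(b)$ integral, which these step arguments require.)

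The central step, which I expect to be the real obstacle, is a tight charging argument. Since $V(H)$ is partitioned among the factors, substituting the two slack identities into $|X|\le q=ks+\binom{k}{2}s$ yields
\[
\textstyle \frac12\sum_i d_i+\frac{1}{2(k-1)}\sum_{i<j}e_{ij}\;\ge\;\sum_i x_{R_i}+\sum_i\sigma_{S_i}+\sum_i x_{T_i}+\sum_{i<j}\sigma_{U_{ij}}+\sum_{i<j}x_{R_{ij}}.
\]
I would then bound the left-hand side from above using the local inequalities together with the identity $\sum_{i<j}(x_{T_i}+x_{T_j})=(k-1)\sum_i x_{T_i}$ — which is precisely why the coefficient $\tfrac{1}{2(k-1)}$ was built into $c_{ij}$ — to get left-hand side $\le\frac12\sum_i x_{R_i}+\sum_i x_{T_i}+\frac{1}{2(k-1)}\sum_{i<j}x_{R_{ij}}$. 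Comparing the two inequalities, every unmatched nonnegative quantity must vanish: $x_{R_i}=x_{R_{ij}}=0$ and $\sigma_{S_i}=\sigma_{U_{ij}}=0$ for all $i,j$, and all intermediate inequalities collapse to equalities. The (now slack-free) cost identities force $d_i=x_{T_i}$ and $e_{ij}=x_{T_i}+x_{T_j}=d_i+d_j$; and $\sigma_{S_i}=0$ combined with $x_{S_i}\ge c(a_j)$ gives $s-\tfrac12 d_i=x_{S_i}\ge s-\tfrac12 a_j$, i.e.\ $d_i\le a_j$, which with $d_i\ge a_j$ pins $d_i=a_j\in I$; write $d_i=a_{t_i}$, $t_i\in[n]$. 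The same reasoning forces $e_{ij}\in I_{ij}$, hence $a_{t_i}+a_{t_j}=e_{ij}\in I_{ij}$.

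Finally I would read off the clique. For each $i<j$, $a_{t_i}+a_{t_j}\in I_{ij}$ means, by the definition of $I_{ij}$, that there are $a,b\in I$ with $a+b=a_{t_i}+a_{t_j}$ and symmetry edges $f_i^{-1}(b)f_j^{-1}(a)$, $f_i^{-1}(a)f_j^{-1}(b)$ in $E(G)$; the $2$-sumfree property gives $\{a,b\}=\{a_{t_i},a_{t_j}\}$, so in either case $v^i_{t_i}v^j_{t_j}=f_i^{-1}(a_{t_i})f_j^{-1}(a_{t_j})\in E(G)$. Therefore $\{v^1_{t_1},\dots,v^k_{t_k}\}$ is a clique of $G$ containing a vertex of every color, i.e.\ a multicolored $k$-clique, which is what we wanted.
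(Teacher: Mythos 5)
Your proof is correct, and it takes a genuinely different route from the paper. The paper establishes the structure in separate claims: first it shows $\chi(R_i)=\chi(R_{ij})=0$ by a vertex-swapping argument (moving deletions from $R_i$ to $T_i$), then it reads off from the step structure of the deletion tables that $\chi(S_i)=c(\Delta(S_i-X))$ with $\Delta(S_i-X)\in I$ (and similarly for $U_{ij}$), and finally it proves that $U_{ij}$ must choose $\hat a_i+\hat a_j$ by partitioning the $U_{ij}$'s into groups $U^<,U^=,U^>$ and running a charging argument in which each $U_{ij}\in U^>$ distributes a fractional charge of $1/(k-1)$ to ``extra'' deletions in $T_i\cup T_j$, showing the savings in $U^>$ never pay for the extra $T$-deletions. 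You replace this entire multi-claim cascade with a single LP-style slack computation: after defining $\sigma_{S_i}$ and $\sigma_{U_{ij}}$, the budget $|X|\le q$ together with the local degree constraints $d_i\le x_{R_i}+x_{T_i}$, $e_{ij}\le x_{R_{ij}}+x_{T_i}+x_{T_j}$ and the key counting identity $\sum_{i<j}(x_{T_i}+x_{T_j})=(k-1)\sum_i x_{T_i}$ yields a nonnegative combination bounded above by zero, which simultaneously kills all the $\sigma$'s and all the $R$-deletions and makes every local inequality tight. This is cleaner and makes the role of the coefficient $\tfrac{1}{2(k-1)}$ in $c_{ij}$ transparent. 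The paper's claim-by-claim structure does have one advantage your proof doesn't need here: the appendix reuses the same skeleton of claims for all three cases $\alpha\in\{0\}\cup(0,1)\cup\{1\}$, whereas your slack computation is tailored to $\alpha=1$ and would require re-derivation for the other two cases where the constructions and budgets differ. One small remark: your parenthetical justification that $a_j$ ``is a genuine element of $I$'' (arguing $d_i=a_0$ would need $x_{S_i}<0$) is slightly misplaced; the clean reason that $d_i\neq a_0$ is the local constraint $d_i\le x_{R_i}+x_{T_i}\le q<a_0$, which you already have.
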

\noindent \textit{Sketch:} Let $X \subseteq V(H)$ be of size at most $q$ such that $H - X$ has maximum degree $|\beta|$ or less.
To ease notation slightly, for a factor $M$ of $H$, we will write $X(M) := X \cap V(M)$ and $\chi(M) := |X(M)|$.
The proof is divided into a series of claims.
\begin{claim}\label{cl:nori}
For each $i \in [k]$, we may assume that $\chi(R_i) = 0$.  Moreover for each distinct $i, j \in [k]$, we may assume that $\chi(R_{ij}) = 0$.
\end{claim}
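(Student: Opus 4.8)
The plan is to argue that deleting vertices from the edgeless factors $R_i$ or $R_{ij}$ is never beneficial, so any solution $X$ can be modified into one that avoids these factors entirely without increasing its size or violating the degree bound. First I would recall the role of these factors in the construction: each $R_i$ is an edgeless factor of size $|\beta|-s$ adjacent only to $S_i$, and each $R_{ij}$ is an edgeless factor of size $|\beta|-2s$ adjacent only to $U_{ij}$. Since $R_i$ is edgeless, every vertex $v \in R_i$ has $\deg_H(v) = |V(S_i)|$, and a vertex of $R_i$ that is \emph{not} deleted contributes to the degree constraint only through its neighbours in $S_i$; deleting a vertex of $R_i$ removes that vertex from consideration but does nothing to help the surviving vertices of $R_i$ or $S_i$ meet the degree bound — in fact it only removes a potential ``dominator'' for vertices of $S_i$. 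The key quantitative point is that $|V(S_i)|$ is at most $(a_0 c(a_{n+1}))^{10} = \mathrm{poly}(q)$, which is negligibly small compared to $|\beta| = (nk)^{10000}$, so $\deg_H(v) = |V(S_i)| \ll |\beta|$ for $v \in R_i$; hence the vertices of $R_i$ never have too-large degree regardless of $X$, and there is no reason to spend budget on them. The same reasoning applies verbatim to $R_{ij}$, whose vertices have degree $|V(U_{ij})| \le (a_0 c_{ij}(\ell_{ij}-1))^{10} = \mathrm{poly}(q) \ll |\beta|$.

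Concretely, the key steps are: (1) show that $\deg_H(v) \le |\beta|$ already for every $v \in R_i$ and every $v \in R_{ij}$, using the size bounds on $S_i$ and $U_{ij}$ from Definition~\ref{def:kd-degdeletion-bdd-121}(1) together with $|\beta| = (nk)^{10000}$ being enormous; this means the degree constraint at these vertices is vacuous in $H - X$ whether or not they are deleted. (2) Given an arbitrary feasible $X$, form $X' = X \setminus (\bigcup_i V(R_i) \cup \bigcup_{i<j} V(R_{ij}))$; then $|X'| \le |X| \le q$. (3) Verify $H - X'$ still has maximum degree $\le |\beta|$: the only vertices whose degree could increase when we re-insert a previously-deleted vertex of some $R_i$ are its neighbours, which all lie in $S_i$ (and symmetrically for $R_{ij}$ and $U_{ij}$); but re-inserting $R_i$-vertices increases the degree of a vertex $u \in S_i$ by at most $|V(R_i)| - \chi(R_i) \le |\beta| - s$, and $u$ has at most $|V(S_i)| - 1 + |V(R_i)| + |V(T_i)| \le |\beta| - s + \mathrm{poly}(q) + s = |\beta| + \mathrm{poly}(q)$ neighbours total — so I need to be slightly careful here and instead argue directly that after deleting nothing from $R_i$, a surviving vertex $u \in S_i$ has $\deg_{H - X'}(u) \le (\deg \text{ in } S_i) + |V(R_i)| + |V(T_i)|$, and I must re-examine whether the intended solution keeps this under $|\beta|$.

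The main obstacle I anticipate is exactly this last point: re-inserting the $R_i$ vertices back into the graph raises the degrees of vertices in $S_i$, and one must confirm that in the canonical optimal solution the vertices of $S_i$ that survive have few enough neighbours elsewhere ($T_i$, and $R_i$ if kept intact) that the total stays at or below $|\beta|$. The resolution should come from the way the costs $c(a_j) = s - \tfrac12 a_j$ are calibrated: deleting $c(\ha_i)$ vertices of $S_i$ leaves $S_i - X$ with maximum degree exactly $\ha_i$, and since $|V(R_i)| = |\beta| - s$, a surviving vertex of $S_i$ with residual degree $\le \ha_i$ inside $S_i$ plus $|V(R_i)| = |\beta| - s$ neighbours in $R_i$ plus $|V(T_i)| \le s$ neighbours in $T_i$ gets total degree $\le \ha_i + |\beta| - s + s$, which exceeds $|\beta|$ by $\ha_i$ unless something is deleted from $T_i$ — so in fact the correct statement is that \emph{$T_i$ must be hit}, and the $R_i$ vertices are harmless precisely because the degree slack is managed through $S_i$ and $T_i$. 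I would therefore prove the claim by: first establishing step (1) that $R_i$-vertices and $R_{ij}$-vertices themselves never violate the constraint; then showing that if $X$ deletes some vertices of $R_i$, we may instead delete the same number of vertices from $T_i$ (which is also edgeless and adjacent to $S_i$) — or argue the cleanest way, that such deletions in $R_i$ can simply be discarded and compensated by the structure of the intended solution — and symmetrically route $R_{ij}$-deletions into $T_i$ or $T_j$, or simply discard them, checking feasibility is preserved. I would double-check against Lemma~\ref{lem:first-direction}'s construction, where indeed $X \cap R_i = \emptyset$ and $X \cap R_{ij} = \emptyset$, which confirms the claim is consistent with the intended witness; the honest writeup will need to carefully track the neighbour counts of $S_i$-vertices and $U_{ij}$-vertices to show no degree constraint is broken after removing $R_i, R_{ij}$ from $X$.
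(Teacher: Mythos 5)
Your overall approach matches the paper's: both use an exchange argument that routes $R_i$-deletions into $T_i$-deletions, and you correctly identify the obstacle that simply discarding $X(R_i)$ raises the degree of surviving $S_i$-vertices by up to $|V(R_i)| = |\beta|-s$, which is not obviously safe. You also correctly flag that the $R_i$- and $R_{ij}$-vertices themselves never violate the bound because their degree in $H$ is $|V(S_i)|$ or $|V(U_{ij})|$, which is polynomially small compared to $|\beta|$. However, the proposal leaves two concrete gaps that the paper's proof fills. First, it omits the preliminary budget argument that $\chi(S_i) \ge c(a_1)$: if $\chi(S_i) < c(a_1)$ then Definition~\ref{def:kd-degdeletion-bdd-121}(\ref{def:degdeletion-3}) forces $\Delta(S_i - X) \ge a_0 = q+1$, and a maximum-degree surviving vertex of $S_i$ would need $q+1$ of its $|\beta|$ neighbors in $R_i \cup T_i$ to be deleted, exceeding the budget $q$. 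This gives $\Delta(S_i - X) \le a_1$, which your proposal never establishes (you instead invoke $\ha_i$, but the fact that $S_i$ ``chose'' some $\ha_i$ is only established in the subsequent Claim~\ref{cl:schose}, so it cannot be used here).

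Second, the swap itself is fine (a $T_i$-vertex dominates an $R_i$-vertex: $T_i$ is adjacent to $S_i$, $D_i$, and all $U_{ij}$, while $R_i$ is adjacent only to $S_i$, so swapping can only lower degrees), but you do not resolve the case where the swap exhausts $T_i$, i.e.\ $V(T_i) \subseteq X$ while $X(R_i) \neq \emptyset$. The paper resolves this by observing that once $\Delta(S_i - X) \le a_1$ and $T_i$ is entirely deleted, a surviving $S_i$-vertex has degree at most $(|\beta| - s) + 0 + a_1 = |\beta| - s + a_1 < |\beta|$ even if \emph{no} vertex of $R_i$ is deleted (using $a_1 < n^{10} = s$), so the remaining $R_i$-deletions are simply wasted and can be dropped. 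Your proposal gestures at ``simply discard them, checking feasibility is preserved'' without supplying this $a_1 < s$ calculation, and also does not spell out that the preliminary $\chi(S_i) \geq c(a_1)$ step is precisely what makes this discard safe. Both gaps are needed for a complete argument; the rest of your reasoning, including the symmetric treatment of $R_{ij}$ via $T_i \cup T_j$, is sound.
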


The detailed proof of this claim refers to case 2 of Claims \ref{cl:ld-module-4}, \ref{cl:ld-module-5} in the Appendix. 
The rough idea is that a vertex of $X(R_i)$ can always be replaced by a vertex of $T_i \setminus X$ if the latter  is non-empty. If $V(T_i) \setminus X$ is  empty, then after an unavoidable at least $c(a_1)$ vertices deletion from $S_i$, no deletion in $R_i$ is needed since each remaining vertex of $S_i$ in $H - X$ has maximum degree at most $|\beta| - s + a_1 < |\beta|$. The idea for  $\chi(R_{ij}) = 0$ goes the same way.

\begin{claim}\label{cl:schose}
For any $i \in [k]$, we may assume  ${\Delta(S_i - X) \in \{a_1, \ldots, a_n\}}$, and that $\chi(S_i) = c(\Delta(S_i - X))$.
\end{claim}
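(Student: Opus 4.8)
The plan is to start from an arbitrary feasible deletion set $X$ (already normalized via Claim~\ref{cl:nori}, so that $\chi(R_i)=0$ for all $i$ and $\chi(R_{ij})=0$ for all distinct $i,j$) and, for each $i\in[k]$ separately, to replace $X\cap V(S_i)$ by an optimal deletion set inside the factor $S_i$; these modifications touch pairwise disjoint vertex sets, so they compose.

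First I would extract the local degree constraint that $S_i$ forces. Since $|V(S_i)|=a_0^{20}>a_0>q\ge|X|$, the set $V(S_i)\setminus X$ is nonempty; fix $v$ in it. Because $S_i$ is complete to $R_i$ and to $T_i$ and $\chi(R_i)=0$, we have $\deg_{H-X}(v)=\deg_{S_i-X}(v)+|R_i|+|T_i\setminus X|=\deg_{S_i-X}(v)+(|\beta|-s)+(s-\chi(T_i))$, so $\deg_{H-X}(v)\le|\beta|$ forces $\deg_{S_i-X}(v)\le\chi(T_i)$. Maximizing over $v$ yields $d:=\Delta(S_i-X)\le\chi(T_i)\le|T_i|=s<a_0$.

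Next I would locate $d$ among the thresholds of $S_i$, which was built as a $(a_0,I\cup\{a_{n+1}\},c)$-degree deletion graph with $a_1>\dots>a_n>a_{n+1}$ and $c(a_{n+1})=a_0$. Since $\chi(S_i)\le q<a_0=c(a_{n+1})$, condition~\ref{def:degdeletion-3} of Definition~\ref{def:kd-degdeletion-bdd-121} applied at index $a_{n+1}$ gives $d\ge a_n$; combined with the previous step, $a_n\le d\le s$. Let $a_j$ be the largest element of $\{a_1,\dots,a_n\}$ with $a_j\le d$, so $d<a_{j-1}$ when $j\ge2$ and $d<a_0$ when $j=1$. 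Had $\chi(S_i)$ been below $c(a_j)$, condition~\ref{def:degdeletion-3} at index $a_j$ would force $\Delta(S_i-X)\ge a_{j-1}>d$ (or $\ge a_0>d$ when $j=1$), a contradiction; hence $\chi(S_i)\ge c(a_j)$. Now, using condition~2 of Definition~\ref{def:kd-degdeletion-bdd-121}, I would pick $Y\subseteq V(S_i)$ with $|Y|=c(a_j)\le\chi(S_i)$ and $\Delta(S_i-Y)=a_j$, and set $X':=(X\setminus V(S_i))\cup Y$. Then $|X'|\le|X|\le q$, and $X'$ still satisfies Claim~\ref{cl:nori}. It remains to check $\Delta(H-X')\le|\beta|$: a vertex $v\in V(S_i)\setminus X'$ now has degree $a_j+|\beta|-\chi(T_i)\le|\beta|$ because $a_j\le d\le\chi(T_i)$; a vertex of $R_i$ is adjacent only to $S_i$, and a vertex of $T_i$ only to $S_i$ and the $k-1$ graphs $U_{ii'}$, whose total size is $a_0^{O(1)}$ and thus negligible against $|\beta|=(nk)^{10000}$, so these degrees stay below $|\beta|$ whatever $X'$ is; and no other factor is touched. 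Doing this for every $i$ gives the claim, with $\Delta(S_i-X)=a_j\in\{a_1,\dots,a_n\}$ and $\chi(S_i)=c(a_j)=c(\Delta(S_i-X))$.

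The step I expect to be the main obstacle is the feasibility check after the exchange: shrinking $X\cap V(S_i)$ enlarges $V(S_i)\setminus X$ and hence the degrees of the vertices of $R_i$ and of $T_i$, so this does not follow formally and genuinely relies on the deliberate gap between $|\beta|$ (and the sizes of the padding graphs $R_i$, $R_{ij}$) on one side and the polynomially bounded degree-deletion factors $S_i$ and $U_{ij}$ on the other. It is also the step where Claim~\ref{cl:nori} is indispensable: $\chi(R_i)=0$ is precisely what makes the identity $\deg_{H-X}(v)=\deg_{S_i-X}(v)+|\beta|-\chi(T_i)$ hold exactly, which is what lets us read off $\Delta(S_i-X)\le\chi(T_i)$.
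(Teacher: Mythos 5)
Your proof is correct and follows essentially the same strategy as the paper's: normalize $X$ inside each $S_i$ independently via an exchange to the canonical deletion set of size $c(a_j)$, using conditions 2 and 3 of the degree-deletion-graph definition to pin down the right threshold $a_j$, and using the deliberate size gap between $|\beta|$ and the degree-deletion factors to verify that enlarging $V(S_i)\setminus X$ cannot violate feasibility at $R_i$ or $T_i$. Your derivation of $\Delta(S_i - X)\le\chi(T_i)$ from $\chi(R_i)=0$ is a cleaner, self-contained route to the needed upper bound on $\Delta(S_i-X)$ (the paper instead imports $\Delta(S_i-X)\le a_1$ and $\chi(S_i)\ge c(a_1)$ as byproducts of the proof of Claim~\ref{cl:nori}), and indexing $a_j$ by the achieved degree rather than by $\chi(S_i)$ spares you the paper's initial WLOG that the deletions inside $S_i$ are degree-minimizing; one nit: Definition~\ref{def:kd-degdeletion-bdd-121} only gives $|V(S_i)|\le(a_0 c(a_{n+1}))^{10}$, so the equality $|V(S_i)|=a_0^{20}$ you write is an overclaim — the lower bound $|V(S_i)|\ge c(a_{n+1})=a_0>q$ that you actually need follows from condition~2.
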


The detailed proof refers to case 2 of Claim \ref{cl:ld-module-6}  in the Appendix. 
The rough idea is that Case 2 of Definition \ref{def:kd-degdeletion-bdd-121} states that we can delete $c(a_j)$ vertices from $S_i$ to make $\Delta(S_i - X) = a_j$, where $0\leq j \leq n+1$. In fact, this is the smallest maximum degree we can achieve in $S_i$ by deleting between $c(a_j)$ and $c(a_{j+1}) - 1$ vertices, because of the stepwise behavior of arbitrary deletion tables. Moreover, we already know that $\chi(R_i) = 0$ based on Claim \ref{cl:nori}. So, if $\Delta(S_i - X) = a_0$ there is a vertex in $S_i - X$ with degree $|\beta| - s + a_0 > |\beta|$, and if $\Delta(S_i - X) = a_n - 1$ or less then $c(a_n - 1) = a_0 > q$ vertices have to be deleted from $S_i$.

\begin{claim}\label{cl:uchose}
For any distinct $i, j \in [k]$, we may assume that  ${\Delta(U_{ij} - X) \in I_{ij}}$, and that $\chi(U_{ij}) = c_{ij}(\Delta(U_{ij} - X))$.
\end{claim}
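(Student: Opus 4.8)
The plan is to prove Claim~\ref{cl:uchose} by reusing, almost verbatim, the reasoning behind Claim~\ref{cl:schose} for $S_i$, with $(U_{ij},R_{ij},c_{ij},I_{ij})$ playing the roles of $(S_i,R_i,c,\{a_1,\dots,a_n\})$. The single structural difference is that $U_{ij}$ has two adjacent edgeless ``$T$''-factors, $T_i$ and $T_j$, rather than just one, and this is exactly the feature that will later tie the number chosen in $U_{ij}$ to those chosen in $S_i$ and $S_j$.

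First I would write down the local degree equation. The factors adjacent to $U_{ij}$ are precisely $R_{ij},T_i,T_j$, of sizes $|\beta|-2s,\,s,\,s$; using $\chi(R_{ij})=0$ from Claim~\ref{cl:nori}, every $v\in V(U_{ij})\setminus X$ has, in $H-X$, degree
\begin{align*}
\deg_{U_{ij}-X}(v)&+|V(R_{ij})|+\bigl(|V(T_i)|-\chi(T_i)\bigr)+\bigl(|V(T_j)|-\chi(T_j)\bigr)\\
&=\deg_{U_{ij}-X}(v)+|\beta|-\chi(T_i)-\chi(T_j).
\end{align*}
Since $\Delta(H-X)\le|\beta|$, this forces $\Delta(U_{ij}-X)\le\chi(T_i)+\chi(T_j)\le|V(T_i)|+|V(T_j)|=2s$.

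Next I would locate $\chi(U_{ij})$. Writing $I_{ij}=\{b_1>\dots>b_m\}$ with $b_m=\ell_{ij}$, the decreasing-valid ordering behind $U_{ij}$ is $a_0>b_1>\dots>b_m>\ell_{ij}-1$ with costs $0=c_{ij}(a_0)<c_{ij}(b_1)<\dots<c_{ij}(b_m)<c_{ij}(\ell_{ij}-1)=a_0$. Because $a_0=q+1>q\ge ks\ge 2s$, the bound above gives $\Delta(U_{ij}-X)<a_0=\Delta(U_{ij})$, so condition~\ref{def:degdeletion-3} of Definition~\ref{def:kd-degdeletion-bdd-121} applied to $b_1$ (whose predecessor in the ordering is $a_0$) forces $\chi(U_{ij})\ge c_{ij}(b_1)$; and since $\chi(U_{ij})\le|X|\le q<a_0=c_{ij}(\ell_{ij}-1)$, there is a unique $t\in[m]$ with $c_{ij}(b_t)\le\chi(U_{ij})<c_{ij}(b_{t+1})$ (convention $b_{m+1}:=\ell_{ij}-1$), and condition~\ref{def:degdeletion-3} applied to $b_{t+1}$ (whose predecessor is $b_t$) gives $\Delta(U_{ij}-X)\ge b_t$. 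Combining, $b_t\le\Delta(U_{ij}-X)\le\chi(T_i)+\chi(T_j)$.

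Finally I would run the exchange. By the second condition of Definition~\ref{def:kd-degdeletion-bdd-121} there is $X'\subseteq V(U_{ij})$ with $|X'|=c_{ij}(b_t)\le\chi(U_{ij})$ and $\Delta(U_{ij}-X')=b_t$; replace $X$ by $(X\setminus V(U_{ij}))\cup X'$. This does not increase $|X|$, it keeps the constraint inside $U_{ij}$ valid since $b_t\le\chi(T_i)+\chi(T_j)$ (and $\chi(T_i),\chi(T_j)$ are untouched), and it leaves $\Delta(U_{ij}-X)=b_t\in I_{ij}$ with $\chi(U_{ij})=c_{ij}(b_t)=c_{ij}(\Delta(U_{ij}-X))$, as claimed. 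The one point requiring care --- the main obstacle, such as it is --- is checking that this replacement does not harm vertices of the neighbouring factors $R_{ij},T_i,T_j$: lowering $\chi(U_{ij})$ can only raise their degrees, but each of them has degree at most $k\cdot a_0^{20}$ in $H$ (every factor has at most $a_0^{20}$ vertices by Definition~\ref{def:kd-degdeletion-bdd-121}, and a $T$-vertex meets at most $k$ factors), a fixed polynomial in $n$ and $k$ that is dwarfed by $|\beta|=(nk)^{10000}$, so the replacement is safe. Everything else is a direct transcription of the $S_i$-analysis in Claim~\ref{cl:schose}.
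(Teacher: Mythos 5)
Your proof is correct and follows essentially the same route as the paper's Claim~\ref{cl:ld-module-7} (case $\alpha=1$): you bound $\Delta(U_{ij}-X)$ above by $\chi(T_i)+\chi(T_j)\le 2s<a_0$ using $\chi(R_{ij})=0$, use the stepwise structure of the cost function to trap $\chi(U_{ij})$ between consecutive $c_{ij}$-values, and close with an exchange argument whose harmlessness to $T_i,T_j,R_{ij}$ follows from their tiny $H$-degrees relative to $|\beta|$. The only cosmetic difference is that the paper factors the $\chi(U_{ij})\ge c_{ij}(\hbar_{ij})$ bound and the safety check into separate prior claims (\ref{cl:ld-module-5} and \ref{r-t are safe}) rather than deriving them inline.
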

The idea for proving this claim is similar to that of Claim \ref{cl:schose}.

Our next step is to argue that the degree chosen by $U_{ij}$ must be the sum of the degrees chosen by $S_i$ and $S_j$.
More specifically, for $i \in [k]$, we say that $S_i$ \emph{chose} $a_j \in I$ if $\Delta(S_i - X) = a_j$ and $\chi(S_i) = c(a_j)$.  
Likewise, for distinct $i, j \in [k]$, we say that $U_{ij}$ \emph{chose} $a, b \in I$ if $\Delta(U_{ij} - X) = a + b$ and $\chi(U_{ij}) = c_{ij}(a + b)$.
Note that by Claim~\ref{cl:schose}, each $S_i$ chooses one $a_j$ and by Claim~\ref{cl:uchose}, each $U_{ij}$ chooses one pair $a, b$ such that symmetry edges  $f_{i}^{-1}(b) f_{j}^{-1}(a) , f_{i}^{-1}(a) f_{j}^{-1}(b) \in E(G)$.  
The point to make is that if $S_i$ and $S_j$ 
chose $a$ and $b$, respectively, then $U_{ij}$ must have chosen $a, b$.

\begin{claim}\label{cl:that-fking-claim}
For each $i \neq j$, 
if $S_i$ chose $a \in I$ and $S_j$ chose $b \in I$, then $U_{ij}$ chose $a, b$.
\end{claim}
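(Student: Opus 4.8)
The plan is to count the total budget $q = ks + \binom{k}{2}s$ very carefully and argue that it is exhausted exactly, so no deletions are "wasted", and then use the $2$-sumfree property of $I$ to force the pairing. First I would combine Claims~\ref{cl:nori}--\ref{cl:uchose} to write the size of $X$ as a sum over the factors that actually receive deletions: $\chi(T_i)$ for $i\in[k]$, $\chi(S_i) = c(\Delta(S_i - X))$ for $i \in [k]$, and $\chi(U_{ij}) = c_{ij}(\Delta(U_{ij} - X))$ for the $\binom{k}{2}$ pairs. Writing $\Delta(S_i - X) = a_{s_i}$ and $\Delta(U_{ij} - X) = b_{ij} \in I_{ij}$, and recalling $c(a_{s_i}) = s - \tfrac12 a_{s_i}$ and $c_{ij}(b_{ij}) = s - \tfrac{1}{2(k-1)} b_{ij}$, the total is
\begin{align*}
|X| = \sum_{i\in[k]} \chi(T_i) + \sum_{i\in[k]}\Bigl(s - \tfrac12 a_{s_i}\Bigr) + \sum_{i<j}\Bigl(s - \tfrac{1}{2(k-1)} b_{ij}\Bigr) \le q = ks + \binom{k}{2}s .
\end{align*}

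The next step is the degree constraint coming from the $T_i$'s. Each $T_i$ is an edgeless factor of size $s$ adjacent only to $S_i$ and to the factors $U_{ij}$ for $j \ne i$. A vertex of $T_i$ not in $X$ has degree $(|T_i| - \chi(T_i)) + (|S_i| - \chi(S_i)) + \sum_{j\ne i}(|U_{ij}| - \chi(U_{ij}))$ before accounting for anything; but $|S_i|, |U_{ij}|$ are huge (polynomial in $a_0 c(\cdot) \gg |\beta|$), so to keep such a vertex's degree at most $|\beta|$, essentially all of $S_i$ and all of the $U_{ij}$ must be deleted — contradiction with the budget — unless $T_i$ is entirely inside $X$, i.e. $\chi(T_i) = s$. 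Wait: more precisely, I would argue that once $\Delta(S_i - X) = a_{s_i}$ is fixed and $\chi(S_i) = c(a_{s_i})$, the vertices remaining in $S_i$ that are adjacent to $T_i$ have degree roughly $|R_i| + |T_i| + a_{s_i} = (|\beta| - s) + (s - \chi(T_i)) + a_{s_i}$ in $H - X$; keeping this $\le |\beta|$ forces $\chi(T_i) \ge a_{s_i}$. Symmetrically the vertices of $R_{ij}$ and the structure of $U_{ij}$ force, for each $U_{ij}$, that $\chi(T_i) + \chi(T_j) \ge b_{ij}$ (the leftover vertices of $U_{ij}$ adjacent to $T_i\cup T_j$ have degree about $|R_{ij}| + (|T_i| - \chi(T_i)) + (|T_j| - \chi(T_j)) + b_{ij} = (|\beta|-2s) + (s-\chi(T_i)) + (s-\chi(T_j)) + b_{ij} \le |\beta|$). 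Summing the $U_{ij}$ constraints over all pairs gives $\sum_{i<j}(\chi(T_i)+\chi(T_j)) = (k-1)\sum_i \chi(T_i) \ge \sum_{i<j} b_{ij}$.

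Now I would plug these lower bounds back into the budget equation. The constraints $\chi(T_i) \ge a_{s_i}$ give $\sum_i \chi(T_i) \ge \sum_i a_{s_i}$, which when substituted tightens $|X| \ge ks + \binom{k}{2}s + \sum_i a_{s_i} - \tfrac12\sum_i a_{s_i} - \tfrac{1}{2(k-1)}\sum_{i<j} b_{ij} = q + \tfrac12\sum_i a_{s_i} - \tfrac{1}{2(k-1)}\sum_{i<j} b_{ij}$. Combined with $|X| \le q$ this yields $\sum_{i<j} b_{ij} \ge (k-1)\sum_i a_{s_i}$. But the $U_{ij}$-derived inequality $(k-1)\sum_i \chi(T_i) \ge \sum_{i<j} b_{ij}$ together with the budget (which also caps $\sum_i \chi(T_i)$ from above, since $\chi(T_i) \le |T_i| = s$ and more sharply via the equation) should pin down $\sum_{i<j} b_{ij} = (k-1)\sum_i a_{s_i}$ and $\chi(T_i) = a_{s_i}$ for every $i$, forcing all inequalities to be equalities. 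In particular $\sum_{i<j} b_{ij} = \sum_{i<j}(a_{s_i} + a_{s_j})$ with each $b_{ij} \ge a_{s_i}+a_{s_j}$ being impossible termwise unless — here is the one subtlety — $b_{ij}$ could in principle exceed $a_{s_i}+a_{s_j}$ for some pair while being smaller for another; but each $b_{ij} \in I_{ij}$ is a sum $a + a'$ of two elements of $I$, and $\chi(T_i)+\chi(T_j) = a_{s_i} + a_{s_j} \ge b_{ij} = a + a'$ forces, via the spacing property (any two distinct pair-sums from $I$ differ by at least $r$, and $a_{s_i}, a_{s_j} \ge r$), that $a + a' \le a_{s_i} + a_{s_j}$; pushing equality in the sum then gives $b_{ij} = a_{s_i} + a_{s_j}$ for all pairs. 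Finally, since $I$ is $2$-sumfree, $b_{ij} = a_{s_i} + a_{s_j}$ and $b_{ij} = a + a'$ with $f_i^{-1}(a')f_j^{-1}(a) \in E(G)$ forces $\{a, a'\} = \{a_{s_i}, a_{s_j}\}$, i.e. the pair chosen by $U_{ij}$ is exactly $(a_{s_i}, a_{s_j})$, which is the claim.

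The main obstacle I anticipate is the bookkeeping in the last paragraph: making the "termwise forcing" rigorous, i.e. ruling out that one $U_{ij}$ over-spends on its chosen sum while another under-spends, using only (i) the global budget equality, (ii) the per-$T_i$ and per-$U_{ij}$ degree inequalities, and (iii) the $2$-sumfree / $r$-spacing properties of $I$. I expect this requires carefully choosing which inequality to sum and in what order, and using the fact that $b_{ij} \ge \ell_{ij} = \min I_{ij} \ge 2r$ together with $c_{ij}(\ell_{ij}-1) = a_0 > q$ to exclude the degenerate choice $\Delta(U_{ij}-X) = \ell_{ij}-1$ (already handled in Claim~\ref{cl:uchose}). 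The other cases ($\alpha = 0$ and $\alpha \in (0,1)$) would follow the same skeleton with "retention" replacing "deletion" and the signs of the $c$-functions reversed, but I would present the $\alpha = 1$ computation in full and indicate the parallel adjustments.
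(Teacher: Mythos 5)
Your proposal is correct for the $\alpha = 1$ (\bdd) case sketched in the main text, and it takes a genuinely different route from the paper's. The paper partitions the $U_{ij}$ factors into three buckets $U^<, U^=, U^>$ according to whether $U_{ij}$'s chosen sum is below, equal to, or above $\ha_i+\ha_j$; it then disposes of $U^<$ with a direct cost comparison and disposes of $U^>$ with a charging argument that distributes each $U_{ij}$'s ``savings'' among the extra deletions it forces in $T_i$ and $T_j$, with $1/(k-1)$ charged per vertex. You instead avoid the case split and the charging apparatus entirely: you gather the per-factor degree constraints $\chi(T_i) \ge \ha_i$ (Claim~\ref{cl:schose}/\ref{cl:ld-module-8}) and $\chi(T_i)+\chi(T_j) \ge b_{ij}$ (Claim~\ref{cl:uchose}/\ref{cl:ld-module-912}), sum each family, and play them against the budget, exploiting the exact calibration $c(a_j)=s-\tfrac12 a_j$, $c_{ij}(a+b)=s-\tfrac{a+b}{2(k-1)}$. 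Summing (A) against the budget gives $(k-1)\sum_i\ha_i \le \sum_{i<j}b_{ij}$, while summing (B) against the budget gives the reverse; equality then forces $\chi(T_i)=\ha_i$ for all $i$ and, via the termwise bound $b_{ij}\le\chi(T_i)+\chi(T_j)$, forces $b_{ij}=\ha_i+\ha_j$ for each pair, and $2$-sumfreeness does the rest. This is a clean LP-duality/averaging argument that buys you a shorter, more transparent proof of the same claim, at the cost of not localizing which $U_{ij}$ is ``at fault'' (the charging version makes the cause-and-effect visible). One caution: for $\alpha\in(0,1)$ and $\alpha=0$ the per-factor lower bounds from Claim~\ref{cl:ld-module-8}/\ref{cl:ld-module-912} carry additive $O(1)$ or $O(|\beta|)$ slack, so the exact-equality cascade breaks; as you anticipate, you would then need the $r$-spacing of $I$ (that any wrong $b_{ij}$ shifts the cost by $\Omega(r/k)$, which dwarfs the accumulated $O(k^2)$ slack) to close the argument, at which point the two proofs become more similar. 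Also, be careful when polishing: your degree-counting digression on $T_i$ vertices is a red herring — those vertices are automatically safe since their degree in $H$ is $\ll |\beta|$ — the binding constraint is, as you then correctly identify, the maximum-degree vertex inside $S_i-X$ and $U_{ij}-X$; and the sentence about ``$b_{ij}\ge\ha_i+\ha_j$ being impossible termwise'' has the inequality reversed (you mean $b_{ij}>\ha_i+\ha_j$ is ruled out by the termwise bound).
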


The detailed proof of this claim refers to Claim \ref{cl:ld-module-10} in Appendix. We sketch it as follows. 
For each $i \in [k]$, we will denote by $\ha_i$ the element of $I$ that $S_i$ chose. 
We divide the $U_{ij}$'s into three groups:
\begin{align*}
    U^< &= \{ U_{ij} : \mbox{$U_{ij}$ chose $a', b'$ such that $a' + b' < \ha_i + \ha_j$} \} \\
    U^= &= \{ U_{ij} : \mbox{$U_{ij}$ chose $a', b'$ such that $a' + b' = \ha_i + \ha_j$} \} \\
    U^> &= \{ U_{ij} : \mbox{$U_{ij}$ chose $a', b'$ such that $a' + b' > \ha_i + \ha_j$} \}
\end{align*}

To prove the claim, it suffices to show that $U^<$ and $U^>$ are empty (this is because $U_{ij} \in U^=$ is only possible if $U_{ij}$ chose $\ha_i, \ha_j$, since all the sum pairs are distinct).
The rough idea is as follows.
If each $U_{ij}$ chose the correct $\ha_i, \ha_j$, then each of them will incur a deletion cost of $c_{ij}(\ha_i + \ha_j) = s - \frac{\ha_i + \ha_j}{2(k - 1)}$ and end up cancelling the deletion costs of the $S_i$ and $T_i$ factors.  
If $U^<$ is non-empty, it incurs extra deletion cost with respect to $c_{ij}(\ha_i + \ha_j)$ with no real benefit.  The complicated case is when $U^>$ is non-empty.  In this case, $U_{ij} - X$ has higher degree than if it had chosen $\ha_i, \ha_j$ and incurs less deletions than $c_{ij}(\ha_i + \ha_j)$.  However, this needs to be compensated with extra deletions in $T_i$ and $T_j$.  By using a charging argument, we can show that the sum of extra deletions required for all the $U^>$ members outweighs the deletions saved in the $U_{ij}$'s of $U^>$.

We can now construct a multicolored clique.  To this end,  define $C = \{f^{-1}_{i}(\ha_i) : i \in [k] \mbox{ and $S_i$ chose $\ha_i$} \}$.
We claim that $C$ is a clique.  
By Claim~\ref{cl:schose}, each $S_i$ chooses some $\ha_i$ and thus $|C| = k$.
Now let $f^{-1}_{i}(\ha_i), f^{-1}_{j}(\ha_j)$ be two vertices of $C$, where $i<j$.  
Then $\ha_i, \ha_j$ were chosen by $S_i$ and $S_j$, respectively, and by Claim~\ref{cl:uchose}
we know that $U_{ij}$ chose $\ha_i + \ha_j$.  
By the construction of the $U_{ij}$ solution table, this is only possible if symmetry edges  $f_{i}^{-1}(\ha_i) f_{j}^{-1}(\ha_j) , f_{i}^{-1}(\ha_j) f_{j}^{-1}(\ha_i) \in E(G)$.  Therefore, $f_{i}^{-1}(\ha_i) f_{j}^{-1}(\ha_j) \in E(G)$ and $C$ is a clique.

\section{FPT algorithms for succinct solution tables} \label{FPT-algo-for-suc-section}

In this section, we study the \bdd~problem (i.e. $\alpha = 1, \beta < 0$) on graph classes that admit succinct solution tables.
This notion is formalized below.

Let $G=(V,E)$ be a graph and $x$ be an integer in $[0,|V|]$, suppose that set $\mathbb{S}_x \subseteq 2^V$ consists of all subsets of $V$ with size $x$. A vertex set $X\in \mathbb{S}_x$ is called an \emph{$x$-deletion set} of $G$ if $\Delta(G-X) = \min \{\Delta(G-Y): Y\in \mathbb{S}_x\}$. An \emph{$x$-deletion} of $G$ is the process of deleting all the vertices of an $x$-deletion set from $G$. 
Recall that, for any integer $x\in [0, |V|]$, we call $f(x) = \min \{ \Delta (G - X) : |X| = x \}$ the degree deletion function of $G$. 
Thus, the degree deletion function of $G$ is the maximum degree of $G$ after an $x$-deletion.

A \textit{piecewise linear function}  $g: \mathbb{R} \rightarrow \mathbb{R}$ is a continuous function defined on a sequence of intervals, such that the function is linearly restricted to each of the intervals (each such linear function is called a \textit{sub-function} of $g$).  
In addition, a \emph{constant piecewise linear function} is a piecewise linear function that consists of a constant number of linear sub-functions. 

\begin{definition}\label{succict table of bdd}
Let $\G$ be a polynomial-time recognizable graph class.
For $G \in \G$, suppose that $f^G(x)$ is the degree deletion function of $G = (V,E)$, where $x\in [0, |V|]$. 
We say that $\G$ admits a succinct solution table for \bdd~if, for every $G\in \G$, there exists a function $g^G : \mathbb{R} \rightarrow \mathbb{R}$ that satisfies at least one of the following conditions:

\begin{enumerate}
    \item 
    $g^G$ is a constant piecewise linear function such that $f^G(x) = g^G(x)$ for every integer $x\in [0, |V(G)|]$,
    
    
    \item 
   $g^G$ is a piecewise convex linear function such that $f^G(x) = \left\lceil g^G(x) \right\rceil$ for every integer $x\in [0, |V(G)|]$.
\end{enumerate}
Moreover, $g^G$ can be described and constructed in polynomial time with respect to $|V(G)|$.
\end{definition}

For convenience, we say a graph class admits a succinct solution table of type $t$ for \bdd~if we refer to the condition $t$, where $t \in \{1,2\}$. The first type implies that the solution table can be divided into a small number of blocks, where the information of each block can be encoded into a small number of bits. 
In the second type, the solution table could be convex, but could also be non-convex and may 
contain a larger number of blocks, but we can reduce this special non-convex function to a convex function using ceilings (this occurs with \clusters-\gmparam).   

As we show, these definitions capture the intuition of solution tables that can be merged in FPT time.  The proof of Theorem \ref{succint table theorem for bdd} relies on the results of \cite{DBLP:journals/tcs/BredereckFNST20}, which show that \textsc{Mixed Integer Programming with Simple Piecewise Linear Transformations} (MIP/SPLiT) problem is FPT parameterized by the number of variables.

Let us first introduce the \milp~problem (MILP). In the MILP problem, the input is a matrix $A= \mathbb{Z}^{m \times (n+k)}$ with $m$ rows and $n+k$ columns and vector $\textbf{b} \in \mathbb{Q}^{n+k}$, the task is to decide whether there exists a vector $\textbf{x} =(x_1,\ldots,x_{n+k})$ such that $A\textbf{x}\leq \textbf{b}$, where variables $x_1, \ldots, x_n$ are integers and variables $x_{n+1},\ldots, x_{n+k}$ are rational numbers.
If $k=0$ then the problem is called \ilp~problem.
In addition, if we replace $x_i$ with some integer function on $x_i$, and $x_j$ with some real function on $x_j$, where $1\leq i\leq n$ and $n+1\leq j \leq n+k$, then we call the problem \mipwffull~problem (\ilpwf).

Recall that $\G$-\gmparam~is a generalization of the parameter neighborhood diversity which was first proposed by Lampis \cite{DBLP:journals/algorithmica/Lampis12}. Moreover, Lampis \cite{DBLP:journals/algorithmica/Lampis12} creates a nice technique, which uses MILP to combine the fact that each set of the neighborhood partition is able to be colored by either one color or the vertex number of the set colors, to design an FPT algorithm for \textsc{Chromatic Number} problem parameterized by neighborhood diversity. Then, Gajarsk{\'{y}} et. al. \cite{DBLP:conf/iwpec/GajarskyLO13} refine this technique and combine it with dynamic programming to obtain FPT algorithms for some important graph problems parameterized by modular-width. Moreover, the conclusion section of paper \cite{DBLP:conf/iwpec/GajarskyLO13} asks whether this technique can be further generalized to eventually obtain meta-theorem-like results, and suggests using `convex' solution tables for modules to generalize the technique. In the following, we demonstrate that \bdd~problem is FPT parameterized by $\G$-\gmparam~if $\G$ admits a `succinct' solution table for \bdd. The rough idea of our algorithm is to compute a succinct solution table for each module of a $\G$-modular partition of the graph and combine these tables using \ilpwf. Intuitively, a `succinct' solution table should be a `convex' one. However, our result goes further over this. We emphasize that condition 2 of our succinct solution table in definition \ref{succict table of bdd} allows some \emph{non-convex} functions.

The FPT algorithms in this section rely on the known FPT algorithm for \textsc{Mixed Integer Programming with Convex Constraints} \cite{DBLP:journals/tcs/BredereckFNST20}. So we define piecewise linear convex functions in the same way. A \textit{piecewise linear function}  $g : \mathbb{R} \rightarrow \mathbb{R}$ is a continuous function, which is defined on a sequence of intervals, such that the function is linearly restricted to each of the intervals. In addition, the linear function on each of the intervals is called a \textit{sub-function} of $g$. Function $g$ is called \textit{convex} if, for any real numbers $x_1,x_2$, there exists a real number $r$ with $0\leq r \leq 1$ such that $g(rx_1 + (1-r)x_2) \leq rg(x_1) + (1-r)g(x_2)$. A function $g$ is called concave if the function $-g$ is convex. 
Naturally, $g$ a \textit{piecewise linear convex (concave) functions} if $g$ is not only a piecewise linear function but also a convex (concave) function.  
Moreover, we restrict that the endpoints of all sub-functions of $g$ are rational points and that the slopes of all sub-functions of $g$ are rational numbers. More details for piecewise linear convex functions refer to \cite{DBLP:journals/tcs/BredereckFNST20}.

Let us recall the definition of the \textsc{Mixed Integer Programming with Simple Piecewise Linear Transformations} (MIP/SPLiT) problem \cite{DBLP:journals/tcs/BredereckFNST20}. The input is a set of integers $\{c_1,\ldots,c_m\}$, a set of piecewise linear concave functions $\{f_{i,j}: i\in [n+k], j\in [m]\}$, and a set of piecewise linear convex functions $\{g_{i,j}: i\in [n+k], j\in [m]\}$. The objective is to decide whether there exist $x_1,\ldots,x_{n+k}$ such that
\begin{align*}
 \sum_{i \in [n+k]} g_{i,j} (x_i) &\leq \sum_{i \in [n+k]} f_{i,j} (x_i) + c_j & \forall j \in [m]\\
  x_i &\in \mathbb{N}  & \forall i \in [n]\\
  x_i &\in \mathbb{R^+}  & \textrm{for} \: n+1 \leq i \leq n+k
\end{align*}
In addition, if functions $\{f_{i,j}: i\in [n+k], j\in [m]\}$ or functions $\{g_{i,j}: i\in [n+k], j\in [m]\}$ are non-convex functions, then we call the problem MIP with non-convex constraints.

\begin{theorem}[\cite{DBLP:journals/tcs/BredereckFNST20}]
\label{IPSPLIT is FPT}
The MIP/SPLiT problem can be decided in $n^{2.5n+o(n)}(I+P)^{O(1)}$ time, where $I$ is the size of the input, $P$ is the pieces number of the function that has the most number of pieces, and $n$ is the number of integer variables.
\end{theorem}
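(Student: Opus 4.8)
The plan is to reduce MIP/SPLiT to an ordinary mixed integer linear program (\milp) with the same number $n$ of integer variables, only polynomially more continuous variables and inequalities, and then invoke the classical fixed-dimension integer programming machinery \cite{DBLP:journals/combinatorica/FrankT87,DBLP:journals/mor/Kannan87,DBLP:journals/mor/Lenstra83} in its mixed form. Concretely, fix a constraint $j \in [m]$ and an index $i \in [n+k]$. Since $g_{i,j}$ is piecewise linear and \emph{convex}, on all of $\mathbb{R}$ it equals the pointwise maximum of the affine extensions $t \mapsto \alpha_{\ell} t + \beta_{\ell}$ of its pieces; I would introduce a fresh continuous variable $u_{i,j}$ together with the constraints $u_{i,j} \ge \alpha_{\ell} x_i + \beta_{\ell}$ for every piece $\ell$ of $g_{i,j}$, so that $u_{i,j}$ can always be set equal to $g_{i,j}(x_i)$ and never needs to be smaller. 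Dually, since $f_{i,j}$ is piecewise linear and \emph{concave}, it is the pointwise minimum of the extensions of its pieces; introduce a continuous variable $w_{i,j}$ with $w_{i,j} \le \alpha'_{\ell} x_i + \beta'_{\ell}$ for every piece $\ell$ of $f_{i,j}$. Replace the $j$-th MIP/SPLiT constraint by the \emph{linear} inequality $\sum_{i\in[n+k]} u_{i,j} \le \sum_{i\in[n+k]} w_{i,j} + c_j$, keep $x_i \in \mathbb{N}$ for $i\le n$ and $x_i \in \mathbb{R}^+$ for $i>n$, and let the $u_{i,j}, w_{i,j}$ be free reals. The result is a \milp~with $n$ integer variables, at most $2(n+k)m + k$ continuous variables and $O((n+k)mP)$ inequalities, all of whose coefficients already occur in the input, so its encoding length is $(I+P)^{O(1)}$.

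Next I would check that this reduction preserves feasibility. If $x^\ast$ satisfies the original instance, then setting $u_{i,j} = g_{i,j}(x_i^\ast)$ and $w_{i,j} = f_{i,j}(x_i^\ast)$ satisfies all new constraints, the point being exactly that $g_{i,j}(x_i^\ast)$ dominates every affine piece evaluated at $x_i^\ast$ (convexity) and $f_{i,j}(x_i^\ast)$ is dominated by every affine piece (concavity). Conversely, any feasible $(x^\ast, u^\ast, w^\ast)$ of the new instance forces $u_{i,j}^\ast \ge \max_{\ell}(\alpha_{\ell} x_i^\ast + \beta_{\ell}) = g_{i,j}(x_i^\ast)$ and $w_{i,j}^\ast \le f_{i,j}(x_i^\ast)$, so $\sum_i g_{i,j}(x_i^\ast) \le \sum_i u_{i,j}^\ast \le \sum_i w_{i,j}^\ast + c_j \le \sum_i f_{i,j}(x_i^\ast) + c_j$, i.e.\ $x^\ast$ is feasible for MIP/SPLiT. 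Hence the two decision problems are equivalent.

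Finally I would solve the \milp. The ellipsoid/flatness-theorem approach of Lenstra, sharpened by Kannan and by Frank--Tardos \cite{DBLP:journals/combinatorica/FrankT87,DBLP:journals/mor/Kannan87,DBLP:journals/mor/Lenstra83}, decides a mixed integer linear program with $n$ integer variables (and arbitrarily many continuous variables) in $n^{2.5n+o(n)}$ times a polynomial in the encoding length; plugging in the encoding length $(I+P)^{O(1)}$ of our instance gives the claimed running time. The step I expect to be the real obstacle is getting the directions of the relaxations exactly right: the reduction works only because in MIP/SPLiT the convex functions sit on the ``$\le$'' side and the concave functions on the ``$\ge$'' side, so each piecewise function is captured \emph{exactly} by a system of linear inequalities with no additional integrality; had the convexity pointed the other way, one would be forced to branch over pieces and the parameter would degrade. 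A secondary technical point is confirming that the fixed-dimension integer programming result is genuinely available in the \emph{mixed} (not merely pure-integer) form with the stated $n^{2.5n+o(n)}$ dependence — which is precisely why one invokes \cite{DBLP:journals/combinatorica/FrankT87,DBLP:journals/mor/Kannan87,DBLP:journals/mor/Lenstra83} here rather than Theorem~\ref{ilp is FPT} alone.
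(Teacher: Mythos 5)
This theorem is not proved in the paper; it is taken as a black box from \cite{DBLP:journals/tcs/BredereckFNST20}, so there is no internal proof to compare against. Your argument is nonetheless a correct, self-contained derivation, and it is essentially the same epigraph/hypograph linearization that underlies the cited result. The crucial observations are exactly the ones you flag: because each $g_{i,j}$ is piecewise linear \emph{convex} and appears on the small side of the inequality, it is the pointwise maximum of finitely many affine pieces and is therefore captured exactly (not just relaxed) by the system $u_{i,j} \ge \alpha_\ell x_i + \beta_\ell$; dually, each concave $f_{i,j}$ on the large side is the pointwise minimum of its pieces and is captured by $w_{i,j} \le \alpha'_\ell x_i + \beta'_\ell$. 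Your two-directional feasibility check is sound, the resulting MILP has the same $n$ integer variables, only polynomially many ($O((n+k)m)$) new continuous variables and $O((n+k)mP)$ new constraints with coefficients drawn from the input, and Lenstra's algorithm (with Kannan's improvement and Frank--Tardos preprocessing) does apply in the mixed form with arbitrarily many continuous variables — it projects onto the integer coordinates implicitly via LP separation rather than explicitly, so the continuous dimension does not enter the exponent. That gives the stated $n^{2.5n+o(n)}(I+P)^{O(1)}$ bound. Your closing caveat about the convexity directions is exactly right: if the convexity were flipped, the pointwise-max characterization would only give a one-sided relaxation and one would have to branch over pieces, reintroducing a $P^{n}$ factor.
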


We will use this result as a black box in our proof. More details for MIP/SPLiT problem can be found in \cite{DBLP:journals/tcs/BredereckFNST20}. If all the variables of the instances are integers, then it is a special case of MIP/SPLiT problem where $k=0$.

\begin{theorem}\label{succint table theorem for bdd}
Let $\G$ be a graph class that admits a succinct solution table for \bdd.  Assume a minimum $\G$-modular partition is given. Then, \bdd~is FPT parameterized by $\G$-\gmparam.  
\end{theorem}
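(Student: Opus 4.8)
The plan is to reduce \bdd~on a graph $G$ with a given minimum $\G$-modular partition $\M = \{M_1, \ldots, M_k\}$ to an instance of MIP/SPLiT (Theorem~\ref{IPSPLIT is FPT}) with $O(k)$ integer variables, and then invoke that theorem as a black box. The key observation is that once we decide how many vertices to delete inside each module $M_i$, the deletion pattern inside $M_i$ that minimizes the internal maximum degree is exactly governed by the degree deletion function $f^{M_i}$, because a module is either fully adjacent or fully non-adjacent to each vertex outside it. So the only ``global'' interaction between modules is through the quotient graph $G_{/\M}$: the degree of a vertex $v \in M_i$ after deletion equals its surviving internal degree plus $\sum_{j : M_j \sim M_i} (|M_j| - y_j)$, where $y_j = |X \cap M_j|$ is the number of deletions in $M_j$. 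Since $k = \gmw(G)$ is the parameter, we can afford to guess, for each module $M_i$, the target internal maximum degree $d_i$ we aim for (but there may be many possible values of $d_i$); instead of guessing, we encode $d_i$ as a function of the integer variable $y_i$ via $f^{M_i}$, which is exactly the ``simple piecewise linear transformation'' the MIP/SPLiT framework handles.

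First I would set up the variables: one integer variable $y_i \in \{0, 1, \ldots, |M_i|\}$ for each $i \in [k]$, representing $|X \cap M_i|$. The budget constraint is $\sum_{i} y_i \le q$. For the degree constraints, I would split into cases according to whether module $M_i$ is entirely deleted ($y_i = |M_i|$), in which case it contributes no constraint, or not. For each non-fully-deleted module $M_i$, every surviving vertex must have total degree at most $|\beta|$; the worst such vertex has internal surviving degree equal to $f^{M_i}(y_i)$ (by definition of the degree deletion function, this is the minimum achievable maximum internal degree after $y_i$ deletions, and the optimal choice of which vertices to delete realizes it), plus the contribution $\sum_{j : M_j \text{ adjacent to } M_i} (|M_j| - y_j)$ from neighboring modules. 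This yields, for each $i$, a constraint of the form
\begin{align*}
f^{M_i}(y_i) + \sum_{j : M_j \sim M_i} (|M_j| - y_j) \le |\beta|.
\end{align*}
Now I would use the succinctness hypothesis: either $f^{M_i} = g^{M_i}$ is a constant piecewise linear function — in which case $f^{M_i}(y_i)$ is directly a simple piecewise linear transformation of the variable $y_i$ with $O(1)$ pieces, and the constraint is linear in the remaining $y_j$'s — or $f^{M_i}(y_i) = \lceil g^{M_i}(y_i) \rceil$ with $g^{M_i}$ piecewise linear convex. In the latter case, since $y_i$ is integer-valued and $f^{M_i}(y_i) \ge g^{M_i}(y_i) > f^{M_i}(y_i) - 1$, the constraint $f^{M_i}(y_i) + (\text{linear terms}) \le |\beta|$ with $|\beta|$ an integer is equivalent to $g^{M_i}(y_i) + (\text{linear terms}) \le |\beta|$ — here I need to be slightly careful: I would argue that $\lceil g(y_i)\rceil + L \le |\beta|$ (with $L$ an integer combination of the integer variables, hence integer) iff $\lceil g(y_i) + L \rceil \le |\beta|$ iff $g(y_i) + L \le |\beta|$, using that the ceiling of a real plus an integer is the ceiling of the real plus that integer. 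This converts the constraint into one where the left side is a sum of a piecewise linear convex function of $y_i$ and linear terms in the other variables, which is exactly the convex side of a MIP/SPLiT constraint. To handle the ``fully deleted'' disjunction cleanly, I would actually iterate over all $2^k$ subsets $S \subseteq [k]$ of modules declared fully deleted — since $k$ is the parameter this is FPT overhead — fix $y_i = |M_i|$ for $i \in S$, drop the degree constraint for those $i$, and solve the resulting MIP/SPLiT over the variables $\{y_i : i \notin S\}$; alternatively one can avoid the enumeration with indicator-style tricks, but enumeration is simplest and affordable.

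The correctness argument then has two directions: given a solution $X$ to \bdd, setting $y_i = |X \cap M_i|$ satisfies all constraints because $f^{M_i}(y_i)$ is a lower bound on the realized internal maximum degree; conversely, given a feasible $(y_1, \ldots, y_k)$, for each module choose an optimal $y_i$-deletion set realizing internal max degree $f^{M_i}(y_i)$ (computable in polynomial time by the succinctness assumption, since $g^{M_i}$ is constructible in polynomial time and the optimal deletion set for a given size can be read off), take the union $X$, and verify $|X| \le q$ and $\Delta(G - X) \le |\beta|$ using the module-adjacency structure. The running time is $2^k$ times the MIP/SPLiT running time with $n = k$ integer variables, $k = 0$ rational variables, input size polynomial in $|V(G)|$, and $P$ polynomial in $|V(G)|$ (the number of pieces of each $g^{M_i}$), which by Theorem~\ref{IPSPLIT is FPT} is $k^{2.5k + o(k)} \cdot \mathrm{poly}(|V(G)|)$, hence FPT in $k = \gmw(G)$.

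The main obstacle I anticipate is the careful interplay of ceilings and integrality in the type-2 case: one must ensure that replacing $\lceil g^{M_i}(y_i)\rceil$ by $g^{M_i}(y_i)$ in the constraint is genuinely an equivalence over integer assignments (it is, because the additive contribution from other modules is an integer and $|\beta|$ is an integer), and that the resulting function placed on the ``convex side'' of the MIP/SPLiT constraint is indeed convex and piecewise linear with polynomially many pieces — this is exactly what Definition~\ref{succict table of bdd} grants. A secondary subtlety is making sure the degree constraints are stated for the worst-case surviving vertex only: since within a module all surviving vertices see the same external neighborhood (by the module property), it suffices to bound the one of maximum internal surviving degree, which is precisely $f^{M_i}(y_i)$ under an optimal internal deletion. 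Everything else — the budget constraint, the enumeration over fully-deleted module subsets, reading off the actual deletion set from $g^{M_i}$ — is routine.
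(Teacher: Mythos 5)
Your proposal follows essentially the same route as the paper's proof: one integer variable per module, a budget constraint, one degree constraint per module in which the module's degree deletion function appears, the ceiling/integrality argument to convert the type-2 constraints, and a call to MIP/SPLiT with $k$ integer variables. The $2^k$ enumeration over subsets of fully-deleted modules is an extra precaution the paper's own proof of Claim~\ref{ilp with funciton bdd} does not take (the paper imposes constraint (2) even for modules that may end up entirely deleted); that addition is harmless and arguably tightens the argument.

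However, there is a genuine gap in your treatment of the type-1 case. You assert that when $g^{G[M_i]}$ is a constant piecewise linear function, $f^{M_i}(y_i)$ is ``directly a simple piecewise linear transformation'' and the constraint can be handed to MIP/SPLiT. But MIP/SPLiT, as defined before Theorem~\ref{IPSPLIT is FPT}, admits only piecewise linear \emph{convex} transformations on the left side of each inequality and piecewise linear \emph{concave} transformations on the right. To encode the constraint $f^{G[M_i]}(x_i) + \sum_{M_j \in N(M_i)}\left(|M_j| - x_j\right) \le -\beta$, the transformation $f^{G[M_i]}$ would have to be convex, whether placed on the left as a $g_{i,\cdot}$ or negated and placed on the right as an $f_{i,\cdot}$. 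A constant piecewise linear degree deletion table is non-increasing but need not be convex --- a step function whose consecutive slopes read $0, -9, 0, -1, 0$ has constantly many pieces yet is neither convex nor concave --- and Definition~\ref{succict table of bdd} imposes no convexity in the type-1 case (indeed the BDT example produces exactly such stair-step tables). Such a function cannot legally appear in a MIP/SPLiT constraint. The paper closes this gap by an additional enumeration: since each $M_i \in \P_1$ has at most $C$ linear sub-domains, it enumerates all $\prod_{M_i \in \P_1} h_i \le C^k$ restrictions of each type-1 variable $x_i$ to one sub-domain; on a single sub-domain the transformation is affine, hence both convex and concave, so each restricted instance is a legal MIP/SPLiT instance, and the final algorithm answers yes if any of the $\le C^k$ instances is feasible. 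Without a step of this kind your reduction does not land in a problem known to be FPT in the number of variables, so the claimed $k^{2.5k+o(k)}\cdot\mathrm{poly}(|V(G)|)$ bound does not follow.
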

\begin{proof}
Let $(G,q,-\beta)$ be an input of the \bdd~problem, where $G=(V,E)$. Let $\P=\{M_1,\ldots,M_k\}$ be a $\G$-modular partition of $V$, where $k$ equals the $\G$-\gmparam~of $G$. Let $i\in [k]$. For each module $M_i$, suppose  $N(M_i)\subseteq \P$ includes all the modules that are adjacent to $M_i$. Let function $f^{G[M_i]}(x)$ be the maximum degree of $G[M_i]$ after an $x$-deletion in $G[M_i]$, for integer $x\in [0, |M_i|]$.

\begin{claim}\label{ilp with funciton bdd}
$(G,q,-\beta)$ is a yes instance of \bdd~problem if and only if the following instance is yes for \ilpwf.
\begin{align*}
&(1) \: \sum_{1\leq i \leq k} x_{i} \leq q &\\
&(2) \: \sum_{M_j \in N(M_i)} (|M_j|-x_j) +  f^{G[M_i]}(x_i) \leq -\beta & \forall i \in [k]\\
&(3) \: x_i \in \mathbb{N} & \forall i \in [k]\\
&(4) \: 0 \leq x_i \leq |M_i| & \forall i \in [k]
\end{align*}
\end{claim}

\begin{claimproof}
For one direction, suppose that there is $X \subseteq V$ with $|X|\leq q$ such that $G-X$ has maximum degree $-\beta$. Assume $X_i = M_i \cap X$. We claim that $x_i = |X_i|$ for all $i\in [k]$ is a feasible solution for the instance. According to the definition of $|X_i|$, the solutions fulfill constraints of the forms (3) and (4), and constraint (1). Consider each module $M_i$. Suppose $v_i$ is a vertex with the maximum degree $\Delta(G[M_i\setminus X_i])$ in $G[M_i\setminus X_i]$. Then, among all vertices in $M_i\setminus X_i$, $v_i$ has the most adjacent vertices in $G-X$, which is 
\begin{align*}
\sum_{M_j \in N(M_i)} (|M_j|-|X_j|) + \Delta(G[M_i\setminus X_i]) \leq -\beta
\end{align*}
since the degree of any vertex of $G-X$ is at most $-\beta$. Additionally, we have $f^{G[M_i]}(|X_i|) \leq \Delta(G[M_i\setminus X_i])$ according to the definition of $f^{G[M_i]}$. Therefore, the solutions fulfill the forms of constraint (2).

For the other direction, assume that $x_i=x'_i$ for all $i \in [k]$ is a feasible solution for the instance of \ilpwf. According to the forms of constraints (3) and (4), $x'_i$ is a natural number of size at most $|M_i|$. So we may assume that the solution $X$ is a subset of $V$ such that, for each $i$, $|X\cap M_i|=x'_i$ and $X\cap M_i$ is an $x'_i$-deletion set of $G[M_i]$. Based on the definition of $f^{G[M_i]}$, we have $\Delta(G[M_i\setminus X]) = f^{G[M_i]}(x'_i)$ for all $i \in [k]$. This means that, for every $i \in [k]$, any vertex in $M_i\setminus X$ has at most $f^{G[M_i]}(x'_i)$ neighbors in $G[M_i\setminus X]$, and thus has at most
\begin{align*}
\sum_{M_j \in N(M_i)} (|M_j|-x'_j) + f^{G[M_i]}(x'_i)
\end{align*}
neighbors in graph $G-X$. Therefore, we have $\Delta(G-X)\leq -\beta$ based on the forms of constraints (2). In addition, the cardinality of $X$ is at most $q$ according to constraint (1). Hence, $(G,q,-\beta)$ is a yes instance for \bdd~problem.
\end{claimproof}


 Since $\G$ admits a succinct solution table for \bdd,
 for every $i \in [k]$, there exists a function $g^{G[M_i]} : \mathbb{R} \rightarrow \mathbb{R}$ that fulfills one of the following conditions: 

\begin{enumerate}
    \item 
    $g^{G[M_i]}$ is a constant piecewise linear function  such that $f^{G[M_i]}(x) = g^{G[M_i]}(x)$ for every integer $x\in [0, |M_i|]$,
    
    
    \item 
    $g^{G[M_i]}$ is a piecewise convex linear function  such that $f^{G[M_i]}(x) = \left\lceil g^{G[M_i]}(x) \right\rceil$  for every integer $x\in [0, |M_i|]$,
\end{enumerate}
moreover, $g^{G[M_i]}$ can be constructed in polynomial time according to Definition.
Let $t\in \{1,2\}$. Suppose $\P_1, \P_2$ is a partition of $\P$, where $M_i\in \P_t$ has the property that $f^{G[M_i]}$ fulfills the condition $t$, for $t \in \{1,2\}$. 

\begin{claim}\label{ilp with nonconvex to convex}
The instance for \ilpwf~in Claim \ref{ilp with funciton bdd} is a yes instance if and only if the following instance is yes for \ilpwf.
\begin{align*}
&(1) \: \sum_{1\leq i \leq k} x_{i} \leq q &\\
&(2) \: g^{G[M_i]}(x_i) \leq -\beta -\sum_{M_j \in N(M_i)} (|M_j|-x_j) & \forall i \in [k]\\
&(3) \: x_i \in \mathbb{N} & \forall i \in [k]\\
&(4) \: 0 \leq x_i \leq |M_i| & \forall i \in [k]
\end{align*}
\end{claim}

\begin{claimproof}
For one direction, assume that $x_i=x'_i$ for all $i \in [k]$ is a feasible solution for the instance of \ilpwf~in Claim~\ref{ilp with funciton bdd}. Obviously, $x'_1,\ldots,x'_k$ are feasible solutions for the forms of constraints (3) and (4), and constraint (1) in Claim \ref{ilp with nonconvex to convex}. If $M_i\in \P_1$, then $f^{G[M_i]}(x'_i) = g^{G[M_i]}(x'_i)$. Clearly, $x'_1,\ldots,x'_k$ are feasible solutions for the forms of constraints (2) with $M_i \in \P_1$ in Claim~\ref{ilp with nonconvex to convex}. If $M_i\in \P_2$, then $f^{G[M_i]}(x'_i) = \left\lceil g^{G[M_i]}(x'_i)\right\rceil$. By rearranging constraint (2) of Claim~\ref{ilp with funciton bdd}, we have that 
\begin{align*}
\left\lceil  g^{G[M_i]}(x'_i) \right\rceil = f^{G[M_i]}(x'_i) \leq -\beta -\sum_{M_j \in N(M_i)} (|M_j|-x'_j).
\end{align*}
Moreover, for any real number $r$, we have $r\leq \left\lceil r \right\rceil$. Hence, $x'_1,\ldots,x'_k$ are feasible solutions for the forms of constraints (2) with $M_i\in \P_3$ in Claim \ref{ilp with nonconvex to convex}.

For the other direction, assume that $x_i=x'_i$ for all $i \in [k]$ is a feasible solution for the instance of \ilpwf~in Claim \ref{ilp with nonconvex to convex}. Obviously, $x'_1,\ldots,x'_k$ are feasible solutions for the forms of constraints (3) and (4), and constraint (1) in Claim \ref{ilp with funciton bdd}. If $M_i\in \P_1$, then $f^{G[M_i]}(x'_i) = g^{G[M_i]}(x'_i)$. Clearly, $x'_1,\ldots,x'_k$ are feasible solutions for the forms of constraints (2) with $M_i \in \P_1$ in Claim \ref{ilp with funciton bdd}. For each $M_i \in \P_2$, by constraint (2), we get
\begin{align*}
 g^{G[M_i]}(x'_i) \leq -\beta -\sum_{M_j \in N(M_i)} (|M_j|-x'_j)
\end{align*}
for each $i \in [k]$, 
where the left part and the right part of the inequality
are a real number and an integer, respectively. Moreover, for any real number $r$ and any integer $z$, if $r\leq z$ then $\left\lceil r \right\rceil \leq z$. Hence, for every $M_i\in \P_2$, we have
\begin{align*}
f^{G[M_i]}(x'_i) = \left\lceil g^{G[M_i]}(x'_i)\right\rceil  \leq d -\sum_{M_j \in N(M_i)} (|M_j|-x'_j).
\end{align*}
As a result, $x'_1,\ldots,x'_k$ are feasible solutions for the forms of constraints (2) with $M_i\in \P_2$ in Claim \ref{ilp with funciton bdd}.
\end{claimproof}

If a graph class $\G$ admits a succinct solution table for \bdd, it is not hard to verify that the instance for \ilpwf~in Claim \ref{ilp with nonconvex to convex} can be constructed in polynomial time when given an instance $(G,q,-\beta)$.
Now, let us consider the instance of \ilpwf~in Claim \ref{ilp with nonconvex to convex}, where $g^{G[M_i]}$ is either a constant piecewise linear function or a piecewise convex linear function. 

 Consider $g^{G[M_i]}$ in the instance of \ilpwf~in Claim \ref{ilp with nonconvex to convex}, where $M_i\in \P_1$. In this case, $g^{G[M_i]}$ is a constant piecewise linear function. Then there is a constant $h_i$ such that $g^{G[M_i]}$ consists of $h_i$ linear sub-functions in the domain $[0,|M_i|]$.  
 Let us denote $C = \max_{M_i \in \P_1} h_i$.
 Additionally, we call the domain for a linear sub-function of $g^{G[M_i]}$ a linear sub-domain of $g^{G[M_i]}$. More specifically, for each $M_i \in  \P_1$, assume the linear sub-domains of $g^{G[M_i]}$ are intervals $[m_i^0,m_i^1], [m_i^1,m_i^2] \ldots [m_i^{h_i-1},m_i^{h_i}]$, where $m_i^0 = 0$ and $m_i^{h_i} = |M_i|$.

 For the instance of \ilpwf~in Claim \ref{ilp with nonconvex to convex}, if we restrict each integer variable $x_i$, where $M_i\in \P_1$, to a linear sub-domain of $g^{G[M_i]}$, then the instance is divided into $\Pi_{M_i\in \P_1} h_i \leq C^k$ new instances for MIP/SPLiT, moreover, the original instance is feasible if and only if there exists one of the new instances, as follows, is feasible.
\begin{align*}
&(1) \: \sum_{1\leq i \leq k} x_{i} \leq q &\\
&(2) \: g^{G[M_i]}(x_i) \leq -\beta -\sum_{M_j \in N(M_i)} (|M_j|-x_j) & \forall i \in [k]\\
&(3) \: x_i \in \mathbb{N} & \forall i \in [k]\\
&(4) \: 0 \leq x_i \leq |M_i| & \forall  M_i \in \P_2\\
&(5) \: m_i^{s-1} \leq x_i \leq m_i^{s} & \forall M_i \in \P_1, \:  \textrm{and some} \: s \in [h_i]
\end{align*}
 
 In addition, there is an $O^*(k^{2.5k+o(k)})$ time algorithm that decides the instance of MIP/SPLiT based on Theorem \ref{IPSPLIT is FPT}. Overall, \ilpwf~in Claim \ref{ilp with nonconvex to convex} can be solved in $O^*(k^{2.5k+o(k)}\cdot C^k)$ time, and thus \bdd~is FPT parameterized by $\G$-\gmparam.
\end{proof}

\noindent
\textbf{Application:} For each $t \in \{1,2\}$, we provide a graph class that admits a succinct solution table of type $t$ for \bdd.  Let graph class BDT include all graphs with bounded degree and treewidth. We first demonstrate that 
 BDT admits a succinct solution table of type 1, thus \bdd~problem is FPT parameterized by $BDT$-\gmparam.  Then, we prove that the \clusters~admits a succinct solution table of type 2, hence \bdd~problem is FPT parameterized by \clusters-\gmparam. 

 Clearly, \textit{linear forest}, formed from the disjoint union of path graphs, and \textit{binary forest}, formed from the disjoint union of binary trees, have unbounded modular-width, but have bounded treewidth and bounded degree. Thus, linear forest and binary forest are subsets of BDT, moreover, $BDT$-\gmparam~is incomparable with modular-width. In addition, it is not hard to verify that  $BDT$-\gmparam~is bounded by a function of vertex cover number since there are no twin vertices in the quotient graph. 
Furthermore, \clusters-\gmparam~is a parameter with size at most neighborhood diversity and at least modular-width.


\begin{theorem}
\label{grid has a succinct table}
BDT admits a succinct solution table for \bdd.
Therefore, the \bdd~problem is FPT parameterized by $BDT$-\gmparam.
\end{theorem}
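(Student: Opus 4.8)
The plan is to verify, for $\G = BDT$ with its fixed degree bound $d_0$ and treewidth bound $w_0$, the two requirements of Definition~\ref{succict table of bdd}: that the degree deletion function of every $G \in BDT$ coincides on the integers with a continuous piecewise linear function using a number of pieces bounded by a constant depending only on $d_0$, and that this function is polynomial-time constructible. The FPT statement is then immediate from Theorem~\ref{succint table theorem for bdd}.

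The structural observation is that bounded degree forces $f^G$ to be an essentially constant-size staircase. Let $G \in BDT$ and $n = |V(G)|$. The function $f^G(x) = \min\{\Delta(G-X) : |X| = x\}$ is integer-valued and non-negative, $f^G(0) = \Delta(G) \le d_0$, and $f^G$ is non-increasing (extending any optimal deletion set of size $x$ to size $x+1$ only removes edges, so $f^G(x+1) \le f^G(x)$). Hence $f^G$ attains at most $d_0+1$ distinct values; write them in order of appearance as $\Delta(G) = b_0 > b_1 > \dots > b_t \ge 0$ with $t \le d_0$, and put $\phi(b_j) := \min\{x : f^G(x) = b_j\} = \min\{|X| : \Delta(G-X)\le b_j\}$, so that $0 = \phi(b_0) < \phi(b_1) < \dots < \phi(b_t) \le n$ are integers whose consecutive terms differ by at least one. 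Define $g^G : \mathbb{R}\to\mathbb{R}$ to equal $b_0$ on $(-\infty,\phi(b_1)-1]$, to interpolate linearly from $b_{j-1}$ to $b_j$ on $[\phi(b_j)-1,\phi(b_j)]$ and to equal $b_j$ on $[\phi(b_j),\phi(b_{j+1})-1]$ for $1 \le j \le t$ (the last constant piece $[\phi(b_t),\infty)$ running to infinity). Since the $\phi(b_j)$ are distinct integers these intervals tile $\mathbb{R}$ without overlap, so $g^G$ is a genuine continuous piecewise linear function with at most $2t+1 \le 2d_0+1$ linear pieces, a bound uniform over the class, and by construction $g^G(x) = f^G(x)$ for every integer $x\in[0,n]$. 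This is condition~1 of Definition~\ref{succict table of bdd}, modulo computability.

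For computability it suffices to produce the at most $d_0+1$ integers $\phi(0),\dots,\phi(\Delta(G))$ in time polynomial in $n$, and this is where bounded treewidth enters. For a fixed target degree $b$, the value $\phi(b)$ is computed by a standard dynamic program over a width-$w_0$ tree decomposition of $G$: at each bag one records, for each vertex, whether it is deleted and, if kept, how many of its already-introduced neighbours are kept (truncated at $b+1$), minimizing the total number of deletions; this runs in $(b+2)^{O(w_0)}\cdot n^{O(1)} = n^{O(1)}$ time because $b \le d_0$ and $w_0$ are constants. (Equivalently, for each fixed $b$ one can express ``$G-X$ has maximum degree $\le b$'' in MSO$_1$ and invoke Courcelle's theorem~\cite{DBLP:journals/mst/CourcelleMR00}; a tree decomposition of width $w_0$, and membership of $G$ in BDT, are obtained in linear time via~\cite{DBLP:conf/sirocco/Bodlaender07}.) Assembling $g^G$ from these values takes $O(d_0)$ further steps, so BDT admits a succinct solution table of type~$1$ for \bdd. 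A minimum $BDT$-modular partition can then be computed via Algorithm~\ref{alg:gmw}: BDT is hereditary and polynomial-time recognizable, and it is easily mergeable since in a $BDT$-union or $BDT$-join any module of a $\G$-merge spanning at least two connected components of $G$ (resp.\ of $\overline{G}$) is forced to have at most $2d_0$ vertices, so merging reduces to a bounded-capacity packing problem solvable in polynomial time. Theorem~\ref{succint table theorem for bdd} then gives that \bdd~is FPT parameterized by $BDT$-\gmparam.

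I expect the main obstacle to be conceptual rather than technical: one must recognize that the deletion table of a bounded-degree graph is necessarily a tiny staircase, after which the only delicate point is converting the discontinuous staircase $f^G$ into an honest continuous piecewise linear $g^G$ with a class-uniform number of pieces — this works precisely because the steps of $f^G$ occur at integer points that are at least one apart, leaving a unit interval in which to interpolate. The tree-decomposition dynamic program for locating the steps is routine but is the most computation-heavy ingredient.
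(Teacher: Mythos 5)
Your proposal is correct and follows essentially the same approach as the paper: you make the same key observation that bounded degree forces $f^G$ to be a non-increasing staircase with at most $d_0+1$ distinct values, build the same continuous piecewise linear $g^G$ by joining the horizontal steps with unit-width interpolating segments (the paper counts $2t-1$ pieces on $[0,|V|]$, you count $2t+1$ including the two infinite tails), and compute each step threshold via a bounded-treewidth dynamic program or, equivalently, Courcelle's theorem. You additionally argue that a $BDT$-modular partition can be obtained by showing $BDT$ is easily mergeable; the paper's proof omits this and relies on Theorem~\ref{succint table theorem for bdd}'s hypothesis that the partition is given, so your extra step is a reasonable (if tersely justified) supplement rather than a deviation.
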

\begin{proof}
First, there is a linear time algorithm to decide if a graph in BDT \cite{DBLP:journals/siamcomp/Bodlaender96}. In addition, we have demonstrated that $(0,\beta)$-\textsc{Linear Degree Domination} with constant $\beta$ (or \bdd) admits a constant-length MSO$_1$ formula, thus admits a constant-length MSO$_2$ formula. Therefore, \bdd~on BDT graph can be solved in polynomial time for any $\beta$ according to the Courcelle's theorem \cite{DBLP:journals/algorithmica/BoriePT92, DBLP:journals/iandc/Courcelle90}. 

Let $G=(V,E)$ be a BDT graph with maximum degree $d$. Assume the maximum degree of $G$ after an $x$-deletion is $f^G(x)$, which is clearly a non-increasing function. 
Assume $S= \{ (x,f^G(x)): x\in [0,|V|]\}$. Suppose $Y = \{ y : (x,y) \in S\} = \{d_1,\ldots,d_t\}$, where $d_1 > d_2 \ldots >d_t$. Then, we have $Y \subseteq \{0,1,\ldots,d\}$ and $t \leq d+1$.  Clearly, we can use one horizontal segment (point) to cover all points of $(x,y) \in S$ with $y = d_i$ for each $i\in [t]$, where the endpoints of the horizontal segment (point) are in $S$. Then, connect all horizontal segments (points) from left to right using another $t-1$ segments, all of which make up the piecewise linear function $g^G: \mathbb{R} \rightarrow \mathbb{R}$ in the domain $[0,|V|]$ that contains at most $2t-1$ segments. To make $g^G$ well-defined, assume $g^G(x) = d_1$ for $x\leq 0$, and $g^G(x) = d_t$ for $x\geq |V|$.  
\end{proof}

We now move to an example of a type 2 succinct solution table on cluster graphs.
Let $H \in \clusters$. For any complete graph $K$ in $H$ and any $X \subseteq V(H)$, suppose $X'$ is obtained from $X$ by replacing $K\cap X$ with any $|K\cap X|$ vertices of $K$. Clearly, graphs $H-X$ and $H - X'$ are isomorphic, and $\Delta(H-X)$ equals $\Delta(H - X')$. Thus, when we consider the \bdd~problem on any complete graph of a cluster graph, we only need to consider the deleted vertex number of it and do not need to distinguish the difference between any two vertices of it.

\begin{lemma}\label{reduce b in cluster}
Assume cluster graph $H$ contains $b$ complete graphs $K_a$, where $K_a$ is the maximum size complete graph in $H$. Let $q\in [b,|V(H)|]$. Suppose $R$ is obtained from $H$ by deleting exactly one vertex from every $K_a$ of $H$. Then, $H$ has a $q$-deletion such that the maximum degree of the remaining graph is $h$ if and only if $R$ has a $(q- b)$-deletion such that the maximum degree of the remaining graph is $h$. 
\end{lemma}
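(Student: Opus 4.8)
The plan is to set up an explicit correspondence between $q$-deletion sets of $H$ and $(q-b)$-deletion sets of $R$, exploiting the observation (stated just before the lemma) that within a single complete graph only the \emph{number} of deleted vertices matters. First I would fix notation: let $K_a^1,\ldots,K_a^b$ be the $b$ maximum complete graphs of $H$, each of size $a$, and let $R$ be obtained by deleting one vertex $w_i$ from each $K_a^i$, so $R$ contains $b$ copies of $K_{a-1}$ plus all the other (smaller) complete graphs of $H$ unchanged. Since $q \ge b$, in $H$ we may always assume an optimal $q$-deletion set $X$ deletes at least one vertex from each $K_a^i$: if it deletes none from some $K_a^i$, then $K_a^i$ survives intact and contributes a vertex of degree $a-1$ to $H-X$, which is the largest possible remaining degree up to that point, so we could reroute one deletion into $K_a^i$ without increasing $\Delta(H-X)$ (a standard exchange argument; here I would spell out that any "wasted" deletion outside $K_a^i$ can be moved in, using that $K_a$ is the maximum complete graph so nothing else forces a higher degree).

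Given such an $X$ with $|X \cap V(K_a^i)| \ge 1$ for all $i$, I would produce from it a deletion set $X'$ of $R$ as follows: by the pre-lemma observation, reshuffle $X$ inside each $K_a^i$ so that the removed vertex $w_i$ of $R$ lies in $X$; then set $X' = X \setminus \{w_1,\ldots,w_b\}$, which has size $q-b$ and is a subset of $V(R)$. Since $H - X = R - X'$ as graphs (both are "$H$ with the vertices of $X$ removed", and $X'$ together with the $w_i$'s is exactly $X$), we get $\Delta(R - X') = \Delta(H - X)$. Conversely, given a $(q-b)$-deletion set $Y$ of $R$ with $\Delta(R - Y) = h$, I would take $Y' = Y \cup \{w_1,\ldots,w_b\}$, a subset of $V(H)$ of size $q$, and again $H - Y' = R - Y$, so $\Delta(H - Y') = h$. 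Combining the two directions with the optimality of $q$-deletions (i.e. $f^H(q) = \Delta(H-X)$ for the optimal $X$, and likewise for $R$) gives $f^H(q) = f^R(q-b)$, which is exactly the biconditional claimed.

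The one genuinely non-trivial point — and the step I'd treat most carefully — is the reduction to the case where the optimal $X$ touches every $K_a^i$. I need to argue that deleting fewer than one vertex from some $K_a^i$ is never strictly better. The cleanest way: suppose $X$ is an optimal $q$-deletion set with $X \cap V(K_a^j) = \emptyset$ for some $j$; then $K_a^j$ contributes vertices of degree $a-1$ to $H-X$, so $\Delta(H-X) \ge a-1$. But since $q \ge b$ and each of the other $b-1$ maximum cliques can absorb at most... — more simply, because $|X| = q \ge b > b-1$, by pigeonhole $X$ deletes at least two vertices from some $K_a^i$ \emph{or} deletes a vertex lying outside all the $K_a$'s; in either case there is a deletion we can move into $K_a^j$, and moving it can only decrease (or keep equal) the degree of surviving vertices, since the maximum over $H-X$ is already $\ge a-1 \ge$ every degree in any non-maximum clique and $\ge$ any degree attainable in $K_a^j$ after one deletion. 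Iterating this, we reach an optimal $X$ hitting every $K_a^i$, completing the argument. I would present this exchange step as a short \textbf{Claim} inside the proof and then assemble the two directions as above.
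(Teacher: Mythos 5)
Your proposal is correct and follows essentially the same route as the paper: both rest on the key claim that any optimal $q$-deletion set must hit every maximum clique $K_a$, and both then use the pre-lemma observation to align deletions with the removed vertices $w_1,\ldots,w_b$ so that $H-X = R - X'$ and $H-Y' = R-Y$, from which $f^H(q)=f^R(q-b)$ follows. The only difference is cosmetic: the paper establishes the key claim in one shot (if some $K_a$ is missed then $\Delta(H-X)=a-1$, yet deleting one vertex from each $K_a$ plus any $q-b$ others gives $\Delta \le a-2$, contradicting optimality), whereas you use a pigeonhole-plus-exchange argument; your phrasing ``iterating this, we reach an optimal $X$ hitting every $K_a^i$'' would be cleaner stated as a direct contradiction, since iterating strictly improves the max degree once all $K_a$'s are hit, showing the original $X$ was never optimal.
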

\begin{proof}
The correctness of the lemma is evident if $a=1$. Next, assume $a\geq 2$. We claim that any $q$-deletion set of $H$ contains at least one vertex from every $K_a$. Otherwise, the maximum degree of $H$ after the $q$-deletion is $a-1$, however, we can delete one vertex from every $K_a$ and any other $q-b$ vertices from $H$ so that the maximum degree of the remaining graph is at most $a-2$, a contradiction. Next, we prove the lemma for the case that $a\geq 2$.

For one direction, assume $H$ has a $q$-deletion and the maximum degree of the remaining graph of $H$ is $h$. Then, the deleted vertices from $H$ contain at least one vertex from every $K_a$. Without loss of generality, we may assume the vertices of $V(H)-V(R)$ are included in the $q$-deletion of $H$. We have that $R$ has a $(q- b)$-deletion such that the maximum degree of the remaining graph is at most $h$. If $R$ has a $(q- b)$-deletion set $D_R$ such that the maximum degree of the remaining graph of $R$ is at most $h-1$, then $V(H)-V(R)$ together with $D_R$ is a $q$-deletion for $H$ such that the maximum degree of the remaining graph of $H$ is at most $h-1$, a contradiction.

For the other direction, assume $R$ has a $(q-b)$-deletion and the maximum degree of the remaining graph of $R$ is $h$. Clearly, $H$ has a $q$-deletion such that the maximum degree of the remaining graph is at most $h$. Moreover, if $H$ has a $q$-deletion such that the maximum degree of the remaining graph of $H$ is at most $h-1$, then, according to the result proved in the last paragraph, $R$ has a $(q-b)$-deletion and the maximum degree of the remaining graph of $R$ is at most $h-1$, a contradiction.
\end{proof}

\begin{theorem}
\label{k12-free has a succinct table}
The graph class \clusters~admits a succinct solution table for \bdd. Hence, the \bdd~problem is FPT parameterized by $cluster$-\gmparam.
\end{theorem}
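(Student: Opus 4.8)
I need to show that the class $\clusters$ admits a succinct solution table for $\bdd$ — specifically, a type-2 table, i.e. a piecewise convex linear function $g^H$ with $f^H(x) = \lceil g^H(x)\rceil$. The idea is to understand exactly what the degree deletion function $f^H$ looks like for a cluster graph $H$, then show it is the ceiling of a convex piecewise linear function that can be constructed in polynomial time.

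First I would describe the structure of an optimal $x$-deletion in a cluster graph. Say $H$ consists of complete graphs (components) $K^{(1)}, \ldots, K^{(m)}$ with sizes $n_1 \geq n_2 \geq \cdots \geq n_m$. The maximum degree after deleting $x$ vertices is $\max_j (|K^{(j)}| - |K^{(j)} \cap X| - 1)$ over surviving components. The greedy/exchange argument: to minimize the maximum degree, we should always delete from a currently-largest component. Equivalently, after $x$ deletions, the optimal residual maximum degree is the smallest $h$ such that $\sum_j \max(n_j - (h+1), 0) \le x$ — that is, we can shave every component down to size at most $h+1$ using at most $x$ deletions. So $f^H(x) = \min\{ h \ge 0 : \sum_{j} \max(n_j - h - 1, 0) \le x\}$. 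Define the "inverse" function $\phi(h) := \sum_j \max(n_j - h - 1, 0)$, the minimum number of deletions needed to reach maximum degree $\le h$; this is a non-increasing, convex piecewise linear function of $h$ (a sum of convex hinge functions), with integer breakpoints. Then $f^H$ is essentially the lower inverse of $\phi$.

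The subtlety — and I expect this to be the main obstacle — is that inverting a convex staircase-like function does not directly give a convex function; I need to exhibit a genuine convex piecewise linear $g^H$ with $f^H(x) = \lceil g^H(x)\rceil$ for integer $x$. Here is where Lemma~\ref{reduce b in cluster} is the key tool: it lets me "peel off" one vertex from each largest component at a flat cost, reducing to a cluster graph with a strictly smaller maximum complete graph; iterating this decomposes the deletion process into phases. In phase $t$ (while the maximum degree is being driven from $d_t$ down to $d_{t+1}$, where $d_1 > d_2 > \cdots$ are the distinct attainable residual degrees) the number of components of each relevant size is fixed, so $f^H$ restricted to that phase's $x$-range is exactly the integer points lying on (the ceiling of) a single affine function of $x$ with slope $-1/(\text{number of large components in that phase})$; crucially, as $x$ increases and we move to later phases, fewer components remain "large", so the slope magnitude $1/(\#\text{components})$ increases, i.e. the affine pieces have increasing slope — this is exactly convexity. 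I would then define $g^H$ as the continuous piecewise linear function interpolating these affine pieces across phases, check continuity at the phase boundaries (the breakpoints), verify $f^H(x) = \lceil g^H(x)\rceil$ at every integer $x \in [0,|V(H)|]$, and confirm that the breakpoints and slopes are rational and that the whole description (at most $O(d)$ pieces, where $d = \Delta(H)$) is computable in polynomial time by sorting component sizes. Finally, I'd invoke Theorem~\ref{succint table theorem for bdd} together with the fact (Corollary following Algorithm~\ref{alg:gmw}) that a minimum $\clusters$-modular partition is polynomial-time computable, to conclude that $\bdd$ is FPT parameterized by $\clusters$-$mc$.

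One technical point to handle carefully: the values $d_t$ need not be consecutive integers, so a single affine piece of $g^H$ may need to "jump" by more than one in $f^H$-value between consecutive integer arguments near a breakpoint; this is precisely why the ceiling is needed and why the function is "type 2, non-convex $f^H$ but convex $g^H$" rather than "type 1". I'd make sure the interpolation picks the affine piece whose graph passes through the relevant extreme integer points $(x, d_t)$ and $(x', d_{t+1})$ of consecutive phases, and double-check with a small example (a few cliques of distinct sizes) that $\lceil g^H \rceil$ reproduces $f^H$ exactly, including at the first phase where components smaller than the maximum contribute nothing until the max degree drops to their level.
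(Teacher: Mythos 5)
Your proposal matches the paper's proof essentially step for step: identifying the phase structure via the peeling argument of Lemma~\ref{reduce b in cluster}, observing that within each phase the slope is $-1/(\text{number of currently-largest components})$ and that this quantity strictly increases as phases progress (giving convexity), defining $g^H$ by interpolation through the breakpoints $(x_i, a_i)$, and verifying $f^H(x) = \lceil g^H(x)\rceil$ at every integer $x$. The one minor inaccuracy is your remark that $f^H$ might ``jump by more than one'' between consecutive integers near a breakpoint --- this never happens, since deleting a single vertex decreases the maximum degree by at most $1$, and in any case each affine piece has slope magnitude $1/(b_1+\cdots+b_i) \le 1$; the ceiling is needed not to cover big jumps but to turn the continuous interpolation into the exact integer staircase of $f^H$, which can stay flat for several consecutive $x$-values within a phase.
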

\begin{proof}
Clearly, it is polynomial-time decidable whether a given graph is a cluster graph. Let $G=(V,E)$ be a cluster graph. Assume the maximum degree of $G$ after an $x$-deletion is $f^G(x)$. Suppose $G$ contains $r$ different types of complete graphs $K_{a_1+1},\ldots, K_{a_r+1}$ with vertex number at least two, where $a_1 > \ldots > a_r > 0$. Define $a_{r+1}=0$.  For each $i\in [r+1]$, assume $G$ contains $b_i$ complete graphs with exactly ${a_i+1}$ vertices. Clearly, we have $|V|= \sum_{1\leq i \leq r +1}(a_i+1)b_i$.

Let $g^G: \mathbb{R} \rightarrow \mathbb{R}$ be a piecewise linear function. For each $i\in [r]$, two endpoints of the $i$-th segment of function $g^G$ are $(x_i,a_i)$ and $(x_{i+1},a_{i+1})$, where $x_i= \sum_{1 \leq j\leq i} (a_{j}-a_{i})b_j$ and $x_{r+1}= \sum_{1 \leq j\leq r+1} a_jb_j$. It is not hard to calculate the formulas of the function for all $r$ segments, which is 
\begin{align*}
g^G(x) &= -\frac{1}{\sum_{1 \leq j\leq i} b_j}(x-x_i) + a_i & \textrm{for} \: x \in [x_i,x_{i+1}).
\end{align*}

To make $g^G$ well-defined, assume that $g^G(x)= -|V|\cdot x+a_1$ is the 0-th segment, where $x\in (-\infty, x_1)$; and $g^G(x)= 0$ is the $(r+1)$-th segment, where $x\in [x_{r+1}, +\infty)$. Consider the slope for each segment of the $g^G(x)$. For each $i\in [r-1]$, the slope of $i$-th segment $-(b_1+\ldots+b_i)^{-1}$ is less than the slope of $(i+1)$-th segment $-(b_1+\ldots+b_i+b_{i+1})^{-1}$. In addition, the slope of $0$-th segment $-|V|$ is less than the slope of $1$-th segment $-b_1^{-1}$; and the slope of $r$-th segment $-(b_1+\ldots+b_r)^{-1}$ is less than the slope of $(r+1)$-th segment 0. Consequently, $g^G$ is a convex function since the slope for the $i$-th segment of $g^G$ increases as $i$ increases for all $0\leq i\leq r+1$. In addition, it is clear that $g^G$ can be constructed in polynomial time when giving $G$ as an input.

Now, we prove that $f^G(x)=\left\lceil g^G(x) \right\rceil$ for every integer $x\in [0,|V|]$ to demonstrate that cluster graph admits a succinct solution table of type 2 for \bdd. 
Let vertex set $S$ be a subset of $V$ such that every complete graph of $G$ has exactly one vertex that is not included in $S$. Clearly, $\Delta(G - S)=0$ and $S$ contains exactly $x_{r+1} = \sum_{1 \leq j\leq r+1} a_jb_j$ vertices. Consider every integer $x\in [x_{r+1}, |V|]$. $S$ together with any $x-x_{r+1}$ vertices of $V\setminus S$ are an $x$-deletion set of $G$ and the maximum degree of the remaining graph is zero because $\Delta(G - S)$ is already zero and the maximum degree of $G$ after deleting any $x$ vertices is at least zero. Therefore, $f^G(x)=0$ for every integer $x\in [x_{r+1}, |V|]$.
In addition, based on the definition of $g^G$, $g^G(x)=0$ for all integer $x\geq x_{r+1}$. Thus, we have $f^G(x)=\left\lceil g^G(x) \right\rceil$  for every integer $x \in [x_{r+1}, |V|]$. 

Next, let us consider $x\in [0,x_{r+1})$. Since $x_1 = 0$, we only need to consider the 1st to $r$-th segments of function $g^G$. Consider $x\in [x_i,x_{i+1})$ for each $i\in [r]$, which is the domain of $i$-th segment of $g^G$. Based on Lemma \ref{reduce b in cluster}, an $x$-deletion set of $G$, denoted by $X$, can be obtained by carrying out the following two processes.

\begin{enumerate}
    \item 
    If $x$ is not smaller than the number of maximum complete graphs of $G$, we repeatedly do the process: select exactly one vertex from every maximum complete graph of $G$, add them to $X$, delete them from $G$, and $x = x - x'$, where $x'$ is the number of the deleted vertices in this process.
    
    \item 
    If $x$ is smaller than  the number of maximum complete graphs of $G$, then select arbitrary $x$ vertices from $G$, add them to $X$, delete them from $G$, and $x=0$.
\end{enumerate}

Since $x_{i+1}- x_i = (a_i - a_{i+1})(b_1+ \ldots +b_i)$, we may assume $x=x_i+ts+y$, for $s=b_1+ \ldots +b_i$, integer $t \in [0, a_i - a_{i+1})$, and integer $y \in [0, b_1+ \ldots +b_i)$. Since $x\geq x_i= \sum_{1 \leq j\leq i} (a_{j}-a_{i})b_j$, the first $x_i$ vertices added to $X$ in process (1) come from the complete graphs with vertex numbers larger than $a_i+1$, and the vertex numbers of all these complete graphs decrease to $a_i+1$ after the deletion of $x_i$ vertices  in process (1). 
After that, $G$ contains exactly $s$ complete graphs with $a_i+1$ vertices, and there are $ts+y$ vertices to be deleted. Then, process (1) will be carried out $t$ more times and the maximum degree of $G$ decreases to $a_i-t > a_{i+1}$. 
After that, there are $y$ vertices to be deleted. Since $y< b_1+ \ldots +b_i$ and the number of maximum complete graphs with $a_i-t+1$ vertices in $G$ is $s > y$, the $y$ arbitrary vertex deletion in process (2) does not change the maximum degree of $G$. As a result, the maximum degree of $G$ is $a_i-t$ after an $x$-deletion of $G$, this means that $f^G(x)=a_i-t$. On the other hand, we know 
\begin{align*}
\left\lceil g^G(x) \right\rceil &= \left\lceil -\frac{1}{\sum_{1 \leq j\leq i} b_j}(x-x_i) + a_i \right\rceil & \textrm{for} \: x \in [x_i,x_{i+1}).
\end{align*}

Replace $x$ with $x_i+st+y$,  we have
\begin{align*}
\left\lceil g^G(x) \right\rceil &= \left\lceil-\frac{1}{\sum_{1 \leq j\leq i} b_j}(x_i+st+y -x_i) + a_i \right\rceil\\
 &= \left\lceil-\frac{1}{s}(st+y) + a_i \right\rceil\\ 
 &= \left\lceil a_i -t -\frac{y}{s}  \right\rceil\\
 &= a_i -t.  
\end{align*}
Hence, $f^G(x) = \left\lceil g^G(x) \right\rceil$ for every integer $x\in [0,x_{r+1})$.
\end{proof}



In fact, for a modular partition $\M=\{M : M\subseteq V\}$ of $V$, if the factors $G[M]$ belong to different graph classes that admit succinct solution tables for \bdd, i.e. some factors are BDT and some factors are clusters, then the problem is still FPT parameterized by $|\M|$ according to Theorem~\ref{succint table theorem for bdd}. Therefore, based on Lemma \ref{grid has a succinct table},  Lemma \ref{k12-free has a succinct table}, and Theorem~\ref{succint table theorem for bdd}, we have the following corollary.

\begin{corollary}
The \bdd~problem is FPT parameterized by ($BDT  \cup \clusters)$-\gmparam.
\end{corollary}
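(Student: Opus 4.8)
The plan is to derive the corollary essentially for free from Theorem~\ref{succint table theorem for bdd} together with the two succinct-table constructions, namely Theorem~\ref{grid has a succinct table} for $BDT$ (which yields a type-1 table) and Theorem~\ref{k12-free has a succinct table} for \clusters~(which yields a type-2 table). The only thing I would verify carefully is that nothing in the proof of Theorem~\ref{succint table theorem for bdd} relies on all factors belonging to a single graph class. That proof already partitions the module set $\P$ into $\P_1$ (modules whose degree deletion function equals a constant piecewise linear $g^{G[M_i]}$) and $\P_2$ (modules whose degree deletion function equals $\lceil g^{G[M_i]}\rceil$ for a piecewise convex linear $g^{G[M_i]}$), and every subsequent step --- Claims~\ref{ilp with funciton bdd} and~\ref{ilp with nonconvex to convex}, the branching over the linear sub-domains of the type-1 functions, and the final call to the MIP/SPLiT solver of Theorem~\ref{IPSPLIT is FPT} --- treats $\P_1$ and $\P_2$ independently. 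Hence a modular partition in which each factor is individually in $BDT$ (contributing a type-1 table) or in \clusters~(contributing a type-2 table) is handled verbatim, with no modification to the argument.

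Concretely I would: (i) fix a minimum $(BDT\cup\clusters)$-modular partition $\M=\{M_1,\dots,M_k\}$, so that $k$ is exactly the $(BDT\cup\clusters)$-\gmparam~of $G$; (ii) for each $i$, decide in polynomial time whether $G[M_i]\in BDT$ or $G[M_i]\in\clusters$ (both classes are polynomial-time recognizable) and construct $g^{G[M_i]}$ accordingly --- a constant piecewise linear function with at most $2d+1$ pieces if $G[M_i]\in BDT$, where $d$ bounds the degree of graphs in $BDT$ (Theorem~\ref{grid has a succinct table}), or a piecewise convex linear function with $f^{G[M_i]}=\lceil g^{G[M_i]}\rceil$ if $G[M_i]\in\clusters$ (Theorem~\ref{k12-free has a succinct table}), placing $M_i$ in $\P_1$ or $\P_2$ respectively (an arbitrary choice if a factor happens to lie in both classes); (iii) run the algorithm from the proof of Theorem~\ref{succint table theorem for bdd}: form the \ilpwf~instance of Claim~\ref{ilp with nonconvex to convex}, branch over the at most $C^k$ choices of a single linear sub-domain for each $x_i$ with $M_i\in\P_1$, where $C\le 2d+1$ is the constant of that proof, and solve each resulting MIP/SPLiT instance on $k$ integer variables in $O^*(k^{2.5k+o(k)})$ time via Theorem~\ref{IPSPLIT is FPT}. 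This runs in $O^*(k^{2.5k+o(k)}\cdot C^k)$ time, hence is FPT in $k$; since $\M$ is a minimum $(BDT\cup\clusters)$-modular partition, $k$ equals the parameter.

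I do not expect a genuine obstacle: the heavy lifting is already done in Theorems~\ref{succint table theorem for bdd}, \ref{grid has a succinct table}, and~\ref{k12-free has a succinct table}, and the only new point is the (easy) observation that the $\P_1/\P_2$ split in Theorem~\ref{succint table theorem for bdd} is agnostic to which class each factor comes from. The one bookkeeping issue is obtaining the modular partition itself: as in the hypothesis of Theorem~\ref{succint table theorem for bdd}, I would assume a minimum $(BDT\cup\clusters)$-modular partition is given. Computing one through Algorithm~\ref{alg:gmw} would require showing that $BDT\cup\clusters$ is easily mergeable, which is delicate --- for instance the disjoint union of a long path and a large clique lies in neither $BDT$ nor \clusters --- and is not needed for the statement as phrased.
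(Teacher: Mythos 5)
Your proposal is correct and follows essentially the same route as the paper: the paper justifies the corollary with exactly the observation you make, namely that the proof of Theorem~\ref{succint table theorem for bdd} only cares about which of the two succinct-table types each factor supplies (the $\P_1/\P_2$ split) and not which graph class it comes from, so mixing BDT and \clusters~factors in one modular partition is handled verbatim by Theorems~\ref{grid has a succinct table}, \ref{k12-free has a succinct table}, and \ref{succint table theorem for bdd}. Your closing caveat about computing the $(BDT\cup\clusters)$-modular partition is a fair and careful remark that the paper's two-sentence justification glosses over, but since Theorem~\ref{succint table theorem for bdd} already assumes the partition is given, it does not affect the correctness of the argument.
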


Clearly, the parameter ($BDT \cup \clusters)$-\gmparam~is not greater than neighborhood diversity but is incomparable with modular-width.

\medskip

\section{Conclusions}

We conclude with some interesting problems.
\begin{itemize}
\item 
Can we characterize graph classes are easily mergeable?   
For instance, is the class of $H$-free graphs easily mergeable, for any fixed graph $H$?

\item 
 Is \bdd~fixed-parameter tractable in parameter ($K_{1, t}$-free)-\gmparam, where $t\geq 3$ is either fixed or a parameter?

\item 
Is \kd~FPT in parameter \clusters-\gmparam, where $\beta$ is related to the input size?
\item 
Is \ad~FPT in parameter \clusters-\gmparam?

\item 
The \textsc{Red-Blue Capacitated Dominating Set}~problem was shown to be W[1]-hard in $cw$ in~\cite{DBLP:journals/siamcomp/FominGLS14}.  It is not hard to prove it to be FPT in $mw$ using succinct solution tables.  Does the same hold for the \textsc{Red-Blue Exact Saturated Dominating Set}?

\item 
Are \textsc{Edge Dominating Set}, \textsc{Max-cut}, and \textsc{Partition into Triangles} FPT in parameter \cographs-\gmparam?
\end{itemize}

\section*{Acknowledgement}

We thank the anonymous reviewers for their valuable comments.

\newpage

\bibliographystyle{abbrv}
\bibliography{main}

\newpage

\section*{Appendix: proof for Theorem \ref{ld-main-theorem} }

For convenience, we recall the problem definition and some of the definitions that appear in the main text.  

\medskip 

\noindent 
The~\ld~problem\\
\textbf{Input:} a graph $G = (V, E)$ and a non-negative integer $q$;\\
\textbf{Question:} does there exist a subset $X\subseteq V$ of size at most $q$ such that for every $v \in V \setminus X$,  $|N(v)\cap X| \geq \alpha |N(v)| + \beta$?

\medskip

If $X$ is a $(\alpha, \beta)$-linear degree dominating set of $G = (V,E)$, then we say $V\setminus X$ and every vertex of $V\setminus X$ satisfies the linear inequality of the problem. In addition, for convenience, we also call $X$ the deletion part of $G$.

For a vertex $v\in V \setminus X$ and $N(v) \neq \emptyset$ with respect to graph $G$, we define that the dominating coefficient of $v$ with respect to $(G,X)$ is $\lambda(v,G,X) = \frac{|N(v)\cap X| - \beta}{|N(v)|}$.
Moreover, if $V\setminus X$ is not empty, then, for a vertex set $W \subseteq V\setminus X$, we define that the dominating coefficient of $W$ with respect to $(G,X)$ is $\Lambda(W, G ,X) = \min\{\lambda(v,G,X) : v\in W \}$, otherwise, we define that the dominating coefficient of $V\setminus X$ with respect to $(G,X)$  is one. Clearly, $X$ is a $(\alpha,\beta)$-linear dominating set of $G$ if and only if the dominating coefficient of $V\setminus X$ with respect to $(G,X)$  is at least $\alpha$, where $G$ does not have an isolated vertex. 

For any integer $x\in [0, |V|]$, we call $f(x) = \min \{ \Delta (G - X) : |X| = x \}$ the \textit{degree deletion function} of $G$, where $X$ is a subset of $V$. Then, we have the following observation, the proof of which is trivial. 

\begin{observation} \label{obervation-for-del-table}
The degree deletion function of a graph $G$ is a non-increasing function.
\end{observation}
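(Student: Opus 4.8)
The plan is to verify the defining inequality of monotonicity directly: for any two integers $0 \le x < x' \le |V|$, show that $f(x) \ge f(x')$. First I would fix a subset $X \subseteq V$ with $|X| = x$ that witnesses $f(x)$, that is, a set attaining $\Delta(G - X) = f(x)$ (such a set exists by definition of $f$ as a minimum over all size-$x$ subsets). Since $|V \setminus X| = |V| - x \ge x' - x$, I can extend $X$ to a superset $X' \supseteq X$ with $|X'| = x'$ by adding $x' - x$ arbitrary vertices of $V \setminus X$.

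Next I would observe that $G - X'$ is an induced subgraph of $G - X$, and removing vertices from a graph never increases the degree of any surviving vertex; hence $\Delta(G - X') \le \Delta(G - X)$. Combining this with the definition of $f(x')$ as the minimum of $\Delta(G - Y)$ over all $Y$ of size $x'$ gives $f(x') \le \Delta(G - X') \le \Delta(G - X) = f(x)$. As $x < x'$ were arbitrary, $f$ is non-increasing on $[0,|V|]$.

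The only spot needing a word of care is the boundary case $x' = |V|$, where $G - X'$ is the empty graph; under the standard convention $\Delta(\emptyset) = 0$ (or indeed any value that does not exceed $f(|V|-1)$) the argument above is unaffected, since all that is truly used is that maximum degree is monotone under vertex deletion. There is no genuine obstacle here: the statement follows immediately from the fact that the maximum degree of an induced subgraph is at most that of the ambient graph, which is exactly why the original text calls the proof trivial.
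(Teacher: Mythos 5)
Your proof is correct and is the natural argument the paper has in mind when it declares the observation trivial: take a size-$x$ minimizer, pad it to size $x'$, and use that deleting more vertices cannot increase any surviving vertex's degree. Nothing further to add.
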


Observation \ref{obervation-for-del-table} and Definition \ref{def:kd-degdeletion-bdd-121}  imply that, for any $a_i \in I$, there exists $X \subseteq V$ of size at least $c(a_i)$ such that $\Delta(G - X) \leq a_i$.

In a similar fashion, for any integer $x\in [0, |V|-1]$, we call $f(x) = \max \{ \delta (G - X,X) : |X|= x\}$ the \textit{degree retention function} of $G$, where $X$ is a subset of $V$. Then, we have the following observation, the proof of which is trivial.

\begin{observation} \label{obervation-for-ren-table}
The degree retention function of a graph $G$ is a non-decreasing function.
\end{observation}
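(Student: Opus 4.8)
The plan is to prove the contrapositive-free, direct inequality $f(x) \le f(x+1)$ for every $x$ in the range $[0, |V|-2]$; since $f$ is defined on $[0,|V|-1]$ and there is nothing to check at the right endpoint, this is exactly the assertion that the degree retention function is non-decreasing.

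First I would fix a set witnessing the value at $x$: let $X \subseteq V$ with $|X| = x$ and $\delta(G - X, X) = f(x)$. Because $x \le |V|-2$, the set $V \setminus X$ has at least two elements, so I may choose any vertex $v \in V \setminus X$ and put $X' := X \cup \{v\}$; then $|X'| = x+1$ and $V \setminus X' = (V \setminus X) \setminus \{v\}$ is still nonempty, so $\delta(G-X',X')$ is well defined. The key (and essentially only) observation is a twofold monotonicity under this enlargement. On the one hand, for every $w \in V \setminus X'$ we have $\deg(w, G, X') \ge \deg(w, G, X)$, since $X' \supseteq X$ and $\deg(w, G, \cdot)$ just counts the neighbors of $w$ lying inside the given set. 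On the other hand, $V \setminus X' \subsetneq V \setminus X$, so the minimum defining $\delta(G - X', X')$ ranges over a subset of the vertices defining $\delta(G - X, X)$, each of whose terms has only increased. Combining these, $\delta(G - X', X') = \min_{w \in V \setminus X'} \deg(w, G, X') \ge \min_{w \in V \setminus X} \deg(w, G, X) = \delta(G - X, X) = f(x)$. Finally, since $|X'| = x+1$, the definition of $f$ as a maximum over all $(x+1)$-subsets gives $f(x+1) \ge \delta(G - X', X') \ge f(x)$, which is what we wanted.

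I do not expect any genuine obstacle: the argument is purely elementary, and the only point needing a word of care is that enlarging $X$ by one vertex keeps $V \setminus X'$ nonempty, which holds throughout the relevant range of $x$. This mirrors the proof of Observation~\ref{obervation-for-del-table}, where one instead uses that $G - X'$ is an induced subgraph of $G - X$ and therefore satisfies $\Delta(G - X') \le \Delta(G - X)$, yielding the non-increasing behavior of the degree deletion function in the same one-line fashion.
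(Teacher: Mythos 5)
Your proof is correct, and it is the natural argument. The paper simply states that the observation's "proof \ldots is trivial" and omits it; your two-fold monotonicity argument (enlarging $X$ weakly increases each $\deg(w,G,X)$, and shrinking the index set $V \setminus X$ weakly increases the minimum, so $\delta(G-X',X')\ge\delta(G-X,X)$, and then taking the max over all $(x+1)$-subsets gives $f(x+1)\ge f(x)$) is exactly the intended filling-in of that gap, including the correct handling of the endpoint so that $V\setminus X'$ stays nonempty.
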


Observation \ref{obervation-for-ren-table} and Definition \ref{def:kd-degdeletion} imply that, for any $a_i \in \{a_0\} \cup I$, there exists $X \subset V$ of size at least $p + c(a_i)$ such that $\delta (G - X,X) \geq a_i$.

In the following, we provide the reduction for the hardness for arbitrary deletion and retention tables.
Our reduction is from a specific variant of the \textsc{Multicolored Clique} problem, which we call~\mcc.
A symmetric multicolored graph $G = (V^1\cup V^2 \cup \ldots \cup V^k , E)$ is a connected 
graph such that, for all distinct $i,j \in [k]$,
\begin{enumerate}
    \item
    $V^i = \{v_1^i, \ldots, v_n^i\}$, where $n\geq k$;
    
    \item 
    all the vertices of $V^i$ are colored by color $i$ and form an independent set;

    \item
    if $v_r^i v_s^j \in E(G)$, then $v_s^j v_r^i \in E(G)$ as well.
\end{enumerate}
Then, for the \mcc~problem, the input is a symmetric multicolored graph $G$ and an integer $k$, and the objective is to decide whether $G$ contains a $k$-clique with vertices of all $k$ colors. We also call $v_r^i v_s^j$ and $v_s^i v_r^j$ \emph{symmetry edges}.

We claim that \mcc~problem is W[1]-hard, which is immediate by using the same proof of Lemma 1 in \cite{DBLP:journals/tcs/FellowsHRV09}. For convenience, we sketch the proof here, which is via a reduction from $k$-\textsc{Clique} problem. Let $(G,k)$ be an instance of $k$-\textsc{Clique}. Without loss of generality, assume $G$ is connected. Create $G'$ by replacing every $v\in V(G)$ with vertices $v_1,\ldots,v_k$, one in each color class. If $u$ and $v$ are adjacent in $G$, then add the edge $u_i v_j$, for every distinct $i, j \in [k]$. Clearly, $(G',k)$ is a valid instance of \mcc~problem, and $(G,k)$ is a yes instance of $k$-\textsc{Clique} problem if and only if $(G',k)$ is a yes instance of \mcc~problem. 

Suppose $S_k$ denotes a star with $k$ leaves. Here, we provide two lemmas about stars, which will be used later.

\begin{lemma}\label{the detail of degree deletion graph}
Let $G$ be a $(a_0, I, c)$-degree deletion star graph. Apart from the stars with leaves less than $a_{|I|-1}$, $G$ consists of exactly $c(a_{i}) - c(a_{i-1})$ stars with $a_{i-1}$ leaves for all  $i\in [|I|]$.
\end{lemma}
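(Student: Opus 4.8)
The plan is to exploit the very simple structure of disjoint unions of stars. For a disjoint union of stars and an integer $d \geq 0$, let $m_d$ denote the number of stars with exactly $d$ leaves and set $N_{\geq d} = \sum_{j \geq d} m_j$, $N_{>d} = \sum_{j > d} m_j$, the number of stars with at least, respectively more than, $d$ leaves. The crucial elementary observation is that to bring the maximum degree of such a graph down to at most $d$ by deleting vertices one must delete at least one vertex from every star with more than $d$ leaves, while deleting the centre of each of those stars already achieves it; hence the minimum size of a set $X$ with $\Delta(G - X) \leq d$ equals $N_{>d}$.

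Next I would translate the conditions of Definition~\ref{def:kd-degdeletion-bdd-121} into statements about the $N_{>d}$'s. Condition~1 gives $\Delta(G) = a_0$, i.e.\ $N_{>a_0} = 0 = c(a_0)$. Applying Condition~2 to $a_i \in I$ yields a set $X$ of size $c(a_i)$ with $\Delta(G - X) \leq a_i$, so by the observation $N_{>a_i} \leq c(a_i)$. Condition~3 is used contrapositively: if $N_{\geq a_{i-1}} < c(a_i)$, then deleting the centre of every star with at least $a_{i-1}$ leaves is a set of size $N_{\geq a_{i-1}} < c(a_i)$ after which the maximum degree is at most $a_{i-1} - 1 < a_{i-1}$, contradicting Condition~3 (the degenerate case $N_{\geq a_{i-1}} = 0$ being handled by $X = \emptyset$, which still has size $0 < c(a_i)$). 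Hence $N_{\geq a_{i-1}} \geq c(a_i)$. Since $a_{i-1} \geq a_i + 1$ we also have $N_{\geq a_{i-1}} \leq N_{>a_i}$, so the inequalities collapse to $c(a_i) \leq N_{\geq a_{i-1}} \leq N_{>a_i} \leq c(a_i)$, giving $N_{\geq a_{i-1}} = N_{>a_i} = c(a_i)$ for every $i \in [|I|]$.

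It then remains to read off the structure. From $N_{\geq a_{i-1}} = N_{>a_i}$ (for $i \in [|I|]$) together with $N_{>a_0}=0$, no star of $G$ has a number of leaves that is either greater than $a_0$ or strictly between two consecutive values $a_i < a_{i-1}$; consequently every star with at least $a_{|I|-1}$ leaves has exactly $a_{i-1}$ leaves for some $i \in [|I|]$. For the count, note $m_{a_{i-1}} = N_{\geq a_{i-1}} - N_{>a_{i-1}}$ by definition; we have $N_{\geq a_{i-1}} = c(a_i)$ from the chain above with index $i$, and $N_{>a_{i-1}} = c(a_{i-1})$ — for $i \geq 2$ from the chain with index $i-1$, and for $i = 1$ from $N_{>a_0} = 0 = c(a_0)$. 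Therefore $m_{a_{i-1}} = c(a_i) - c(a_{i-1})$ for all $i \in [|I|]$, which is exactly the claim, the stars with fewer than $a_{|I|-1}$ leaves being precisely those not accounted for.

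The only place requiring genuine thought rather than bookkeeping is the interplay of Conditions~2 and~3: Condition~2 alone bounds $N_{>a_i}$ only from above, and one must notice that Condition~3 supplies a matching lower bound not on $N_{>a_i}$ but on $N_{\geq a_{i-1}}$, and that the resulting equality $N_{\geq a_{i-1}} = N_{>a_i}$ is precisely what rules out all the intermediate star sizes. Everything else — the star-deletion observation and the arithmetic with the $c(a_i)$'s — is routine.
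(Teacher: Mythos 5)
Your proof is correct and follows essentially the same argument as the paper: Condition~2 gives $N_{>a_i}\le c(a_i)$, Condition~3 (via deleting the star centres) gives $N_{\ge a_{i-1}}\ge c(a_i)$, and combining these with $a_{i-1}\ge a_i+1$ forces the equalities that rule out intermediate leaf counts and yield the multiplicities $c(a_i)-c(a_{i-1})$. Your $N_{\ge d}/N_{>d}$ bookkeeping is a cleaner phrasing of what the paper expresses via the subgraph $G_i$ of stars with at least $a_{i-1}$ leaves, but the substance is the same.
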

\begin{proof}
Let us consider every $i \in  [|I|]$. According to condition 2 of Definition \ref{def:kd-degdeletion-bdd-121}, there are at most $c(a_i)$ stars, each of which contains at least $a_i+1$ leaves. Based on condition 3 of Definition \ref{def:kd-degdeletion-bdd-121} there are at least $c(a_i)$ stars, each of which contains at least $a_{i-1}$ leaves. In addition, we have $a_{i-1} \geq a_i +1$, therefore there are exactly $c(a_i)$ stars, each of which contains at least $a_{i-1}$ leaves. Suppose $G_i$ consists of the $c(a_i)$ stars with at least $a_{i-1}$ leaves. Thus, together with condition 2 of Definition \ref{def:kd-degdeletion-bdd-121}, exactly one vertex for each star in $G_i$ has to be added to $X$ to make $G-X$ have maximum degree $a_i$ since $a_i\leq a_{i-1} - 1$. Since $G_i$ contains exactly $c(a_i)$ stars, apart from $G_i$, there is no vertices in $G$ are added to $X$. This means that, apart from $G_i$, the internal vertex of every star of $G$ contains at most $a_i$ leaves. Moreover, for every $i\in [|I|-1]$, we also know that there are exactly $c(a_{i+1})$ stars, each of which contains at least $a_{i}$ leaves. Therefore, for every $i\in [|I|-1]$, there are exactly $c(a_{i+1}) - c(a_i)$ stars, each of which contains exactly $a_{i}$ leaves. In addition, there are exactly $c(a_1)$ stars, each of which contains at least $a_{0}$ leaves, together with the fact that $G$ has maximum degree $a_{0}$ based on condition 1 of Definition \ref{def:kd-degdeletion-bdd-121}. Therefore, there are exactly $c(a_{1}) = c(a_{1}) - c(a_0)$ stars, each of which contains exactly $a_{0}$ leaves.
\end{proof}

\begin{lemma}\label{greedy lemma}
Suppose $M$ is a stars-module of a graph $G$ such that the degrees of internal vertices $v_1,\ldots,v_k$ of all the stars satisfy that $deg(v_1)\geq \ldots \geq deg(v_k)\geq 2$. Assume $X$ is a subset of $V(G)$ and $|X\cap M|=r$. Then, for any $r\in [k-1]$, to obtain the maximum value of the dominating coefficient of $M\setminus X$ with respect to $(G,X)$, we may choose $X\cap M=\{v_1,\ldots,v_r\}$.
\end{lemma}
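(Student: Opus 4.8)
The plan is to exploit that $M$ is a module in order to reduce the statement to a counting question about which star centres are deleted. Number the stars of $G[M]$ as $1,\dots,k$ so that their centres $v_1,\dots,v_k$ (as named in the statement) have respectively $d_1\geq\cdots\geq d_k\geq 2$ leaves. Write $N_{\mathrm{ext}}$ for the set of vertices of $V(G)\setminus M$ adjacent to every vertex of $M$, and set $d=|N_{\mathrm{ext}}|$ and $t=|N_{\mathrm{ext}}\cap X|$; since $M$ is a module, neither $d$ nor $t$ depends on the choice of $X\cap M$. For each $i$ let $s_i$ be the number of vertices of the $i$-th star that lie in $X$, and let $\varepsilon_i=1$ if $v_i\in X$ and $\varepsilon_i=0$ otherwise. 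Computing $N(\cdot)$ and $N(\cdot)\cap X$ for the vertices of $M\setminus X$ shows that a leaf whose centre lies in $X$ has dominating coefficient $\frac{1+t-\beta}{1+d}$, a leaf whose centre is not in $X$ has dominating coefficient $\frac{t-\beta}{1+d}$, and a centre $v_i\notin X$ has dominating coefficient $\frac{s_i+t-\beta}{d_i+d}$. I keep $X\setminus M$ fixed throughout and treat $\Lambda(M\setminus X,G,X)$ as a function of $S:=X\cap M$ subject to $|S|=r$; showing that $S=\{v_1,\dots,v_r\}$ maximises it gives the lemma (and also the reading in which $X\setminus M$ is allowed to vary).

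First I would evaluate $\Lambda(M\setminus X,G,X)$ for the greedy choice $S=\{v_1,\dots,v_r\}$: here every $s_i$ equals $\varepsilon_i$, the stars $1,\dots,r$ contribute only leaves, of coefficient $\frac{1+t-\beta}{1+d}$, and each star $i$ with $r<i\leq k$ contributes leaves of coefficient $\frac{t-\beta}{1+d}$ together with a centre of coefficient $\frac{t-\beta}{d_i+d}$ (this uses $r\leq k-1$, so such stars exist, and $d_i\geq 2$, so both a leaf and the centre survive in $M\setminus X$). Since $\frac{1+t-\beta}{1+d}>\frac{t-\beta}{1+d}$, the minimum equals $\min\!\bigl(\frac{t-\beta}{1+d},\ \min_{i>r}\frac{t-\beta}{d_i+d}\bigr)$, and a short case split on the sign of $t-\beta$ (recall $\beta$ can be negative) identifies this value $\Lambda^\star$: it equals $\frac{t-\beta}{d_{r+1}+d}$ when $t-\beta\geq 0$ and $\frac{t-\beta}{1+d}$ when $t-\beta\leq 0$, using $d_i\geq 2$ to compare denominators.

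It then remains to show that \emph{any} $S\subseteq M$ with $|S|=r$ yields $\Lambda(M\setminus X,G,X)\leq\Lambda^\star$ for the corresponding $X$. Since $\sum_i\varepsilon_i\leq r\leq k-1$, at least one star has $\varepsilon_i=0$. If $t-\beta\geq 0$, a pigeonhole argument on the $r+1$ largest stars (their $s_i$ sum to at most $r$) produces an index $i^\star\leq r+1$ with $s_{i^\star}=0$; then $v_{i^\star}\in M\setminus X$ has coefficient $\frac{t-\beta}{d_{i^\star}+d}\leq\frac{t-\beta}{d_{r+1}+d}=\Lambda^\star$, because $d_{i^\star}\geq d_{r+1}$ and $t-\beta\geq 0$. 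If $t-\beta\leq 0$, I would argue that some star has $\varepsilon_i=0$ and $s_i<d_i$: otherwise every star with $\varepsilon_i=0$ has all of its $d_i\geq 2$ leaves in $S$ while every star with $\varepsilon_i=1$ has $s_i\geq 1$, so $r=\sum_i s_i\geq 2\,|\{i:\varepsilon_i=0\}|+|\{i:\varepsilon_i=1\}|\geq k$, contradicting $r\leq k-1$; a leaf of such a star then lies in $M\setminus X$ with coefficient $\frac{t-\beta}{1+d}=\Lambda^\star$. In both cases $\Lambda(M\setminus X,G,X)\leq\Lambda^\star$, and $\Lambda^\star$ is attained by the greedy choice $S=\{v_1,\dots,v_r\}$, proving the lemma.

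I expect the main difficulty to be careful bookkeeping rather than any conceptual obstacle: since $\beta$ may be negative, the quantities $t-\beta$ and $s_i+t-\beta$ can have either sign, so every comparison between two dominating coefficients must be made according to the sign of $t-\beta$, and one must repeatedly check, via $d_i\geq 2$ and $r\leq k-1$, that the leaves and centres invoked actually remain in $M\setminus X$. As a sanity check, $M\setminus X\neq\emptyset$ throughout, since $|M|\geq 3k>k-1\geq r$, so the convention that the dominating coefficient of an empty set equals one never comes into play.
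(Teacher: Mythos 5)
Your proof is correct and follows the same high-level strategy as the paper's: evaluate the dominating coefficients of leaves and surviving centres, observe that the greedy deletion of the $r$ largest-degree centres achieves a value $\Lambda^\star$, and then use a pigeonhole argument on the first $r+1$ stars (they have at most $r$ deletions among them, so one is untouched) to show no other choice of $X\cap M$ can beat $\Lambda^\star$. Where you go further is in tracking the $-\beta$ term inside $\lambda(v,G,X)=\frac{|N(v)\cap X|-\beta}{|N(v)|}$. The paper's proof writes $\gamma=\frac{x}{y+\deg(v_{r+1})}$ with no $\beta$ at all, and its inequalities (e.g.\ ``$\frac{x+1}{y+1}>\gamma$'' and ``$\frac{x}{y+\deg(v_i)}\geq\gamma$'') are monotone in the denominator only when the numerator $x-\beta$ is nonnegative; when $x-\beta<0$ the optimal value is $\frac{x-\beta}{y+1}$ (a leaf's coefficient), not $\gamma$, and the paper's comparisons flip sign. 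Your explicit case split on $\operatorname{sign}(t-\beta)$ — with the counting argument $r\geq 2|\{\varepsilon_i=0\}|+|\{\varepsilon_i=1\}|\geq k$ to guarantee a surviving leaf with untouched centre in the $t-\beta\leq 0$ case — closes that gap and proves the lemma in full generality, rather than only in the regime (large $t$ relative to $|\beta|$) where the paper actually invokes it.
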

\begin{proof}
Suppose the vertex number of $N(M)$ is $y$ and the vertex number of $N(M)\cap X$ is $x$, clearly we have $y\geq x\geq 0$. Assume that $\gamma = \frac{x}{y+deg(v_{r+1})}$. We have $\gamma \leq \frac{x}{y+2}$ since $deg(v_{r+1}) \geq deg(v_{k}) \geq 2$.

First, we claim that, for any $X\cap M$ with $r$ vertices, the dominating coefficient of $M\setminus X$ with respect to $(G,X)$ is at most $\gamma$. We give proof by contradiction for the claim as follows. Assume that there exists an $X\cap M$ with $r$ vertices such that the dominating coefficient of $M\setminus X$ with respect to $(G,X)$ is larger than $\gamma$. Consider all the stars with internal vertices $v_1,\ldots,v_{r+1}$. Clearly, for every $i\in [r+1]$, we have $\lambda(v_i,G,X) \leq \gamma$. Moreover, each star is a connected component in $G[M]$, so we have to add at least one vertex of each star to $X$ to change the value of $\lambda(v_i,G,X)$ for each $i$. Thus, $X\cap M$ contains at least $r+1$ vertices, a contradiction.

Second, we demonstrate that the dominating coefficient of $M\setminus X$  with respect to $(G,X)$ is at least $\gamma$ if we choose $X\cap M=\{v_1,\ldots,v_r\}$. For $i\in [r]$, the dominating coefficient of each adjacent vertex of $v_i$ with respect to $(G,X)$  is  $\frac{x+1}{y+1}> \gamma$. For $r +1 \leq  i \leq k$, the dominating coefficient of each adjacent vertex of $v_i$ with respect to $(G,X)$  is  $\frac{x}{y+1} \geq \gamma$, and the dominating coefficient of $v_i$ with respect to $(G,X)$  is $\frac{x}{y+deg(v_i)} \geq \gamma$.
\end{proof}

The rest of the section is dedicated to proving the following. Suppose $\I$ is the class of all edgeless graphs.  
We provide a reduction from the \mcc~problem to each above-mentioned case of the \textsc{$(\alpha,\beta)$-Linear Degree Domination} problem.  
Let $(G,k)$ be an instance of \mcc, where  $G = (V^1\cup V^2 \cup \ldots \cup V^k , E)$. Without loss of generality, suppose $k\geq 100$, otherwise, the problem can be solved in polynomial time.
We next describe a graph $H$ and integer $q$, which are the input to our corresponding instance of the \textsc{$(\alpha,\beta)$-Linear Degree Domination}.  

Note that Theorem~\ref{ld-main-theorem} has three cases and would require three reductions.  In cases 1 and 3, $\alpha$ is fixed (to either $0$ or $1$), but we must also specify $\beta$ as an input parameter, whereas in case 2, $\alpha$ and $\beta$ are constants that we have no control over.
In all three cases of the theorem, the general modular structure of $H$ is the same, and only the structure inside the modules differ (in addition to $\beta$ also being different).  We therefore present the proof of the theorem in one single reduction, and separate the proof into cases only when needed.


Before proceeding, we will make use of a $2$-sumfree-set, which is a set of positive integers in which every pair of elements has a distinct sum. That is, $I$ is a $2$-sumfree set if, for any $(a, b), (a', b') \in I \times I$, $a + b = a' + b'$ if and only if $\{a, b\} = \{a', b'\}$ (note that $a = b$ is possible).
Clearly, we can construct in time $O(n^4 \log n)$ a $2$-sumfree-set $I = \{x_1, \ldots, x_n\}$, in which the maximum value is at most $n^4$, using a greedy procedure. The reason is  that, for any positive integers $i, j, r \leq h < n$, $x_i+x_j-x_r$ has at most $h^3$ different values and we can always find the next element $x_{h+1}$ in the interval $(h^4, (h+1)^4]$ since $(h+1)^4-h^4 > h^3$. Let us also mention that the more general notion of $h$-sumfree sets was recently used to show hardness results for \textsc{Bin Packing} \cite{DBLP:journals/jcss/JansenKMS13}.

We assume that we have built a $2$-sumfree set $I = \{a_1, \ldots, a_n\}$,
where $a_1 > \ldots > a_n \geq 1$.
Without loss of generality, we may assume that for any integer $r \in \mathbb{N}$,  each element $a_i \in I$ is a multiple of $r$.  Indeed, if $I$ does not satisfy this property, we can update $I$ by multiplying each $a_i$ by $r$, which preserves the $2$-sumfree property. 
The value of $r$ we want to use depends on whether we are in case 1, 2, or 3 of the theorem, and the possibilities are listed in Table~\ref{table:reduction-values}.  We henceforth assume that every element of $I$ is a multiple of $r$, defined according to the table.
Since the initial $I$ had maximum value $n^4$, we thus have 
$r n^4 \geq a_1 > \ldots > a_n \geq r$.
Moreover, for any distinct pair $(a,b),(a',b')\in I \times I$, we have that the absolute value of $a+b-a'-b'$ is at least $r$.

For each 
color class $V^i = \{v_1^i, \ldots, v_n^i\}$ of $G$, 
let $f_i$ be the bijection from $V^i$ to $I$, such that $f_i(v^i_s) = a_s$ for every $s\in [n]$.
Clearly, we can use $f_i^{-1}(a_s)$ to denote the unique vertex $v^i_s$ of $V^i$ associated with $a_s \in I$. 
Since $I$ has the 2-sumfree property, 
we have that for any distinct $i, j \in [k]$ and any $u \in V^i$ and any $w \in V^j$, if $f_i(u) + f_j(w) = a_s + a_t$, then $u, w$ is either $v^i_s, v^j_t$ or $v^i_t, v^j_s$.  Moreover, for any distinct color classes $V^i, V^j$, edges $v_r^i v_s^j$ and $v_s^i  v_r^j$ are either both in $E$, or both not in $E$. Hence, by looking at a sum in $I$, we can decide whether it is corresponding to a pair of symmetry edges or non-edges.





Suppose positive integers $p, p_{ij} < n^{210}$ for all distinct $i,j \in [k]$. Assume $p'=n^{210}$. We define  $x= (10a_1)^5$, $y= (nka_1)^{10000}$ and $m=(nka_1)^{20000}$ for any $\alpha \in [0,1]$. Additionally, for each $i\in [k]$, we define integers $\beta$, $s$, $l$, $a_0$, and $t$ as listed in Table~\ref{table:reduction-values} and integers $q, q_1$, and $q_2$ as listed in Table \ref{table:reduction-values-q1-q2}, whose values depend on which case of the theorem we are in.
Table \ref{table:reduction-values-relationship} demonstrates some relationships between these values, which will be used later. We will explain Table \ref{table:reduction-values-relationship} as follows. 
Consider $\alpha = 0$. We have that $a_1 \leq n^4r = n^4 2(k-1)^2k^3 < n^{10}$, that $a_0 = a_1 +1 \leq n^{10}$, that $q_1 < ky + k^2\beta = ky + k^2(nk)^{50}< ky + n^{102}$, that $q_2 < kp + k^2p'  + k n^{10} < kn^{210} + k^2n^{210} + k n^{10} < n^{213}$, that  $q = q_1 + q_2 < ky +  n^{214}$. 
Consider $\alpha \in (0,1)$. We have that $q_1 < 2k^2y = 2k^2(nka_1)^{10000}$, that $q_2 < (k + \binom{k}{2}) (10a_1)^5 + 2ka_1 < 2k^2(10a_1)^5$, that $q < 3k^2y = 3k^2(nka_1)^{10000}$, that $a_0 = (10a_1)^{50} > 2k^2(10a_1)^5 > q_2$.
Consider $\alpha = 1$. We have that $q_1  = 0$, that $q_2 < k^2 n^{10}$, that $q < k^2 n^{10}$, that $a_1 \leq n^4r  < (nk)^5 \leq n^{10}$, that $a_0 < k^2n^{10} \leq n^{12}$.

\begin{table}[]
\begin{center} 
\begin{tabular}{|l|l|l|l|}
\hline
         & Case 1                                                                                & Case 2                                                                                                           & Case 3                                                                                \\ \hline
Variable & \begin{tabular}[c]{@{}l@{}}$\alpha = 0$, \\ $\beta \in \mathbb{Z}$ input\end{tabular} & \begin{tabular}[c]{@{}l@{}}$\alpha \in (0, 1)$ any constant, \\ $\beta \in \mathbb{Z}$ any constant\end{tabular} & \begin{tabular}[c]{@{}l@{}}$\alpha = 1$, \\ $\beta \in \mathbb{Z}$ input\end{tabular} \\ \hline
$r$      & $2(k - 1)^2 k^3$                                                                     & $\left\lceil\frac{10k(|\beta|+10)}{\alpha(1-\alpha)}\right\rceil^{10}$                                                                                                    &     $2(k-1)^2k^3$             \\ \hline
$\beta$  & $(nk)^{50}$                                                                           & fixed                                                                                                            & $-(nk)^{10000}$                                                                       \\ \hline
$s$      & $n^{10}$                                                                              & $0$                                                                                                              & $n^{10}$                                                                              \\ \hline
$l \in \mathbb{N}^0$      &  $< r$ any number                                                                     & $0$                                                                                                              & $0$                                                                                   \\ \hline
$a_0$    & $a_1 + 1$                                                                             & $x^{10} = (10a_1)^{50}$                                                                                          & $kn^{10} + \binom{k}{2} n^{10} + 1$                                                   \\ \hline
$t$      & $s$                                                                                   & $\left\lfloor (1-\alpha)^2\alpha x\right\rfloor + s$                                                             & $s$                                                                                   \\ \hline
\end{tabular}
\caption{The list of variable values to use in the reduction, depending on which case of the theorem we are proving.}
\label{table:reduction-values}
\end{center}
\end{table}

\begin{table}[]
\begin{center} 
\begin{tabular}{|l|l|l|}
\hline
       
\begin{tabular}[c]{@{}l@{}} Case 1: \\ $(\alpha =0)$ \end{tabular} 
& \begin{tabular}[c]{@{}l@{}} $q_1  = ky +  (k+ \binom{k}{2})\beta - (k + 2\binom{k}{2} -3)(s + l)$ \\ $q_2 = kp + \binom{k}{2}p'  + k (s + l -   a_n )$ \\ $q = q_1+q_2$ \end{tabular}   \\ \hline
\begin{tabular}[c]{@{}l@{}} Case 2: \\ $(\alpha \in (0,1))$ \end{tabular} 
& \begin{tabular}[c]{@{}l@{}} $q_1 = \binom{k}{2}(\left\lfloor (1-\alpha)y \right\rfloor +\left\lceil 2\alpha^2(1-\alpha)x \right\rceil)$ \\ $ \; \;\;\;\;\;\;\;\;\;\;\;\;\;\;\;\;\;\;\;\; + k\left( \left\lceil \alpha^2(1-\alpha)x \right\rceil + 2\left\lfloor (1-\alpha)y \right\rfloor \right)$ \\ $q_2 = (k + \binom{k}{2}) t + 2k|\beta|$ \\ $q = q_1+q_2$ \end{tabular}   \\ \hline
\begin{tabular}[c]{@{}l@{}} Case 3: \\ $(\alpha = 1)$ \end{tabular} 
& \begin{tabular}[c]{@{}l@{}} $q_1  = 0$ \\ $q_2 = (k + \binom{k}{2} )s$ \\ $q = q_1+q_2$ \end{tabular}   \\ \hline
 
\end{tabular}
\caption{The values of variables $q, q_1$, and $q_2$ to use in the reduction, depending on which case of the theorem we are proving.}
\label{table:reduction-values-q1-q2}
\end{center}
\end{table}

\begin{table}[]
\begin{center} 
\begin{tabular}{|l|l|l|l|}
\hline
& Case 1    & Case 2       & Case 3              \\ \hline
 
Variable & \begin{tabular}[c]{@{}l@{}}$\alpha = 0$, \\ $\beta \in \mathbb{Z}$ input\end{tabular} & \begin{tabular}[c]{@{}l@{}}$\alpha \in (0, 1)$ any constant, \\ $\beta \in \mathbb{Z}$ any constant\end{tabular} & \begin{tabular}[c]{@{}l@{}}$\alpha = 1$, \\ $\beta \in \mathbb{Z}$ input\end{tabular} \\ \hline
 
$q$   &  $< ky +  n^{214} $  &  $ <3k^2(nka_1)^{10000}$     & $< k^2n^{10}$     \\ \hline
 
$q_1$ & $< ky + n^{102}$ & $<2k^2(nka_1)^{10000}$   &  $= 0$    \\ \hline
 
$q_2$ & $< n^{213}$   & $< 2k^2(10a_1)^5$   &       $< k^2n^{10}$                     \\ \hline
 
$a_1$ & $<  n^{10}$    &    & $<n^{10}$  \\ \hline
 
$a_0$ &  $\leq n^{10}$   & $> q_2$   & $< n^{12}$          \\ \hline
\end{tabular}
\caption{Some relationships between the values of the variables to use in the reduction, depending on which case of the theorem we are proving.}
\label{table:reduction-values-relationship}
\end{center}
\end{table}





We can now start describing the graph $H$, which is illustrated in Figure \ref{fig:a-b-domi-reduction-graph}.
First, add to $H$ two factors $N$ and $K$, where:
\begin{itemize}
    \item
    $N$ is an edgeless graph of size $\left\lfloor \alpha(1-\alpha)m + 2(1-\alpha)(s+l) \right\rfloor$;
    \item
    $K$ is an edgeless graph of size $\left\lfloor \alpha(1-\alpha)m + (1-\alpha)(s+l) \right\rfloor$.
\end{itemize}

Secondly, for each color class $V^i$ of $G$, add six factors $A_i,$ $ B_i,$ $ D_i,$ $ R_i,$ $ S_i,$ $ T_i$ to each $H$, where:
\begin{itemize}
    \item
    $A_i$ and $D_i$ are edgeless graphs of size $\left\lfloor (1-\alpha)y \right\rfloor$;
    \item
    $B_i$ is an edgeless graph of size $\left\lceil \alpha^2(1-\alpha)x \right\rceil$;
    \item 
    $R_i$ is an edgeless graph of size $\left\lfloor |\beta| -s - (1-\alpha)l \right\rfloor$;

    \item 
    Now, consider $S_i$. Recall that $l < r$.
    
    First, $S_i$ is a $(a_0, I \cup \{a_n - 1\}, c)$-degree deletion graph if $\alpha = 1$, and is a $(a_0, I \cup \{a_n - 1\}, c)$-degree deletion star graph if $\alpha \in (0, 1)$, where we put  $c(a_j) = t - \left\lceil\frac{\alpha}{2}a_j\right\rceil$ for $a_j \in I$ and $c(a_n - 1) = a_0$.   
    
    Secondly, if $\alpha = 0$, then $S_i$ is a $(a_0, I, c)$-degree retention graph above $(p,l)$\footnote{We will later prove that $p < n^{210}$ is well-defined here.  Also note that $l < 100$ by definition, and therefore $l < r$ as desired.}, where $c(a_i) = \frac{1}{2}(a_i - a_n)$ and $c(a_0)= ks$.

    \item 
    $T_i$ is an edgeless graph of size $t$.
    
\end{itemize}
We then make $A_i$,  $B_i$, and $D_i$ adjacent with  $K$; make  $R_i$ adjacent with  $A_i$; make $S_i$ adjacent with  $R_i$, $B_i$, and $T_i$; make  $T_i$ adjacent with  $D_i$.

Thirdly, for each pair of color classes $V^i, V^j$ with $i < j$, we add another four factors $U_{ij},  R_{ij},  A_{ij},  B_{ij}$, where:
\begin{itemize}
    \item 
    $R_{ij}$ is an edgeless graph of size $\left\lfloor |\beta| - 2s - 2(1-\alpha)l \right\rfloor$;
    
    \item 
    $A_{ij}$ is an edgeless graph of size $\left\lfloor (1-\alpha)y \right\rfloor$;
    
    \item 
    $B_{ij}$ is an edgeless graph of size $\left\lceil 2\alpha^2 (1-\alpha) x \right\rceil$;

    \item 
    Now, we consider $U_{ij}$. Recall that $l < r$.
    Define $I_{ij} = \{a + b : f_i^{-1}(a) f_j^{-1}(b) \in E(G)\}$, i.e. the sums of pairs that correspond to an edge between $V^i$ and $V^j$.
    Let $\ell_{ij}$ and $\hbar_{ij}$ be the smallest element and the largest element of $I_{ij}$, respectively.

    First, $U_{ij}$ is a $(a_0, I_{ij} \cup$ $\{\ell_{ij} - 1\}, c_{ij})$-degree deletion graph if $\alpha = 1$, and is a $(a_0, I_{ij} \cup$ $\{\ell_{ij} - 1\}, c_{ij})$-degree deletion star graph if $0<\alpha <1$, where we put $c_{ij}(a + b) = t -  \frac{1}{k-1} (\left\lceil \frac{\alpha a}{2} \right\rceil + \left\lceil \frac{\alpha b}{2} \right\rceil)$, and we put $c_{ij}(\ell_{ij} - 1) = a_0$.

     Secondly, if $\alpha = 0$, then $U_{ij}$ consists of an edgeless graph with $p'_{ij}$ vertices together with  a $(2a_1 + 1, I_{ij}, c_{ij})$-degree retention graph above $(p_{ij},l)$\footnote{We will also later prove that $p_{ij} < n^{210}$ is well-defined here.},  
     where $c_{ij}(a+b) = \frac{1}{2(k-1)}(a + b - l_{ij})$, and $p'_{ij}=  p'  - p_{ij} + \frac{1}{2(k-1)}(l_{ij} - 2a_n)$. In addition, we define $c_{ij}(2a_1+1)= ks$.
\end{itemize}
We then make $U_{ij}$ adjacent with $T_i$, $T_j$, $B_{ij}$, and $R_{ij}$; make $A_{ij}$ adjacent with $R_{ij}$ and $N$; make $B_{ij}$ adjacent with $K$.

\begin{figure}
\centering
\begin{tikzpicture}

\filldraw[color=blue, fill=blue!5] (-2,0) ellipse (0.618 and 0.382); 
\filldraw[color=blue, fill=blue!5] (8,0) ellipse (0.618 and 0.382); 
\filldraw[color=blue, fill=blue!5] (-2.8,1.5) rectangle (0.8-2,2.5); 
\filldraw[color=blue, fill=blue!5]  (7.2,1.5) rectangle (8.8,2.5); 
\filldraw[color=blue, fill=blue!5] (-2,3.7) ellipse (0.618 and 0.382); 
\filldraw[color=blue, fill=blue!5] (0.5,3.7) ellipse (0.618 and 0.382); 
\filldraw[color=blue, fill=blue!5] (3,3.7) ellipse (0.618 and 0.382); 
\filldraw[color=blue, fill=blue!5] (5.5,3.7) ellipse (0.618 and 0.382); 
\filldraw[color=blue, fill=blue!5] (3,2) ellipse (0.618 and 0.382); 
\filldraw[color=blue, fill=blue!5] (0.5,2) ellipse (0.618 and 0.382); 
\filldraw[color=blue, fill=blue!5] (5.5,2) ellipse (0.618 and 0.382); 
\filldraw[color=blue, fill=blue!5] (0.5,0) ellipse (0.618 and 0.382); 
\filldraw[color=blue, fill=blue!5] (5.5,0) ellipse (0.618 and 0.382); 
\filldraw[color=blue, fill=blue!5] (8,3.7) ellipse (0.618 and 0.382); 
\filldraw[color=blue, fill=blue!5]  (2.1,5.05) rectangle (3.9,6.15); 
\filldraw[color=blue, fill=blue!5]   (3,7.5) ellipse (0.618 and 0.382); 
\filldraw[color=blue, fill=blue!5]   (3,9) ellipse (0.618 and 0.382); 
\filldraw[color=blue, fill=blue!5]   (3,10.5) ellipse (0.618 and 0.382); 

\draw[color=black] (-3.2,0) node {$R_i$};
\draw[color=black] (9.2,0) node {$R_j$};
\draw[color=black] (-3.5,2) node {$S_i$};
\draw[color=black] (9.5,2) node {$S_j$};
\draw[color=black] (-3.2,3.7) node {$T_i$};
\draw[color=black] (9.2,3.7) node {$T_j$};
\draw[color=black] (2,4.7) node {$U_{ij}$};
\draw[color=black] (2,3.7) node {$B_{ij}$};
\draw[color=black] (2.3,1.5) node {$K$};
\draw[color=black] (1.7,7.5) node {$R_{ij}$};
\draw[color=black] (1.7,7.5+1.5) node {$A_{ij}$};
\draw[color=black] (1.7,7.5+3) node {$N$};

\draw[color=black] (-0.3,4.1) node {$D_{i}$};
\draw[color=black] (-0.3,4.1-1.7) node {$B_{i}$};
\draw[color=black] (-0.3,0.4) node {$A_{i}$};

\draw[color=black] (6+0.3,4.1) node {$D_{j}$};
\draw[color=black] (6+0.3,4.1-1.7) node {$B_{j}$};
\draw[color=black] (6+0.3,0.4) node {$A_{j}$};


\draw [line width=0.5mm] (3,9.382) -- (3,10.118);
\draw [line width=0.5mm] (3,9.382-1.5) -- (3,10.118-1.5);
\draw [line width=0.5mm] (3,6.15) -- (3,10.118-3);
\draw [line width=0.5mm] (3,3.7+0.382) -- (3,5.05);
\draw [line width=0.5mm] (3,2.382) -- (3,10.118-8.5+1.7);

\draw [line width=0.5mm] (0.618-2,3.7) -- (0.5-0.618,3.7);
\draw [line width=0.5mm] (0.618-2+7.5,3.7) -- (0.5-0.618+7.5,3.7);

\draw [line width=0.5mm] (0.8-2,2) -- (0.5-0.618,2);
\draw [line width=0.5mm] (0.618-2+7.5,2) -- (7.2,2);

\draw [line width=0.5mm] (0.618-2,0) -- (0.5-0.618,0);
\draw [line width=0.5mm] (0.618-2+7.5,0) -- (0.5-0.618+7.5,0);

\draw [line width=0.5mm] (0.618-2+2.5,3.7) -- (3-0.618,2);
\draw [line width=0.5mm] (0.618-2+2.5,2) -- (3-0.618,2);
\draw [line width=0.5mm] (0.618-2+2.5,0) -- (3-0.618,2);

\draw [line width=0.5mm] (0.618-2+5,2) -- (0.5-0.618+5,3.7);
\draw [line width=0.5mm] (0.618-2+5,2) -- (0.5-0.618+5,2);
\draw [line width=0.5mm] (0.618-2+5,2) -- (0.5-0.618+5,0);

\draw [line width=0.5mm] (-2,3.7+0.382) -- (2.1,5.6);
\draw [line width=0.5mm] (8,3.7+0.382) -- (3.9,5.6);

\draw [line width=0.5mm] (-2,3.7-0.382) -- (-2,2.5);
\draw [line width=0.5mm] (-2,0.382) -- (-2,1.5);

\draw [line width=0.5mm] (8,3.7-0.382) -- (8,2.5);
\draw [line width=0.5mm] (8,0.382) -- (8,1.5);

\draw [line width=0.5mm] (-2,3.7+0.382) -- (-2,3.7+0.382+0.5);
\draw [line width=0.5mm] (-2,3.7+0.382) -- (-2.3,3.7+0.382+0.5);
\draw [line width=0.5mm] (-2,3.7+0.382) -- (-2.6,3.7+0.382+0.5);

\draw [line width=0.5mm] (8,3.7+0.382) -- (6+2,3.7+0.382+0.5);
\draw [line width=0.5mm] (8,3.7+0.382) -- (6+2.3,3.7+0.382+0.5);
\draw [line width=0.5mm] (8,3.7+0.382) -- (6+2.6,3.7+0.382+0.5);

\draw [line width=0.5mm] (3,2-0.382) -- (3-0.3,2-0.382-0.5);
\draw [line width=0.5mm] (3,2-0.382) -- (3,2-0.382-0.5);
\draw [line width=0.5mm] (3,2-0.382) -- (3+0.3,2-0.382-0.5);

\draw [line width=0.5mm] (2.5,10.3) -- (2.5,9.8);
\draw [line width=0.5mm] (2.5,10.3) -- (2.5-0.3,9.8);
\draw [line width=0.5mm] (2.5,10.3) -- (2.5-0.6,9.8);

\draw [line width=0.5mm] (3.5,10.3) -- (3.5,9.8);
\draw [line width=0.5mm] (3.5,10.3) -- (3.5+0.3,9.8);
\draw [line width=0.5mm] (3.5,10.3) -- (3.5+0.6,9.8);

\fill [color=black] (-2.4,0) circle (1pt);
\fill [color=black] (-2.2,0) circle (1pt);
\fill [color=black] (-2,0) circle (0.4pt);
\fill [color=black] (0.1-2,0) circle (0.4pt);
\fill [color=black] (0.2-2,0) circle (0.4pt);
\fill [color=black] (0.4-2,0) circle (1pt);

\fill [color=black] (6-0.4+2,0) circle (1pt);
\fill [color=black] (6-0.2+2,0) circle (1pt);
\fill [color=black] (6+2,0) circle (0.4pt);
\fill [color=black] (6.1+2,0) circle (0.4pt);
\fill [color=black] (8.2,0) circle (0.4pt);
\fill [color=black] (8.4,0) circle (1pt);

\fill [color=black] (-0.4-2,2.4) circle (1pt);
\fill [color=black] (-0.7-2,2.1) circle (1pt);
\draw (-0.4-2,2.4) -- (-0.7-2,2.1);
\fill [color=black] (-0.56-2,2.1) circle (1pt);
\draw (-0.4-2,2.4) -- (-0.56-2,2.1);
\fill [color=black] (-0.43-2,2.1) circle (0.4pt);
\fill [color=black] (-0.32-2,2.1) circle (0.4pt);
\fill [color=black] (-0.21-2,2.1) circle (0.4pt);
\fill [color=black] (-0.1-2,2.1) circle (1pt);
\draw (-0.4-2,2.4) -- (-0.1-2,2.1);

\fill [color=black] (0.4-2,2.4) circle (1pt);
\fill [color=black] (0.1-2,2.1) circle (1pt);
\draw (0.4-2,2.4) -- (0.1-2,2.1);
\fill [color=black] (0.26-2,2.1) circle (1pt);
\draw (0.4-2,2.4) -- (0.26-2,2.1);
\fill [color=black] (0.37-2,2.1) circle (0.4pt);
\fill [color=black] (0.48-2,2.1) circle (0.4pt);
\fill [color=black] (0.59-2,2.1) circle (0.4pt);
\fill [color=black] (0.7-2,2.1) circle (1pt);
\draw (0.4-2,2.4) -- (0.7-2,2.1);

\fill [color=black] (-0.29-2,1.7) circle (0.4pt);
\fill [color=black] (-0.4-2,1.7) circle (0.4pt);
\fill [color=black] (-0.51-2,1.7) circle (0.4pt);

\fill [color=black] (0.4-2,1.9) circle (1pt);
\fill [color=black] (0.1-2,1.6) circle (1pt);
\draw (0.4-2,1.9) -- (0.1-2,1.6);
\fill [color=black] (0.26-2,1.6) circle (1pt);
\draw (0.4-2,1.9) -- (0.26-2,1.6);
\fill [color=black] (0.37-2,1.6) circle (0.4pt);
\fill [color=black] (0.48-2,1.6) circle (0.4pt);
\fill [color=black] (0.59-2,1.6) circle (0.4pt);
\fill [color=black] (0.7-2,1.6) circle (1pt);
\draw (0.4-2,1.9) -- (0.7-2,1.6);

\fill [color=black] (5.6+2,2.4) circle (1pt);
\fill [color=black] (5.3+2,2.1) circle (1pt);
\draw (5.6+2,2.4) -- (5.3+2,2.1);
\fill [color=black] (6-0.56+2,2.1) circle (1pt);
\draw (5.6+2,2.4) -- (6-0.56+2,2.1);
\fill [color=black] (6-0.43+2,2.1) circle (0.4pt);
\fill [color=black] (6-0.32+2,2.1) circle (0.4pt);
\fill [color=black] (6-0.21+2,2.1) circle (0.4pt);
\fill [color=black] (6-0.1+2,2.1) circle (1pt);
\draw (5.6+2,2.4) -- (5.9+2,2.1);

\fill [color=black] (6.4+2,2.4) circle (1pt);
\fill [color=black] (6.1+2,2.1) circle (1pt);
\draw (6.4+2,2.4) -- (6.1+2,2.1);
\fill [color=black] (6.26+2,2.1) circle (1pt);
\draw (6.4+2,2.4) -- (6.26+2,2.1);
\fill [color=black] (6.37+2,2.1) circle (0.4pt);
\fill [color=black] (6.48+2,2.1) circle (0.4pt);
\fill [color=black] (6.59+2,2.1) circle (0.4pt);
\fill [color=black] (6.7+2,2.1) circle (1pt);
\draw (6.4+2,2.4) -- (6.7+2,2.1);

\fill [color=black] (6-0.29+2,1.7) circle (0.4pt);
\fill [color=black] (6-0.4+2,1.7) circle (0.4pt);
\fill [color=black] (6-0.51+2,1.7) circle (0.4pt);

\fill [color=black] (6.4+2,1.9) circle (1pt);
\fill [color=black] (6.1+2,1.6) circle (1pt);
\draw (6.4+2,1.9) -- (6.1+2,1.6);
\fill [color=black] (6.26+2,1.6) circle (1pt);
\draw (6.4+2,1.9) -- (6.26+2,1.6);
\fill [color=black] (6.37+2,1.6) circle (0.4pt);
\fill [color=black] (6.48+2,1.6) circle (0.4pt);
\fill [color=black] (6.59+2,1.6) circle (0.4pt);
\fill [color=black] (6.7+2,1.6) circle (1pt);
\draw (6.4+2,1.9) -- (6.7+2,1.6);

\fill [color=black] (-0.4-2,3.7) circle (1pt);
\fill [color=black] (-0.2-2,3.7) circle (1pt);
\fill [color=black] (0-2,3.7) circle (0.4pt);
\fill [color=black] (0.1-2,3.7) circle (0.4pt);
\fill [color=black] (0.2-2,3.7) circle (0.4pt);
\fill [color=black] (0.4-2,3.7) circle (1pt);

\fill [color=black] (-0.4+0.5,3.7) circle (1pt);
\fill [color=black] (-0.2+0.5,3.7) circle (1pt);
\fill [color=black] (0+0.5,3.7) circle (0.4pt);
\fill [color=black] (0.1+0.5,3.7) circle (0.4pt);
\fill [color=black] (0.2+0.5,3.7) circle (0.4pt);
\fill [color=black] (0.4+0.5,3.7) circle (1pt);

\fill [color=black] (-0.4+0.5,2) circle (1pt);
\fill [color=black] (-0.2+0.5,2) circle (1pt);
\fill [color=black] (0+0.5,2) circle (0.4pt);
\fill [color=black] (0.1+0.5,2) circle (0.4pt);
\fill [color=black] (0.2+0.5,2) circle (0.4pt);
\fill [color=black] (0.4+0.5,2) circle (1pt);

\fill [color=black] (-0.4+0.5,0) circle (1pt);
\fill [color=black] (-0.2+0.5,0) circle (1pt);
\fill [color=black] (0+0.5,0) circle (0.4pt);
\fill [color=black] (0.1+0.5,0) circle (0.4pt);
\fill [color=black] (0.2+0.5,0) circle (0.4pt);
\fill [color=black] (0.4+0.5,0) circle (1pt);

\fill [color=black] (-0.4+3,3.7) circle (1pt);
\fill [color=black] (-0.2+3,3.7) circle (1pt);
\fill [color=black] (0+3,3.7) circle (0.4pt);
\fill [color=black] (0.1+3,3.7) circle (0.4pt);
\fill [color=black] (0.2+3,3.7) circle (0.4pt);
\fill [color=black] (0.4+3,3.7) circle (1pt);

\fill [color=black] (-0.4+3,9) circle (1pt);
\fill [color=black] (-0.2+3,9) circle (1pt);
\fill [color=black] (0+3,9) circle (0.4pt);
\fill [color=black] (0.1+3,9) circle (0.4pt);
\fill [color=black] (0.2+3,9) circle (0.4pt);
\fill [color=black] (0.4+3,9) circle (1pt);

\fill [color=black] (-0.4+3,10.5) circle (1pt);
\fill [color=black] (-0.2+3,10.5) circle (1pt);
\fill [color=black] (0+3,10.5) circle (0.4pt);
\fill [color=black] (0.1+3,10.5) circle (0.4pt);
\fill [color=black] (0.2+3,10.5) circle (0.4pt);
\fill [color=black] (0.4+3,10.5) circle (1pt);

\fill [color=black] (-0.4+3,2) circle (1pt);
\fill [color=black] (-0.2+3,2) circle (1pt);
\fill [color=black] (0+3,2) circle (0.4pt);
\fill [color=black] (0.1+3,2) circle (0.4pt);
\fill [color=black] (0.2+3,2) circle (0.4pt);
\fill [color=black] (0.4+3,2) circle (1pt);

\fill [color=black] (-0.4+5.5,3.7) circle (1pt);
\fill [color=black] (-0.2+5.5,3.7) circle (1pt);
\fill [color=black] (0+5.5,3.7) circle (0.4pt);
\fill [color=black] (0.1+5.5,3.7) circle (0.4pt);
\fill [color=black] (0.2+5.5,3.7) circle (0.4pt);
\fill [color=black] (0.4+5.5,3.7) circle (1pt);

\fill [color=black] (-0.4+5.5,2) circle (1pt);
\fill [color=black] (-0.2+5.5,2) circle (1pt);
\fill [color=black] (0+5.5,2) circle (0.4pt);
\fill [color=black] (0.1+5.5,2) circle (0.4pt);
\fill [color=black] (0.2+5.5,2) circle (0.4pt);
\fill [color=black] (0.4+5.5,2) circle (1pt);

\fill [color=black] (-0.4+5.5,0) circle (1pt);
\fill [color=black] (-0.2+5.5,0) circle (1pt);
\fill [color=black] (0+5.5,0) circle (0.4pt);
\fill [color=black] (0.1+5.5,0) circle (0.4pt);
\fill [color=black] (0.2+5.5,0) circle (0.4pt);
\fill [color=black] (0.4+5.5,0) circle (1pt);

\fill [color=black] (6-0.4+2,3.7) circle (1pt);
\fill [color=black] (6-0.2+2,3.7) circle (1pt);
\fill [color=black] (6+2,3.7) circle (0.4pt);
\fill [color=black] (6.1+2,3.7) circle (0.4pt);
\fill [color=black] (6.2+2,3.7) circle (0.4pt);
\fill [color=black] (6.4+2,3.7) circle (1pt);

\fill [color=black] (3-0.4,2.6+2.4+1) circle (1pt);
\fill [color=black] (3-0.7,2.6+2.1+1) circle (1pt);
\draw (3-0.4,2.6+2.4+1) -- (3-0.7,2.6+2.1+1);
\fill [color=black] (3-0.56,2.6+2.1+1) circle (1pt);
\draw (3-0.4,2.6+2.4+1) -- (3-0.56,2.6+2.1+1);
\fill [color=black] (3-0.43,2.6+2.1+1) circle (0.4pt);
\fill [color=black] (3-0.32,2.6+2.1+1) circle (0.4pt);
\fill [color=black] (3-0.21,2.6+2.1+1) circle (0.4pt);
\fill [color=black] (3-0.1,2.6+2.1+1) circle (1pt);
\draw (3-0.4,2.6+2.4+1) -- (3-0.1,2.6+2.1+1);

\fill [color=black] (3.4,2.6+2.4+1) circle (1pt);
\fill [color=black] (3.1,2.6+2.1+1) circle (1pt);
\draw (3.4,2.6+2.4+1) -- (3.1,2.6+2.1+1);
\fill [color=black] (3.26,2.6+2.1+1) circle (1pt);
\draw (3.4,2.6+2.4+1) -- (3.26,2.6+2.1+1);
\fill [color=black] (3.37,2.6+2.1+1) circle (0.4pt);
\fill [color=black] (3.48,2.6+2.1+1) circle (0.4pt);
\fill [color=black] (3.59,2.6+2.1+1) circle (0.4pt);
\fill [color=black] (3.7,2.6+2.1+1) circle (1pt);
\draw (3.4,2.6+2.4+1) -- (3.7,2.6+2.1+1);

\fill [color=black] (3-0.29,2.6+1.7+1) circle (0.4pt);
\fill [color=black] (3-0.4,2.6+1.7+1) circle (0.4pt);
\fill [color=black] (3-0.51,2.6+1.7+1) circle (0.4pt);

\fill [color=black] (3.4,2.6+1.9+1) circle (1pt);
\fill [color=black] (3.1,2.6+1.6+1) circle (1pt);
\draw (3.4,2.6+1.9+1) -- (3.1,2.6+1.6+1);
\fill [color=black] (3.26,2.6+1.6+1) circle (1pt);
\draw (3.4,2.6+1.9+1) -- (3.26,2.6+1.6+1);
\fill [color=black] (3.37,2.6+1.6+1) circle (0.4pt);
\fill [color=black] (3.48,2.6+1.6+1) circle (0.4pt);
\fill [color=black] (3.59,2.6+1.6+1) circle (0.4pt);
\fill [color=black] (3.7,2.6+1.6+1) circle (1pt);
\draw (3.4,2.6+1.9+1) -- (3.7,2.6+1.6+1);

\fill [color=black] (3-0.4,6.5+1) circle (1pt);
\fill [color=black] (3-0.2,6.5+1) circle (1pt);
\fill [color=black] (3,6.5+1) circle (0.4pt);
\fill [color=black] (3.1,6.5+1) circle (0.4pt);
\fill [color=black] (3.2,6.5+1) circle (0.4pt);
\fill [color=black] (3.4,6.5+1) circle (1pt);

\end{tikzpicture}
\caption{Illustration of the construction of $H$. Factors $S_i$, $S_j$, and $U_{ij}$ are degree retention graphs if $\alpha = 0$, and are degree deletion graphs if $\alpha \in (0,1]$. All other factors are edgeless graphs (including the null graphs). Each thick line between two factors means that the two linked factors are adjacent. The thick lines on factors $T_i$, $T_j$, $K$, and $N$ mean that they are adjacent to some other factors of $H$ that are not illustrated in this figure.}
\label{fig:a-b-domi-reduction-graph}
\end{figure}
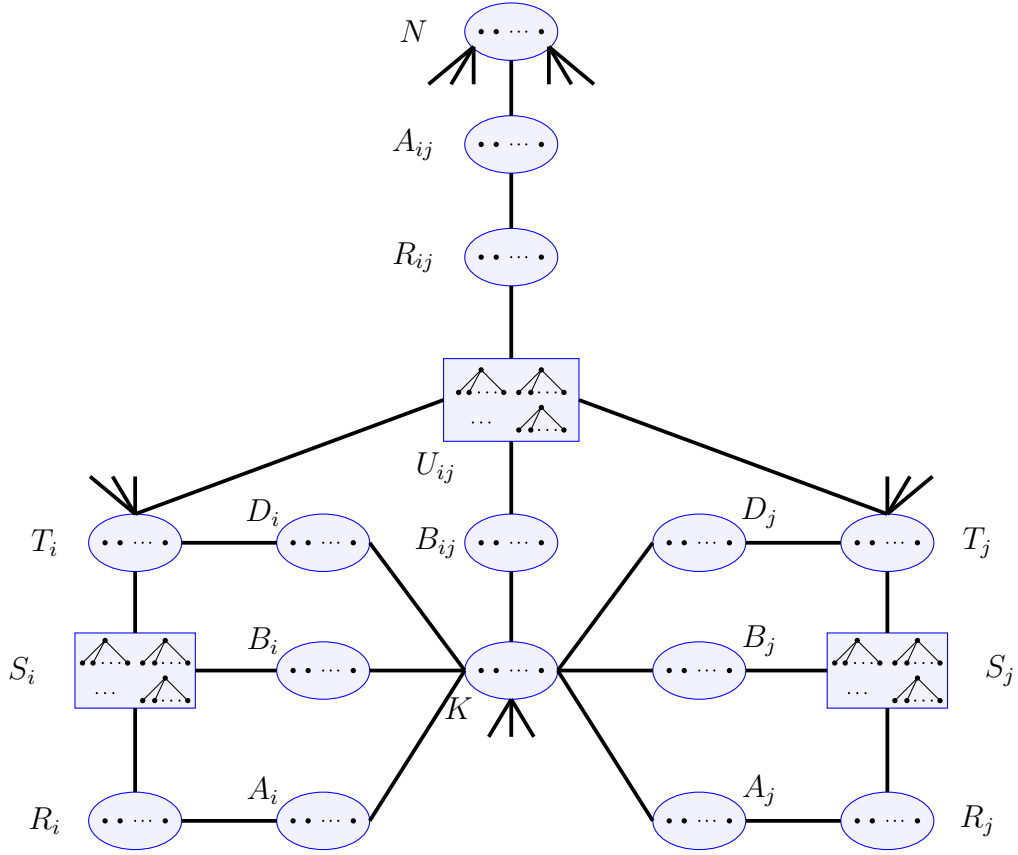

To avoid cumbersome notation, for each $(\alpha, \beta)$, we define $U_{ij} = U_{ji}$, $R_{ij} = R_{ji}$, $c_{ij} = c_{ji}$,  $B_{ij} = B_{ji}$, and $A_{ij} = A_{ji}$. This concludes the construction of $H$.


\begin{table}[]
\begin{center} 
\begin{tabular}{|l|l|l|l|}
\hline
& Case 1    & Case 2       & Case 3              \\ \hline

Variable   &  $\alpha = 0$  &  $\alpha \in (0, 1)$     &   $\alpha = 1$    \\ \hline
 
$|V(S_i)| \leq $   &  $ n^{210} $  &  $(10a_1)^{1000}$     & $ n^{240}$     \\ \hline
 
$|V(U_{ij})| \leq $ & $ n^{211} $ & $(10a_1)^{1000}$  & $ n^{240}$      \\ \hline
\end{tabular}
\caption{The bounds of the vertex numbers of $S_i$ and $U_{ij}$ for distinct $i,j \in [k]$.}
\label{table: size of si and uij}
\end{center}
\end{table}



The $S_i$ and $U_{ij}$ factors form the main components of the reduction, and we first provide bounds on their size (vertex number).

Consider $i \in [k]$ and the $S_i$ factor. According to definition \ref{def:kd-degdeletion-bdd-121}, we have that the size of $S_i$ is at most $(a_0c(a_n - 1))^{10}= (a_0)^{20}$ if $\alpha \in (0,1]$. More specifically, if $\alpha = 1$, then $(a_0)^{20} < (n^{12})^{20} =  n^{240}$ based on Table \ref{table:reduction-values-relationship}. If $\alpha \in (0,1)$, then $(a_0)^{20} = ((10a_1)^{50})^{20} = (10a_1)^{1000}$  based on Table \ref{table:reduction-values}.  
Suppose that $\alpha =0$. According to definition \ref{def:kd-degdeletion}, the size of $S_i$ is at most $(a_0c(a_0))^{10}$, which is $(a_0ks)^{10}\leq (n^{10}kn^{10})^{10} \leq n^{210}$ based on Table \ref{table:reduction-values} and Table \ref{table:reduction-values-relationship}. In addition, according to Definition \ref{def:kd-degdeletion}, we have $p< (a_0c(a_0))^{10}$, which is at most $n^{210}$. So, the restriction that $p< n^{210}$ does not affect the definition of $S_i$.

We next bound the size of each $U_{ij}$ as follows.   According to definition \ref{def:kd-degdeletion-bdd-121}, we have that the size of $U_{ij}$ is at most $(a_0c_{ij}(\ell_{ij} - 1))^{10}= (a_0)^{20}$ if $\alpha \in (0,1]$. Moreover, we have already known that $(a_0)^{20} < n^{240}$ if $\alpha = 1$, and that $(a_0)^{20} = (10a_1)^{1000}$ if $\alpha \in (0,1)$. Assume that $\alpha =0$. According to definition \ref{def:kd-degdeletion},  the size of the $(2a_1 + 1, I_{ij}, c_{ij})$-degree retention graph above $(p_{ij},l)$ is at most $(a_0c_{ij}(2a_1+1))^{10} = (a_0ks)^{10}$, which has been proved to be at most $n^{210}$. Additionally, according to Definition \ref{def:kd-degdeletion}, we have $p< (a_0c(a_0))^{10}$, which is at most $n^{210}$. So, the restriction that $p< n^{210}$ does not affect the definition of $U_{ij}$. Furthermore, since $p'= n^{210}$ and $p_{ij}$ is a positive integer according to Definition \ref{def:kd-degdeletion}, we have
\begin{align*}
p'_{ij} & =  p'  - p_{ij} + \frac{1}{2(k-1)}(l_{ij} - 2a_n) < p'  + \frac{1}{2(k-1)}(l_{ij} - 2a_n)  \\
&< p'  + \frac{1}{2(k-1)}2a_1 < n^{210}  + a_1 < n^{210}  + n^{10} < 2n^{210}.
\end{align*}
Therefore, the size of $U_{ij}$ is at most $p'_{ij} + (a_0c(2a_1+1))^{10} <  2n^{210} + n^{210} < n^{211}$. For convenience, we summarize these bounds in Table \ref{table: size of si and uij}.



As we argued, the set $I$ can be constructed in polynomial time with maximum value $O(rn^4)$ and, by the definition of admitting arbitrary degree deletion or retention tables, the $S_i$ and $U_{ij}$ factors can also be built in polynomial time. Moreover, all other factors are edgeless graphs with sizes polynomial in $n+k$. Thus $H$ can be constructed in polynomial time.
All that remains is to argue that the objective instance is equivalent to the original one. Lemma \ref{ldlem:first-direction} and \ref{lem:ld-second-dir}  will give the forward and reverse directions, respectively.

\begin{lemma}\label{ldlem:first-direction}
Suppose that $G$ contains a multicolored clique $C$ with $k$ vertices. Then there is $X \subseteq V(H)$ of size at most $q$ such that $|N(v)\cap X| \geq \alpha |N(v)| + \beta$ for every $v \in V(H) \setminus X$.
\end{lemma}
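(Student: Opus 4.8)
## Plan for Proving Lemma \ref{ldlem:first-direction}

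The plan is to give an explicit construction of the deletion/retention set $X \subseteq V(H)$ from the multicolored clique $C$, then verify the size bound $|X| \le q$ and the linear degree inequality for every vertex outside $X$. For each $i \in [k]$, write $\hat a_i \in I$ for the value with $f_i^{-1}(\hat a_i) \in C$; since $C$ is a clique, for every $i \ne j$ the symmetry edges $f_i^{-1}(\hat a_i) f_j^{-1}(\hat a_j)$ and $f_i^{-1}(\hat a_j) f_j^{-1}(\hat a_i)$ lie in $E(G)$, hence $\hat a_i + \hat a_j \in I_{ij}$, so $U_{ij}$ can be ``tuned'' to degree $\hat a_i + \hat a_j$. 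Because the three cases of the theorem use the same modular skeleton but different factor internals (deletion graphs for $\alpha \in (0,1]$, retention graphs for $\alpha = 0$, with the cost functions and the variables $\beta, s, l, a_0, t, q, q_1, q_2$ taken from Tables \ref{table:reduction-values}, \ref{table:reduction-values-q1-q2}), I would present one construction parameterized by these table entries and split into the three cases only where the arithmetic genuinely differs.

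First I would describe $X$. The factors $N, K$ and all the $A_\bullet, B_\bullet, D_\bullet, R_\bullet, R_{ij}$ factors contribute nothing to $X$ (the "buffer" edgeless factors are there precisely so that the degree budget is absorbed). Into each $T_i$ we put $\hat a_i$ vertices (for $\alpha = 0$ we instead follow the retention-graph accounting: add internal vertices and leaves to reach the chosen level). Into each $S_i$ we put $c(\hat a_i)$ vertices so that $S_i - X$ has maximum degree $\hat a_i$ (for $\alpha = 0$, size $p + c(\hat a_i)$ so that $\delta(S_i - X, X) = \hat a_i$), which is possible by condition 2 of Definition \ref{def:kd-degdeletion-bdd-121} (resp. Definition \ref{def:kd-degdeletion}). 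Into each $U_{ij}$ we put $c_{ij}(\hat a_i + \hat a_j)$ vertices (resp. the retention analog) so that $U_{ij} - X$ has the matching degree. The bulk of the work is the arithmetic: summing $\sum_i \hat a_i$ over $T_i$, $\sum_i c(\hat a_i)$ over $S_i$, $\sum_{i<j} c_{ij}(\hat a_i + \hat a_j)$ over $U_{ij}$, and (for $\alpha = 0$) the $p, p_{ij}, p'_{ij}$ terms, and checking that the $-\frac{\alpha}{2}\hat a_j$ / $-\frac{1}{k-1}(\lceil \frac{\alpha \hat a_i}{2}\rceil + \lceil\frac{\alpha \hat a_j}{2}\rceil)$ corrections in the cost functions cancel the $T_i$ contributions exactly — this is the point of choosing $c(a_j) = t - \lceil \frac{\alpha}{2} a_j\rceil$ and $c_{ij}(a+b) = t - \frac{1}{k-1}(\lceil \frac{\alpha a}{2}\rceil + \lceil \frac{\alpha b}{2}\rceil)$, and of multiplying $I$ by $r$ so these are integers. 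The total should come out to exactly $q = q_1 + q_2$ as defined in Table \ref{table:reduction-values-q1-q2}.

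Then I would verify the inequality $|N(v) \cap X| \ge \alpha|N(v)| + \beta$ for each vertex $v \notin X$, factor by factor. For $v$ in an edgeless buffer factor (say $R_i$), its neighborhood is the union of the adjacent factors ($S_i, A_i$, etc.), and one checks that the number of deleted neighbors (all of $S_i \cap X$, which is $c(\hat a_i)$ or so) against $\alpha$ times the full neighborhood size plus $\beta$ works out, using that $|R_i| = \lfloor |\beta| - s - (1-\alpha)l\rfloor$ was chosen to make this tight. For $v \in S_i - X$ (the interesting case), $v$ has $\Delta(S_i - X) = \hat a_i$ undeleted neighbors inside $S_i$ plus its neighbors in $R_i, B_i, T_i$; for $\alpha = 1$ this reads $|N(v) \cap X| = |N(v)| - \hat a_i \ge |N(v)| + \beta$, i.e. $\hat a_i \le |\beta|$, which holds since $\hat a_i \le a_1 < |\beta|$. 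The analogous checks for $v$ in $U_{ij}, T_i, A_{ij}, N, K$, etc., are similar bookkeeping against the sizes of $N$ and $K$, which were set to $\lfloor \alpha(1-\alpha)m + 2(1-\alpha)(s+l)\rfloor$ and $\lfloor \alpha(1-\alpha)m + (1-\alpha)(s+l)\rfloor$ precisely to balance these inequalities for the $\alpha \in (0,1)$ case. The main obstacle I anticipate is not conceptual but the sheer volume of case-dependent arithmetic — keeping track of floors and ceilings, verifying each of the roughly ten families of factors in each of three cases, and confirming that every inequality is satisfied with the budget landing at exactly $q$; I would organize this as a short claim per factor type, and for the detailed computation defer to the appendix statement (case 2 of Lemma \ref{ldlem:first-direction} is the version written out in full there, the other cases being strictly easier since $s = 0$ or the buffer sizes simplify).
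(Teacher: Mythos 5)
Your high-level sketch (tune each $S_i$ to degree $\hat a_i$, each $U_{ij}$ to $\hat a_i + \hat a_j$, delete the right numbers in the $T_i$'s, then compute) does track the paper's intent, but your description of which factors enter $X$ contains a concrete error that would make the verification step fail outright.

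You state that ``$N, K$ and all the $A_\bullet, B_\bullet, D_\bullet, R_\bullet, R_{ij}$ factors contribute nothing to $X$,'' treating them as passive padding. In fact, in each case of the theorem roughly half of these buffer factors must be \emph{entirely} placed in $X$, and which half depends on $\alpha$: for $\alpha = 0$ the paper deletes all of $K, N, R_i, D_i, R_{ij}$ (and leaves $A_i, A_{ij}$ untouched), whereas for $\alpha \in (0,1]$ it deletes all of $A_i, B_i, D_i, A_{ij}, B_{ij}$ (and leaves $K, N, R_i, R_{ij}$ untouched). This is not optional: for $\alpha = 0$ a vertex $v \in A_i$ has $N(v) = V(K) \cup V(R_i)$, and if neither $K$ nor $R_i$ is deleted then $|N(v) \cap X| = 0 < \beta = (nk)^{50}$, so the inequality fails for $v$. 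Symmetrically, for $\alpha \in (0,1)$ a vertex $v \in K$ has $N(v) \supseteq A_i$; in the paper's construction $N(v) \subseteq X$ so the inequality reads $|N(v)| \geq \alpha|N(v)| + \beta$ and holds because $|N(v)|$ is huge, but with your $X$ it reads $0 \geq \alpha|N(v)| + \beta$ and fails. The buffer factors are not there to absorb slack; their deletion is what \emph{supplies} the dominated neighbors for the factors adjacent to them. Your plan is also internally inconsistent on the budget: $q_1$ is defined (Table~\ref{table:reduction-values-q1-q2}) as precisely the total size of the buffer factors the paper deletes, so if none of them go into $X$ you cannot ``land at exactly $q = q_1 + q_2$'' — you would only spend $q_2$, and the verification would fail for the reasons above.

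A smaller but related slip: the number of vertices to delete from $T_i$ is $w_i = \hat a_i$ only in the case $\alpha = 1$. For $\alpha \in (0,1)$ the paper uses $w_i = 2(|\beta| + \lceil\tfrac{\alpha}{2}\hat a_i\rceil)$, and for $\alpha = 0$ it uses $w_i = s + l - \hat a_i$; with your $w_i = \hat a_i$ the ceiling corrections in $c$ and $c_{ij}$ would not cancel against $\sum_i w_i$, so the total would not come out to $q_2$ independently of the chosen $\hat a_i$'s, which is the whole point of those cost-function corrections.
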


\begin{proof}
For $i \in [k]$, define $\ha_i$ as the unique element of $I$ such that $f^{-1}_{i}(\ha_i)$ is the vertex of $V^i$ that belongs to the multicolored clique $C$. Then for any distinct $i, j \in [k]$, $f^{-1}_{i}(\ha_i)  f^{-1}_{j}(\ha_j) \in E(G)$.  
This implies that $\ha_i + \ha_j$ is in the $I_{ij}$ list that was used to construct $U_{ij}$.  
We divide the proof into two cases, which is $\alpha = 0$ and $0< \alpha \leq  1$.

\textit{Case 1:} consider $\alpha = 0$. Recall that $\beta = (nk)^{50}$ in this case.  Define $w_i = s + l - \ha_i$.  Note that for any $a_i \in I$, $0\leq l < r \leq a_i \leq a_1 < n^{10} = s$ based on Table \ref{table:reduction-values} and Table \ref{table:reduction-values-relationship}.  Therefore, we have $0 < w_i < s$.
First, we show how to construct the vertex set $X$ for $H$.
For all $i\neq j$, the sets $B_i$, $B_{ij}$ are empty sets, so we do not need to consider them.
Add all vertices of $K$ and $N$ to $X$.
Let $i \in [k]$. Add all vertices of $R_i$ and $D_i$ to $X$, and the intersection of $A_i$ and $X$ is empty. 
In $S_i$, we add $p + c(\ha_i) = p+ \frac{1}{2}(\ha_i- a_n)$ vertices to $X$.  This can be done so that $S_i - X$ has minimum degree $\ha_i$ in $X\cap S_i$ according to Condition 2 of Definition \ref{def:kd-degdeletion}. For each $T_i$, add any $w_i$ of its vertices to $X$, which is feasible since $0 < w_i < s$ and $|T_i| = s$. 
Next consider distinct $i, j \in [k]$.  Add all vertices of $R_{ij}$ to $X$, and make the intersection of $A_{ij}$ and $X$ empty.
For $U_{ij}$, we first add all $p'_{ij}$ vertices of its edgeless graph to $X$. Then, we add $p_{ij} + c_{ij}(\ha_i + \ha_j)$ vertices to $X$ so that  $U_{ij} - X$ has minimum degree $\ha_i + \ha_j$ in $X\cap U_{ij}$, again according to Condition 2 of Definition \ref{def:kd-degdeletion}. 
We claim that $X$ is a solution to the instance $H$.

Secondly, we bound the size of $X$.
The number of elements of $X$ from the $S_i$'s is 
\begin{align*}
    \sum_{i \in [k]} |S_i \cap X| = \sum_{i \in [k]}\left(p+ \frac{1}{2}(\ha_i-a_n)\right) = kp -\frac{1}{2} k a_n  + \frac{1}{2} \sum_{i \in [k]} \ha_i.
\end{align*}

Next consider distinct $i, j \in [k]$.
Recall that we defined $p'_{ij}=p' - p_{ij} + \frac{1}{2(k-1)}(l_{ij} - 2a_n)$. Thus
\begin{align*}
    |U_{ij} \cap X| &= p'_{ij}+ p_{ij} + c_{ij}(\ha_i + \ha_j)\\
    &= p' - p_{ij} + \frac{1}{2(k-1)}(l_{ij} - 2a_n) + p_{ij} +  \frac{1}{2(k-1)}(\ha_i + \ha_j - l_{ij})\\
    &= p' + \frac{1}{2(k-1)}(\ha_i + \ha_j - 2a_n). 
\end{align*}
The number of elements of $X$ from the $U_{ij}$'s is
\begin{align*}
\sum_{1 \leq i < j \leq k} |U_{ij} \cap X| &= \sum_{1 \leq i < j \leq k}\left(p' + \frac{1}{2(k-1)}(\ha_i + \ha_j - 2a_n)\right)\\
&= \binom{k}{2}p' - \binom{k}{2} \frac{1}{2(k-1)} 2a_n + \sum_{i \in [k]}(k - 1) \frac{1}{2(k-1)} \ha_i  \\
&= \binom{k}{2}p' -\frac{1}{2} k a_n + \frac{1}{2} \sum_{i \in [k]} \ha_i,
\end{align*}
where the equality on the second line is because each element of $\ha_1, \ldots, \ha_k$ appears exactly $k - 1$ times in the summation.
In addition, the size of the intersection of $X$ and each $T_{i}$ is $w_i = s+l-a_i$. So the size of the intersection of $X$ and all $U_{ij}$, $S_{i}$, and $T_{i}$ factors is
\begin{align*}
&~kp -\frac{1}{2} k a_n  + \frac{1}{2} \sum_{i \in [k]} \ha_i + \binom{k}{2}p' -\frac{1}{2} k a_n + \frac{1}{2} \sum_{i \in [k]} \ha_i +  \sum_{i \in [k]} w_i\\
=&~kp + \binom{k}{2}p' - k a_n  + \sum_{i \in [k]} \ha_i + \sum_{i \in [k]} \left(s+l-\ha_i\right)\\
=&~kp + \binom{k}{2}p' - k a_n  + \sum_{i \in [k]} \left(s+l \right)\\
=&~kp + \binom{k}{2}p'  + k \left(s + l - a_n  \right)= q_2.
\end{align*}

In addition, the number of vertices in $X$, other than those from the $S_i, T_i$, and $U_{ij}$ factors, is
\begin{align*}
&\sum_{1 \leq i < j \leq k}|R_{ij}| + \sum_{i \in [k]}\left( |D_{i}| + |R_{i}|\right) + |N| + |K|,
\end{align*}
which is equal to $q_1$ after simple calculation. Overall, the size of $X$ is $q_1 + q_2 = q$.

Thirdly, we verify that $X$ is a $(\alpha,\beta)$-linear degree dominating set of $H$. Since $\alpha =0$, we only need to demonstrate that $|N(v)\cap X| \geq \beta$ for every vertex $v \in V(H) \setminus X$.
The inequality holds for every vertex of $A_i$ of each $i \in [k]$ since $|R_i| + |K| = |\beta| = \beta$, which holds because the floor functions can be ignored when $\alpha = 1$. 
In addition, the inequality holds for every vertex of $A_{ij}$ for every distinct $i, j \in [k]$, since $|R_{ij}| + |K| = |\beta| = \beta$. 
Consider a vertex $v$ in some $T_i$. We have $|N(v)\cap X| \geq |D_i| = y$, where $y$ is equal to $ (nka_1)^{10000} > (nk)^{50} = \beta$. 
Consider a vertex in some $S_i$. We know $S_i - X$ has minimum degree $\ha_i$ in $X\cap S_i$. In addition, the size of the intersection of $T_i$ and $X$ is $s+l-\ha_i$ and $R_i \subseteq X$. Thus, for any $v\in S_i - X$, we have that $|N(v)\cap X|$ is at least
\begin{align*}
\ha_i + (s+l-\ha_i) + (|\beta|- s - l) = \beta.
\end{align*}
Consider a vertex in some $U_{ij}$. We know that $U_{ij}$ has minimum degree $\ha_i + \ha_j$ in $X\cap U_{ij}$. Moreover, $R_{ij} \subseteq X$ and the intersection of $T_i \cup T_j$ and $X$ contains $2s+2l-\ha_i-\ha_j$ vertices. Thus, for any $v\in U_{ij} - X$, we have that $|N(v)\cap X|$ is at least
\begin{align*}
(\ha_i+\ha_j) + (|\beta|- 2s - 2l) + (2s+2l-\ha_i-\ha_j) = \beta.
\end{align*}
We have handled every non-empty factor of $H - X$, which completes this part of the proof.

\textit{Case 2:} consider $0< \alpha \leq  1$. Recall that $\beta$ is some arbitrary constant if $\alpha \in (0,1)$, and $\beta = -(nk)^{10000} $ if $\alpha = 1$.
We define $w_i$ as follows:
\begin{itemize}
    \item 
    if $\alpha \in (0,1)$, then  $w_i = 2(|\beta|+ \left\lceil \frac{\alpha}{2} \ha_i\right\rceil)$;
    \item 
    if $\alpha = 1$, then $w_i = 2\left\lceil \frac{\alpha}{2} \ha_i\right\rceil$.
\end{itemize}
In addition, when $\alpha = 1$, $w_i$ is also equal to $\ha_i$ since $\ha_i$ is an even number in this case (we leave $w_i$ in the above form to relate it more easily to the $c(\ha_i)$'s later on).

First, we show how to construct the vertex set $X$.
The intersection of $K \cup N$ and $X$ is an empty set. 
Let $i \in [k]$. Add all vertices of $A_i$, $B_i$ and $D_i$ to $X$. The intersection of $R_i$ and $X$ is empty. 
In $T_i$, add any of its $w_i$ vertices to $X$.
In $S_i$, we add $c(\ha_i) = t - \left\lceil\frac{\alpha}{2}\ha_i\right\rceil$ vertices to $X$.  This can be done so that $S_i - X$ has maximum degree $\ha_i$ according to Definition \ref{def:kd-degdeletion-bdd-121}. 
Consider distinct $i, j \in [k]$. Add all vertices of $A_{ij}$ and $B_{ij}$ to $X$ and the intersection of $R_{ij}$ and $X$ is empty.
For $U_{ij}$, we add
\begin{align*}
    c_{ij}(\ha_i + \ha_j) =  t -  \frac{1}{k-1} \left( \left\lceil \frac{\alpha}{2} \ha_i \right\rceil + \left\lceil \frac{\alpha}{2} \ha_j \right\rceil \right)
\end{align*}
vertices to $X$ so that  $U_{ij}- X$ has maximum degree $\ha_i + \ha_j$ according to Definition \ref{def:kd-degdeletion-bdd-121}.

Secondly, we bound the size of $X$. The size of the intersection of $X$ and all $S_{i}$ factors is 
\begin{align*}
    \sum_{i \in [k]} |S_i \cap X| &= \sum_{i \in [k]}\left(t - \left\lceil\frac{\alpha}{2}\ha_i\right\rceil\right) = kt -  \sum_{i \in [k]} \left\lceil\frac{\alpha}{2}\ha_i\right\rceil.
\end{align*}
The size of the intersection of $X$ and all $U_{ij}$ factors is 
\begin{align}
\sum_{1 \leq i < j \leq k} |U_{ij} \cap X| &= \sum_{1 \leq i < j \leq k}\left(t -  \frac{1}{k-1} \left(\left\lceil \frac{\alpha}{2} \ha_i \right\rceil + \left\lceil \frac{\alpha}{2} \ha_j \right\rceil \right)\right) \label{use-in-last-uij-size} \\
&= \binom{k}{2}t - \frac{1}{k-1} \sum_{i \in [k]} (k-1) \left\lceil \frac{\alpha}{2} \ha_i \right\rceil \nonumber \\
&= \binom{k}{2}t - \sum_{i \in [k]} \left\lceil \frac{\alpha}{2} \ha_i \right\rceil. \nonumber
\end{align}
Recall that the size of the intersection of $X$ and each $T_{i}$ is $w_i$. So the size of the intersection of $X$ and all $U_{ij}$, $S_{i}$, and $T_{i}$ factors is
\begin{align*}
&~kt -  \sum_{i \in [k]} \left\lceil\frac{\alpha}{2}\ha_i\right\rceil + \binom{k}{2}t - \sum_{i \in [k]} \left\lceil \frac{\alpha}{2} \ha_i \right\rceil  +  \sum_{i \in [k]} w_i\\
=&~\left(k + \binom{k}{2}\right) t -  2 \sum_{i \in [k]} \left\lceil\frac{\alpha}{2}\ha_i\right\rceil +  \sum_{i \in [k]} w_i.
\end{align*}
When $\alpha \in (0, 1)$, this equals
\begin{align*}
&\left(k + \binom{k}{2}\right) t -  2 \sum_{i \in [k]} \left\lceil\frac{\alpha}{2}\ha_i\right\rceil +  \sum_{i \in [k]}  2\left( |\beta|+ \left\lceil \frac{\alpha}{2} \ha_i\right\rceil\right)\\
= &\left(k + \binom{k}{2}\right) t + 2k|\beta| = q_2
\end{align*}
and when $\alpha = 1$, this equals
\begin{align*}
&\left(k + \binom{k}{2}\right) t -  2 \sum_{i \in [k]} \left\lceil\frac{\alpha}{2}\ha_i\right\rceil +  \sum_{i \in [k]} 2 \left\lceil \frac{\alpha}{2} \ha_i\right\rceil\\
= & \left(k + \binom{k}{2}\right) t = \left(k + \binom{k}{2}\right) s = q_2
\end{align*}
In addition, apart from all $U_{ij}$, $S_{i}$, and $T_{i}$ factors, the size of the $X$ is
\begin{align*}
&\sum_{1 \leq i < j \leq k}(|A_{ij}| +|B_{ij}|) + \sum_{i \in [k]}\left( |A_{i}| + |B_{i}| + |D_{i}|\right)\\
=& \binom{k}{2}(\left\lfloor (1-\alpha)y \right\rfloor +\left\lceil 2\alpha^2(1-\alpha)x \right\rceil) +    k\left( \left\lceil \alpha^2(1-\alpha)x \right\rceil + 2\left\lfloor (1-\alpha)y \right\rfloor \right),
\end{align*}
which equals $q_1$ if $ 0 < \alpha < 1$, and equals $0 = q_1$ if $\alpha = 1$.
Overall, the size of $X$ is $q_1 + q_2 = q$ for any $\alpha \in (0,1]$.

Thirdly, we verify that $X$ is a $(\alpha,\beta)$-linear degree dominating set of $H$.  We need to demonstrate that every vertex $v \in V(H) \setminus X$ satisfies the inequality that $|N(v)\cap X| \geq \alpha |N(v)| + \beta$. We divide it into two cases, which are $\alpha =1$ and $\alpha \in (0, 1)$.

Assume $\alpha =1$. Recall that $\beta= - (nk)^{10000}$. In addition, the size of $S_i$ and $U_{ij}$ are less than $n^{240}$ based on Table \ref{table: size of si and uij}.
Thus, we need to prove that every $v \in V(H) \setminus X$ satisfies that $|N(v)| - |N(v)\cap X| \leq -\beta = (nk)^{10000}$.
Clearly, $N$, $K$, all $A_i$, $B_i$, $D_i$, $A_{ij}$, and $B_{ij}$ are empty sets, so we do not need to consider them.
For any vertex $v$ of $T_{i}$, we have $N(v) \leq |S_i| + \sum_{j\neq i} |U_{ij}|$, which is at most $k^2 n^{240} < -\beta$. For any vertex $v\in R_{i}$, we have $N(v) = |S_i| \leq  n^{240} < -\beta$. For any vertex $v\in R_{ij}$, we have $N(v) = |U_{ij}|  \leq  n^{240} < -\beta$.
Consider any vertex $v$ in some $U_{ij} - X$. Suppose $a$ and $b$ denote the number of vertices in $N(v)\cap (U_{ij} - X)$ and  $N(v)\cap U_{ij} \cap X$, respectively. Since the maximum degree of $U_{ij} - X$ is $\ha_i + \ha_j$, we have $a\leq \ha_i + \ha_j$. Furthermore, recall that $t=s$ in this case, so $N(v)$ equals
\begin{align*}
 & |R_{ij}| + |T_i| + |T_j| +a +b\\
=&~|\beta| - 2s + 2t + a + b\\
=&~|\beta| +a +b.
\end{align*}
In addition, we have that $|N(v)\cap X|$ equals
\begin{align*}
&~|T_i\cap X| + |T_j\cap X| + |N(v) \cap U_{ij}\cap X|\\
=&~w_i +  w_j + b\\
=&~\ha_i +  \ha_j + b.
\end{align*}
Overall, we have that
\begin{align*}
&~|N(v)| - |N(v)\cap X|\\
=&~|\beta| +a +b - (\ha_i +  \ha_j + b)\\
=&~|\beta| + (a- \ha_i -  \ha_j)\\ 
\leq &~|\beta| = -\beta.
\end{align*}
Consider any vertex $v$ of some $S_{i} - X$. Suppose $a$ and $b$ denote the number of vertices in $N(v)\cap (S_i - X)$ and  $N(v)\cap S_i  \cap X$, respectively. Additionally, $S_i - X$ has maximum degree $\ha_i$, so $a\leq \ha_i$. First, we have $N(v) = |R_{i}| + |T_i| +a +b = |\beta| +a +b$. In addition, we have that $|N(v)\cap X|$ equals $w_i + b = \ha_i + b$. Thus, we have $|N(v)| - |N(v)\cap X| = |\beta| + (a- \ha_i) \leq |\beta| = -\beta$.
This concludes the case $\alpha = 1$.

Assume $\alpha \in (0, 1)$. 
Recall that $a_1 > r$. In addition, the number of vertices in any $S_i$ or $U_{ij}$ is at most $(10a_1)^{1000}$ based on Table \ref{table: size of si and uij}. 
Consider any vertex $v$ of some $T_{i} - X$. Since all vertices of $D_{i}$ are in $X$, we have that $|N(v)\cap X| -\alpha |N(v)|$ is at least
\begin{align}
&~|D_i| - \alpha(|D_i|+|S_i| + \sum_{j \in [k]\setminus \{i\} } |U_{ij}|) \nonumber\\
\geq &~(1-\alpha) \left\lfloor (1-\alpha) y \right\rfloor - \alpha k (10a_1)^{1000} \label{eq:y-is-big}\\
\geq &~(1-\alpha) \left\lfloor (1-\alpha) (nka_1)^{10000} \right\rfloor - \alpha k (10a_1)^{1000}  \nonumber \\
> &~(1-\alpha)^2 (nka_1)^{10000}  - \alpha k (10a_1)^{1000} - 1 +\alpha \nonumber \\
> &~(1-\alpha)^2 (nka_1)^{1000}a_1  -  (nka_1)^{1000} - 1  \nonumber \\
> &~\left((1-\alpha)^2 r  -1\right) (nka_1)^{1000} - 1  \nonumber \\
= &~\left((1-\alpha)^2 \left\lceil\frac{10k(|\beta|+10)}{\alpha(1-\alpha)}\right\rceil^{10}  -1\right) (nka_1)^{1000} - 1  \nonumber \\
> &~(1-\alpha)^2 \left\lceil\frac{10k(|\beta|+10)}{\alpha(1-\alpha)}\right\rceil^{10}  -2 >  \beta  \nonumber,
\end{align}
which implies that $|N(v) \cap X| \geq \alpha |N(v)| + \beta$. 

Consider any vertex $v$ of some $R_i$. Since all vertices of $A_i$ are in $X$,  we have
\begin{align*}
|N(v)\cap X| -\alpha |N(v)| = &~|A_i| - \alpha(|A_i|+|S_i|)\\
\geq&~(1-\alpha) \left\lfloor (1-\alpha) y \right\rfloor - \alpha (10a_1)^{1000},
\end{align*}
which is clearly larger than formula  (\ref{eq:y-is-big}), thus lager than $\beta$.
Consider any vertex $v$ of some $R_{ij}$. Since all vertices of $A_{ij}$ are in $X$,  we have
\begin{align*}
|N(v)\cap X| -\alpha |N(v)| \geq &~|A_{ij}| - \alpha(|A_{ij}|+|U_{ij}|)\\
\geq &~ (1-\alpha) \left\lfloor (1-\alpha) y \right\rfloor - \alpha (10a_1)^{1000},
\end{align*}
which is clearly larger than formula  (\ref{eq:y-is-big}), thus lager than $\beta$.
Consider any vertex $v$ of $K$ and $N$. Since all neighbor vertices of $v$ are in $X$, we have
\begin{align*}
|N(v)\cap X| -\alpha |N(v)| = (1-\alpha)|N(v)|
\geq  (1-\alpha) \left\lfloor (1-\alpha)y \right\rfloor,
\end{align*}
which is clearly larger than formula  (\ref{eq:y-is-big}), thus lager than $\beta$.
Consider any vertex $v$ of some $S_{i} - X$. Suppose $a$ and $b$ denote the number of vertices of $N(v)\cap (S_i - X)$ and  $N(v)\cap X \cap S_i$, respectively. In addition, we know $S_i - X$ has maximum degree $\ha_i$.  Clearly, we have $0\leq a\leq \ha_i$ and $b\geq 0$. Thus, we have that the dominating coefficient $\lambda(v,H,X) = \frac{|N(v)\cap X| - \beta}{|N(v)|}$ of $v$ with respect to $(H,X)$ is
\begin{align*}
&~\frac{|B_i| + |T_i \cap X| + b - \beta}{|T_i| + |B_i| + |R_i| + a + b}\\
=&~\frac{\left\lceil\alpha^2(1-\alpha)x \right\rceil +  2|\beta| +  2\left\lceil \frac{1}{2}\alpha \ha_i\right\rceil + b - \beta}{\left\lfloor \alpha(1-\alpha)^2x \right\rfloor+ s + \left\lceil \alpha^2 (1- \alpha)x \right\rceil + \left\lfloor |\beta| - s - (1-\alpha)l \right\rfloor + a + b}\\
\geq &~\frac{\left\lceil\alpha^2(1-\alpha)x \right\rceil  + \alpha |\beta|+ \alpha \ha_i + \alpha b}{ \alpha(1-\alpha)^2x +  \left\lceil \alpha^2 (1- \alpha)x \right\rceil + |\beta|  + \ha_i + b}\\
\geq &~\frac{\alpha^2(1-\alpha)x  + \alpha |\beta|+ \alpha \ha_i + \alpha b}{ \alpha(1-\alpha)^2x +  \alpha^2 (1- \alpha)x  + |\beta|  + \ha_i + b}\\
= &~\frac{\alpha (\alpha (1- \alpha)x  + |\beta|  + \ha_i + b)}{ \alpha (1- \alpha)x  + |\beta|  + \ha_i + b}= \alpha,
\end{align*}
where we used the facts that $l=0$, and 
where the equality on the penultimate line is because $ \frac{\left\lceil c\right\rceil+d}{\left\lceil c\right\rceil+e} \geq \frac{ c+d}{ c+e} $ for any real number $e\geq d \geq 0$ and $c > 0$. 
Consider any vertex $v$ of some $U_{ij} -  X$. Suppose $a$ and $b$ denote the number of vertices of $N(v)\cap (U_{ij} - X)$ and  $N(v)\cap X \cap U_{ij}$, respectively. Additionally, we know that $U_{ij}- X$ has maximum degree $\ha_i + \ha_j$. Clearly, we have $0\leq a\leq \ha_i+\ha_j$ and $b\geq 0$. Thus, we have the dominating coefficient $\lambda(v,H,X)$ of $v$  with respect to $(H,X)$ is
\begin{align*}
&~\frac{|B_{ij}| + |T_i \cap X| + |T_j \cap X| + b -\beta}{|T_i| + |T_j| + |B_{ij}| + |R_{ij}| + a + b}\\
=&~\frac{\left\lceil 2\alpha^2(1-\alpha)x \right\rceil + 2|\beta| + 2\left\lceil \frac{1}{2}\alpha \ha_i\right\rceil +  2|\beta| + 2\left\lceil \frac{1}{2}\alpha \ha_j\right\rceil + b - \beta}{2t + \left\lceil 2\alpha^2 (1- \alpha)x \right\rceil + \left\lfloor |\beta| - 2s - 2(1-\alpha)l \right\rfloor + a + b}\\
\geq &~\frac{\left\lceil 2\alpha^2(1-\alpha)x \right\rceil + 3|\beta| + \alpha \ha_i+ \alpha \ha_j + b}{2\left\lfloor \alpha(1-\alpha)^2x \right\rfloor + \left\lceil 2\alpha^2 (1- \alpha)x \right\rceil + |\beta|  + a + b}\\
\geq &~\frac{\left\lceil 2\alpha^2(1-\alpha)x \right\rceil + \alpha |\beta| + \alpha \ha_i+ \alpha \ha_j + \alpha b}{2 \alpha(1-\alpha)^2x  + \left\lceil 2\alpha^2 (1- \alpha)x \right\rceil  + |\beta|  +  \ha_i+  \ha_j + b}\\
\geq &~\frac{ 2\alpha^2(1-\alpha)x + \alpha |\beta| + \alpha \ha_i+ \alpha \ha_j + \alpha b}{2 \alpha(1-\alpha)^2x  + 2\alpha^2 (1- \alpha)x  + |\beta|  +  \ha_i+  \ha_j + b}\\
= &~\frac{  \alpha (2 \alpha(1-\alpha) x  + |\beta|  +  \ha_i+  \ha_j + b)}{2 \alpha(1-\alpha) x  + |\beta|  +  \ha_i+  \ha_j + b} = \alpha,
\end{align*}
where, again, the equality on the penultimate line is because $ \frac{\left\lceil c\right\rceil+d}{\left\lceil c\right\rceil+e} \geq \frac{ c+d}{ c+e} $ for any real number $e\geq d \geq 0$ and $c > 0$. 

Overall, we proved that $X$ is a $(\alpha,\beta)$-linear degree dominating set of $H$, where the size of $X$ at most $q$. 
\end{proof}

We now consider the reverse direction, which will be demonstrated in the following lemma.

\begin{lemma}\label{lem:ld-second-dir}
Suppose that there is $X \subseteq V(H)$ with $|X| \leq q$ such that $|N(v)\cap X| \geq \alpha |N(v)| + \beta$ for every $v \in V(H) \setminus X$. Then $G$ contains a multicolored clique with $k$ vertices.
\end{lemma}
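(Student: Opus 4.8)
The plan is to follow the skeleton sketched in Section~5 for the case $\alpha=1$ (Lemmas~\ref{lem:second-dir} and its claims), but to run every claim simultaneously for all three values of $\alpha$, since the modular structure of $H$ is the same in all cases and only the internals of the $S_i,U_{ij}$ factors and the value of $\beta$ differ. Write $X(M):=X\cap V(M)$ and $\chi(M):=|X(M)|$ for a factor $M$ of $H$. The first step is to dispose of the ``buffer'' edgeless factors. For the large buffers $R_i$ and $R_{ij}$ (each adjacent only to $S_i$, resp.\ $U_{ij}$) I would show we may assume $\chi(R_i)=\chi(R_{ij})=0$: a vertex of $X(R_i)$ can be exchanged for a vertex of $T_i\setminus X$ whenever the latter is nonempty, and once $T_i\subseteq X$ the cost already unavoidably incurred inside $S_i$ (at least $c(a_1)$, by the stepwise behaviour of the table) makes every surviving vertex of $S_i$ satisfy its inequality without any help from $R_i$; the $\alpha=0$ version is symmetric, using the retention table. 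For the ``penalty'' factors $A_i,B_i,D_i,A_{ij},B_{ij},K,N$ I would show that $X$ contains all of them when $\alpha\in(0,1)$, and that they are empty when $\alpha\in\{0,1\}$, since otherwise a vertex inside one of them would violate $|N(v)\cap X|\ge\alpha|N(v)|+\beta$ --- this is exactly the arithmetic of Lemma~\ref{ldlem:first-direction} read backwards, using that $y$ is chosen to dwarf everything else.

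Next I would establish the \emph{choice} claims. Each $S_i$ ``chooses'' an element $\ha_i\in I$, meaning $\Delta(S_i-X)=\ha_i$ and $\chi(S_i)=c(\ha_i)$ when $\alpha\in(0,1]$, and $\delta(S_i-X,X)=\ha_i$ with $\chi(S_i)=p+c(\ha_i)$ when $\alpha=0$; likewise each $U_{ij}$ chooses a pair $a,b$ with $f_i^{-1}(a)f_j^{-1}(b)\in E(G)$, so that $\Delta(U_{ij}-X)=a+b\in I_{ij}$ and $\chi(U_{ij})=c_{ij}(a+b)$ (for $\alpha=0$ one must also absorb the padding term $p'_{ij}$ and the offset $p_{ij}$). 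The mechanism is the stepwise structure of arbitrary deletion/retention tables from Definitions~\ref{def:kd-degdeletion-bdd-121} and~\ref{def:kd-degdeletion}: deleting strictly fewer than $c(a_j)$ vertices leaves maximum degree $\ge a_{j-1}$ (resp.\ minimum degree in $X$ at most $a_{j+1}$), while pushing past $a_n$ costs $c(a_n-1)=a_0>q$; combined with $\chi(R_i)=0$ this pins $\Delta(S_i-X)$ to exactly one $\ha_i$ and forces $\chi(S_i)=c(\ha_i)$, since any surplus deletion could be moved to reduce the overall count below $q$ unnecessarily.

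The heart of the argument --- and the step I expect to be the main obstacle --- is the claim that the sum chosen by $U_{ij}$ equals $\ha_i+\ha_j$. I would partition the $U_{ij}$'s into $U^<,U^=,U^>$ according to whether the chosen sum is less than, equal to, or greater than $\ha_i+\ha_j$, and use a global budget-accounting argument: the cost functions $c,c_{ij}$ are rigged so that when every $U_{ij}$ chooses $\ha_i,\ha_j$ the total of $\chi$ over the $S_i$, $T_i$, $U_{ij}$ and penalty factors is exactly $q$. A member of $U^<$ spends strictly more in $U_{ij}$ with no compensating gain; a member of $U^>$ saves deletions in $U_{ij}$ but thereby forces extra deletions in $T_i$ and $T_j$ (a vertex of small degree in $U_{ij}$, or a vertex of $T_i\setminus X$, now violates its inequality), and a charging scheme --- each unit saved in a $U^>$ member charged to a distinct unit of forced $T$-deletion, injectivity guaranteed by the $\tfrac1{k-1}$ and $\tfrac12$ denominators in the cost functions together with $r$ being chosen as a large multiple --- shows the net balance is strictly positive, so $|X|>q$, a contradiction. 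The $2$-sumfree property of $I$ then forces the pair chosen by $U_{ij}\in U^=$ to be exactly $\ha_i,\ha_j$. Finally, setting $C=\{f_i^{-1}(\ha_i):i\in[k]\}$, we have $|C|=k$ (one vertex per colour), and for $i<j$ the fact that $U_{ij}$ chose $\ha_i,\ha_j$ gives $\ha_i+\ha_j\in I_{ij}$, which by the definition of $I_{ij}$ and the symmetry of $G$ means $f_i^{-1}(\ha_i)f_j^{-1}(\ha_j)\in E(G)$; hence $C$ is a multicoloured clique of size $k$, completing the proof.
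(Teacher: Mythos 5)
Your proposal follows the paper's architecture closely (eliminate the edgeless factors, pin each $S_i$ and $U_{ij}$ to a single step of its deletion/retention table, run the charging argument, read off the clique), and your description of the charging step---costs calibrated against the $\tfrac1{k-1}$ and $\tfrac12$ coefficients with $r$ chosen as a large multiple, and the observation that the $2$-sumfree property pins $U^=$ members to the pair $\ha_i,\ha_j$---matches what happens in the paper's Claim~\ref{cl:ld-module-10}. However, several of the intermediate claims you state about the edgeless factors are wrong, and they are wrong in ways that would break the budget accounting $|X|\leq q$ on which everything else rests. First, $\chi(R_i)=\chi(R_{ij})=0$ holds only for $\alpha\in(0,1]$; for $\alpha=0$ the retention constraint forces $R_i,R_{ij}\subseteq X$, because otherwise the $A_i$ vertices cannot satisfy $|N(v)\cap X|\geq\beta$ without dragging all of $A_i$ into $X$, which costs $\Theta(ky)$ vertices you cannot afford. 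Calling the $\alpha=0$ argument ``symmetric'' does not rescue a claim that asserts the opposite of what is true. Second, $K$ and $N$ are \emph{not} in $X$ when $\alpha\in(0,1)$: they satisfy their inequality trivially precisely because all their neighbours ($A_i,B_i,D_i,A_{ij},B_{ij}$) are already in $X$, so $|N(v)\cap X|=|N(v)|$; moreover $q_1$ counts only the $A$-, $B$-, $D$-factors, and $|K|+|N|\approx 2\alpha(1-\alpha)m$ dwarfs $q$, so including them is infeasible. Third, $A_i,D_i,A_{ij},K,N$ are \emph{not} empty when $\alpha=0$; there the correct (and non-uniform) statement is that $D_i,K,N,R_i,R_{ij}$ lie entirely inside $X$ while $A_i,A_{ij}$ lie entirely outside it. (You also state that $R_i$ is ``adjacent only to $S_i$''; in fact $R_i$ is adjacent to both $A_i$ and $S_i$, which matters for $\alpha\neq 1$.)

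These are not cosmetic slips. The reduction is calibrated so that a yes-instance attains $|X|=q$ with equality, and $q_1$ is exactly the number of edgeless-factor deletions under the \emph{correct} assignment. Asserting the wrong assignment either leaves surviving vertices that violate their inequality or pushes $|X|$ past $q$, and in either case the choice claims for $S_i$ and $U_{ij}$ (which are extracted by tightness) cannot be derived. The fix is to split the edgeless-factor claims by $\alpha$, as the paper does: for $\alpha\in(0,1]$, $R_i,R_{ij}$ and $K,N$ may be assumed disjoint from $X$ while $A_i,B_i,D_i,A_{ij},B_{ij}\subseteq X$; for $\alpha=0$, $R_i,R_{ij},D_i,K,N\subseteq X$ while $A_i,A_{ij}$ are disjoint from $X$. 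Once that split is made, the rest of your plan (the choice claims and the charging argument) goes through as in the paper.
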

\begin{proof}
Assume $i,j\in [k]$. To alleviate notation slightly, for a factor $M$ of $H$, we will write $X(M) := X \cap V(M)$ for the vertices of $M$ added to $X$, and $\chi(M) := |X(M)|$ for the number of vertices of $M$ added to $X$. 
In the proof, $M$ will be one of the 
factors in $H$.

Define $V_2$ as the set of vertices consisting of the union of all the vertices of $T_i$, $S_i$, and $U_{ij}$ for all distinct $i, j\in [n]$. In addition, define $V_1 = V(H) \setminus V_2$.
Let us first consider the size of $T_i$ of $H$. If $\alpha \in \{0,1\}$, then $|T_i| = s = n^{10}$. If  $\alpha \in (0,1)$, then  $|T_i| = \left \lfloor \alpha (1-\alpha)^2 x \right \rfloor \leq x = (10a_1)^5$. Thus, we have 
\begin{align*}
|V_2| = &\sum_{i\in [k]} (|S_i| + |T_i|) + \sum_{1\leq i<j \leq k} |U_{ij}|.
\end{align*}
If $\alpha =1$, then, according to Table \ref{table: size of si and uij}, we have
\begin{align*}
|V_2|< k (n^{240} +  n^{10}) + k^2 n^{240} < n^{243}.
\end{align*}
If $\alpha \in (0,1)$, then, according to Table \ref{table: size of si and uij}, we have
\begin{align*}
|V_2|< k ((10a_1)^{1000} +  (10a_1)^{5}) + k^2 (10a_1)^{1000} < (10a_1)^{1001}.
\end{align*}
If $\alpha =0$, then, according to Table \ref{table: size of si and uij}, we have
\begin{align*}
|V_2|< k (n^{210} +  n^{10}) + k^2 n^{211} < n^{214}.
\end{align*}
The proof is divided into a series of claims.

\begin{claim}\label{cl-nk-modules}
If $\alpha \in (0,1)$, then all factors adjacent to $N$ or $K$ are in $X$, moreover, the number of vertices of all these factors is $q_1$.
\end{claim}
\begin{proof}
First note that, according to Table \ref{table:reduction-values-relationship}, the size of $X$ is at most $q  <  3k^2(nka_1)^{10000}$.  Also notice that $|\beta| < r < a_1$ (see Table~\ref{table:reduction-values}).
Consider any vertex $v$ which is adjacent to a vertex of $N$. Assume $v \not \in X$. Recall that $k\geq 100$. Then, we have a contradiction as follows.
\begin{align*}
|X| & \geq |N(v)\cap X| \geq \alpha |N(v)| + \beta \geq \alpha |N| - r\\  
&\geq \alpha \lfloor \alpha (1-\alpha) m \rfloor - (a_1 - 1)  \\
&> \alpha^2(1-\alpha) m - a_1  \\
&= \alpha^2(1-\alpha) (nka_1)^{20000} - a_1\\
&> \alpha^2(1-\alpha) a_1(nka_1)^{19999} - a_1\\
&> \alpha^2(1-\alpha) \left\lceil\frac{10k(|\beta|+10)}{\alpha(1-\alpha)}\right\rceil^{10} (nka_1)^{19999} - a_1\\
&> 2 (nka_1)^{19999} - a_1 > (nka_1)^{19999} > k^3 (nka_1)^{10000} \\
& > 3k^2(nka_1)^{10000} > q \geq |X|.
\end{align*}
Thus, all vertices adjacent to $N$ are in $X$. Additionally, the proof that all factors adjacent to $K$ are in $X$ is identical, since $|K| = |N|$. Finally, it is easy to calculate that the vertex number of all factors adjacent to $N$ or $K$ is $q_1$.
\end{proof}

\begin{claim}\label{cl:ld-module-2}
Consider factors $B_i$, $B_{ij}$, and $D_i$ for distinct $i,j \in [k]$. We have that $\chi(B_i) = |B_i|$, $\chi(B_{ij}) = |B_{ij}|$, and $\chi(D_i)$ $ = |D_i|$.
\end{claim}
\begin{proof}
Assume $\alpha = 0$. First, $B_i$ and $B_{ij}$ are empty sets. Secondly, for any vertex $v\in D_i$, we have $|N(v)| = |T_i| + |K|$, which equals
\begin{align*}
2s + l
< 2s + r < 3s = 3n^{10} < (nk)^{50} = \beta.
\end{align*}
Thus, $v$ must be in $X$ and $D_i \subseteq X$ (if $v \notin X$, it does not have enough neighbors, since it needs at least $\beta$ neighbors in $X$).
In addition, according to Claim \ref{cl-nk-modules}, $B_i$, $B_{ij}$, and $D_i$ are subsets of $X$ if $\alpha \in (0,1)$, moreover, $B_i$, $B_{ij}$, and $D_i$ are empty set if $\alpha =1$.
\end{proof}

\begin{claim}\label{cl:ld-module-n-k}
Consider factors $N$ and $K$. We may assume that 
\begin{enumerate}
    \item 
    $\chi(N) = |N|$ and  $\chi(K) = |K|$ if $\alpha = 0$;
    
    \item 
    $\chi(N) = 0$ and  $\chi(K) = 0$ if $\alpha \in (0,1]$.
\end{enumerate}
\end{claim}
\begin{proof}
\textit{Case 1: $\alpha = 0$.} Consider $A_i$ for some $i \in [k]$. Suppose that there exists a vertex of $R_i \cup K$ that is not in $X$. For any vertex $v\in A_i$, we would have $|N(v)\cap X| < |R_i| + |K| = |\beta|$. Thus, each vertex of $A_i$ is in $X$. Moreover, we know that every $D_i$ is a subset of $X$ according to Claim \ref{cl:ld-module-2}, which means that $|X|$ is at least
\begin{align*}
|A_i|+ \sum_{i\in [k]} |D_i| = ky +y =ky + (nka_1)^{10000}.
\end{align*}
However, we know that  $|X| = q$ is at most $ky + n^{214}$ according to Table \ref{table:reduction-values-relationship}, a contradiction. Thus, all vertices of $R_i \cup K$ are in $X$. The proof for that $R_{ij} \cup N$ is a subset of $X$ goes the same way.

\textit{Case 2: $\alpha \in (0, 1]$.} First, $N$ and  $K$ are empty sets if  $\alpha =1$. Secondly,  all factors adjacent to $N$ or $K$ are in $X$ according to Claim \ref{cl-nk-modules} if $\alpha \in (0,1)$, furthermore, the vertices in $N$ or $K$ are independent set. Thus, for any $v \in N \cup K$, we have $N(v) \subseteq X$. For any $v\in N \cup K$, we claim that $|N(v)\cap X| \geq \alpha |N(v)| + \beta$. Since $N(v) \subseteq X$, it is enough to argue that $|N(v)| \geq \frac{\beta}{1-\alpha}$. In addition, since $A_i \subseteq N(K)$ and $A_{ij} \subseteq N(N)$, we have  
\begin{align*}
|N(v)| &\geq \min \{ |A_i|, |A_{ij}|\} = \left\lfloor (1-\alpha) y \right\rfloor \\  
&>   (1-\alpha) y  - 1 = (1-\alpha) (nka_1)^{10000}  - 1\\
&>  (1-\alpha) a_1  - 1 > (1-\alpha) r  - 1 \\
&>  (1-\alpha) \left\lceil\frac{10k(|\beta|+10)}{\alpha(1-\alpha)}\right\rceil^{10}  - 1,
\end{align*}
which is clearly larger than $\frac{\beta}{1-\alpha}$.
\end{proof}
Note that, in this Claim \ref{cl:ld-module-n-k}, we have proved the fact that  $R_i \cup K \subseteq X$ and $R_{ij} \cup N \subseteq X$ for all distinct $i,j\in [k]$ if $\alpha = 0$. Moreover, this fact will be used in Claim \ref{cl:ld-module-3}, Claim \ref{cl:ld-module-4}, and Claim \ref{cl:ld-module-5}.

\begin{claim}\label{cl:ld-module-3}
Consider factors $A_i$ and $A_{ij}$ for all distinct $i, j\in [k]$. We may assume that 
\begin{enumerate}
    \item 
    $\chi(A_i) = 0$ and  $\chi(A_{ij}) = 0$ if $\alpha = 0$;
    
    \item 
    $\chi(A_i) = |A_i|$ and  $\chi(A_{ij}) = |A_{ij}|$ if $\alpha \in (0,1]$.
\end{enumerate}
\end{claim}
\begin{proof}
\textit{Case 1: $\alpha = 0$.} By Claim  \ref{cl:ld-module-n-k}, $R_i \cup K$ and  $R_{ij} \cup N$ are subsets of $X$ for all distinct $i,j\in [k]$. Thus, we may assume that $\chi(A_i) = 0$ and  $\chi(A_{ij}) = 0$ since $|R_i| + |K| = \beta$ and $|R_{ij}| + |N| = \beta$.

\textit{Case 2: $\alpha \in (0, 1]$.} By Claim \ref{cl-nk-modules},
every $A_i$ and $A_{ij}$ are subsets of $X$ if $\alpha \in (0,1)$, furthermore, every $A_i$ and $A_{ij}$ are empty sets if $\alpha = 1$. Thus, we may assume that $\chi(A_i) = |A_i|$ and  $\chi(A_{ij}) = |A_{ij}|$ if $\alpha \in (0,1]$.
\end{proof}

\begin{claim}\label{r-t are safe}
For any vertex $v$ of $T_i$, $R_i$, or $R_{ij}$ such that $v \notin X$, we have $|N(v) \cap X| \geq \alpha |N(v)| + \beta$.
\end{claim}
\begin{proof}
Consider a vertex $v$ of some $R_i$ or $R_{ij}$. If $\alpha =0$, then $R_i$ and $R_{ij}$ are subsets of $X$ based on the proof of Claim \ref{cl:ld-module-n-k}. So we do not need to consider this case. If $\alpha =1$, then, according to Table \ref{table:reduction-values} and Table \ref{table: size of si and uij}, $|N(v)| = |S_i| < n^{240} = -\beta$ for $v\in R_i$, and $|N(v)| = |U_{ij}| < n^{240} < -\beta$ for $v\in R_{ij}$.
Let us consider $\alpha \in (0, 1)$. Assume $v\in R_i$. According to Claim \ref{cl:ld-module-3}, we have $A_i \subseteq X$.
Even if $S_i \cap X$ is an empty set, the dominating coefficient of any vertex $v$ in $R_i$ with respect to $(H, X)$ is 
\begin{align*}
\lambda(v, H, X) &= \frac{|A_i| - \beta}{|A_i| + |S_i|} \geq \frac{\left\lfloor (1-\alpha)y \right\rfloor - (r - 1)}{\left\lfloor (1-\alpha)y \right\rfloor + (10a_1)^{1000}}\\
&>   \frac{ (1-\alpha)y  - r}{(1-\alpha)y + (10a_1)^{1000}} >   \frac{ (1-\alpha)y  - r - (10a_1)^{1000}}{(1-\alpha)y}\\
&>  1 - \frac{ 2(10a_1)^{1000}}{(1-\alpha)(nka_1)^{10000}}
>  1 -  \frac{1}{(1-\alpha)a_1}\\
&> 1 -  \frac{1}{(1-\alpha)r}  =  1 -  \frac{1}{(1-\alpha)\left\lceil\frac{10k(|\beta|+10)}{\alpha(1-\alpha)}\right\rceil^{10}}\\
&>  1 -  \frac{1}{(1-\alpha)\frac{1}{\alpha^{10}(1-\alpha)^{10}}}
>  1 -  (1-\alpha) = \alpha.
\end{align*}
The second line is because $\frac{b}{a}\geq \frac{b-c}{a-c}$ for any $0<c<b<a$. Suppose $v\in R_{ij}$. 
According to Claim \ref{cl:ld-module-3}, we have $A_{ij} \subseteq X$. Even if $U_{ij} \cap X$ is an empty set, the dominating coefficient of any vertex $v$ in $R_{ij}$ with respect to $(H, X)$ is 
\begin{align*}
\lambda(v, H, X) = \frac{|A_{ij}| - \beta}{|A_{ij}| + |U_{ij}|} > \frac{\left\lfloor (1-\alpha)y \right\rfloor - (r - 1)}{\left\lfloor (1-\alpha)y \right\rfloor + (10a_1)^{1000}},
\end{align*}
which is larger than $\alpha$ according to the inequality above.

Consider a vertex $v$ of $T_i$. $D_i$ is a subset of $X$ by Claim \ref{cl:ld-module-2}. So $|N(v) \cap X| \geq |D_i| = \left \lfloor (1-\alpha)y \right\rfloor$. If $\alpha =0$, then it is trivial that $|N(v) \cap X| = y > (nk)^{50} = \beta$. Assume $\alpha \in (0, 1)$. Even if, apart from $D_i$, the intersection of $X$ and the other neighbors of $v$ is empty, the dominating coefficient of $v$ with respect to $(H,X)$ is  
\begin{align*}
\lambda(v, H, X) =& \frac{|D_{ij}| - \beta}{|D_{ij}| + |S_{i}| + \sum_{j\in [k]\setminus \{i\}}|U_{ij}|} \\
> &\frac{\left\lfloor (1-\alpha)y \right\rfloor - (r - 1)}{\left\lfloor (1-\alpha)y \right\rfloor + k(10a_1)^{1000}},
\end{align*}
which is larger than $\alpha$ using a similar procedure as that used above. Now, let us consider $\alpha =1$. Clearly, we have 
\begin{align*}
|N(v)| = |S_{i}| + \sum_{j\in [k]\setminus \{i\}}|U_{ij}|
 <  kn^{240} < (nk)^{10000} = -\beta.
\end{align*}
Thus, we have $|N(v)| + \beta < 0 \leq |N(v)\cap X|.$
\end{proof}

\begin{claim}\label{cl:ld-module-4}
Consider factor $R_i$ for all $i$. We may assume that
\begin{enumerate}
    \item 
    $\chi(R_i) = |R_i|$ if $\alpha = 0$;
    
    \item 
    $\chi(R_i) = 0$ if $\alpha \in (0,1]$.
\end{enumerate}
\end{claim}
\begin{proof}
According to Claim \ref{r-t are safe}, we only need to consider the vertices outside $R_i$ that are affected by the intersection of $R_i$ and $X$.

\textit{Case 1:} that $\chi(R_i) = |R_i|$ is demonstrated in the proof of Claim \ref{cl:ld-module-n-k} if $\alpha = 0$.

\textit{Case 2:} first, let us consider $\alpha =1$, and recall that in this case $G - X$ has maximum degree at most $|\beta|$. If $\chi(S_i) < c(a_1)$, then $\Delta(S_i - X) \geq a_0$ by Definition~\ref{def:kd-degdeletion-bdd-121}. 
Let $v \in S_i - X$ be a vertex of maximum degree in the graph $S_i - X$.  In $H$, $v$ has $|\beta|$ neighbors in $R_i \cup T_i$, which means that at least $a_0 = q +1$ vertices of $R_i\cup T_i$ are added to $X$, a contradiction. Thus,  $\chi(S_i) \geq c(a_1)$. 
We may assume that the vertices of $S_i \cap X$ are chosen to minimize $\Delta(S_i - X)$, since the choice of vertices of $S_i$ to add to $X$ does not affect other factors, only the number.  Thus, we may assume that $\Delta(S_i - X) \leq a_1$.

Assume that there exists some $v \in X(R_i)$. Suppose that there exists at least one vertex $u \in T_i - X$. Recall that we also call $X$ the deletion part of $H$. Consider $X' = (X \setminus \{v\}) \cup \{u\}$.  The subgraph $H - X'$ can be seen as taking $H - X$, deleting $u$, and reinserting $v$.
Deleting $u$ decreases the degree of the vertices in the neighbors of $T_i$ by $1$, including the $S_i$ vertices.  Then by reinserting $v$, we only increase the degrees of the $S_i$ vertices by $1$.  Thus, the maximum degree of $H - X'$ is not more than the maximum degree of $H - X$ and we may use $X'$ instead. We may repeat this argument until no element of $R_i$ is in $X$ (in which case we are done), or until $T_i - X$ is empty, i.e. every element of $T_i$ is in $X$.
So suppose that we reach a point where all vertices of $T_i$ are in $X$. Since $\Delta(S_i - X) \leq a_1$, if we assume that $X(R_i)$ is empty, each remaining vertex of $S_i$ in $H - X$ has maximum degree at most $|\beta| - s + a_1 < |\beta|$.  Thus, we may assume that $\chi(R_i) = 0$.


Secondly, let us consider $\alpha \in (0,1)$. 
If $\chi(S_i) < c(a_1)$, then $S_i - X$ has a vertex $v$ with degree $a_0$ by Definition~\ref{def:kd-degdeletion-bdd-121}. Consider the degree of $v$ in $H$. The size of $X\cap N(v)$ is at least
\begin{align*}
\alpha |N(v)| + \beta  
\geq \alpha( |R_i| + |T_i| + |B_i| + a_0) + \beta   
> \alpha a_0 + \beta
\end{align*}
Furthermore, based on Claim \ref{cl-nk-modules}, we have $X\cap (H - N(v)) \geq q_1 - |B_i|$. Thus, the size of $X$ is larger than
\begin{align}
&~\alpha a_0 + \beta + q_1 - |B_i|
=  \alpha x^{10} + \beta +  q_1 -\left \lceil \alpha^2 (1-\alpha)x \right \rceil  \nonumber\\
> &~ q_1 + \alpha x^{10} + \beta - 2\alpha x -1  
=   q_1 + \alpha x (x^9 -2) + \beta  -1  \label{use-in-uij}\\
= &~ q_1 + \alpha (10a_1)^5 ( (10a_1)^{45} -2) + \beta  -1  
>    q_1 + \alpha ((10a_1)^{44}r-2)r + \beta  -1  \nonumber\\
= &~   q_1 + \alpha ((10a_1)^{44}r-2)\left\lceil\frac{10k(|\beta|+10)}{\alpha(1-\alpha)}\right\rceil^{10} + \beta  -1  \nonumber\\
>  &~ q_1 + (10a_1)^{44}r + \beta  -3  = q_1 + (10a_1)^{44}\left\lceil\frac{10k(|\beta|+10)}{\alpha(1-\alpha)}\right\rceil^{10} + \beta  -3\nonumber\\
> &~ q_1 +  (10a_1)^{44}  
>   q_1 + 2k^2(10a_1)^{5} > q_1 + q_2 = q, \nonumber
\end{align}
which is a contradiction.  Thus,  $\chi(S_i) \geq c(a_1)$. 
According to Lemma \ref{the detail of degree deletion graph}, $S_i$ contains $c(a_1) -c(a_0) = c(a_1)$ stars with $a_0$ leaves. Based on Lemma \ref{greedy lemma}, we may assume that the internal vertices of all $c(a_1)$ stars with $a_0$ leaves are added to $X$. Thus, $\Delta(S_i - X) \leq a_1$ and the largest star in $S_i - X$ contains at most $a_1$ leaves. 

Assume that there exists some $v \in X(R_i)$. Suppose that at least one vertex $u$ of $T_i$ is \emph{not} deleted by $X$, i.e. there exists $u \in T_i - X$.  Consider $X' = (X \setminus \{v\}) \cup \{u\}$. Consider every neighbor factor of $T_i$ that is not a subset of $X$. It is not hard to verify that $\Lambda(V(S_i -X),H,X) = \Lambda(V(S_i - X'),H,X)$, and that $\Lambda(V(U_{ij} -X),H,X) \leq  \Lambda(V(U_{ij} - X'),H,X)$ for every $j\in [k] \setminus
\{i\}$. Furthermore, $S_i$ is the only neighbor factor of $R_i$ that is not a subset of $X$. Thus, we may use $X'$ instead.
So suppose instead that all vertices of $T_i$ are deleted. Since the largest star in $S_i - X$ contains at most $a_1$ leaves, we have that, even if $\chi(R_i) = 0$, $\Lambda(V(S_i -X),H,X)$ is at least
\begin{align}
&~\frac{|T_i| + \chi(B_i) + \chi(R_i) - \beta}{|T_i| + |B_i| + |R_i| + a_1} \nonumber\\
= & ~\frac{\left\lfloor \alpha (1-\alpha)^2
x \right\rfloor +\left\lceil\alpha^2(1-\alpha)x \right\rceil - \beta}{\left\lfloor \alpha (1-\alpha)^2
x \right\rfloor  +\left\lceil\alpha^2(1-\alpha)x \right\rceil + |\beta| + a_1} \nonumber\\
> &~\frac{ \alpha(1-\alpha)x - 1 - |\beta|}{\alpha(1-\alpha)x + 1 + |\beta| + a_1} \nonumber\\
> &~\frac{ \alpha(1-\alpha)x - 2 - 2|\beta| - a_1}{\alpha(1-\alpha)x } \nonumber\\
> &~1 - \frac{  2a_1}{\alpha(1-\alpha)x } =  1 - \frac{  2a_1}{\alpha(1-\alpha)(10a_1)^5} \label{use-in-uij-eqal}\\
> &~1 - \frac{ 1}{\alpha(1-\alpha)r } =  1 - \frac{1}{\alpha(1-\alpha) \left\lceil\frac{10k(|\beta|+10)}{\alpha(1-\alpha)}\right\rceil^{10}} \nonumber \\
> &~1 - (1-\alpha) = \alpha. \nonumber
\end{align}
The fifth line is becasue $a_1 > r > 2|\beta|+2$. Thus, we may assume  $\chi(R_i) = 0$.
\end{proof}

\begin{claim}\label{cl:ld-module-5}
Consider factor $R_{ij}$ for all $i\neq j$. We may assume that 
\begin{enumerate}
    \item 
    $\chi(R_{ij}) = |R_{ij}|$ if $\alpha = 0$;
    
    \item 
    $\chi(R_{ij}) = 0$ if $\alpha \in (0,1]$.
\end{enumerate}
\end{claim}
\begin{proof}
\textit{Case 1:} that $\chi(R_{ij}) = |R_{ij}|$ is demonstrated in the proof Claim \ref{cl:ld-module-n-k} if $\alpha = 0$.

\textit{Case 2:} first, let us consider $\alpha =1$. The idea is the same as that of the proof for Claim \ref{cl:ld-module-4}.
If $\chi(U_{ij}) < c_{ij}(\hbar_{ij})$, then by Definition~\ref{def:kd-degdeletion-bdd-121}, we have $\Delta(U_{ij} - X) = a_0$.  This implies $a_0$ deletions in $R_{ij}\cup T_i \cup T_j$. Then we have $|X| \geq a_0 = q +1$, a contradiction. Thus, $\chi(U_{ij}) \geq c_{ij}(\hbar_{ij})$ and we may assume that $\Delta(U_{ij} - X) \leq \hbar_{ij}  < 2s$ (recall that $s > a_1$).

If there exists some $u \in (V(T_i) \cup V(T_j)) \setminus X$ and some $v \in X(R_{ij})$, then $X' = (X \setminus \{v\}) \cup \{u\}$ does not alter the degrees in $V(U_{ij}) \setminus X$ and can only reduce the degrees of other relevant factors.
So assume that all elements of $T_i$ and $T_j$ are deleted.  
We know $\Delta(U_{ij} - X) < 2s$, thus, in $H - X$, the vertices of $V(U_{ij}) \setminus X$ have maximum degree at most $\hbar_{ij} + d - 2s < d$ even if $X(R_{ij})$ is empty. Thus, we may assume that $\chi(R_{ij}) = 0$.

Secondly, let us consider $\alpha \in (0,1)$. 
If $\chi(U_{ij}) < c_{ij}(\hbar_{ij})$, then $U_{ij} - X$ has a vertex $v$ with degree $a_0$ by Definition~\ref{def:kd-degdeletion-bdd-121}. Consider the degree of $v$ in $H$. The size of $X\cap N(v)$ is at least
\begin{align*}
\alpha |N(v)| + \beta  
\geq \alpha( |R_{ij}| + |T_i| + |T_j| + |B_{ij}| + a_0) + \beta   
> \alpha a_0 + \beta
\end{align*}
Furthermore, based on Claim \ref{cl-nk-modules}, we have $X\cap (H - N(v)) \geq q_1 - |B_{ij}|$. Thus, the size of $X$ is larger than
\begin{align*}
&~\alpha a_0 + \beta + q_1 - |B_{ij}|
=  \alpha x^{10} + \beta +  q_1 -\left \lceil 2 \alpha^2 (1-\alpha)x \right \rceil \\
> &~ q_1 + \alpha x^{10} + \beta - 2\alpha x -1  
=  q_1 + \alpha x (x^9 -2) + \beta  -1, 
\end{align*}
which is larger than $q$ based on formula (\ref{use-in-uij}), a contradiction.  Thus,  $\chi(U_{ij}) \geq c_{ij}(\hbar_{ij})$. 
According to Lemma \ref{the detail of degree deletion graph}, $U_{ij}$ contains $c_{ij}(\hbar_{ij}) - c_{ij}(a_{0}) = c_{ij}(\hbar_{ij})$ stars with $a_0$ leaves. Based on Lemma \ref{greedy lemma}, we may assume that the internal vertices of all $c_{ij}(\hbar_{ij})$ stars with $a_0$ leaves are added to $X$. Thus, $\Delta(U_{ij} - X) \leq \hbar_{ij}$ and the largest star in $U_{ij} - X$ contains at most $\hbar_{ij}$ leaves. 

Assume that there exists some $v \in X(R_{ij})$. Suppose that at least one vertex $u$ of $T_i \cup T_j$ is \emph{not} deleted by $X$, i.e. there exists $u \in V(T_i\cup T_j) \setminus X$.  Consider $X' = (X \setminus \{v\}) \cup \{u\}$, and every neighbor factor $F$ of $T_i$ or $T_j$ that is not a subset of $X$. (Clearly, it includes $U_{ij}$.) It is not hard to verify that $\Lambda(V(F -X), H, X) \leq  \Lambda(V(F - X'), H, X)$. Furthermore, $U_{ij}$ is the only neighbor factor of $R_{ij}$ that is not a subset of $X$. Thus, we may use $X'$ instead.
So suppose instead that all vertices of $T_i\cup T_j$ are deleted. Since the largest star in $U_{ij} - X$ contains at most $\hbar_{ij}$ leaves. We have that, even if $\chi(R_{ij}) = 0$, $\Lambda(V(U_{ij} -X), H, X)$ is at least
\begin{align*}
&~\frac{|T_i| + |T_j|+ \chi(B_{ij}) + \chi(R_{ij}) - \beta}{|T_i| + |T_j|+ |B_{ij}| + |R_{ij}| + \hbar_{ij}}\\
= &~\frac{2\left\lfloor \alpha (1-\alpha)^2
x \right\rfloor +\left\lceil2\alpha^2(1-\alpha)x \right\rceil - \beta}{2\left\lfloor \alpha (1-\alpha)^2
x \right\rfloor  +\left\lceil2\alpha^2(1-\alpha)x \right\rceil + |\beta| + \hbar_{ij}}\\
> &~\frac{ 2\alpha(1-\alpha)x - 2 - |\beta|}{2\alpha(1-\alpha)x + 1 + |\beta| + 2a_1}\\
> &~\frac{ 2\alpha(1-\alpha)x - 3 - 2|\beta| - 2a_1}{2\alpha(1-\alpha)x }\\
> &~1 - \frac{  4a_1}{2\alpha(1-\alpha)x } =  1 - \frac{  2a_1}{\alpha(1-\alpha)(10a_1)^5},
\end{align*}
which is larger than $\alpha$ based on formula (\ref{use-in-uij-eqal}). The fifth line is because $3+2|\beta| < r < a_1 < 2a_1$. Thus, we may assume that $\chi(R_{ij}) = 0$.
\end{proof}

Thus in all cases, we may assume that the size of the intersection of $X$ and $V_1$ is $q_1$. This means that the size of the intersection of $X$ and $V_2$ is at most $q - q_1 = q_2$. In the next, we consider the vertices of $V_2$.

\begin{claim}\label{cl:ld-module-66}
If $\alpha =0$, then $\chi(S_i)\geq p$  for every $i\in [k]$, and $\chi(U_{ij}) \geq p_{ij}+p'_{ij} \geq p'$ for all distinct $i,j \in [k]$.
\end{claim}
\begin{proof}
Let $\alpha = 0$. Consider the neighbor modules $R_i, T_i$ of $S_i$. We have $|R_i| = |\beta|  - s -l$ and $|T_i| = s$. Thus, for any vertex $v \in S_i$, the number of neighbors of $v$ outside $S_i$ is $|\beta|  - s -l + s = |\beta|  -l$. In addition, according to the definition, $S_i$ has $p$ vertices of degree less than $l$ in $S_i$. Thus, all the $p$ vertices must be in $X$, otherwise, there exists some vertex $u\in S_i - X$ such that $|N(u)\cap X| \leq |N(u)| < |\beta| -l + l = \beta$, a contradiction. Hence,  $\chi(S_i)\geq p$.


Now, consider the neighbor modules $R_{ij}, T_i, T_j$ of $U_{ij}$. We have $|R_{ij}| = |\beta| - 2s - 2l$ and $|T_i| = |T_j| = s$. Thus, for any vertex $v\in U_{ij}$, the number of adjacent vertices of $v$ outside $U_{ij}$ is at most $|R_{ij}| + |T_i| + |T_j| = |\beta| -2l$. In addition, according to the definition, $U_{ij}$ has $p'_{ij}$ vertices with degree zero and $p_{ij}$ vertices of degree less than $l$. Therefore, the $p_{ij}+p'_{ij}$ vertices must be in $X$, otherwise, there exists some vertex $u\in U_{ij} - X$ such that $|N(u)\cap X| \leq |N(u)| < |\beta| -2l + l = |\beta| - l \leq \beta$, a contradiction. Recall that $p'_{ij}=  p'  - p_{ij} + \frac{1}{2(k-1)}(l_{ij} - 2a_n)$.
Since $l_{ij} \geq 2a_n$, we have
\begin{align*}
p_{ij}+p'_{ij} = p_{ij} +  p'  - p_{ij} + \frac{1}{2(k-1)}(l_{ij} - 2a_n) \geq p'.
\end{align*}
Hence, we have $\chi(U_{ij}) \geq p_{ij}+p'_{ij} \geq p'$.
\end{proof}

Recall that the minimum degree of $S_i - X$ in $X$ is denoted by $\delta (S_i - X, X)$, and that the maximum degree of $S_i - X$ is denoted by $\Delta(S_i - X)$.

\begin{claim}\label{cl:ld-module-6}
Consider factor $S_i$ for all $i$. We may assume that, 
\begin{enumerate}
    \item 
     if $\alpha = 0$, then $\delta (S_i - X, X)$ is an element of $I$, and $\chi(S_i) = p + c(\delta (S_i$ $ - X, X) )$;

     \item    
     if $\alpha \in (0,1]$, then $\Delta(S_i - X)$ is an element of $I$, and $\chi(S_i)$ $ = c(\Delta(S_i - X))$, moreover, $S_i \cap X$ consists of the first $\chi(S_i)$ largest degree internal vertices of the stars graph $S_i$ if $\alpha \in (0,1)$.
\end{enumerate}
\end{claim}
\begin{proof}
\textit{Case 1:} since $\alpha =0$, $S_i$ is a $(a_0, I, c)$-degree retention graph above $(p,l)$, where $c(a_i) = \frac{1}{2}(a_i - a_n)$ and $c(a_0)$ $= ks$. 
Since only the number of the vertices added to $X$ in $S_i$ affects the vertices outside of $S_i$, for any fixed $\chi(S_i)$, we may assume that the deletion of $\chi(S_i)$ vertices maximize $\delta (S_i - X, X)$. Based on Claim \ref{cl:ld-module-66}, $\chi(S_i)\geq p$. Moreover, since $p+ c(a_n) = p$, we have $\delta (S_i - X, X)\geq a_n$ according to the definition of $S_i$ and Definition~\ref{def:kd-degdeletion}.
Assume that there exists some $i \in [k]$ such that $\chi(S_{i}) \geq p + c(a_0) = p +ks$. We know  $\chi(U_{ij})\geq p'$ according to Claim \ref{cl:ld-module-66}. Therefore, we have $q_2 = |X \cap V_2|$, which equals 
\begin{align*}
&~\sum_{i\in [k]} (\chi(S_i) + \chi(T_i)) + \sum_{1\leq i<j \leq k} \chi(U_{ij})\\
\geq &~ \sum_{i\in [k]} \chi(S_i) + \sum_{1\leq i<j \leq k} \chi(U_{ij})\\
\geq &~  (k-1)p + (p + ks) + \binom{k}{2} p'\\
= &~  kp + \binom{k}{2} p' + ks \\
> &~  kp + \binom{k}{2}p'  + k (s + l - a_n) = q_2,
\end{align*}
a contradiction. The last line is because  $l <  r \leq a_n$. Hence, for every $i \in [k]$, we have $\chi(S_{i}) < p + c(a_0)$. This means that $\delta (S_i - X, X)\leq a_1$ according to the definition of $S_i$ and Definition~\ref{def:kd-degdeletion}. 
Now, we have that $a_n \leq \delta(S_i - X, X) \leq a_1$, and that $p + c(a_n) \leq \chi(S_{i}) < p + c(a_0)$.

Let $j \in [n]$ be the minimum index such that $p + c(a_j) \leq \chi(S_i)$. Then, we have $\chi(S_i) < p+ c(a_{j-1})$, which implies that  $\delta(S_i - X, X) \leq a_{j}$ based on Definition~\ref{def:kd-degdeletion}. Moreover, it is possible to make $\delta(S_i - X, X) \geq a_j$ by adding $\chi(S_i)$ vertices to $X$ according to Definition~\ref{def:kd-degdeletion} and Observation \ref{obervation-for-ren-table}. Thus, we have $\delta(S_i - X, X) \in I$. 
Now assume that $\chi(S_i) = p+  c(a_j) + h$ for some $h > 0$.  
Consider $X'$ obtained from $X$, but adding exactly $p+c(a_j)$ vertices to $X$ from $S_i$ instead. Then $\delta(S_i - X', X')$ can still be $a_j$. Only the vertices in $T_i$ and $R_i$ might be affected by this change. However, $R_i \subseteq X'$ based on Claim \ref{cl:ld-module-4}, and $|N(v)\cap X'| \geq |D_i| > \beta$ for any $v\in T_i$ according to Claim \ref{cl:ld-module-2}. So this change does not create any problems.  Hence, we may assume that $\delta(S_i - X, X) = a_j$ and $\chi(S_i) = p+c(a_j)$.

\textit{Case 2:} first consider $0< \alpha <1$. $S_i$ is a $(a_0, I \cup \{a_n - 1\}, c)$-degree deletion star graph. According to Lemma \ref{the detail of degree deletion graph}, we have that, apart from the stars with leaves less than $a_n$, $S_i$ consists of exact $c(a_{j}) - c(a_{j-1})$ stars with $a_{j-1}$ leaves for all  $j\in [n]$ as well as exact $c(a_{n}-1) - c(a_{n})$ stars with $a_{n}$ leaves. Clearly, the number of stars with at least $a_{n}$ leaves equals
\begin{align*}
&~ c(a_{n}-1) - c(a_{n}) + \sum_{j \in [n]} (c(a_{j}) - c(a_{j-1}))\\
=&~ c(a_{n}-1) - c(a_{0}) = c(a_{n}-1) = a_{0}.
\end{align*}
Since $a_{0} > q_2$ according to Table \ref{table:reduction-values-relationship}, the number of stars with at least $a_{n}$ leaves is larger than $\chi(S_i)$.  It means that $\chi(S_i) < c(a_{n}-1)$.
Furthermore, since only the number of the vertices added to $X$ affects the dominating coefficients of the vertices outside of $S_i$ with respect to $(H,X)$, for any fixed $\chi(S_i)$, we may assume that the deletion of $\chi(S_i)$ vertices maximize the value of the dominating coefficient of $V(S_i)\setminus X$ with respect to $(H,X)$. 
Thus, based on Lemma \ref{greedy lemma}, we may assume that $X(S_i)$ consists of the first $\chi(S_i)$ largest degree internal vertices of the stars graph $S_i$.
Hence, $\Delta(S_i - X)\in I \cup \{a_0\}$. 
Moreover, according to case 2 of the proof of Claim \ref{cl:ld-module-4}, we have $\Delta(S_i - X) \leq a_1$, which also means that $\chi(S_i) \geq c(a_{1})$ based on the definition of $S_i$ and Definition \ref{def:kd-degdeletion-bdd-121}. 
Overall, we have $\Delta(S_i - X) \in I$ and $c(a_{1})  \leq \chi(S_i) < c(a_{n}-1)$.

Let $j \in [n]$ be the maximum index such that $c(a_j) \leq \chi(S_i)$. Assume $\chi(S_i) =  c(a_j) + h$ for some $h >0$. Consider $X'$ obtained from $X$, but adding the largest $c(a_j)$ internal vertices from $S_i$ instead.  Then $S_i - X'$ still have maximum degree $a_j$, and thus $\Lambda(S_i - X, H,X) = \Lambda(S_i - X', H, X')$.
In addition, according to the proof of Lemma \ref{r-t are safe}, the dominating coefficient of $T_i$ and $R_i$ with respect to $(H, X)$ is always larger than $\alpha$ no matter what is the value of $\chi(S_i)$.
Hence, we may assume that $S_i - X$ has degree $a_j \in I$ and $\chi(S_i) = c(a_j)$.

Secondly, consider $\alpha  = 1$. The degree of every vertex of $H - X$ is at most $-\beta = (nk)^{10000}$.
As before, we may assume that the $\chi(S_i)$ deletions in $S_i$ minimize $\Delta(S_i - X)$.  
According to case 2 in the proof of Claim~\ref{cl:ld-module-4}, we have $\chi(S_i) \geq c(a_1)$, and $\Delta(S_i - X) \leq a_1$.  
Moreover, $S_i - X$ cannot have maximum degree $a_n - 1$ or less, since this requires that $\chi(S_i)$ is at least $c(a_n - 1) = a_0 = q+1$, a contradiction.
Hence, we have $a_n \leq \Delta(S_i - X) \leq a_1$ and $c(a_1) \leq \chi(S_i) < c(a_n -1)$.

Let $j \in [n]$ be the maximum index such that $c(a_j) \leq \chi(S_i)$.  
Note that $j$ is well-defined since $\chi(S_i) \geq c(a_1)$.
If $j = n$, then we made at least $c(a_n)$ deletions and we have $\Delta(S_i - X) = a_n$.
If instead $j < n$, we have $\chi(S_i) < c(a_{j+1})$ and, by Defintion~\ref{def:kd-degdeletion-bdd-121}, we have $\Delta(S_i - X) \geq a_j$. Furthermore, according to Defintion~\ref{def:kd-degdeletion-bdd-121} and Observation \ref{obervation-for-del-table}, it is possible to make $\Delta(S_i - X) \leq a_j$ by adding $\chi(S_i)$ vertices to $X$. Thus, we may assume that $\Delta(S_i - X) = a_j$.
In either case, $\Delta(S_i - X) \in I$, as desired.

Now assume that $\chi(S_i) = c(a_j) + h$ for some $h > 0$.  
Consider $X'$ obtained from $X$, but adding exactly $c(a_j)$ vertices from $S_i$ instead.  Then $S_i - X'$ can still have maximum degree $a_j$. Although the degrees of vertices of $T_i$ and $R_i$ in $H - X'$ have increased by $h$, any vertex of $R_i$ and $T_i$ has a total degree in $H$ at most $|V_2| < n^{243}$, which is much less than $- \beta =(nk)^{10000}$, so this increase cannot create any problem.  Hence, we may assume that $S_i - X$ has degree $a_j$ and $\chi(S_i) = c(a_j)$.
\end{proof}

Recall that integer set $I_{ij}$ consists of all $a + b$ such that $a$, $b \in$ $I$ and edge  $(f_{i}^{-1}(a), f_{j}^{-1}(b)) \in E(G)$, and that $\ell_{ij}$ and $\hbar_{ij}$ are the smallest element and the largest element of $I_{ij}$, respectively.

\begin{claim}\label{cl:ld-module-7}
Consider factor $U_{ij}$ for all distinct $ i, j \in [k]$. We may assume that 
\begin{enumerate}
    \item 
     if $\alpha = 0$, then  $\delta (U_{ij} - X, X)$ is an element of $I_{ij}$, and $\chi(U_{ij}) = p'_{ij} + p_{ij}  + c_{ij}(\delta (U_{ij} - X, X) )$;

     \item 
     if $\alpha \in (0,1]$, then $\Delta(U_{ij} - X)$ is an element of $I_{ij}$, and  $\chi(U_{ij}) = c_{ij}(\Delta(U_{ij} - X))$, moreover, $U_{ij} \cap X$ consists of the first $\chi(U_{ij})$ largest degree internal vertices of the stars graph $U_{ij}$ if  $\alpha \in (0,1)$.
\end{enumerate}
\end{claim}
\begin{proof}
For any two vertex sets $V^i$ and $V^j$ of graph $G$, suppose there are $m_{ij}$ pairs of symmetry edges between them. Then, in $H$, each $I_{ij}$ has exactly $m_{ij}$ elements. 
Suppose  $I_{ij} = \{a_1^{ij}, \ldots, a_{m_{ij}}^{ij}\}$, where $\hbar_{ij} = a_1^{ij} > \ldots  >  a_{m_{ij}}^{ij} = \ell_{ij}$. Assume $\jmath \in [m_{ij}]$.

\textit{Case 1:} since $\alpha =0$, $U_{ij}$ consists of an edgeless graph with $p'_{ij}$ vertices together with  a $(2a_1 + 1, I_{ij}, c_{ij})$-degree retention graph above $(p_{ij},l)$, where $c_{ij}(a+b) = \frac{1}{2(k-1)}(a + b - l_{ij})$, and $p'_{ij}=  p'  - p_{ij} + \frac{1}{2(k-1)}(l_{ij} - 2a_n)$. In addition, we have $c_{ij}(2a_1+1)= ks$. Assume that $a_0^{ij} = 2a_1 + 1$.
Since only the number of the vertices added to $X$ affects the vertices outside of $U_{ij}$, for any $\chi(U_{ij})$, we may assume that the selection of $\chi(U_{ij})$ vertices maximize $\delta (U_{ij} - X, X)$. Based on the proof of Claim \ref{cl:ld-module-66}, we have $\chi(U_{ij}) \geq  p_{ij} + p'_{ij}$ and the $p'_{ij}$ vertices of the edgeless graph in $U_{ij}$ must be in $X$. Moreover, since $p'_{ij}+p_{ij}+ c_{ij}(\ell_{ij}) = p_{ij} + p'_{ij}$, we have $\delta (U_{ij} - X, X)\geq \ell_{ij}$ according to the definition of $U_{ij}$ and Definition \ref{def:kd-degdeletion}.
Assume there exists some $U_{ij}$ such that $\chi(U_{ij}) \geq p'_{ij} + p_{ij} + c_{ij}(2a_1+1)$. In addition, we know $c_{ij}(2a_1+1)= ks$ and $\chi(S_{i})\geq p$ according to Claim \ref{cl:ld-module-66}. Therefore, $q_2 = |X \cap V_2|$ is equal to 
\begin{align*}
&~ \sum_{i\in [k]} (\chi(S_{i}) + \chi(T_i)) + \sum_{1\leq i<j \leq k} \chi(U_{ij})\\
\geq &~ \sum_{i\in [k]} \chi(S_{i}) + \sum_{1\leq i<j \leq k} \chi(U_{ij})\\
\geq &~  kp + \binom{k}{2} (p'_{ij} + p_{ij} ) + k s\\
\geq &~  kp + \binom{k}{2} p' + k s \\
> &~ kp + \binom{k}{2}p'  + k (s + l -   a_n) = q_2,
\end{align*}
a contradiction. The penultimate line is because $p'_{ij} + p_{ij} \geq p'$ based on  Claim \ref{cl:ld-module-66}. The last line is because $l < r \leq  a_n$. Hence, for all $U_{ij}$, we have $\chi(U_{ij}) < p'_{ij} + p_{ij}  + c_{ij}(2a_1+1)$. This means that $\delta (U_{ij} - X, X)\leq \hbar_{ij}$ according to the definition of $U_{ij}$ and Definition \ref{def:kd-degdeletion}.
Now, we have that $\ell_{ij} \leq \delta(U_{ij} - X, X) \leq \hbar_{ij}$, and that $p'_{ij} + p_{ij}  + c_{ij}(\ell_{ij}) \leq \chi(U_{ij}) < p'_{ij} + p_{ij}  + c_{ij}(2a_1+1)$.

Recall that the $p'_{ij}$ vertices of the edgeless graph in $U_{ij}$ must be in $X(U_{ij})$, so the number of the vertex deletion of the $(2a_1 + 1, I_{ij}, c_{ij})$-degree retention graph above $(p_{ij},l)$ is $\chi(U_{ij}) - p'_{ij}$. 
Let $\jmath \in [m_{ij}]$ be the minimum index such that $ p'_{ij} + p_{ij}  + c_{ij}(a_{\jmath}^{ij}) \leq \chi(U_{ij})$. Then, we have $\chi(U_{ij}) <  p'_{ij} + p_{ij} + c_{ij}(a_{\jmath-1}^{ij})$, which implies that  $\delta(U_{ij} - X, X) \leq a_{\jmath}^{ij}$ based on the definition of $U_{ij}$ and Definition~\ref{def:kd-degdeletion}. Moreover, it is possible to make $\delta(U_{ij} - X, X) \geq a_{\jmath}^{ij}$ by adding $\chi(U_{ij})$ vertices to $X$ according to Definition~\ref{def:kd-degdeletion} and Observation \ref{obervation-for-ren-table}. Thus, we have $\delta(U_{ij} - X, X)=a_{\jmath}^{ij}$, which is an element of $I_{ij}$.
Now assume that $\chi(U_{ij}) = p'_{ij} + p_{ij}  + c_{ij}(a_{\jmath}^{ij}) + h$ for some $h > 0$.  
Consider $X'$ obtained from $X$, but adding exactly $p'_{ij} + p_{ij}  +c_{ij}(a_{\jmath}^{ij})$ vertices to $X$ from $U_{ij}$ instead. Clearly, $\delta(U_{ij} - X', X')$ can still be $a_{\jmath}^{ij}$, and only the vertices in $T_i$,  $T_j$, and $R_{ij}$ might be affected by this change. However, $R_{ij} \subseteq X'$ based on Claim \ref{cl:ld-module-5}, and $|N(v)\cap X'| \geq \min \{|D_i|, |D_j|\} = y > \beta$ for any $v\in T_i \cup T_j$ according to Claim \ref{cl:ld-module-2}. So this change does not create any problems.  Hence, we may assume $\delta(U_{ij} - X, X)$ is $a_{\jmath}^{ij}$ and $\chi(U_{ij}) =p'_{ij} + p_{ij}  + c_{ij}(a_{\jmath}^{ij})$.

\textit{Case 2:} first consider $0< \alpha <1$. $U_{ij}$ is a ${(a_0, I_{ij} \cup \{\ell_{ij} - 1\}, c_{ij})}$-degree deletion star graph. Assume that $a_0^{ij} = a_0$.
According to Lemma \ref{the detail of degree deletion graph}, apart from the stars with leaves less than $\ell_{ij}$, $U_{ij}$ consists of exact $c_{ij}(a_{\jmath}^{ij}) - c_{ij}(a_{\jmath-1}^{ij})$ stars with $a_{\jmath-1}^{ij}$ leaves for all  $\jmath \in [m_{ij}]$ as well as exact $c_{ij}(\ell_{ij} - 1) - c_{ij}(a_{m_{ij}}^{ij})$ stars with $a_{m_{ij}}^{ij}$ leaves. It is not hard to calculate that the number of stars with at least $a^{m_{ij}}_{ij}= \ell_{ij}$ leaves is $c_{ij}(\ell_{ij} - 1)= a_{0}$.
Since $a_{0} > q_2$ according to Table \ref{table:reduction-values-relationship}, the number of stars with at least $\ell_{ij}$ leaves is larger than $\chi(U_{ij})$. It means that $\chi(U_{ij}) < c_{ij}(\ell_{ij}-1)$. 
Furthermore, since only the number of the vertices added to $X$ affects the dominating coefficients of the vertices outside of $U_{ij}$ with respect to $(H,X)$, for any $\chi(U_{ij})$, we may assume that the selection of $\chi(U_{ij})$ vertices maximize the value of the dominating coefficient of $V(U_{ij})\setminus X$ with respect to $(H,X)$. Thus, based on Lemma \ref{greedy lemma}, we may assume $V(U_{ij}) \cap X$ consists of the first $\chi(U_{ij})$ largest degree internal vertices of the stars graph $U_{ij}$.
Therefore, $\Delta(U_{ij} - X)$ is an element of set $I_{ij} \cup \{a_0\}$. 
Moreover, based on case 2 in the proof of Claim \ref{cl:ld-module-5},  $\Delta(U_{ij} - X)$ is at most $\hbar_{ij}$, which also means that $\chi(U_{ij}) \geq c_{ij}(\hbar_{ij})$ based on Definition \ref{def:kd-degdeletion-bdd-121}.
Overall, we have $\Delta(U_{ij} - X) \in I_{ij}$ and $c_{ij}(\hbar_{ij})  \leq \chi(U_{ij}) < c_{ij}(\ell_{ij}-1)$.

Let $\jmath \in [n]$ be the maximum index such that $c_{ij}(a_{\jmath}^{ij}) \leq \chi(U_{ij})$. Assume $\chi(U_{ij}) =  c_{ij}(a_{\jmath}^{ij}) + h$ for some $h >0$. Consider $X'$ obtained from $X$, but adding the largest $c_{ij}(a_{\jmath}^{ij})$ internal vertices from $U_{ij}$ instead.  Then $U_{ij} - X'$ still can have maximum degree $a_{\jmath}^{ij}$, and $\Lambda(U_{ij} - X, H, X) = \Lambda(U_{ij} - X', H, X')$.
In addition, according to the proof of Claim \ref{r-t are safe}, any vertex of $T_i$, $T_j$ and $R_{ij}$ always satisfies the linear inequality of the problem no matter what is the value of $\chi(U_{ij})$.
Hence, we may assume that $U_{ij} - X$ has degree $a_{\jmath}^{ij}$ and $\chi(U_{ij}) = c_{ij}(a_{\jmath}^{ij})$.

Secondly, consider $\alpha  = 1$. Every vertex $v\in V(H) - X$ satisfies that $|N(v)| - |N(v) \cap X| \leq - \beta$, where where $-\beta = (nk)^{10000}$. This implies that the degree of $v$ in $H$ is at most $|\beta|$ after deleting all vertices of $X$. 
As before, we may assume that the $\chi(U_{ij})$ deletions in $U_{ij}$ minimize $\Delta(U_{ij} - X)$.  According to case 2 in the proof of Claim \ref{cl:ld-module-5}, we have that $\chi(U_{ij})$ is at least $c_{ij}(\hbar_{ij} )$, and that  $U_{ij} - X$ has maximum degree $\hbar_{ij} $ or less.
Moreover, $U_{ij} - X$ cannot have maximum degree $\ell_{ij} - 1$ or less, since this requires that $\chi(U_{ij})$ is at least $c_{ij}(\ell_{ij} - 1) = a_0 = q_2+1$, a contradiction.
Overall, we may assume that $\ell_{ij} \leq \Delta(U_{ij} - X) \leq \hbar_{ij}$ and $c_{ij}(\hbar_{ij} ) \leq \chi(U_{ij}) < c_{ij}(\ell_{ij} - 1)$.

Recall that $\hbar_{ij} = a_{1}^{ij}$ and $\ell_{ij} = a_{m_{ij}}^{ij}$. Let $\jmath \in [m_{ij}]$ be the maximum index such that $c_{ij}(a_{\jmath}^{ij}) \leq \chi(U_{ij})$. Note that $\jmath$ is well-defined since $\chi(U_{ij}) \geq c_{ij}(\hbar_{ij} )$.
If $\jmath = m_{ij}$, then we made at least $c_{ij}(\ell_{ij})$ deletions and we have $\Delta(U_{ij} - X) = \ell_{ij}$.
If instead $\jmath < m_{ij}$, we have $\chi(U_{ij}) < c_{ij}(a_{\jmath+1}^{ij})$ and, by Defintion~\ref{def:kd-degdeletion-bdd-121}, we have $\Delta(U_{ij} - X) \geq a_{\jmath}^{ij}$. According to Defintion~\ref{def:kd-degdeletion-bdd-121} and Obersavation \ref{obervation-for-del-table}, it is possible to make $\Delta(U_{ij} - X) \leq a_{\jmath}^{ij}$ by adding $\chi(U_{ij})$ vertices to $X$. Thus, we may assume that $\Delta(U_{ij} - X) = a_{\jmath}^{ij}$.
In either case, $\Delta(U_{ij} - X) \in I_{ij}$, as desired.
Now assume that $\chi(U_{ij}) = c_{ij}(a_{\jmath}^{ij}) + h$ for some $h > 0$.  
Consider $X'$ obtained from $X$, but adding exactly $c_{ij}(a_{\jmath}^{ij})$ vertices from $U_{ij}$ instead.  Then $U_{ij} - X'$ can still have maximum degree $a_{\jmath}^{ij}$. 
In addition, according to the proof of Claim \ref{r-t are safe}, any vertex of $T_i$, $T_j$ and $R_{ij}$ has degree in $H$ less than $-\beta$ no matter what is the value of $\chi(U_{ij})$. 
Hence, we may assume that $U_{ij} - X$ has degree $a_{\jmath}^{ij}$ and $\chi(U_{ij}) = c_{ij}(a_{\jmath}^{ij})$.
\end{proof}

We sometimes abuse terminology by saying that $S_i$ choose $a_j \in I$ to indicate the following points: (1) if $\alpha = 0$, then the minimum degree of $S_i - X$ in $X$ is $a_j$, and $\chi(S_i) = p + c(a_j)$; (2) if $\alpha \in (0, 1)$, then $V(S_i) \cap X$ consists of the first $\chi(S_i)$ largest degree internal vertices of $S_i$, the maximum degree of $S_i - X$ is $a_j$, and $\chi(S_i) = c(a_j)$; (3) if $\alpha =1$, then the maximum degree of $S_i - X$ is $a_j$ and $\chi(S_i) = c(a_j)$.

\begin{claim}\label{cl:ld-module-8}
Suppose that $S_i$ chose $a_j \in I$.  Then, $\chi(T_i)$ is at least 
\begin{enumerate}
    \item 
     $s +l - a_j$  if $\alpha = 0$;

    \item 
    $\alpha a_j + \beta - 1$   if $\alpha \in (0,1)$; 
    
    \item 
    $a_j$  if $\alpha =1$. 
\end{enumerate}
\end{claim}
\begin{proof}
Consider $\alpha = 0$. For any vertex $v\in S_i - X$ in $H$, we have $|N(v)\cap X| \geq \beta$. Since the minimum degree of $S_i - X$ in $X$ is $a_j$, and $R_i \subseteq X$ based on Claim \ref{cl:ld-module-4}, we have $|R_i|  +\chi(T_i)+a_j  \geq   \beta$, which means that $\chi(T_i)$ is at least
 \begin{align*}
\beta - |R_i| - a_j = \beta - (|\beta| - s -l) - a_j,
\end{align*}
which is equal to $s +l - a_j$ since $\beta$ is a positive number here.

Consider $\alpha \in (0,1)$. Recall $s=0$ in this case. For any vertex $v\in S_i - X$ in $H$, we have $|N(v)\cap X| \geq \alpha |N(v)| + \beta$. Based on Claim \ref{cl:ld-module-6}, it is enough to make the internal vertex of the largest star, with $a_j$ leaves, in $S_i - X$ satisfied the inequality of the problem. Thus, we have
\begin{align*}
\chi{(B_i)} + \chi{(R_i)} + \chi{(T_i)} \geq \alpha (|B_i|+|R_i|+|T_i|+a_j) + \beta.
\end{align*}
Moreover, $\chi{(B_i)} = |B_i|$ and $\chi{(R_i)} = 0$ according to Claim \ref{cl:ld-module-2} and Claim \ref{cl:ld-module-4}. Then,  $\chi{(T_i)}$ is at least
\begin{align*}
&~ \alpha (|R_i|+|T_i|+a_j) + (\alpha -1)|B_i| + \beta\\
\geq &~ \alpha (|T_i|+a_j) + (\alpha -1)|B_i| + \beta\\
= &~ \alpha ( \left\lfloor \alpha (1-\alpha)^2x  \right\rfloor   + s + a_j ) -(1 - \alpha)  \left\lceil \alpha^2(1-\alpha)x  \right\rceil + \beta\\
= &~ \alpha a_j + \beta + \alpha \left\lfloor \alpha(1-\alpha)^2x  \right\rfloor - (1- \alpha)\left\lceil \alpha^2(1-\alpha)x  \right\rceil\\
> &~  \alpha a_j + \beta - 1.
\end{align*}

Consider  $\alpha =1$. For any vertex $v\in S_i - X$, the formula $|N(v)| - |N(v)\cap X|  \leq  - \beta$ should be satisfied. Suppose there are $h$ vertices in $X \cap V(S_i)$ that are adjacent to $v$. Then, we have 
\begin{align*}
|B_i| +|R_i|  +|T_i|+a_j + h  - (\chi{(B_i)} + \chi{(R_i)} + \chi{(T_i)} +h ) \leq   -\beta.
\end{align*}
Moreover, we have that $\chi(B_i) = |B_i| =0$, $ \chi{(R_i)} = 0$ according to Claim \ref{cl:ld-module-2} and Claim \ref{cl:ld-module-4}. Then, we have $\chi{(T_i)}$ is at least $\beta + |R_i|+|T_i|+a_j$, which is equal to
$ \beta + (|\beta| - s )+ s + a_j =  a_j$ since $\beta$ is a negative number. 
\end{proof}

We sometimes abuse terminology by saying that $U_{ij}$ chose $a + b \in I_{ij}$ to indicate the following points: (1) if $\alpha = 0$, then the minimum degree of $U_{ij} - X$ in $X$ is $a+b$, and $\chi(U_{ij}) = p_{ij} + p'_{ij} + c_{ij}(a + b)$; (2) if $\alpha \in (0, 1)$, then $V(U_{ij}) \cap X$ consists of the first $\chi(U_{ij})$ largest degree internal vertices of $U_{ij}$, the maximum degree of $U_{ij} - X$ is $a + b$, and $\chi(U_{ij}) = c_{ij}(a + b)$; (3) if $\alpha =1$, then the maximum degree of $U_{ij} - X$ is $a+b$ and $\chi(U_{ij}) = c_{ij}(a+b)$.

\begin{claim}\label{cl:ld-module-912}
Suppose that $U_{ij}$ chose $a + b \in I_{ij}$.  Then, $\chi(T_i)  + \chi(T_j)$ is at least 
\begin{enumerate}
    \item 
    $2s +2l - a-b$ if $\alpha = 0$;

    \item 
    $\alpha a +  \alpha b + \beta - 2$  if $\alpha \in (0,1)$; 
    
    \item 
    $a + b$ if $\alpha =1$. 
\end{enumerate}
\end{claim}
\begin{proof}
Consider $\alpha = 0$. For any vertex $v\in U_{ij} - X$, the formula $|N(v)\cap X| \geq \beta$ should be satisfied. Since the minimum degree of $U_{ij} - X$ in $X$ is $a+b$, and $R_{ij} \subseteq X$ based on Claim \ref{cl:ld-module-5}, we have $|R_{ij}|  +\chi(T_i)+\chi(T_j)+a +b  \geq   \beta$. Thus, $\chi(T_i)+\chi(T_j)$ is at least
\begin{align*}
 \beta - |R_{ij}| - a -b  
=  \beta - (|\beta| - 2s -2l) - a -b 
=  2s +2l - a-b
\end{align*}
since $\beta >0$ in this case.

Consider  $\alpha \in (0,1)$. Recall $s=0$ in this case. For any vertex $v\in U_{ij} - X$, the formula $|N(v)\cap X| \geq \alpha |N(v)| + \beta$ should be satisfied. Based on Claim \ref{cl:ld-module-7}, it is enough to make the internal vertex of the largest star, with $a+b$ leaves, in $U_{ij} - X$ satisfied the inequality of the problem. Thus, we have
\begin{align*}
\chi{(B_{ij})} + \chi{(R_{ij})} + \chi{(T_i)} +\chi{(T_j)} \geq \alpha (|B_{ij}|+|R_{ij}|+|T_i| + |T_i| + a+b) + \beta.
\end{align*}
Moreover, we have that $\chi{(B_{ij})} = |B_{ij}|$ and $\chi(R_{ij}) = 0$ according to Claim \ref{cl:ld-module-2} and Claim \ref{cl:ld-module-5}. Then,  $\chi{(T_i)} + \chi{(T_j)}$ is at least
\begin{align*}
&~ \alpha (|R_{ij}|+|T_i| + |T_j| + a+b) - (1-\alpha)|B_{ij}| + \beta\\
\geq &~ \alpha (|T_i| + |T_j| + a+b) - (1-\alpha)|B_{ij}| + \beta\\
= &~ \alpha ( 2\left\lfloor \alpha (1-\alpha)^2 x  \right\rfloor  + 2s + a+b ) -(1 - \alpha)  \left\lceil 2\alpha^2(1-\alpha)x  \right\rceil + \beta\\
= &~ \alpha a + \alpha b + \beta  + 2\alpha \left\lfloor \alpha(1-\alpha)^2x  \right\rfloor - (1- \alpha)\left\lceil 2\alpha^2(1-\alpha)x  \right\rceil\\
> &~ \alpha a + \alpha b + \beta  - 2.
\end{align*}

Consider  $\alpha =1$. For any vertex $v\in U_{ij} - X$, the formula $|N(v)| - |N(v)\cap X|  \leq  - \beta$ should be satisfied. Suppose there are $h$ vertices in $X \cap V(U_{ij})$ that are adjacent to $v$. Then, we have 
\begin{multline*}
|B_{ij}| +|R_{ij}|  +|T_i|+|T_j|+a+b + h \\ - (\chi{(B_{ij})} + \chi{(R_{ij})} + \chi{(T_i)} +\chi{(T_j)} +h ) \leq   -\beta.
\end{multline*}
Moreover, we have that $\chi(B_{ij}) = |B_{ij}| =0$ and that $ \chi{(R_{ij})} = 0$ based on  Claim \ref{cl:ld-module-2} and Claim \ref{cl:ld-module-5}. Then, we have
$\chi{(T_i)} +\chi{(T_j)} $ is at least  $\beta + |R_{ij}|+|T_i|+|T_j|+a+b$, which is equal to $\beta + (|\beta| - 2s )+ 2s + a+b = a+b$ since $\beta$ is a negative number here.
\end{proof}

\begin{claim}\label{cl:ld-module-10}
For each distinct $i, j \in [k]$, if $S_i$ chose $a \in I$ and $S_j$ chose $b \in I$, then $U_{ij}$ chose $a + b$.
\end{claim}
\begin{proof}
For each $i \in [k]$, according to Claim \ref{cl:ld-module-6}, we will denote by $\ha_i$ the element of $I$ that $S_i$ chose. For each distinct $i,j \in [k]$, we define $\hu_i$, $\he_i$, and  $p_{ij}$ as follows:
\begin{itemize}
    \item 
    if $\alpha = 0$, then $\hu_i = s+l-\ha_i$, $\he_i = p + s +l - \frac{1}{2} (\ha_i + a_n)$, and $\wp_{ij} = p_{ij} + p'_{ij}$;
    \item 
    if $\alpha \in (0,1)$, then $\hu_i = \alpha \ha_i - |\beta| -1$, $\he_i = t + \frac{\alpha}{2} \ha_i - |\beta| - 2$, and $\wp_{ij} = 0$;
    \item 
    if $\alpha = 1$, then $\hu_i = \ha_i$, $\he_i = s + \frac{1}{2} \ha_i$, and $\wp_{ij} = 0$.
\end{itemize}
Moreover, we claim that
\begin{align}
\chi(S_i) + \chi(T_i) \geq   \chi(S_i) + \hu_i \geq \he_i.
    \label{two-inequality-t-u-e}
\end{align}
The first inequality is true since   $\chi(T_i) \geq \hu_i$ according to Claim \ref{cl:ld-module-8}. For the second inequality, let us consider $\chi(S_i) + \hu_i$, which equals
\begin{itemize}
    \item 
    $p+ \frac{1}{2} (\ha_i - a_n) + s+l-\ha_i = \he_i $ if $\alpha = 0$;
    \item 
    $t - \left\lceil\frac{\alpha}{2}\ha_i\right\rceil$ + $\alpha \ha_i - |\beta| -1 > \he_i$ if $\alpha \in (0,1)$;
    \item 
    $s - \left\lceil\frac{1}{2} \ha_i \right\rceil +  \ha_i = \he_i$ if $\alpha = 1$.
\end{itemize}

Assume $\Psi_{ij} = \wp_{ij} + c_{ij}(\ha_i + \ha_j)$ for each distinct $i,j\in [k]$.
Let us define functions $t_{ij}: I_{ij}  \rightarrow \mathbb{N}$ for all distinct $i, j \in [k]$ as follows. Assume $U_{ij}$ chose $a' + b' \in I_{ij}$. Suppose that $T_{ij} \subseteq T_i \cup T_j$ is the minimum possible vertex set added to $X$ such that every vertex $v\in U_{ij} - X$ satisfies that $|N(v)\cap X| \geq \alpha |N(v)| + \beta$. Then, we use $t_{ij}(a' + b')$ to denote the vertex number of $T_{ij}$.
According to Claim \ref{cl:ld-module-912}, for any $a'+ b' \in I_{ij}$, we have $t_{ij}(a' + b')$ is at least
\begin{itemize}
    \item 
    $ 2s+2l-a'-b'$ if $\alpha = 0$;
    
    \item 
    $\alpha a' + \alpha b' -|\beta| -2$ if $\alpha \in (0,1)$;
    
    \item
    $a'+b'$ if $\alpha = 1$.
\end{itemize}


Assume $U_{ij}$ chose $a' + b'$. Then, according to Claim \ref{cl:ld-module-7}, we have $a' + b' \in I_{ij}$. We divide all $U_{ij}$ into three groups, denoted by $U_<,  U_=$, and $ U_>$, as follows.

\begin{itemize}
    \item 
    $U_<$ consists all $U_{ij}$ such that $a'+b' < \ha_i + \ha_j$ if $\alpha \in (0,1]$, and that $a'+b' > \ha_i + \ha_j$ if $\alpha = 0$;
    
    \item 
    $U_=$ consists all $U_{ij}$ such that $a'+b' = \ha_i + \ha_j$;
    
    \item
    $U_>$ consists all $U_{ij}$ such that $a'+b' > \ha_i + \ha_j$ if $\alpha \in (0,1]$, and that $a'+b' < \ha_i + \ha_j$ if $\alpha = 0$.
\end{itemize}
Furthermore, $U$ denotes the union of $U_<$, $U_=$, and $U_>$. 
$U_{\geq}$ denotes the union of $U_=$ and $U_>$.
$U_{\leq}$ denotes the union of $U_<$ and $U_=$.
In addition, since $|a'+b' - \ha_i - \ha_j| \geq r $, it is not hard to verify that $\Psi_{ij} > \chi(U_{ij})$ if $U_{ij} \in U_>$, that $\Psi_{ij} = \chi(U_{ij})$ if $U_{ij} \in U_=$, and that $\Psi_{ij} < \chi(U_{ij})$ if $U_{ij} \in U_<$.

To prove the claim, it suffices to show that $U_<$ and $U_>$ are empty (this is because $U_{ij} \in U_=$ is only possible if $U_{ij}$ chose $\ha_i + \ha_j$, since all the sum pairs of $I_{ij}$ are distinct).
The rough idea is as follows.
If each $U_{ij}$ chose the correct $\ha_i + \ha_j$, then each of them will incur a deletion cost of $\Psi_{ij}$ and end up satisfying that the number of all the vertices of the deletion set $X$ is at most $q$.  
If $U_<$ is non-empty, it incurs an extra deletion cost with respect to $\Psi_{ij}$ with no real benefit.  The complicated case is when $U_>$ is non-empty.  In this case, $U_{ij} - X$ incurs fewer deletions, which is $\chi(U_{ij})$, than if it had chosen $\ha_i + \ha_j$, which would have deleted $\Psi_{ij}$ vertices.  However, this needs to be compensated with extra deletions in $T_i$ and $T_j$.  By using a charging argument, we will show that the sum of extra deletions required for all the $U_>$ members outweighs the deletions saved in the $U_{ij}$'s of $U_>$.

First, let us consider the $U_>$ set. Suppose that it is non-empty. Recall that each $T_i$ requires at least $\hu_i$ deletions by Claim~\ref{cl:ld-module-8}. 
Denote by $X_0(T_i)$ an arbitrary subset of $X(T_i)$ containing exactly $\hu_i$ vertices, and denote $X_1(T_i) = X(T_i) \setminus X_0(T_i)$. (Of course, $X_1(T_i)$ could occasionally be an empty set.)  
We also use the notation $\chi_0(T_i) := |X_0(T_i)|$ and $\chi_1(T_i) := |X_1(T_i)|$.
The $X_0(T_i)$ corresponds to (part) deletions we had to do because of $S_i$, and $X_1(T_i)$ corresponds to ``extra'' deletions.  We will write $X_1 = \bigcup_{i \in [k]}X_1(T_i)$.

Consider some $U_{ij} \in U_>$. We know $U_{ij}$ chose $a'+ b'$. We define $\Delta_{ij} = t_{ij}(a' + b') - \hu_i - \hu_j$. We have the following statements.
\begin{enumerate}
    \item 
    If $\alpha = 0$, then  $t_{ij}(a' + b') \geq 2s +2l - a'-b'$ , which is larger than $\hu_i + \hu_j = 2s +2l - \ha_i - \ha_j$ since  $a'+b' < \ha_i + \ha_j$ in this case. Moreover, $\Delta_{ij}$ is at least $\ha_i + \ha_j - a'- b'$.

    \item 
    If $\alpha \in (0,1)$, then $t_{ij}(a' + b') \geq \alpha a' +  \alpha b' - |\beta| - 2$ ,  which is larger than $\hu_i + \hu_j = \alpha \ha_i + \alpha \ha_j - 2|\beta| -2$ since  $a'+b' > \ha_i + \ha_j$ in this case.  Moreover, $\Delta_{ij}$ is at least $\alpha( a'+ b' - \ha_i - \ha_j)$.
    
    \item 
    If $\alpha =1$, then $t_{ij}(a' + b') \geq a' + b'$, which is larger than $\hu_i + \hu_j = \ha_i + \ha_j$ since  $a'+b' > \ha_i + \ha_j$ in this case. Moreover, $\Delta_{ij}$ is at least $a'+ b' - \ha_i - \ha_j$.
\end{enumerate}
Thus, we always have $\Delta_{ij} > 0$ for every  $U_{ij} \in U_>$.
In addition, $(T_i \cup T_j) \cap X$ contains at least $t_{ij}(a' + b')$ vertices, which means that $\chi(T_i) + \chi(T_j) \geq t_{ij}(a' + b')$. Furthermore, we have $\chi_0(T_i) + \chi_0(T_j) = \hu_i + \hu_j$.  Therefore, $\chi_1(T_i) + \chi_1(T_j) \geq t_{ij}(a' + b') - (\hu_i + \hu_j) = \Delta_{ij}$. Moreover, $X_1$ is not empty since $U_>$ is not empty.

We define a (charging) function $\charge : X_1 \rightarrow \mathbb{R}$, illustrated in Figure~\ref{fig:bddcharge}.  We assume that each vertex of $X_1$ starts with $\charge(v) = 0$ and, in the description that follows, we describe how charges are added incrementally. For a $U_{ij} \in U_>$, we have $X_1(T_i) \cup X_1(T_j)$ contains at least $\Delta_{ij}$ vertices. Then, for each extra vertex $v \in X_1(T_i) \cup X_1(T_j)$ that ``related'' to $U_{ij}$, we increase $\charge(v)$ by $1/(k - 1)$, moreover, we repeat this for every $U_{ij} \in U_>$, and this concludes the charging procedure. Note that the ``related'' vertices here means that we need to add these vertices together with $ X_0(T_i) \cup X_0(T_j)$ to $X$ to make every vertex of $U_{ij} - X$ satisfies the linear inequality of the problem. Moreover, the number of the related vertices is \textit{exactly} $\Delta_{ij}$.

\begin{figure}
\centering
\begin{tikzpicture}
\filldraw[color=red, fill=red!3]  (2,2) rectangle (4,3.24); 
\filldraw[color=red, fill=red!3] (0,0) ellipse (2 and 1.236); 
\filldraw[color=red, fill=red!3] (6,0) ellipse (2 and 1.236); 

\draw[color=black] (-2.5,0) node {$T_i$};
\draw[color=black] (8.5,0) node {$T_j$};
\draw[color=black] (3,3.6) node {$U_{ij}\in U^>$};

\fill [color=black] (2.6,2.2) circle (1.5pt);
\draw (2.6,2.2) -- (2.2,2.7);
\draw (2.6,2.2) -- (2.4,2.7);
\draw (2.6,2.2) -- (3,2.7);

\fill [color=black] (2.2,2.7) circle (1.5pt);
\fill [color=black] (2.4,2.7) circle (1.5pt);
\fill [color=black] (2.6,2.7) circle (0.6pt);
\fill [color=black] (2.7,2.7) circle (0.6pt);
\fill [color=black] (2.8,2.7) circle (0.6pt);
\fill [color=black] (3,2.7) circle (1.5pt);

\fill [color=black] (3.4,2.45) circle (0.6pt);
\fill [color=black] (3.5,2.45) circle (0.6pt);
\fill [color=black] (3.6,2.45) circle (0.6pt);
\draw [decorate,decoration={brace,amplitude=4pt},xshift=0pt,yshift=0pt]
(2.2,2.79) -- (3,2.79) node [black,midway,xshift=-0.6cm]{};
\draw[color=black] (2.588,3.1) node[scale=0.7]{$a'+ b'$};

\filldraw[color=blue, fill=blue!20] (-1,0) ellipse (0.6 and 0.37);
\draw[color=black] (-1,0) node[scale=0.7] {$X^0(T_i)$};
\filldraw[color=blue, fill=blue!5] (0.6,0) ellipse (0.68 and 1.1);
\filldraw[color=blue, fill=blue!100] (0.6,0.3) circle(0.4);
\draw[color=black] (0.6,-0.35) node[scale=0.7] {$X^1(T_i)$};

\filldraw[color=blue, fill=blue!20] (6.9,0) ellipse (1 and 0.618);
\draw[color=black] (7.1,0) node[scale=0.7] {$X^0(T_j)$};
\filldraw[color=blue, fill=blue!5] (5.2,0) ellipse (0.49 and 0.8);
\draw[color=black] (5.2,-0.35) node[scale=0.7] {$X^1(T_j)$};
\filldraw[color=blue, fill=blue!100] (5.2,0.3) circle(0.28);

\draw[-stealth] (2.6,2.2) -- (0.4,0.65);
\draw[-stealth] (2.6,2.2) -- (0.65,0.7);
\draw[-stealth] (2.6,2.2) -- (0.96,0.12);

\fill [color=black] (1.1,0.9) circle (0.6pt);
\fill [color=black] (1.2,0.8) circle (0.6pt);
\fill [color=black] (1.3,0.7) circle (0.6pt);

\draw[-stealth] (2.6,2.2) -- (5.4,0.5);
\draw[-stealth] (2.6,2.2) -- (5.1,0.55);
\draw[-stealth] (2.6,2.2) -- (5,0.1);

\fill [color=black] (4.76,0.67) circle (0.6pt);
\fill [color=black] (4.7,0.58) circle (0.6pt);
\fill [color=black] (4.64,0.49) circle (0.6pt);
\end{tikzpicture}
\caption{Illustration of the charging procedure. The two deep blue circles consist of the extra vertices (deletions) related to $U_{ij}$. The total charge added by $U_{ij}$ to the vertices in the deep blue circles is exactly $\frac{1}{k-1} \Delta_{ij}$.}
\label{fig:bddcharge}
\end{figure}
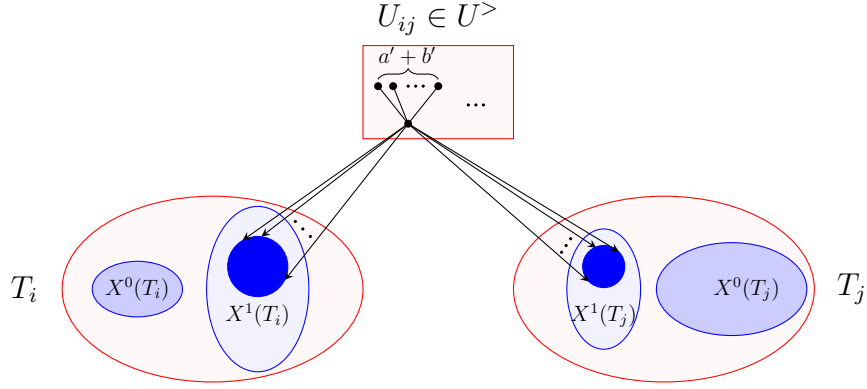

First observe that after the charging procedure is done, each $U_{ij}$ has charged at most $1/(k - 1)$ to each extra vertex $v \in X_1(T_i) \cup X_1(T_j)$ that related to $U_{ij}$.  
Moreover, for any $i \in [k]$ and any $v \in X_1(T_i)$, at most $k - 1$ of the $U_{ij}$ sets can charge that $1/(k - 1)$ to $v$, since there are only $k - 1$ of the $U_{ij}$ sets that have $i$ in their subscripts.
This means that $\charge(v) \leq (k - 1) \cdot 1/(k - 1) = 1$, and therefore that 
\begin{align}
    \sum_{v \in X_1} \charge(v) \leq |X_1|.
    \label{eq:charges-small-ld}
\end{align}

We further observe that, for any distinct $i,j\in  [k]$, the total charge added by $U_{ij}$ across the extra deletion vertices $X_1(T_i) \cup X_1(T_j)$ is exactly $\frac{1}{k-1} \Delta_{ij}$. Thus, the total charge added by all elements of $U_>$ across the extra deletion vertices $X_1$ is 
\begin{align}
\sum_{U_{ij} \in U^>}  \frac{1}{k-1} \Delta_{ij} = \sum_{v \in X_1} \charge(v).
\label{eq:total-charge-from-u}
\end{align}
Moreover, the number of deletions in $U_{ij}$ is $\chi(U_{ij}) = \wp_{ij} + c_{ij}(a' + b')$.  
This saves us some deletions compared to the number $\Psi_{ij}$, but this save in cost is much less than the total charge spread by $U_{ij}$, moreover, this fact will be demonstrated as follows.

First, we demonstrate the saving in each $U_{ij}$ as follows. 
\begin{align}
&~\Psi_{ij} - \chi(U_{ij}) \nonumber \\
=&~\wp_{ij} + c_{ij}(\ha_i + \ha_j) - \wp_{ij} - c_{ij}(a' + b') \nonumber \\
=&~ c_{ij}(\ha_i + \ha_j)  - c_{ij}(a' + b'). \label{the diference of u and hatu}
\end{align}

If $\alpha = 0$, then the formula (\ref{the diference of u and hatu}) equals
\begin{align*}
&~\frac{1}{2(k-1)}(\ha_i + \ha_j - l_{ij}) - \frac{1}{2(k-1)}(a' + b' - l_{ij})\\
=&~\frac{1}{k-1}(\ha_i + \ha_j - a' - b') - \frac{1}{2(k-1)}(\ha_i + \ha_j - a' - b')\\
\leq &~ \frac{1}{k-1}
\Delta_{ij} - \frac{1}{2(k-1)}(\ha_i + \ha_j - a' - b')\\
\leq &~ \frac{1}{k-1}
\Delta_{ij}- \frac{1}{2(k-1)} r\\
= &~ \frac{1}{k-1}
\Delta_{ij}- (k-1)k^3.
\end{align*}

If $\alpha = (0,1)$,  then the formula (\ref{the diference of u and hatu}) equals
\begin{align*}
&~  \frac{1}{k-1} \left(\left\lceil \frac{\alpha a'}{2} \right\rceil + \left\lceil \frac{\alpha b'}{2} \right\rceil\right) - \frac{1}{k-1} \left(\left\lceil \frac{\alpha \ha_i}{2} \right\rceil + \left\lceil \frac{\alpha \ha_j}{2} \right\rceil \right)\\
< &~  \frac{\alpha(a' + b' - \ha_i - \ha_j) + 4}{2(k-1)} \\
= &~  \frac{ \alpha (a' + b' - \ha_i - \ha_j)}{k-1} - \frac{\alpha (a' + b' - \ha_i - \ha_j) - 4}{2(k-1)}\\
\leq &~ \frac{1}{k-1}
\Delta_{ij} - \frac{\alpha (a' + b' - \ha_i - \ha_j) - 4}{2(k-1)}\\
\leq &~ \frac{1}{k-1}
\Delta_{ij} - \frac{1}{2(k-1)} (\alpha r - 4)\\
= &~ \frac{1}{k-1}
\Delta_{ij} - \frac{1}{2(k-1)} \left( \alpha \left\lceil\frac{10k(|\beta|+10)}{\alpha(1-\alpha)}\right\rceil^{10}  - 4 \right)\\
<&~ \frac{1}{k-1}
\Delta_{ij} -  \left(\frac{10k(|\beta|+10)}{\alpha(1-\alpha)}\right)^{9}.
\end{align*}

If $\alpha = 1$, then the formula (\ref{the diference of u and hatu}) equals
\begin{align*}
&~ \frac{1}{k-1} \left(\left\lceil \frac{a'}{2} \right\rceil + \left\lceil \frac{b'}{2} \right\rceil\right)   -  \frac{1}{k-1} \left(\left\lceil \frac{\ha_i}{2} \right\rceil + \left\lceil \frac{\ha_j}{2} \right\rceil \right) \\
=&~ \frac{ a' + b' - \ha_i - \ha_j}{k-1} -  \frac{a' + b' - \ha_i - \ha_j}{2(k-1)}  \\
\leq &~ \frac{1}{k-1}
\Delta_{ij} -  \frac{1}{2(k-1)} \left( a' + b' - \ha_i - \ha_j \right) \\
\leq &~ \frac{1}{k-1}
\Delta_{ij} -  \frac{r}{2(k-1)}\\
= &~ \frac{1}{k-1}
\Delta_{ij} -  (k-1)k^3.
\end{align*}
The second line is due to that $a',b',\ha_i$, and $\ha_j$ are even numbers.
Moreover, we define $r_1$ as follows. If $\alpha \in \{0,1\}$, then  $r_1= (k-1)k^3$. If $\alpha \in (0,1)$, then  $r_1= \left(\frac{10k(|\beta|+10)}{\alpha(1-\alpha)}\right)^{9}$. Thus, for any $U_{ij} \in U_>$ and any $\alpha \in [0,1]$, we have
\begin{align}
\Psi_{ij} - \chi(U_{ij}) \leq  \frac{1}{k-1}
\Delta_{ij} -  r_1. \label{the-differenct-pai-uij}
\end{align}
Then, summing over every $U_{ij} \in U_>$, we get that the all saving deletions are 
\begin{align}
&~\sum_{U_{ij} \in U_>} \left(\Psi_{ij} - \chi(U_{ij}) \right) \nonumber \\
\leq &~ \sum_{U_{ij} \in U_>}  \left(\frac{1}{k-1} \Delta_{ij} - r_1 \right) \nonumber\\
\leq &~ \sum_{U_{ij} \in U_>}  \frac{1}{k-1} \Delta_{ij} - r_1 \nonumber\\
= &~ \sum_{v \in X_1} \charge(v) - r_1 \nonumber\\
\leq&~ |X_1|  - r_1 \label{x1-is-lager-than-u},
\end{align}
where the second line is based on formula (\ref{the-differenct-pai-uij}), and the last and penultimate lines are based on formulas (\ref{eq:charges-small-ld}) and (\ref{eq:total-charge-from-u}), respectively.

Secondly, let us consider the $U_<$ set. Suppose that it is non-empty and some $U_{ij} \in U_<$. Let us consider the value of $\chi(U_{ij}) = \wp_{ij} + c_{ij}(a' + b')$.
If $\alpha =0 $, then $a' + b' \geq \ha_i + \ha_j + r$. We have 
\begin{align*}
\chi(U_{ij}) = &~\wp_{ij}+ \frac{1}{2(k-1)}(a' + b' - l_{ij})\\
\geq &~\wp_{ij}  + \frac{1}{2(k-1)}(\ha_i + \ha_j - l_{ij}) + \frac{1}{2(k-1)}r\\
= &~\Psi_{ij} + \frac{1}{2(k-1)}r.
\end{align*} 
If $\alpha \in (0,1)$, then $a' + b' \leq \ha_i + \ha_j - r$ and $\wp_{ij} = 0$. We have
\begin{align*}
\chi(U_{ij}) = &~ t -  \frac{1}{k-1} \left( \left\lceil \frac{\alpha a'}{2} \right\rceil + \left\lceil \frac{\alpha b'}{2} \right\rceil\right)\\
>&~  t -  \frac{1}{k-1} \left( \frac{\alpha a'}{2}  +  \frac{\alpha b'}{2} +2 \right)\\
 \geq&~ t -  \frac{1}{2(k-1)} \left( \alpha (\ha_i + \ha_j - r) + 4 \right)\\
\geq &~ t -  \frac{1}{k-1} \left( \left\lceil \frac{\alpha \ha_i}{2} \right\rceil + \left\lceil \frac{\alpha \ha_j}{2} \right\rceil\right) +  \frac{1}{2(k-1)} \left( \alpha r - 4 \right)\\
= &~ c_{ij}(\ha_i + \ha_j) +  \frac{1}{2(k-1)} \left( \alpha r - 4 \right)\\
= &~ \wp_{ij} + c_{ij}(\ha_i + \ha_j) +  \frac{1}{2(k-1)} \left( \alpha r - 4 \right)\\
= &~ \Psi_{ij} +  \frac{1}{2(k-1)} \left( \alpha r - 4 \right).
\end{align*} 
If $\alpha =1$, then $a' + b' \leq \ha_i + \ha_j - r$ and $\wp_{ij} = 0$. We have
\begin{align*}
\chi(U_{ij}) = &~ t -  \frac{1}{k-1} \left( \left\lceil \frac{a'}{2} \right\rceil + \left\lceil \frac{b'}{2} \right\rceil\right)\\
=&~  t -  \frac{1}{k-1} \left( \frac{a'}{2}  +  \frac{b'}{2} \right)\\
 \geq&~ t -  \frac{1}{2(k-1)} \left(  \ha_i + \ha_j -r \right)\\
\geq &~ t -  \frac{1}{k-1} \left( \left\lceil \frac{ \ha_i}{2} \right\rceil + \left\lceil \frac{\ha_j}{2} \right\rceil\right) +  \frac{r}{2(k-1)}\\
= &~ c_{ij}(\ha_i + \ha_j) +  \frac{r}{2(k-1)} \\
= &~ \wp_{ij} + c_{ij}(\ha_i + \ha_j) +  \frac{r}{2(k-1)}\\
= &~ \Psi_{ij} +  \frac{r}{2(k-1)}.
\end{align*} 
By replacing the values of $r$, we have that $\chi(U_{ij}) \geq \Psi_{ij} + (k-1)k^3$ if $\alpha \in \{0,1\}$, and that $\chi(U_{ij}) > \Psi_{ij} + \left(\frac{10k(|\beta|+10)}{\alpha(1-\alpha)}\right)^{9}$ if $\alpha \in (0,1)$. This means that $\chi(U_{ij}) \geq \Psi_{ij} + r_1$.
Then, summing over every $U_{ij} \in U_<$, we get have
\begin{align}
\sum_{U_{ij} \in U_<} \chi(U_{ij})  \geq \sum_{U_{ij} \in U_<} \left(\Psi_{ij} + r_1 \right) 
\geq \sum_{U_{ij} \in U_<} \Psi_{ij} + r_1.   \label{u-small-is-big}
\end{align}

We can now finish the proof of the claim.
We have that the size of $X \cap V_2$ is 
\begin{align}
    \sum_{i \in [k]} (\chi(S_i) + \chi(T_i)) + \sum_{1 \leq i < j \leq k} \chi(U_{ij}).
    \label{formula x in v2}
\end{align}

If $U_>$ is not empty, then formula (\ref{formula x in v2}) equals

\begin{align}
    & \sum_{i \in [k]} (\chi(S_i) + \chi_0(T_i)) + \sum_{i \in [k]}\chi_1(T_i) + \sum_{1 \leq i < j \leq k} \chi(U_{ij}) \nonumber \\ 
    =& \sum_{i \in [k]} (\chi(S_i) + \hu_i) + |X_1| + \sum_{1 \leq i < j \leq k} \chi(U_{ij}) \nonumber  \\ 
    \geq& \sum_{i \in [k]} \he_i + |X_1|  + \sum_{1 \leq i < j \leq k} \chi(U_{ij}) \nonumber \\ 
    \geq& \sum_{i \in [k]} \he_i + r_1 + \sum_{U_{ij} \in U_>} \left( \Psi_{ij} - \chi(U_{ij}) \right) + \sum_{1 \leq i < j \leq k} \chi(U_{ij}) \nonumber \\
    = & \sum_{i \in [k]} \he_i + r_1 + \sum_{U_{ij} \in U_>} \left( \Psi_{ij} - \chi(U_{ij}) \right) + \sum_{U_{ij} \in U_>} \chi(U_{ij}) + \sum_{U_{ij} \in U_{\leq}} \chi(U_{ij}) \nonumber \\
    =& \sum_{i \in [k]} \he_i + r_1 + \sum_{U_{ij} \in U_>}  \Psi_{ij}  + \sum_{U_{ij} \in U_{\leq}} \chi(U_{ij}) \nonumber  \\
    \geq & \sum_{i \in [k]} \he_i + r_1 + \sum_{U_{ij} \in U_>}  \Psi_{ij}  + \sum_{U_{ij} \in U_{\leq}} \Psi_{ij} \nonumber \\
    = & \sum_{i \in [k]} \he_i + r_1 + \sum_{U_{ij} \in U} \Psi_{ij}. \label{final-formula-is-large}
\end{align}
The third line and the fourth line are based on formulas (\ref{two-inequality-t-u-e}) and (\ref{x1-is-lager-than-u}), respectively. The penultimate line is due to inequality (\ref{u-small-is-big}), and it equals the last third line when $U_<$ is empty. Next, we demonstrate that formula (\ref{final-formula-is-large}) is larger than $q_2$.
According to the proof in Lemma \ref{ldlem:first-direction}, we have that, if $\alpha \in \{0,1\}$, then 
\begin{align*}
    \sum_{i \in [k]} \he_i + r_1 + \sum_{1 \leq i < j \leq k}  \Psi_{ij} = q_2 + r_1 > q_2,
\end{align*}
a contradiction.
Let us consider the case $\alpha \in (0,1)$. According to formula (\ref{use-in-last-uij-size}) in Lemma \ref{ldlem:first-direction}, we have
\begin{align*}
    \sum_{1 \leq i < j \leq k}  \Psi_{ij} 
    =  \sum_{1 \leq i < j \leq k}\left(t -  \frac{1}{k-1} \left(\left\lceil \frac{\alpha}{2} \ha_i \right\rceil + \left\lceil \frac{\alpha}{2} \ha_j \right\rceil \right)\right)
    = \binom{k}{2}t - \sum_{i \in [k]} \left\lceil \frac{\alpha}{2} \ha_i \right\rceil .
\end{align*}
Moreover, we have
\begin{align*}
    &\sum_{i \in [k]} \he_i + r_1\\
    = & \sum_{i \in [k]} \left( t + \frac{\alpha}{2} \ha_i - |\beta| - 2\right) + r_1\\
    > & \sum_{i \in [k]} \left( t + \left\lceil \frac{\alpha}{2} \ha_i \right\rceil - |\beta| - 3\right) + r_1\\
   \geq & \sum_{i \in [k]} \left( t + \left\lceil \frac{\alpha}{2} \ha_i \right\rceil\right) - k |\beta| - 3k +  \left(\frac{10k(|\beta|+10)}{\alpha(1-\alpha)}\right)^{9}\\
   > & \sum_{i \in [k]} \left( t + \left\lceil \frac{\alpha}{2} \ha_i \right\rceil\right) + 2k|\beta|. 
\end{align*}
Therefore, we have 
\begin{align*}
    &\sum_{i \in [k]} \he_i + r_1  + \sum_{1 \leq i < j \leq k}  \Psi_{ij}\\
    >& \sum_{i \in [k]} \left( t + \left\lceil \frac{\alpha}{2} \ha_i \right\rceil\right) + 2k|\beta| + \binom{k}{2}t - \sum_{i \in [k]} \left\lceil \frac{\alpha}{2} \ha_i \right\rceil\\
    = & k t + \binom{k}{2}t  + 2k|\beta|=  q_2,
\end{align*}
which is again a contradiction.

If $U_>$ is empty but $U_<$ is non-empty, then formula (\ref{formula x in v2}) equals
\begin{align*}
    & \sum_{i \in [k]} (\chi(S_i) + \chi(T_i)) + \sum_{U_{ij} \in U_=} \chi(U_{ij}) + \sum_{U_{ij} \in U_<} \chi(U_{ij}) \\
    &\geq \sum_{i \in [k]} \he_i + \sum_{U_{ij} \in U_=} \Psi_{ij} + \sum_{U_{ij} \in U_<} \chi(U_{ij})\\
    &\geq \sum_{i \in [k]} \he_i + \sum_{U_{ij} \in U_=} \Psi_{ij}  +\sum_{U_{ij} \in U_<} \Psi_{ij} + r_1\\
    &= \sum_{i \in [k]} \he_i + \sum_{U_{ij} \in U} \Psi_{ij} + r_1 \\
    & > q_2,
\end{align*}
a contradiction. Note that the penultimate line is the same as formula (\ref{final-formula-is-large}) that is larger than $q_2$, and that the second and third lines are based on formula (\ref{two-inequality-t-u-e}) and formula (\ref{u-small-is-big}), respectively.
Overall, both $U_<$ and $U_>$ are empty, and thus every $U_{ij}$ chose $\ha_i$ and $\ha_j$.
\end{proof}

This is all we need to construct a multicolored clique.  
To this end, we define $C = \{f^{-1}_{i}(\ha_i) : i \in [k] \mbox{ and $S_i$ chose $\ha_i$} \}$.
We claim that $C$ is a clique.  
By Claim~\ref{cl:ld-module-6}, each $S_i$ chooses some $\ha_i$ and thus $|C| = k$.
Now let $f^{-1}_{i}(\ha_i), f^{-1}_{j}(\ha_j)$ be two vertices of $C$, where $i<j$.  
Then $\ha_i, \ha_j$ were chosen by $S_i$ and $S_j$, respectively, and by Claim~\ref{cl:ld-module-10}
we know that $U_{ij}$ chose $\ha_i + \ha_j$.  
By the construction of the $U_{ij}$ solution table, this is only possible if both $(f^{-1}_{i}(\ha_j),$ $ f^{-1}_{j}(\ha_i))$ and  $(f^{-1}_{i}(\ha_i), f^{-1}_{j}(\ha_j))$ are in $E(G)$.  Therefore, $(f^{-1}_{i}(\ha_i), f^{-1}_{j}(\ha_j))\in E(G)$ and $C$ is a clique. 
\end{proof}

\end{document}